\newcommand\tempskip[1]{#1}
\newcommand{\bl}[1]{#1}
\newtheorem{theorem}{Theorem}[section]
\newtheorem{lemma}[theorem]{Lemma}
\newtheorem{cor}[theorem]{Corollary}
\newtheorem{prop}[theorem]{Proposition}
\newtheorem{defin}[theorem]{Definition}
\theoremstyle{remark}
\newtheorem{rem}[theorem]{Remark}
\numberwithin{equation}{section}
\newcounter{np} 
\def\lan{\left \langle }
\def\ran{\right \rangle}
\def\bar{\begin{array}}
\def\ear{\end{array}}
\def\beq{\begin{equation}}
\def\eeq{\end{equation}}
\def\dim{\mathrm{dim\,}}
\def\sl{\mathrm{SL}_2(\mathbb C)}
\def\su{\mathrm{SU}_2(\mathbb C)}
\def\gl{{\mathrm{GL}}_2(\mathbb C)}
\def\liesl{\mathfrak{sl}_2(\mathbb C)}
\def\psl{\mathrm{PSL}_2(\mathbb R)}
\def\slr{\mathrm{SL}_2(\mathbb R)}
\def\Hom{\mathrm{Hom}}
\def\Fun{\mathrm{Fun}}
\def\Tr{\mathrm{Tr}\,}
\def\End{\mathrm{End}}
\def\Sym{\mathrm{Sym}\,}
\def\F{\Fun_{\mathrm{alg}}}
\def\Fhol{\Fun_{\mathrm{hol}}}
\def\Ed{\CMcal{E}}
\def\Fa{\CMcal{F}}
\def\G{\CMcal{T}_\Omega} 
\def\L{\CMcal{L}(\G)}
\def\s{\mathrm{sign}}
\def\h{\mathrm{hol}}
\def\X{X}
\def\Xs{X_{\mathrm{twist}}}
\def\C{\CMcal{C}}
\def\el{\sigma_{\mathrm{left}}}
\def\er{\sigma_{\mathrm{right}}}
\def\pair{\Lambda^\Ed}
\def\Id{\mathrm{Id}}
\def\sym{\mathrm{sym}}
\def\sk{\mathrm{Sk}}
\renewcommand\Xi{\Theta}
\newcommand\lloop[1]{#1^\circ}
\begin{document}

\title[Tau-functions \`a la Dub\'edat for double-dimers and CLE(4)]
{Tau-functions \`a la Dub\'edat and probabilities of cylindrical events for double-dimers and CLE(4)}

\author[Mikhail Basok]{Mikhail Basok$^\mathrm{a,c}$}

\author[Dmitry Chelkak]{Dmitry Chelkak$^\mathrm{b,c}$}

\thanks{\textsc{${}^\mathrm{A}$ Laboratory of Modern Algebra and Applications, Saint-Petersburg State University, 14th Line, 29b, 199178 Saint-Petersburg, Russia.}}

\thanks{\textsc{${}^\mathrm{B}$ D\'epartement de math\'ematiques et applications de l'ENS, \'Ecole Normale Sup\'erieure PSL Research University, CNRS UMR 8553, Paris 5\`eme.}}

\thanks{\textsc{${}^\mathrm{C}$ St.Petersburg Department of Steklov Mathematical Institute (PDMI RAS).
Fontanka~27, 191023 St.Petersburg, Russia.}}

\thanks{\bl{The research of M.B.~was supported by the RScF grant 16-11-10039 ``Combinatorial, discrete and enumerative geometry'' and by the grant of the Government of the Russian Federation for the state support of scientific research carried out under the supervision of leading scientists, agreement 14.W03.31.0030 dated 15.02.2018. D.C. is a holder of the ENS--MHI chair funded by the MHI. On the final stage, the research of D.C. was partly supported by the ANR--18--CE40--0033 project DIMERS}}

\thanks{{\it E-mail addresses:} \texttt{m.k.basok@gmail.com}, \texttt{dmitry.chelkak@ens.fr}}

\begin{abstract} Building upon recent results of Dub\'edat~\cite{Dubedat} on the convergence of topological correlators in the double-dimer model considered on Temperleyan approximations~$\Omega^\delta$ to a simply connected domain~$\Omega\subset\mathbb C$ we prove the convergence of probabilities of cylindrical events for the \emph{double-dimer loop ensembles} on~$\Omega^\delta$ as~$\delta\to 0$. More precisely, let~$\lambda_1,\dots,\lambda_n\in\Omega$ and~$L$ be a macroscopic lamination on $\Omega\setminus\{\lambda_1,\dots,\lambda_n\}$, i.e., a collection of disjoint simple loops surrounding at least two punctures considered up to homotopies. We show that the probabilities~$P_L^\delta$ that one obtains~$L$ after withdrawing all loops surrounding no more than one puncture from a double-dimer loop ensemble on~$\Omega^\delta$ converge to a conformally invariant limit~$P_L$ as~$\delta \to 0$, for each~$L$.

Though our primary motivation comes from 2D statistical mechanics and probability, the proofs are of a purely analytic nature. The key techniques are the analysis of entire functions on the representation variety~$\Hom(\pi_1(\Omega\setminus\{\lambda_1,\dots,\lambda_n\})\to\sl)$ and on its (non-smooth) subvariety of locally unipotent representations.  In particular, we do \emph{not} use any RSW-type arguments for double-dimers.

The limits~$P_L$ of the probabilities~$P_L^\delta$ are defined as coefficients of the isomonodormic tau-function studied in~\cite{Dubedat} with respect to the Fock--Goncharov lamination basis on the representation variety. The fact that~$P_L$ coincides with the probability to obtain~$L$ from a sample of the nested CLE(4) in~$\Omega$ requires a small additional input, namely a mild crossing estimate for this nested conformal loop ensemble.
\end{abstract}

\subjclass[2000]{82B20, 34M56, 32A15}

\keywords{isomondronic tau-function, double-dimer model, topological correlators}

\maketitle

\newpage

\section{Introduction and main results}
Convergence of double-dimer interfaces and loop ensembles to SLE(4) and CLE(4), respectively, is a well-known prediction made by Kenyon after the introduction of SLE curves by Schramm, see~\cite[Section~2.3]{schramm-icm-06}. In particular, this provided a strong motivation to study couplings between Conformal Loop Ensembles (CLE) and the two-dimensional Gaussian Free Field (GFF), a subject which remained very active during the last fifteen years and led to several breakthroughs in the understanding of SLEs and CLEs via the Imaginary Geometry techniques, e.g. see~\cite{miller-sheffield-werner} and references therein.

Originally, this prediction was strongly supported by the convergence of dimer height functions to the GFF proved (for Temperleyan approximations on~$\mathbb Z^2$) by Kenyon \cite{kenyon-gff-a,kenyon-gff-b} and the fact that the level lines of the GFF are SLE(4) curves, see~\cite{schramm-sheffield} and~\cite{wang-wu}. More recently it received even more support due to the breakthrough works of Kenyon~\cite{Kenyon} and Dub\'edat~\cite{Dubedat} on the convergence of topological observables for double-dimer loop ensembles.
Our paper should be considered as a complement to the work of Dub\'edat who writes (see~\cite[Corollary~3]{Dubedat}) \emph{``By general principles, \ \ \dots\ \  the assumptions 1.~$(\mu_\delta)_\delta$ is tight, and 2. a probability measure $\mu$ on loop ensembles in a simply-connected domain D is uniquely characterized by the expectations of the functionals\ \ \dots\ \ imply weak convergence of the $\mu_\delta$'s to the $\mathrm{CLE}_4(D)$ measure as $\delta\to 0$.''}

To the best of our knowledge, there are still no available results on the first assumption (tightness), thus Kenyon's prediction should not be considered as fully proven yet. The main goal of this paper is to give a solid ground to the second assumption: we show that the topological observables treated by Dub\'edat in~\cite{Dubedat} \emph{do} characterize the measure on loop ensembles in the sense which is described below.

It is worth noting that several approaches to the convergence of (double-)dimer height functions to the GFF are known nowadays (e.g., see~\cite{berestycki-laslier-ray} and~\cite{bufetov-gorin}) besides the original one of Kenyon~\cite{kenyon-gff-a,kenyon-gff-b}, which is based on the analysis of the scaling limit of the Kasteleyn matrix by means of discrete complex analysis and is also the starting point for~\cite{Kenyon} and~\cite{Dubedat}. Also, the choice of discrete approximations~$\Omega^\delta$ to a simply connected domain~$\Omega$ is a very delicate question; see~\cite{russkikh-hedgehog} for another (not Temperleyan) special case when the discrete complex analysis machinery works well. To be able to build upon the results of~\cite{Dubedat}, below we assume that~$\Omega^\delta$ are Temperleyan approximations on the square grids of mesh~$\delta$ though this setup can be enlarged in several directions.

Recall that, given a Temperlean simply connected discrete domain~$\Omega^\delta\subset\delta{\mathbb Z}^2$, a \emph{double-dimer loop ensemble} on~$\Omega^\delta$ is obtained by superimposing two dimer configurations on~$\Omega^\delta$ chosen independently uniformly at random: this produces a number of loops and double-edges, the latter should be withdrawn. We denote by~$\Xi^\delta_{\Omega}$ the random collection of simple pairwise disjoint loops obtained in this way. The \emph{nested conformal loop ensemble} CLE(4) in~$\Omega$ is a conjectural limit of~$\Xi^\delta_\Omega$ as~$\delta\to 0$. The CLE(4) can be defined and effectively studied purely in continuum, see~\cite{sheffield-werner,qian-werner} and references therein for background. We denote by~$\Xi^\star_\Omega$ a random sample of this loop ensemble. Note that~$\Xi^\star_\Omega$ almost surely contains infinitely many loops but most of them are very small: almost surely, for each cut-off~$\varepsilon>0$ only finitely many of the loops of~$\Xi^\star_\Omega$ have diameter greater than~$\varepsilon$.

Let~$\lambda_1,\dots,\lambda_n\in\Omega$ be a collection of pairwise distinct punctures in~$\Omega$. A \emph{lamination}~$\Gamma$ is a \emph{finite} collection of disjoint simple loops in~$\Omega\setminus\{\lambda_1,\dots,\lambda_n\}$ considered up to homotopies. We call a lamination \emph{macroscopic} if each of these loops surrounds at least two of the punctures. For a random loop ensemble~$\Xi$ and a deterministic macroscopic lamination~$\Gamma$, let $\Xi\sim \Gamma$ denote the event that withdrawing all loops surrounding no more than one puncture from~$\Xi$ one obtains~$\Gamma$. We call the events~$\Xi\sim\Gamma$ \emph{cylindrical}: their probabilities (for all~$n\ge 1$, all~$\lambda_1,\dots,\lambda_n\in\Omega$ and all macroscopic laminations~$L$) determine the law of~$\Xi$ for reasonable topologies on the space of loop ensembles.


\begin{figure}
  \centering
  \begin{minipage}{.496\textwidth}
    \centering
    \tempskip{\includegraphics[clip, trim=0cm 20cm 0cm 1.2cm, width = 0.96\textwidth]{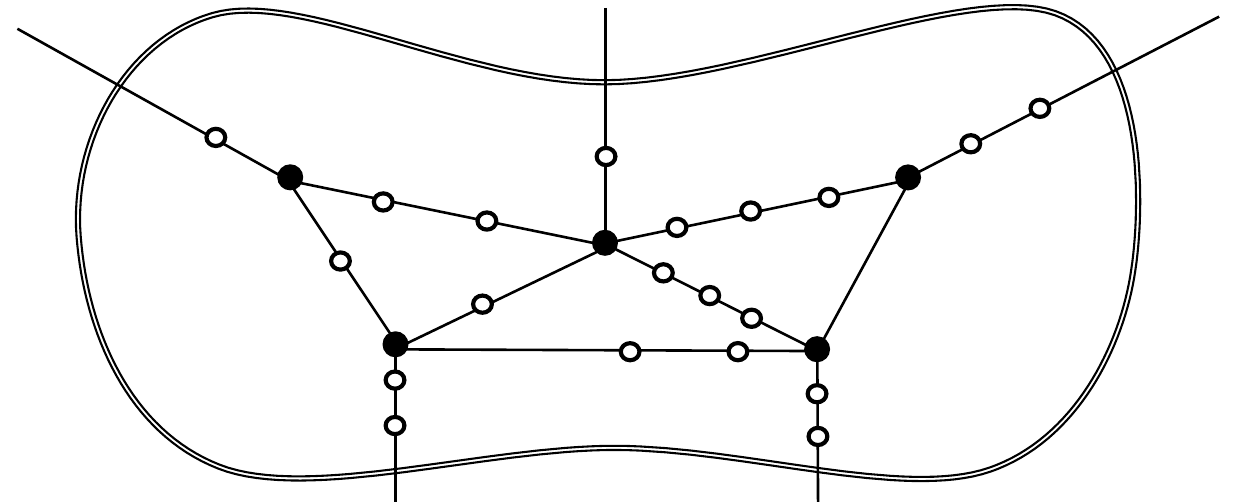}}
    \caption*{Given $\mathbf n$, put $n_e$ points at the interior of each edge $e\in\Ed$.}
  \end{minipage}
    \begin{minipage}{.496\textwidth}
    \centering
    \tempskip{\includegraphics[clip, trim=0cm 20cm 0cm 1.2cm, width = 0.96\textwidth]{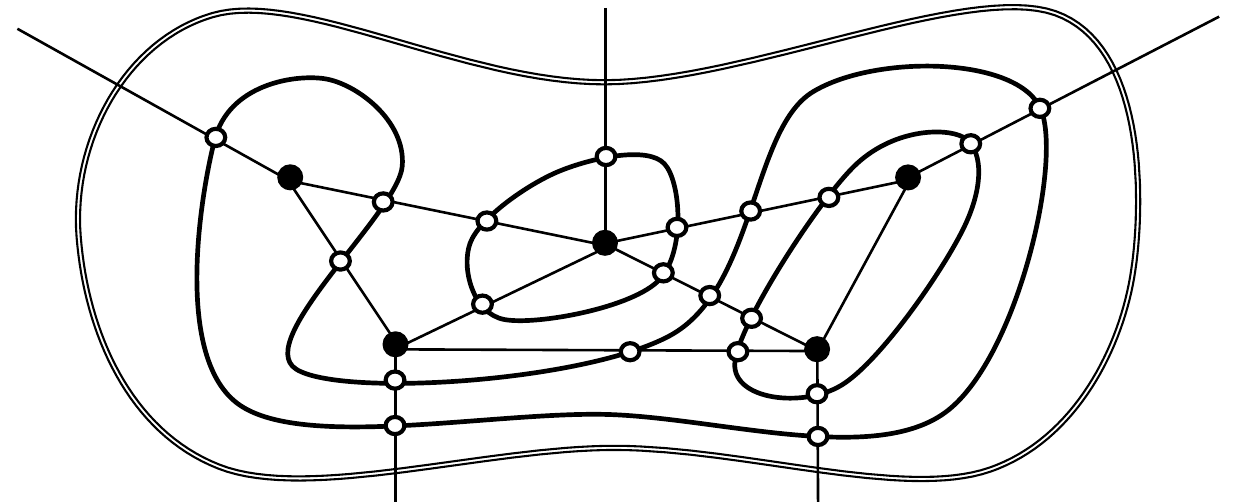}}
    \caption*{Connect these points by chords inside each triangle.}
  \end{minipage}
  \caption{A triangulation~$\G$ of~$\Omega\setminus\{\lambda_1,\dots,\lambda_n\}$ and a lamination~$\Gamma$ recovered from a multi-index~$\mathbf n\in\mathbb Z_{\ge 0}^{\Ed}$. One has~$|\Gamma|=\sum_{e\in\Ed}n_e=20$.}
  \label{fig:integers_to_laminations}
\end{figure}


\bl{An important notion that is constantly used in our paper is the \emph{complexity}~$|\Gamma|$ of a lamination~$\Gamma$. From now onwards, we fix a triangulation~$\G$ of~$\Omega\setminus\{\lambda_1,\dots,\lambda_n\}$ whose $n+1$ vertices correspond to $\lambda_1,\dots,\lambda_n$ and to the boundary of $\Omega$; see Fig.~\ref{fig:integers_to_laminations}. Note that we allow two triangles to share several edges; e.g., see Fig.~\ref{fig:n=2} for an example. However, for simplicity we do \emph{not} allow $\lambda_1,\ldots,\lambda_n$ or~$\partial\Omega$ to have degree~$1$ in the triangulation~$\G$.
\begin{defin}
Given a triangulation~$\G$ of~$\Omega\setminus\{\lambda_1,\dots,\lambda_n\}$, we define the complexity~$|\Gamma|$ of a lamination $\Gamma$ to be the minimal possible (after applying homotopies) number of intersections of loops constituting~$\Gamma$ with the edges~$e\in\Ed$ of~$\G$.
\end{defin}

\begin{rem}\label{rem:complexity}
(i) In fact, one can parameterize laminations on~$\Omega\setminus\{\lambda_1,\dots,\lambda_n\}$ by multi-indices~$\mathbf{n}=(n_e)\in\mathbb Z_{\ge 0}^\Ed$ satisfying certain conditions, see Fig.~\ref{fig:integers_to_laminations} and Section~\ref{ReparametrizingLaminations} for more details. Under this parametrization one has $|\Gamma|=\sum_{e\in\Ed}n_e$.

\smallskip

\noindent (ii) The notion of complexity introduced above \emph{depends} on the choice of a triangulation~$\G$. However, it is easy to see that, for each two such choices $\G'$ and $\G''$, the complexities $|\Gamma|'$, $|\Gamma|''$ differ no more than by a multiplicative factor independent of $\Gamma$.

\smallskip

\noindent (iii) Let us emphasize that, once $\G$ is fixed, the complexity $|\Gamma|$ \emph{cannot} be estimated via the number of loops in~$\Gamma$ (below we denote the latter quantity by~$\#\mathrm{loops}(\Gamma)$). Indeed, provided that $n\ge 3$, the complexity $|\Gamma|$ can be arbitrarily large even if $\#\mathrm{loops}(\Gamma)=1$.
\end{rem}}

We denote by
\[
X:=\Hom(\pi_1(\Omega\setminus\{\lambda_1,\dots,\lambda_n\})\to\sl)
\]
the (smooth) variety of $\sl$-representations of the free non-abelian fundamental group~$\pi_1(\Omega\setminus\{\lambda_1,\dots,\lambda_n\})$. Note that one could view~$X$ as~$(\sl)^n$  by fixing generators of the fundamental group but this viewpoint is not invariant enough for the analytic tools that we use below. Let~$\lloop{\lambda_i}$ denote the loop surrounding a single puncture~$\lambda_i$ and
\begin{equation}
  \label{eq:Xunip_def}
    X_{\mathrm{unip}}\ :=\ \{\rho\in X: \Tr(\rho(\lloop{\lambda_i}))=2\ \text{for all}\ i=1,\dots, n\}
\end{equation}
be the (non-smooth) subvariety of locally unipotent representations~$\rho\in X$. In this paper we study \emph{entire functions}~$f:X\to \mathbb C$ or~$f:X_{\mathrm{unip}}\to\mathbb C$, in the latter case we mean that~$f$ is continuous on~$X_{\mathrm{unip}}$ and is holomorphic on its regular part. Moreover, we are interested only in those entire functions that are invariant under the action of~$\sl$ on~$X$ or~$X_{\mathrm{unip}}$ given by the conjugation~$\rho(\cdot)\mapsto A^{-1}\rho(\cdot)A$, $A\in\sl$. \bl{This reflects the fact that $\sl$-representations $\rho\in X$, are used below to build observables that distinguish \emph{free} homotopy classes of 
loops in the punctured domain, which 
correspond to conjugacy classes of elements in the fundamental group. Below we use the notation~$\Fhol(X)^{\sl}$ and~$\Fhol(X_{\mathrm{unip}})^{\sl}$ for the spaces of holomorphic $\sl$-invariant functions on $X$ and on $X_\mathrm{unip}$, respectively.

}

Given a lamination (not necessarily macroscopic)~$\Gamma$ on~$\Omega\setminus\{\lambda_1,\dots,\lambda_n\}$, we set
\begin{equation}
\label{eq:f-gamma-def}
f_\Gamma(\rho)\ :=\ \prod\nolimits_{\gamma\in\Gamma} \Tr(\rho(\gamma)),\qquad \rho\in X.
\end{equation}
Since~$\Tr A = \Tr A^{-1}$ for~$A\in\sl$, this definition does not require to fix an orientation of the loops~$\gamma\in\Gamma$. Clearly, one has~$f_\Gamma\in \Fhol(X)^{\sl}$ and this function can be also treated as an element of~$\Fhol(X_{\mathrm{unip}})^{\sl}$ by taking the restriction to locally unipotent monodromies~$\rho\in X_{\mathrm{unip}}$.

The main results of our paper can be loosely formulated as follows: each entire function~$f\in\Fhol(X)^{\sl}$ admits a unique expansion via the functions~$f_\Gamma$ \bl{while} each function~$f\in\Fhol(X_{\mathrm{unip}})^{\sl}$ admits a unique expansion via the functions~$f_\Gamma$ indexed by \emph{macroscopic} laminations~$\Gamma$, with coefficients decaying faster than exponentially.

For~$\rho\in X_{\mathrm{unip}}$, let
\begin{align}
\label{eq:tau-delta-def}
 \tau^\delta(\rho)\ :=&\ \mathbb E_{\operatorname{dbl-d}}\left[\,\prod\nolimits_{\gamma\in\Xi^\delta(\Omega^\delta)} (\tfrac{1}{2}\Tr(\rho(\gamma)))\,\right]\ \\ \notag
 =&\ \sum\nolimits_{\Gamma - \mathrm{macroscopic}} p^\delta_\Gamma f_\Gamma(\rho),\qquad p^\delta_\Gamma\ :=\ 2^{-\#\mathrm{loops}(\Gamma)}\mathbb P_{\operatorname{dbl-d}}\left[\,\Xi^\delta_\Omega\sim \Gamma\,\right],
\end{align}
and similarly
\begin{align}\label{eq:tau-cle-def}
\tau^\star(\rho)\ :=& \ \mathbb E_{\operatorname{CLE(4)}}\!\left[\,\prod\nolimits_{\gamma\in\Xi^\star(\Omega)} (\tfrac{1}{2}\Tr(\rho(\gamma)))\,\right]\\ \notag
=&\ \sum\nolimits_{\Gamma - \mathrm{macroscopic}} p^\star_\Gamma f_\Gamma(\rho),\qquad
p^\star_\Gamma\ :=\ 2^{-\#\mathrm{loops}(\Gamma)}\mathbb P_{\operatorname{CLE(4)}}\left[\,\Xi^\star_\Omega\sim \Gamma\,\right].
\end{align}
Following~\cite{Dubedat} we call the functions~$\tau^\delta$ and~$\tau^\star$ \emph{topological correlators} of the loop ensembles~$\Xi^\delta$ and~$\Xi^\star$, respectively.

While~$\tau^\delta$ is actually a finite linear combination of~$f_\Gamma$, one should be more careful with the infinite series~\eqref{eq:tau-cle-def}. It is checked in~\cite{Dubedat} that~$\mathbb P_{\operatorname{dbl-d}}\left[\,\Xi^\star_\Omega\sim \Gamma\,\right]=O(R_0^{-|\Gamma|})$ for \emph{some}~$R_0>1$, therefore~$\tau^\star$ is correctly defined at least in a vicinity of the trivial representation~$\Id\in X_{\mathrm{unip}}$. It seems to be known in the folklore that these probabilities actually decay \emph{super-}exponentially as~$|\Gamma|\to\infty$ (e.g., see~\cite[Section~4]{wu-ising-exponents} and references therein for related results) but we were unable to find an explicit reference to this fact and thus prefer to keep it as an assumption in Corollary~\ref{cor:p=p-star}, see also Remark~\ref{rem:non-uniqueness}.

In the paper~\cite{Dubedat} Dub\'edat also introduced a notion of the \emph{isomonodromic tau-function} \mbox{$\tau_{\Omega}(\lambda_1,\dots,\lambda_n;\rho)$}, $\rho\in X_{\mathrm{unip}}$, on a simply connected domain~$\Omega$ which is defined as follows. For~$\Omega=\mathbb H$ (the upper half-plane), consider a representation of the fundamental group of the punctured Riemann sphere
\[
\rho':\mathbb C\mathrm P^1\setminus\{\lambda_1,\dots,\lambda_n,\overline{\lambda}_n,\dots,\overline{\lambda}_1\}\to \sl
\]
constructed so that the monodromies of~$\rho'$ around the punctures~$\lambda_i$ match those of~$\rho$ while the monodromies of~$\rho'$ around~$\overline{\lambda}_i$ are their inverses, see~\cite{Dubedat} for more details. For each~$\rho\in X_{\mathrm{unip}}$ one can check that the classical Jimbo--Miwa--Ueno~\cite{jimbo-miwa-ueno,Palmer} isomonodronic tau-function~$\tau_{\mathbb C\mathrm P^1}(\lambda_1,\dots,\lambda_n,\overline{\lambda}_n,\dots,\overline{\lambda}_1;\rho')$ can be normalized so that it equals to~$1$ when all the pairs of punctures~$\lambda_i,\overline{\lambda}_i$ collide on the real line. \bl{A priori, this tau function is well-defined only on the universal cover of the space parameterizing pairwise distinct punctures $\lambda_1,\dots,\lambda_n\in\Omega$. However, it turns out that}
\[
\tau_{\mathbb H}(\rho;\lambda_1,\dots,\lambda_n):= \tau_{\mathbb C\mathrm P^1}(\rho';\lambda_1,\dots,\lambda_n,\overline{\lambda}_n,\dots,\overline{\lambda}_1)
\]
is invariant under braid moves as well as under M\"obius automorphisms of the upper half-plane~$\mathbb H$. This allows one to define~$\tau_\Omega(\rho;\lambda_1,\dots,\lambda_n)$ for general simply connected domains~$\Omega$ by conformal invariance.

Although one usually considers an isomonodromic tau-function as a function of~$\lambda_1,\dots,\lambda_n$, the one discussed above can be also viewed as a function of a locally unipotent representation~$\rho$ as its multiplicative normalization does not depend on~$\rho$. In our paper both~$\Omega$ and~$\lambda_1,\dots,\lambda_n$ can be usually thought of as fixed once forever, thus from now onwards we use the shorthand notation
\begin{equation}
\label{eq:tau-iso-def}
\tau(\rho):=\tau_\Omega(\lambda_1,\dots,\lambda_n;\rho),\qquad \rho\in X_{\mathrm{unip}}.
\end{equation}
From the above construction one can see that~$\tau\in \Fhol(X_{\mathrm{unip}})^{\sl}$, note that this also can be easily deduced from the following theorem, which is the main result of~\cite{Dubedat}.

Recall that the functions~$\tau^\delta$ and~$\tau^\star$ are defined by~\eqref{eq:tau-delta-def} and~\eqref{eq:tau-cle-def}.
\begin{theorem}[Dub\'edat]
\label{thm:dubedat} Let $\Omega$ be a planar simply connected domain, $\Omega^\delta$ be a sequence of Temperleyan approximations to~$\Omega$, and~$\lambda_1,\dots,\lambda_n\in\Omega$. Then, the following holds:

\noindent (i) For each locally unipotent representation~$\rho\in X_{\mathrm{unip}}$ one has
\[
\tau^\delta(\rho)\ \to\ \tau(\rho)\quad \text{as}\ \ \delta\to 0
\]
and the convergence is uniform on compact subsets of~$X_{\mathrm{unip}}$.

\noindent (ii) Moreover, $\tau(\rho)=\tau^\star(\rho)$ if~$\rho\in X_{\mathrm{unip}}$ is close enough to the trivial representation.
\end{theorem}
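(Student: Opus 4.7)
The plan is to express $\tau^\delta(\rho)$ as a ratio of determinants of twisted Kasteleyn operators, pass to a continuum Fredholm determinant via the known convergence of discrete $\overline{\partial}$-operators on Temperleyan approximations, and identify the limit with the isomonodromic tau-function via the Palmer/Malgrange formalism. Concretely, given $\rho\in X_{\mathrm{unip}}$ I would first build the flat rank-$2$ complex bundle $E_\rho$ over $\Omega^\delta\setminus\{\lambda_i^\delta\}$ with monodromies $\rho$ and consider the twisted Kasteleyn operator $K_\rho^\delta$ acting on $E_\rho$-valued sections of the Temperleyan bipartite graph. A Kenyon--Wilson-style expansion of $\det K_\rho^\delta$ as a signed sum over pairs of perfect matchings, grouped by their loop decomposition, should yield the combinatorial identity
\[
\tau^\delta(\rho) \;=\; \mathbb E_{\operatorname{dbl-d}}\!\Big[\prod_{\gamma\in\Xi^\delta_\Omega}\tfrac{1}{2}\Tr(\rho(\gamma))\Big] \;=\; \frac{\det K_\rho^\delta}{\det K_{\Id}^\delta},
\]
the factor $\tfrac12$ per loop in~\eqref{eq:tau-delta-def} absorbing the two-to-one orientation ambiguity (legitimate because $\Tr A=\Tr A^{-1}$ on $\sl$).

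Next I would rewrite this ratio as $\det\bigl(1+(K_{\Id}^\delta)^{-1}(K_\rho^\delta-K_{\Id}^\delta)\bigr)$. Here the perturbation $K_\rho^\delta-K_{\Id}^\delta$ is supported along a fixed cut system separating the punctures, hence is of finite rank (and trace-class) uniformly in $\delta$. Invoking Kenyon's convergence of the rescaled inverse Kasteleyn matrix to the continuum Cauchy kernel of $\overline{\partial}$ on $\Omega$, one then wants to deduce that the discrete determinant converges to the Fredholm determinant of the analogous perturbation of the twisted $\overline{\partial}_\rho$-operator on $E_\rho\to\Omega$. This step is the expected principal obstacle: the discrete-to-continuous convergence of the Cauchy kernel must be strong enough to survive composition with operators concentrated along cuts that pass arbitrarily close to the punctures, and trace-norm estimates must be uniform as $\rho$ varies over compact subsets of $X_{\mathrm{unip}}$.

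By the Palmer/Malgrange formalism, the resulting continuum Fredholm determinant on $\mathbb H$, associated with the doubled representation $\rho'$ obtained by adjoining inverse monodromies at $\overline{\lambda}_i$, should coincide with the Jimbo--Miwa--Ueno tau-function $\tau_{\mathbb C\mathrm P^1}(\rho';\lambda_1,\dots,\overline{\lambda}_1)$ once normalized to equal $1$ as the pairs $\{\lambda_i,\overline{\lambda}_i\}$ collapse to the real line. Conformal invariance then transports this identity to any simply connected $\Omega$, giving the limit $\tau(\rho)$ of~\eqref{eq:tau-iso-def} and proving part~(i); uniformity on compacts of $X_{\mathrm{unip}}$ is inherited from the holomorphic dependence in $\rho$ at every step.

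For part~(ii), the exponential decay $\mathbb P_{\operatorname{CLE(4)}}[\Xi^\star_\Omega\sim\Gamma]=O(R_0^{-|\Gamma|})$ together with $\tfrac12\Tr(\rho(\gamma))=1+O(\|\rho-\Id\|)$ makes $\tau^\star$ holomorphic on a neighborhood $U$ of $\Id\in X_{\mathrm{unip}}$. On $U$ I would use the CLE(4)--GFF coupling (Miller--Sheffield--Werner, Wang--Wu) to compute $\tau^\star(\rho)$ for a diagonal representation $\rho=\operatorname{diag}(e^{i\alpha_j/2},e^{-i\alpha_j/2})$ near each $\lambda_j$ as a Gaussian imaginary-exponential moment of GFF heights at the punctures; this explicit expression is known to satisfy the same PDEs and normalization as the JMU tau-function, so $\tau^\star=\tau$ on the diagonal locus and, by $\sl$-invariance and holomorphy in $\rho$, on all of~$U$.
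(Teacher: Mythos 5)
Theorem~\ref{thm:dubedat} is not proved in this paper: it is Dub\'edat's main result, imported verbatim from~\cite{Dubedat}, and the remainder of the paper treats it as a black box. There is therefore no ``paper's own proof'' to compare your sketch against; what follows measures it against Dub\'edat's actual argument.

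Your outline of part~(i) — twisted Kasteleyn operator $K_\rho^\delta$, identity $\tau^\delta(\rho)=\det K_\rho^\delta/\det K_{\Id}^\delta$ via a loop expansion, convergence of discrete inverse Kasteleyn kernels on Temperleyan approximations, and identification of the limit with the Jimbo--Miwa--Ueno/Palmer tau-function — does capture the skeleton of Dub\'edat's approach. But the assertion that $K_\rho^\delta-K_{\Id}^\delta$ is ``of finite rank (and trace-class) uniformly in $\delta$'' is simply false: a branch cut joining $\lambda_i$ to $\partial\Omega$ (or to another puncture) crosses $O(\delta^{-1})$ lattice edges, so the rank of this perturbation diverges as $\delta\to 0$, and uniform trace-class control of $(K_{\Id}^\delta)^{-1}(K_\rho^\delta-K_{\Id}^\delta)$ is precisely the hard estimate, not a freebie. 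Dub\'edat does not pass a fixed finite-rank Fredholm determinant to the limit; the core of the analysis is a variational computation of $\partial_{\lambda_i}\log\det K_\rho^\delta$, shown to converge to the JMU one-form defining $d\log\tau$, together with a careful local analysis near each puncture where the twisted inverse kernel is singular. Your sketch names this convergence as the ``expected principal obstacle'' and, in the same breath, asserts a uniform finite-rank/trace-class bound that would dissolve that obstacle; those two statements are in tension, and the second is wrong.

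There is a second, more structural problem in part~(ii). You propose to evaluate $\tau^\star$ for representations that are locally diagonal, $\rho((\lambda_j))=\operatorname{diag}(e^{i\alpha_j/2},e^{-i\alpha_j/2})$, via GFF electric correlators and then propagate ``by $\sl$-invariance and holomorphy'' across a neighbourhood $U$ of $\Id$ in $X_{\mathrm{unip}}$. But a locally unipotent representation with $\rho((\lambda_j))$ diagonalizable necessarily has $\rho((\lambda_j))=\Id$: one needs $\Tr\rho((\lambda_j))=2$, which for a diagonal $\sl$ matrix forces $\alpha_j\in 4\pi\mathbb Z$. So the locus you want to compute on, with nontrivial $\alpha_j$, is disjoint from $X_{\mathrm{unip}}$. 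The GFF computation lives in the full variety $X$, and the identification $\tau^\star=\tau$ on $X_{\mathrm{unip}}$ must be obtained by analytic continuation across $X$ (or across a suitable slice of it) followed by restriction to $X_{\mathrm{unip}}$, not by propagation inside $X_{\mathrm{unip}}$ itself.
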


The next theorem \bl{(see also Theorem~\ref{main_theorem_actually})} is the main result of our paper.

\begin{theorem}
\label{thm:main-thm} Each entire function~$f\in\Fhol(X_{\mathrm{unip}})^{\sl}$ admits a unique expansion
\begin{equation}
\label{eq:expansion-of-f}
f(\rho)\ =\ \sum\nolimits_{\Gamma - \mathrm{macroscopic}}p_\Gamma f_\Gamma(\rho),\qquad \rho\in X_{\mathrm{unip}},
\end{equation}
where the functions~$f_\Gamma$ are given by~\eqref{eq:f-gamma-def} and~$|\Gamma|^{-1}\log |p_\Gamma|\to -\infty$ as~$|\Gamma|\to\infty$. Moreover, for each~$R>0$ there exists a compact subset~$K_R\subset X_{\mathrm{unip}}$ and a constant~$C_R>0$ independent of~$f$ such that one has
\begin{equation}
\label{eq:estimate-of-pG}
|p_\Gamma|\le C_R\cdot R^{-|\Gamma|}\cdot \|f\|_{L^\infty(K_R)}
\end{equation}
for all macroscopic laminations~$L$.
\end{theorem}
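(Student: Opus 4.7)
The plan is to establish both the expansion~\eqref{eq:expansion-of-f} and the Cauchy-type bound~\eqref{eq:estimate-of-pG} simultaneously, using the Fock--Goncharov parametrization of laminations by multi-indices $\mathbf n=(n_e)_{e\in\Ed}$ together with complex-analytic (Cauchy-type) techniques on the representation variety. I would first reduce to an algebraic statement on polynomial functions. Fixing generators of $\pi_1(\Omega\setminus\{\lambda_1,\ldots,\lambda_n\})$ identifies $X$ with $\sl^n$, and the classical results of Procesi together with the skein-theoretic version of the Fricke--Klein theorem assert that the functions $f_\Gamma$ (over all, not necessarily macroscopic, laminations) form a $\mathbb C$-linear basis of the ring of $\sl$-invariant polynomial functions on $X$. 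On $X_{\mathrm{unip}}$ one has $\Tr(\rho((\lambda_i)))\equiv 2$, so each $f_\Gamma$ collapses to $2^k f_{\Gamma'}$ with $\Gamma'$ the macroscopic part of $\Gamma$. Thus $\{f_\Gamma:\Gamma\text{ macroscopic}\}$ spans the ring of regular $\sl$-invariant functions on $X_{\mathrm{unip}}$, and the algebraic basis property together with a Stone--Weierstrass-type argument shows that this family is dense in $\Fhol(X_{\mathrm{unip}})^{\sl}$ in the topology of uniform convergence on compacta.

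The core analytic step is to construct, for every $R>1$, a compact set $K_R\subset X_{\mathrm{unip}}$ together with continuous linear functionals $\ell_\Gamma : C(K_R)\to\mathbb C$ --- ``Cauchy kernels'' dual to the basis $\{f_\Gamma\}$ --- satisfying $\ell_\Gamma(f_{\Gamma'})=\delta_{\Gamma,\Gamma'}$ and $\|\ell_\Gamma\|\le C_R R^{-|\Gamma|}$. The construction is modeled on the iterated Cauchy integral: in Fenchel--Nielsen-type complex coordinates associated to the triangulation $\G$, each edge $e\in\Ed$ carries a ``twist'' coordinate conjugate to the integer $n_e$, and the Fock--Goncharov basis diagonalises the joint monodromy in these coordinates. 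On the smooth locus of $X_{\mathrm{unip}}$, multi-variable Cauchy then extracts $p_\Gamma$ as a contour integral with radii tuned to $R$; across the singular strata, the contours are pushed off the singular set or represented as limits along smooth approximants using the fact that $f\in \Fhol(X_{\mathrm{unip}})^{\sl}$ extends holomorphically across them. Evaluating $\ell_\Gamma(f)$ simultaneously produces the coefficients and the bound~\eqref{eq:estimate-of-pG}.

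To finish, I would pass from polynomial $f$ to general $f\in\Fhol(X_{\mathrm{unip}})^{\sl}$ by approximating $f$ uniformly on compact subsets by elements of the algebraic span of the $f_\Gamma$, applying the Cauchy estimates to the approximants, and extracting a limit: the resulting coefficients satisfy $|p_\Gamma|\le C_R R^{-|\Gamma|}\|f\|_{L^\infty(K_R)}$ for every $R$, which immediately gives the super-exponential decay $|\Gamma|^{-1}\log|p_\Gamma|\to -\infty$ and absolute uniform convergence of the series $\sum p_\Gamma f_\Gamma$ on compact subsets of $X_{\mathrm{unip}}$. Uniqueness follows by applying the functionals $\ell_\Gamma$ to the difference of two such expansions.

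The main obstacle is the second step: building Cauchy-type kernels on the non-smooth variety $X_{\mathrm{unip}}$ with decay rate controlled by the \emph{lamination complexity} $|\Gamma|=\sum_e n_e$ rather than by the word-length of any single generator. The principal difficulty is matching the geometric multi-index $(n_e)$ arising from the triangulation to the algebraic coordinates in which a Cauchy integral is manifest, and controlling the skein transition from one triangulation to another uniformly in $\mathbf n$. It is here that the structure of the Fock--Goncharov lamination basis, which is blind to single-puncture loops precisely because of the unipotent constraint $\Tr(\rho((\lambda_i)))=2$, enters in an essential way and makes the macroscopic indexing in~\eqref{eq:expansion-of-f} possible.
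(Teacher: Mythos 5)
Your proposal rests on an unrealized and, as the authors explicitly warn, unknown construction, so it has a genuine gap at its central step. You propose ``Cauchy kernels'' $\ell_\Gamma$ on $X_{\mathrm{unip}}$ that are dual to the lamination basis $f_\Gamma$ (so $\ell_\Gamma(f_{\Gamma'})=\delta_{\Gamma\Gamma'}$) and satisfy $\|\ell_\Gamma\|\le C_R R^{-|\Gamma|}$, obtained by iterated contour integration in ``Fenchel--Nielsen-type complex coordinates'' in which the lamination basis ``diagonalises the joint monodromy.'' No such diagonalising coordinate system exists in the literature, and the paper states this bluntly in Section~\ref{sec:strategy}: \emph{``Since, to the best of our knowledge, no explicit analogue of the Cauchy formula (which settles the toy case $n=2$) on $X_{\mathrm{unip}}$ is known if $n>2$, we develop a set of general tools\dots''} The point is that $\{f_\Gamma\}$ is a linear basis of $\F(X)^{\sl}$ but \emph{not} a Fourier-type basis: the monomials $\Tr\rho(\gamma_1)\cdots\Tr\rho(\gamma_m)$ are far from orthogonal, and the formal dual functionals to a Hamel basis carry no norm control whatsoever. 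Your proposal also has to push contours past the singular strata of $X_{\mathrm{unip}}$, which you acknowledge as the ``main obstacle'' but do not resolve. In short, the thing your plan calls the core analytic step is precisely what the paper says it cannot do directly.

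The paper's actual route sidesteps the Cauchy-kernel construction by three tools you do not mention. First, instead of working intrinsically on the non-smooth $X_{\mathrm{unip}}$, it lifts a given $f$ to a holomorphic function on the ambient smooth poly-ball $\mathbb B_R\subset(\mathbb C^{2\times 2})^{\Ed}$ via Manivel's Ohsawa--Takegoshi-type $L^2$-extension theorem (Proposition~\ref{appliedDemailly}), with a controlled $L^2$ bound. Second, on $\mathbb B_R$ the relevant Fourier-type basis is \emph{not} $\{F_{\Gamma,\mathbf m}\}$ but the Peter--Weyl basis $\{G_{\Gamma,\mathbf m}\}$, which is orthogonal (Lemma~\ref{GareOrt}) and whose norms admit the exponential lower bound of Proposition~\ref{thm:lower-bound}; these are the actual ``Cauchy kernels'' of the argument. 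Third, the transfer back from $G_{\Gamma,\mathbf m}$ to $F_{\Gamma,\mathbf m}$ requires an exponential bound on the Fock--Goncharov change-of-basis coefficients $c_{\Gamma\Delta}$, which the paper obtains via the skein-algebra argument of Section~\ref{skein_algebra_section} (Thurston's positivity of the bracelet basis plus the Laurent-polynomial positivity in shear coordinates). Finally, the uniqueness in your sketch again invokes the functionals $\ell_\Gamma$; the paper instead proves a Nullstellensatz on $\mathbb B_R$ (Proposition~\ref{explicitNullstellensatz}) and substitutes $\mathbf A=\bm\Id$ at the end. Without a concrete replacement for at least the second and third of these ingredients, your plan does not close.
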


\begin{rem}
\label{rem:non-uniqueness} 
It is worth noting that our results do \emph{not} guarantee the uniqueness of the expansion~\eqref{eq:expansion-of-f} for functions defined just in a small vicinity of~$\Id\in X_{\mathrm{unip}}$. To illustrate a possible catch one can think about expanding entire functions of one complex variable in the basis~$1,z-1,z^2-z,\dots,z^n-z^{n-1},\dots$. In the full plane such expansions always exist and are unique but~$1+(z-1)+(z^2-z)+\dots =0$ in a vicinity of the origin. Since the functions~$f_\Gamma$ are far from being a Fourier basis, we expect a similar
(though more involved) phenomenon in our setup.
\end{rem}

It is easy to see that a combination of Theorem~\ref{thm:dubedat} and Theorem~\ref{thm:main-thm} imply the convergence of probabilities of cylindrical events. Let
\[
\tau(\rho)\ =\ \sum\nolimits_{\Gamma - \mathrm{macroscopic}}p^{\mathrm{iso}}_\Gamma f_\Gamma(\rho),\qquad \rho\in X_{\mathrm{unip}},
\]
be the expansion of the isomonodromic tau-function provided by Theorem~\ref{thm:main-thm}.

\begin{cor}
\label{cor:p-delta->p}
Let $\Omega$ be a planar simply connected domain, $\Omega^\delta$ be a sequence of Temperleyan approximations to~$\Omega$, and $\lambda_1,\dots,\lambda_n\in\Omega$. Then, for each macroscopic lamination $\Gamma$ on $\Omega\setminus\{\lambda_1,\dots,\lambda_n\}$, one has
\[
p^\delta_\Gamma=2^{-\#\mathrm{loops}(\Gamma)}\mathbb P_{\operatorname{dbl-d}}\left[\,\Xi^\delta_\Omega\sim \Gamma\,\right]\ \to\ p^{\mathrm{iso}}_\Gamma\quad \text{as}\ \ \delta\to 0.
\]
Moreover, $R^{|\Gamma|}\cdot |p^\delta_\Gamma - p^{\mathrm{iso}}_\Gamma|\to 0$ as~$\delta\to 0$ uniformly in~$\Gamma$ for each~$R>0$.
\end{cor}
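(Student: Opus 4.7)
The plan is to combine Dub\'edat's uniform convergence (Theorem~\ref{thm:dubedat}(i)) with the quantitative decay estimate~\eqref{eq:estimate-of-pG} of the expansion coefficients provided by Theorem~\ref{thm:main-thm}. The essential observation is that the corollary is a stability statement: since laminations give a (unique) entire-function basis with coefficients controlled by sup-norms on compacta, the convergence of topological correlators automatically transfers to the level of coefficients.

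First, I would note that $\tau^\delta - \tau$ belongs to $\Fhol(X_{\mathrm{unip}})^{\sl}$ since $\tau\in\Fhol(X_{\mathrm{unip}})^{\sl}$ (as recalled just before Theorem~\ref{thm:dubedat}) and $\tau^\delta$ is by~\eqref{eq:tau-delta-def} a finite $\mathbb C$-linear combination of the invariant entire functions $f_\Gamma$. Applying the uniqueness part of Theorem~\ref{thm:main-thm} to $\tau^\delta$, to $\tau$, and to their difference, one sees that the macroscopic-lamination expansion of $\tau^\delta-\tau$ has coefficients exactly $p^\delta_\Gamma - p^{\mathrm{iso}}_\Gamma$ (linearity plus uniqueness; the point is that $\{p^\delta_\Gamma\}_\Gamma$ is genuinely the expansion of $\tau^\delta$ in the $f_\Gamma$ basis of Theorem~\ref{thm:main-thm}, not some a~priori different family).

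Next, fix $R>0$ and let $K_R\subset X_{\mathrm{unip}}$ and $C_R>0$ be the compact set and constant produced by Theorem~\ref{thm:main-thm}. The estimate~\eqref{eq:estimate-of-pG} applied to $f=\tau^\delta-\tau$ gives
\[
\bigl|p^\delta_\Gamma - p^{\mathrm{iso}}_\Gamma\bigr|\ \le\ C_R\cdot R^{-|\Gamma|}\cdot \bigl\|\tau^\delta-\tau\bigr\|_{L^\infty(K_R)}
\]
for \emph{every} macroscopic lamination~$\Gamma$, with $C_R$ independent of~$\delta$ and~$\Gamma$. By Theorem~\ref{thm:dubedat}(i), $\|\tau^\delta-\tau\|_{L^\infty(K_R)}\to 0$ as $\delta\to 0$ because $K_R$ is compact in $X_{\mathrm{unip}}$. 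Multiplying through by $R^{|\Gamma|}$ yields
\[
\sup\nolimits_{\Gamma\text{ macroscopic}}\, R^{|\Gamma|}\bigl|p^\delta_\Gamma - p^{\mathrm{iso}}_\Gamma\bigr|\ \le\ C_R\cdot \bigl\|\tau^\delta-\tau\bigr\|_{L^\infty(K_R)}\ \xrightarrow[\delta\to 0]{}\ 0,
\]
which is exactly the uniform super-exponential convergence asserted by the corollary; taking $\Gamma$ fixed gives the pointwise statement.

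There is really no hard step here: the content of the corollary is entirely absorbed into Theorem~\ref{thm:main-thm}, whose proof is the substantial work of the paper. The only thing to be mildly careful about is to ensure that the bookkeeping of \emph{which} expansion is being used matches on both sides, i.e.\ that the $p^\delta_\Gamma$ produced by~\eqref{eq:tau-delta-def} from the double-dimer model coincide with the coefficients that Theorem~\ref{thm:main-thm} would associate to $\tau^\delta$; this is immediate from uniqueness together with the fact that $\tau^\delta$ is already written as a finite macroscopic-lamination sum in~\eqref{eq:tau-delta-def}.
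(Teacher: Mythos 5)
Your proof is correct and follows essentially the same route as the paper's: apply the quantitative coefficient bound~\eqref{eq:estimate-of-pG} from Theorem~\ref{thm:main-thm} to the function $\tau^\delta-\tau$, identify its coefficients with $p^\delta_\Gamma-p^{\mathrm{iso}}_\Gamma$ via linearity and uniqueness, and invoke Theorem~\ref{thm:dubedat}(i) to send the $L^\infty(K_R)$ norm to zero. The only difference is that you spell out the linearity/uniqueness bookkeeping slightly more explicitly than the paper, which states it as ``$p^\delta_\Gamma$ are nothing but the coefficients in the expansion of $\tau^\delta$.''
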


\begin{proof} By definition,~$p^\delta_\Gamma$ are nothing but the coefficients in the expansion~\eqref{eq:expansion-of-f} of the function~$\tau^\delta$. Therefore, for each~$R>0$, Theorem~\ref{thm:main-thm} implies the uniform estimate
\[
R^{|\Gamma|}\cdot|p^\delta_\Gamma-p^{\mathrm{iso}}_\Gamma| \le C_R\cdot \|\tau^\delta-\tau\|_{L^{\infty}(K_R)}
\]
and the right-hand side vanishes as~$\delta \to 0$ due to Theorem~\ref{thm:dubedat}(i).
\end{proof}

\begin{cor}
\label{cor:p=p-star}
In the same setup, assume that
\begin{equation}
\label{eq:superexp-decay-assumption}
\mathbb P_{\operatorname{CLE(4)}}\left[\,\Xi^\star_\Omega\sim \Gamma\,\right] = O(R^{-|\Gamma|})\ \ \text{as}\ \ |\Gamma|\to\infty,\ \ \text{for all}\ \ R>0.
\end{equation}
Then,~$p^{\mathrm{iso}}_\Gamma=p^\star_\Gamma= 2^{-\#\mathrm{loops}(\Gamma)}\mathbb P_{\operatorname{CLE(4)}}\left[\,\Xi^\star_\Omega\sim \Gamma\,\right]$ for all macroscopic laminations~$\Gamma$.
\end{cor}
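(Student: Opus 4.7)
The plan is to upgrade the local equality $\tau=\tau^\star$ near the trivial representation, given by Theorem~\ref{thm:dubedat}(ii), to a global equality on all of~$X_{\mathrm{unip}}$, and then to invoke the uniqueness clause of Theorem~\ref{thm:main-thm} to identify the two expansions coefficient by coefficient.

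First I would use the hypothesis~\eqref{eq:superexp-decay-assumption} to show that the formal series $\sum_\Gamma p^\star_\Gamma f_\Gamma$ in fact converges absolutely and locally uniformly on the whole of~$X_{\mathrm{unip}}$, hence defines a function $\tilde\tau^\star\in\Fhol(X_{\mathrm{unip}})^{\sl}$ which extends $\tau^\star$ from a neighborhood of~$\Id$ to the whole variety. The key input is a local growth estimate of the form $\|f_\Gamma\|_{L^\infty(K)}\le C_K^{|\Gamma|}$ on every compact $K\subset X_{\mathrm{unip}}$. Such a bound is obtained by representing each simple loop $\gamma\in\Gamma$ by a word of length $O(|\gamma|_\G)$ in fixed generators of $\pi_1(\Omega\setminus\{\lambda_1,\dots,\lambda_n\})$, where $|\gamma|_\G$ counts the intersections of~$\gamma$ with edges $e\in\Ed$; this yields $|\tfrac12\Tr\rho(\gamma)|\le C_K^{|\gamma|_\G}$ on $K$, and multiplying over $\gamma\in\Gamma$ while using $\sum_\gamma|\gamma|_\G=|\Gamma|$ gives the required estimate. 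Combined with the super-exponential decay of~$p^\star_\Gamma$, this yields convergence and $\sl$-invariance of~$\tilde\tau^\star$.

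Next, $\tilde\tau^\star$ coincides with $\tau^\star$ in a neighborhood of~$\Id$ by construction, while Theorem~\ref{thm:dubedat}(ii) asserts $\tau^\star=\tau$ on the same neighborhood. Thus $\tau$ and $\tilde\tau^\star$ are two entire $\sl$-invariant functions on~$X_{\mathrm{unip}}$ agreeing on an open subset; the identity principle applied to the (connected) regular locus of~$X_{\mathrm{unip}}$, together with the continuity of both functions on the whole variety, then forces $\tau=\tilde\tau^\star$ everywhere on~$X_{\mathrm{unip}}$. Finally, applying the uniqueness part of Theorem~\ref{thm:main-thm} to this common function, and noting that both $(p^{\mathrm{iso}}_\Gamma)$ and $(p^\star_\Gamma)$ decay faster than any exponential in~$|\Gamma|$ (the former by Theorem~\ref{thm:main-thm} itself, the latter by hypothesis), yields $p^{\mathrm{iso}}_\Gamma=p^\star_\Gamma$ for every macroscopic lamination~$\Gamma$, which is the desired conclusion.

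The only non-routine points in this scheme are the local exponential bound on~$f_\Gamma$ and the use of the identity principle on the non-smooth variety~$X_{\mathrm{unip}}$. Both should reduce to fairly standard arguments once the parametrization of laminations by multi-indices $\mathbf n\in\mathbb Z_{\ge 0}^\Ed$ and the structural description of the regular locus of~$X_{\mathrm{unip}}$ that are developed elsewhere in the paper are in hand.
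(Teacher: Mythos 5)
Your proposal follows the same route as the paper's proof: use the super-exponential decay to show that the series~\eqref{eq:tau-cle-def} defines an element of~$\Fhol(X_{\mathrm{unip}})^{\sl}$, extend the local equality $\tau=\tau^\star$ of Theorem~\ref{thm:dubedat}(ii) to all of~$X_{\mathrm{unip}}$ by the identity principle, and then invoke the uniqueness clause of Theorem~\ref{thm:main-thm} to identify the two coefficient sequences. The paper states this in three terse sentences and leaves the local exponential bound on~$f_\Gamma$ (via word length in generators) and the connectedness of the regular locus of~$X_{\mathrm{unip}}$ (each factor~$\{\Tr=2\}\subset\sl$ being an irreducible quadric cone) implicit; you spell out precisely these two points, which are exactly the places where care is needed, so this is a faithful and correctly fleshed-out version of the paper's own argument.
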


\begin{proof} If the assumption~\eqref{eq:superexp-decay-assumption} holds, then the series~\eqref{eq:tau-cle-def} converges for all~$\rho\in X_{\mathrm{unip}}$ and~$\tau^\star\in \Fhol(X_{\mathrm{unip}})^{\sl}$. Due to Theorem~\ref{thm:dubedat}(ii) the entire functions~$\tau$ and~$\tau^\star$ coincide in a vicinity of~$\rho=\Id$ and hence everywhere on~$X_{\mathrm{unip}}$. Therefore, the uniqueness of the expansion~\eqref{eq:expansion-of-f} gives~$p^{\mathrm{iso}}_\Gamma=p^\star_\Gamma$ for all macroscopic laminations~$\Gamma$.
\end{proof}

The main ideas of the proof of Theorem~\ref{thm:main-thm} are discussed in Section~\ref{sec:strategy}. We conclude the introduction by the following vague remark on a possible \emph{deformation} of the topological observables~\eqref{eq:tau-cle-def}. Though the first naive idea would be just to replace the CLE(4) measure in their definition by CLE($\kappa$) with~$\kappa\ne 4$, thus obtained functions do not look very natural. In view of the material discussed in Section~\ref{skein_algebra_section} it actually looks more promising to simultaneously deform the functions~$f_\Gamma$ given by~\eqref{eq:f-gamma-def} by using the quantum trace functionals~\cite{BonahonWong} instead of the usual traces. Having in mind the famous predictions on the scaling limits of loop O($n$) and FK($q$) models (e.g., see~\cite[Section~2.4]{schramm-icm-06} and \cite[Section~2]{smirnov-icm-06}), it sounds plausible that the quantization parameter should be then tuned so that the simple loop weight in the corresponding skein algebras is equal to~$\pm 2\cos(4\pi/k)$. We believe that developing tools to analyze such deformations might be of great interest, both from CLE and lattice models perspectives.


\section{Toy example $n=2$ and the strategy of the proof}\label{sec:strategy}

In this section we informally describe the main ideas of the proof of Theorem~\ref{thm:main-thm}. Sometimes we use the case~$n=2$ as a toy example. Certainly, this is a classical and well-studied setup (e.g., see~\cite[Section~10]{Kenyon} and~\cite[Corollary~2]{Dubedat}): each macroscopic lamination on~$\Omega\setminus\{\lambda_1,\lambda_2\}$ is given by~$k\ge 0$ loops homotopic to the simple loop~$\bl{\ell}$ surrounding both punctures. Let~~$z:=\Tr(\rho(\bl{\ell}))\in\mathbb{C}$ and~$p_k^\delta$ be the probability, divided by~$2^k$, to see exactly $k$ copies of~$\bl{\ell}$ in the double-dimer model loop ensemble on~$\Omega^\delta$. In this situation our main result can be rephrased as follows: the convergence of entire functions
\[
\textstyle \tau^\delta(z)=\sum_{k\ge 0} p_k^\delta z^k \ \to \tau(z)=\sum_{k\ge 0} p_kz^k\quad \text{as}\ \ \delta\to 0
\]
on compact subsets of~$\mathbb{C}$ implies that~$p_k^\delta\to p_k$ as~$\delta\to 0$ for each~$k\ge 0$.

Unfortunately, such a straightforward argument does not work for~$n>2$ since the set of functions~$f_\Gamma:X_{\mathrm{unip}}\to \mathbb C$ indexed by macroscopic laminations does not have a structure of the Fourier basis. \bl{Thus, more involved tools should be used.

Recall that the manifold~$X=\Hom(\pi_1(\Omega\setminus\{\lambda_1,\dots,\lambda_n\})\to\sl)$ can be viewed as~$(\sl)^n$ (a parametrization is given by a choice of generators of the fundamental group) which provides it with the structure of an affine algebraic group. The definition~\eqref{eq:Xunip_def} makes $X_{\mathrm{unip}}$ to be an algebraic subvariety of $X$. We denote by $\F(X)$ and $\F(X_{\mathrm{unip}})$ the rings of algebraic functions on these varieties. Note that $\sl$ acts on $X$ and $X_{\mathrm{unip}}$ algebraically, so let us denote the corresponding rings of invariants by} 
\[
  \bl{\F(X)^{\sl}\subset \Fhol(X)^{\sl}\quad \text{and}\quad \F(X_{\mathrm{unip}})^{\sl}\subset \Fhol(X_{\mathrm{unip}})^{\sl}.}
\]

\bl{One can relatively easily deduce from the Fock-Goncharov theorem~\cite[Theorem~12.3]{FG} that~$f_\Gamma$ form an {algebraic} basis of the space \mbox{$\F(X_{\mathrm{unip}})^{\sl}$}. In other words, each \emph{polynomial} function $f\in\Fhol(X_{\mathrm{unip}})^{\sl}$ admits a unique (finite) expansion \mbox{$f(\rho)=\sum_{\Gamma - \text{macroscopic}}p_\Gamma f_\Gamma(\rho)$}. However, even the uniqueness of such (infinite) expansions for arbitrary \emph{entire} functions $f$ is not at all obvious. To see a possible difficulty, the reader can think, e.g., about replacing the Fourier basis~$1,z,z^2,\dots$ by its lower-diagonal transform~$1,z-2,z^2-4z,\dots,z^n-2^nz^{n-1},\dots$ in the toy example discussed above: one has $1+\frac{1}{2}(z-2)+\ldots+2^{-n(n+1)/2}(z^n-2^nz^{n-1})+\ldots=0$, $z\in\mathbb{C}$.}

Even if the aforementioned existence and uniqueness issues are settled, it still might be problematic to extract the convergence of coefficients~$p_k^\delta$ from the convergence of functions~$\tau^\delta$ unless an a priori estimate similar to~\eqref{eq:estimate-of-pG} is available. 
Since, to the best of our knowledge, no explicit analogue of the Cauchy formula (which settles the toy case~$n=2$) on~$X_{\mathrm{unip}}$ is known if~$n>2$, we develop a set of general tools to analyze $\sl$-invariant entire functions on~$X_{\mathrm{unip}}$ as sketched below.


\begin{figure}
  \centering
  \begin{minipage}{.496\textwidth}
    \centering
    \tempskip{\includegraphics[clip, trim=0cm 20cm 0cm 1.2cm, width = 0.96\textwidth]{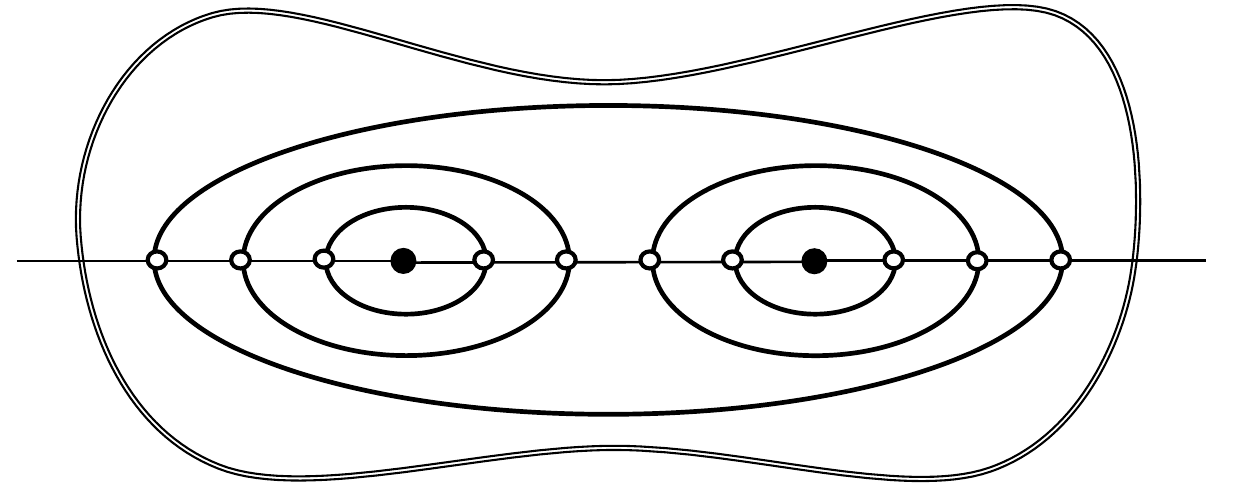}}
  \end{minipage}
    \begin{minipage}{.496\textwidth}
    \centering
    \tempskip{\includegraphics[clip, trim=0cm 20cm 0cm 1.2cm, width = 0.96\textwidth]{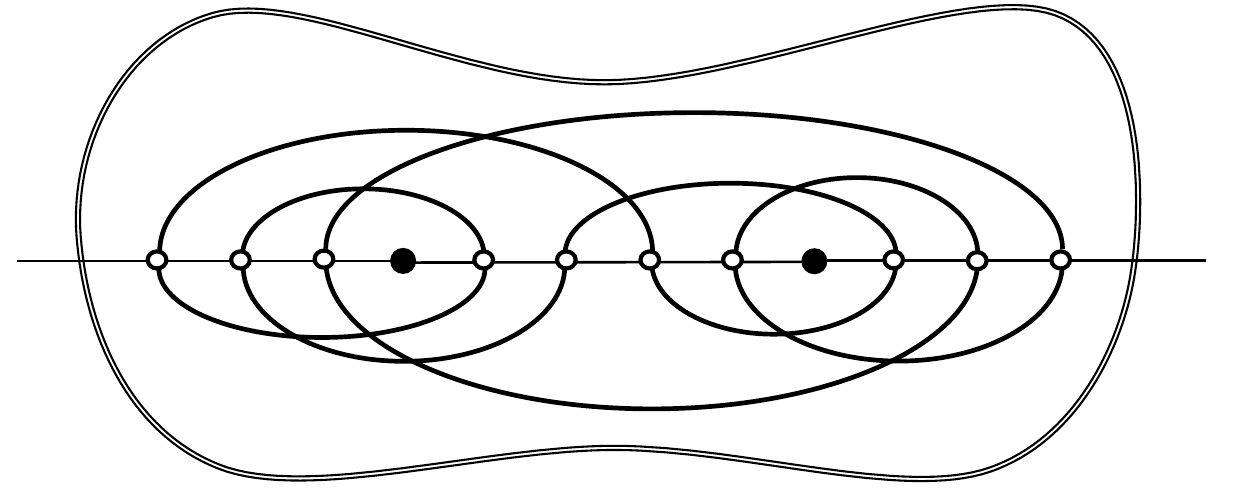}}
  \end{minipage}
  \caption{The case~$n=2$. A lamination corresponding to the multi-index~$\mathbf{n}=(3,4,3)$ and one of the multi-curves from the set~$\mathcal{C}(3,4,3)$.}
  \label{fig:n=2}
\end{figure}


\bl{Though our ultimate goal is to study the space~$\Fhol(X_{\mathrm{unip}})^{\sl}$, we begin with discussing the rings~$\F(X)^{\sl}$ and~$\Fhol(X)^{\sl}$ before descending this analysis to~$\F(X_{\mathrm{unip}})^{\sl}$ and $\Fhol(X_\mathrm{unip})^{\sl}$. Since $X$ carries a (non-canonical) structure of an algebraic group (whereas $X_\mathrm{unip}$ does not), we can expect to have a natural vector-space basis of $\F(X)^{\sl}$ obtained by an application the Peter-Weyl theorem.
It turns out that such a basis can be labeled by laminations and is related to the basis $\{f_\Gamma\}$ by a lower-triangular transform.}
Following Fock and Goncharov~\cite[Section~12]{FG} this goes as follows:

-- One encodes the laminations~$\Gamma$ by multi-indices~$\mathbf n=(n_e)_{e\in\Ed}\in \mathbb Z_{\ge 0}^\Ed$ indexed by the edges of a triangulation~$\G$ of~$\Omega\setminus\{\lambda_1,\dots,\lambda_n\}$, see Fig.~\ref{fig:integers_to_laminations}. If~$n=2$, these multi-indices are just triples~$(n_1,n_2,n_3)$ of non-negative integers (see Fig.~\ref{fig:n=2}) \bl{such that
\begin{equation}
\label{eq:lam-intro}
\bl{(n_1,n_2,n_3)\ \text{satisfy the triangle inequality}\ \ \&\ \ n_1+n_2+n_3\ \text{is even};}
\end{equation}
the condition~\eqref{eq:lam-intro} is called the \emph{lamination condition}.}

-- Flat $\sl$ connections on~$\Omega\setminus\{\lambda_1,\dots,\lambda_n\}$ can be parameterized by putting transition matrices~$A_e\in\sl$ onto the edges of~$\G$ and factorizing over the natural action (change of the bases of the rank-two vector bundle) of~$\sl$ matrices assigned to the faces of~$\G$. This parametrization provides a natural isomorphism
\begin{equation}
  \label{eq:F(X)_cong_F(slEd)}
  \F(X)^{\sl}\cong \F(\sl^\Ed)^{\sl^\Fa}.
\end{equation}

-- As already mentioned, the functions~$f_{(n_1,n_2,n_3)}\in\F(X)^{\sl}$ (for simplicity, we focus on the case~$n=2$ though the same arguments work well in the general case) do not have a Fourier basis structure. Nevertheless, \bl{applying the Peter-Weyl theorem to the right-hand side of~\eqref{eq:F(X)_cong_F(slEd)}
one can obtain} another collection of functions~$g_{(n_1,n_2,n_3)}\in\F(X)^{\sl}$ indexed by the \emph{same set} of triples satisfying the lamination condition~\eqref{eq:lam-intro} which is an orthogonal basis in the space~$L^2(\su^\Ed)^{\su^\Fa}$; we call these functions the \emph{Peter--Weyl basis}. In fact, the functions~$g_{(n_1,n_2,n_3)}$ can be constructed by the following symmetrization procedure. Let~$\mathcal{C}(n_1,n_2,n_3)$ be the set of all possible collections of (not necessarily simple or disjoint) loops obtained by concatenating two collections of arcs with~$(n_1,n_2,n_3)$ endpoints drawn inside each of the two faces of~$\G$ so that no arc connects two points on the same edge, see Fig.~\ref{fig:n=2}. Then
\[
g_{(n_1,n_2,n_3)}\ =\ |\mathcal{C}(n_1,n_2,n_3)|^{-1}\cdot \sum\nolimits_{\C\in\mathcal{C}(n_1,n_2,n_3)} f_\C,
\]
where~$f_\C(\rho):=\prod_{\gamma\in\C} \Tr(\rho(\gamma))$.

-- The Fock--Goncharov theorem claims that the bases~$f_\Gamma$ and~$g_\Gamma$ are related by a lower-triangular transform:
\[
g_\Gamma=\sum\nolimits_{\Delta:\Delta\le\Gamma}c_{\Gamma\Delta}f_\Delta,
\]
where the partial order on the set of laminations is given by the partial order on the set of multi-indices~$\mathbf n\in\mathbb Z_{\ge 0}^\Ed$.

By analogy with entire functions \bl{on $\mathbb{C}$,} one can expect that each holomorphic function~$f\in \Fhol(X)^{\sl}$ admits a Fourier-type expansion~$f=\sum_{\Gamma}q_\Gamma g_\Gamma$ with coefficients~$q_\Gamma$ decaying faster than exponentially. Formally, this implies that one can also write
\[
f=\sum\nolimits_{\Delta}p_\Delta f_\Delta,\quad \text{where}\quad p_{\Delta}=\sum\nolimits_{\Gamma:\Delta\le\Gamma}c_{\Gamma\Delta}q_\Gamma,
\]
but there is a catch: to pass from the former series to the latter rigorously one needs an exponential upper bound for the coefficients~$c_{\Gamma\Delta}$ of the Fock--Goncharov change of the bases. To the best of our knowledge, such an estimate is not available in the existing literature thus we prove the (non-optimal) upper bound~$|c_{\Gamma\Delta}|\le 4^{|\Gamma|}$ in our paper.

It is well-known that the identities between the functions~$f_\Gamma$, $f_\C$ and~$g_\Gamma$ can be equivalently written in terms of the Kauffman \emph{skein algebras}~$\sk(M,\pm 1)$ of framed knots in~$M:=(\Omega\setminus\{\lambda_1,\dots,\lambda_n\})\times [0,1]$. In fact, $\F(X)^{\sl}\cong \sk(M,-1)\cong \sk(M,1)$ and the skein relations reflect the identity~$\Tr(AB)+\Tr(AB^{-1})=\Tr(A)\Tr(B)$ for \mbox{$A,B\in\sl$}. This isomorphism suggests the idea of expanding each of the terms~$f_\C$ in the definition of~$g_\Gamma$ as~$f_\C=\sum_{\Delta:\Delta\le\Gamma}c_{\C\Delta}f_\Delta$ by resolving all the crossings of~$\C$ via the skein relations. Unfortunately, this is not an easy thing to do: a collection of loops~$\C$ may contain about~$|\Gamma|^2$ crossings, hence one must analyse highly non-trivial cancelations arising along the way in order to end up with an exponential bound for~$c_{\C\Delta}$.

Following the advice of Vladimir Fock we circumvent these complications by using the positivity of the co-called \emph{bracelet basis} of~$\sk(M,1)$ proved by D.~Thurston~\cite{Thurston} in combination with a representation of~$\sk(M,1)$ in the space of Laurent polynomials coming from Thurston's shear coordinates of hyperbolic structures on~$\Omega\setminus\{\lambda_1,\dots,\lambda_n\}$ (e.g., see~\cite{BonahonWong}). Remarkably enough, in this representation all functions $f_\C$ are mapped to Laurent polynomials with positive integer coefficients. Together, these positivity results imply the desired exponential estimate of coefficients as the sums contain no cancellations anymore.

The arguments briefly described above allows one to prove counterparts of our main results for holomorphic functions living on the whole manifold~$X$ and their expansions in the basis~$f_\Gamma$ indexed by all, not necessarily macroscopic, laminations. The last but not the least part of our analysis is devoted to the translation of these results to holomorphic functions living on the subvariety~$X_{\mathrm{unip}}\subset X$. \bl{Recall that we denote by~$\lloop{\lambda_i}$ the loop surrounding a single puncture~$\lambda_i$.} Note that each lamination admits a unique decomposition into a macroscopic lamination~$\Gamma$, $k_1$ copies of~$\lloop{\lambda_1}$, $k_2$ copies of~$\lloop{\lambda_2}$ etc., we use the notation~$\Gamma\sqcup \lloop{\lambda_{\mathbf k}}$ to describe such a lamination.

The main ingredient of the proof of the existence part of Theorem~\ref{thm:main-thm} is a `controlled' extension of holomorphic functions from~$X_{\mathrm{unip}}$ to~$X$ provided by Manivel's \bl{Ohsawa--Takegoshi}-type theorem~\cite{Manivel}. Roughly speaking, given~$f\in\Fhol(X_{\mathrm{unip}})^{\sl}$ one can first extend it to a function~$F\in\Fhol(X)^{\sl}$ and then group all the terms of the expansion of~$F$ that correspond to the laminations~$\Gamma\sqcup \lloop{\lambda_{\mathbf k}}$ with a fixed~$\Gamma$:
\[
  \textstyle \sum\nolimits_{\mathbf k\in\mathbb Z_{\ge 0}^n}p_{\Gamma\sqcup \lloop{\lambda_{\mathbf k}}}f_{\Gamma\sqcup \lloop{\lambda_{\mathbf k}}}(\rho)\ =\ \bl{\Bigl( \sum\nolimits_{\mathbf k\in\mathbb Z_{\ge 0}^n}2^{|\mathbf k|}p_{\Gamma\sqcup \lloop{\lambda_{\mathbf k}}}\Bigr)}\cdot f_{\Gamma}(\rho) \quad \text{for}\ \ \rho\in X_{\mathrm{unip}}.
\]

Finally, the uniqueness part of Theorem~\ref{thm:main-thm} can be deduced from a version of Hilbert's Nullstellensatz for invariant holomorphic functions on~$X$ as follows. Assume that a series \mbox{$f:=\sum_{\Gamma - \text{macroscopic}}p_\Gamma f_\Gamma$} vanishes on~$X_{\mathrm{unip}}$. Since~$X_{\mathrm{unip}}$ is cut off from~$X$ by the equations~$f_{\lloop{\lambda_i}}-2=0$, one can find functions~$h_i\in\Fhol(X)^{\sl}$ such that
\[
f(\rho)=h_1(\rho)\cdot (f_{\lloop{\lambda_1}}(\rho)-2)+\ldots +h_n(\rho)\cdot (f_{\lloop{\lambda_n}}(\rho)-2)\quad \text{for}\ \ \rho\in X.
\]
Due to the existence of expansions~\eqref{eq:expansion-of-f} one has~$h_i=\sum_{\Gamma\sqcup \lloop{\lambda_{\mathbf k}}}p^{(i)}_{\Gamma\sqcup \lloop{\lambda_{\mathbf k}}}f^{\vphantom{()}}_{\Gamma\sqcup \lloop{\lambda_{\mathbf k}}}$. Using the uniqueness of the expansion of~$f\in \Fhol(X)^{\sl}$ in the basis~$f_{\Gamma\sqcup \lloop{\lambda_{\mathbf k}}}$ and the fact that the product~$f_{\Gamma\sqcup \lloop{\lambda_{\mathbf k}}}\cdot f_{\lloop{\lambda_i}}$ is again a basis function one obtains the identity
\[
\textstyle p_\Gamma f_\Gamma(\rho) = \sum_{\mathbf k\in \mathbb Z_{\ge 0}^n}\sum_{i=1}^n p^{(i)}_{\Gamma\sqcup \lloop{\lambda_{\mathbf k}}} f^{\vphantom{()}}_{\Gamma\sqcup \lloop{\lambda_{\mathbf k}}}(\rho)\cdot (f_{\lloop{\lambda_i}}(\rho)-2)\quad \text{for}\ \ \rho\in X
\]
by collecting all the terms corresponding to a given macroscopic lamination~$\Gamma$. The right-hand side vanishes at~$\rho=\Id$ and hence~$p_\Gamma=0$.

Certainly, the informal discussion given in this section is far from being complete or rigorous, with many important details omitted. For instance, we actually work with functions $F\in\Fhol(\mathbb B_R)^{\su^{\Fa}}$ defined on poly-balls~$\mathbb B_R\subset (\mathbb C^{2\times 2})^{\Ed}$ rather than with~$F\in \Fhol(\sl^{\Ed})^{\sl^\Fa}$ as sketched above. Nevertheless, we hope that this discussion might help the reader to understand the general structure of our arguments and the set of tools used in the proof.

The rest of the paper is organized as follows. We collect relevant basic facts of the representation theory of~$\sl$ and discuss the Fock--Goncharov theorem in Section~\ref{representation_theory_section}. Section~\ref{skein_algebra_section} is devoted to the proof of the exponential estimate of the coefficients~$c_{\Gamma\Delta}$: as explained above, the Kauffman skein algebra~$\sk(M,1)$ plays a central role here. In Section~\ref{complex_analisys_section} we prove our main results. In particular, Section~\ref{sub:HolExt} is devoted to holomorphic extensions of functions defined on compact subsets of~$X_{\mathrm{unip}}$ and Section~\ref{sub:Nullstellensatz} contains a precise version of the Nullstellensatz that we need. All these ingredients are used in Section~\ref{sub:Expansions_macro} to prove Theorem~\ref{main_theorem_actually} which is a slightly stronger version of Theorem~\ref{thm:main-thm}.

\section{Preliminaries and Fock--Goncharov theorem}\label{representation_theory_section}

Let $X = \Hom(\pi_1(\Omega\setminus\{\lambda_1,\dots,\lambda_n\})\to\sl)$ be the affine variety parameterizing representations of $\pi_1(\Omega\setminus\{\lambda_1,\dots,\lambda_n\})$ in~$\sl$, note that~$\X$ can be also thought of as an algebraic variety. Let $\F(\X)$ be the ring of algebraic functions on~$\X$. The group $\sl$ acts on $\X$ by conjugations and, since this action is algebraic, one can consider the ring of invariants $\F(\X)^{\sl}$. Clearly, all functions $f_{\Gamma}$ belong to $\F(\X)^{\sl}$. A famous theorem due to Fock and Goncharov~\cite[Theorem 12.3]{FG} states that these functions actually form a basis in the vector space $\F(\X)^{\sl}$. For the sake of completeness and in order to introduce a consistent notation, we begin this section with some basic facts of the representation theory of $\sl$ and then repeat the proof of this theorem in Section~\ref{fViagAndgViaf} following~\cite{FG}. In Section~\ref{further} we discuss extensions of the functions~$f_\Gamma$ and~$g_\Gamma$ from~$\sl^\Ed$ to the Euclidean space~$(\mathbb{C}^{2\times 2})^{\Ed}$.
Finally, Section~\ref{sec:lowerbound} is devoted to the lower bound for the norms of thus obtained extensions of~$g_\Gamma$ which play an important role in the core part of the paper.


\subsection{Basics of the representation theory of $\bm{\sl}$}\label{sl2reprTheory} In this section we collect basic facts of the representation theory of~$\sl$. We use the well known correspondence between representations of groups and their Lie algebras, which holds for~$\sl$. Due to this correspondence, one can work with the Lie algebra $\liesl$ instead of $\sl$ itself. The proofs are mostly omitted, an interested reader can easily find them in the classical literature (e.g., see~\cite[Lecture 11]{FultonHarris} for a nice exposition).

\begin{lemma}\label{liesl2basis}
  The Lie algebra $\liesl$ of $\sl$ is given by
  \begin{equation*}
    \liesl = \{e,f,h\,\mid\, [f,e] = h,\ [h,e] = -2e,\ [h,f] = 2f\}.
  \end{equation*}
\end{lemma}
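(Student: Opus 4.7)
My plan is to verify the lemma by a direct computation in $\mathbb C^{2\times 2}$; this is a classical check and I do not anticipate any substantive obstacle. The argument naturally splits into three short steps: identify the ambient vector space, write down an explicit $\mathfrak{sl}_2$-triple realizing the relations, and observe that these relations in fact describe the bracket completely.

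First I would identify the underlying vector space. Since $\sl$ is cut out of the group $\gl$ by the single polynomial equation $\det = 1$, its Lie algebra equals the kernel of $d\det$ at the identity. Using the expansion $\det(I + tX) = 1 + t\,\mathrm{tr}(X) + t^2\det(X)$ one sees that $\liesl$ is precisely the space of traceless $2\times 2$ complex matrices, a three-dimensional complex vector space.

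Next I would exhibit an explicit basis realizing the claimed relations. The standard choice is
\[
e := \begin{pmatrix} 0 & 1 \\ 0 & 0 \end{pmatrix},\qquad f := \begin{pmatrix} 0 & 0 \\ 1 & 0 \end{pmatrix},\qquad h := \begin{pmatrix} -1 & 0 \\ 0 & 1 \end{pmatrix}.
\]
These three matrices are visibly linearly independent and traceless, hence form a basis of $\liesl$ by the dimension count above. I would then verify the three commutator identities $[f,e]=h$, $[h,e]=-2e$, $[h,f]=2f$ by multiplying out the relevant $2\times 2$ products; each identity reduces to a one-line check.

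Finally I would note that the stated brackets determine the full Lie structure: bilinearity and antisymmetry of $[\cdot,\cdot]$ fix all remaining brackets (the other three pairs up to sign, together with $[x,x]=0$), and the Jacobi identity is automatic since $\liesl$ has been realized concretely inside $\gl$ under the commutator. The only mild subtlety I foresee is the choice of sign convention: taking $h = \mathrm{diag}(-1,1)$, as opposed to the perhaps more customary $\mathrm{diag}(1,-1)$, is precisely what makes the relations come out as stated (with $[f,e]=+h$) rather than with the opposite signs. Once this convention is fixed, the verification is entirely mechanical.
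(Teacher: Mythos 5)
Your verification is correct, and the sign observation about $h=\mathrm{diag}(-1,1)$ is exactly the point that reconciles the paper's relations (in particular $[f,e]=+h$ and the lowest-weight normalization $hv_0=-nv_0$ used in Proposition~\ref{sl2basic}) with the more common convention $h=\mathrm{diag}(1,-1)$. The paper offers no proof of this lemma — it is one of the "proofs mostly omitted" deferred to the classical literature — and your explicit $2\times2$ computation is precisely the standard argument one would find there, so there is nothing further to compare.
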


Consider now an irreducible finite-dimensional representation $\liesl\to \End(V)$.

\begin{lemma}\label{liesl2eigenvectors}
(i) If $v\in V$ is an eigenvector of $h$, then $ev$ and $fv$ are eigenvectors of~$h$. Moreover, if $hv = \lambda v$, then $h(ev) = (\lambda - 2)(ev)$ and $h(fv) = (\lambda+2)(fv)$.
\smallskip

\noindent (ii) There exists a non-zero vector $v\in V$ such that the following holds:

-- $v$ is an eigenvector for $h$;

-- $ev = 0$;

-- vectors $v, fv, f^2v, \dots, f^{\dim V-1}v$ form a basis of $V$.

\smallskip

\noindent (iii) Let $hv=\lambda v$, where the vector $v$ is introduced in (ii). Then,
\[
\textstyle 0=\Tr h = \sum_{k = 0}^{\dim V-1}(\lambda+2k) = \dim V \cdot (\lambda + \dim V-1)
\]
and hence~$\lambda=-(\dim V-1)$.
\end{lemma}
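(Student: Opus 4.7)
My plan is to handle the three parts in the natural order, with only part~(ii) requiring a genuine argument and parts~(i) and~(iii) reducing to direct manipulations of the commutation relations.

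For part~(i), I would simply apply the relations $[h,e]=-2e$ and $[h,f]=2f$: rewriting them as $he=eh-2e$ and $hf=fh+2f$ and evaluating on an $h$-eigenvector $v$ with eigenvalue $\lambda$ gives $h(ev)=(\lambda-2)(ev)$ and $h(fv)=(\lambda+2)(fv)$, exactly as asserted.

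For part~(ii), my first step is to produce a starting vector. Since $V$ is finite-dimensional over $\mathbb C$, $h$ has at least one eigenvector on~$V$; among all eigenvalues of $h$ on $V$ I would select one whose real part is minimal and let $v$ be an associated eigenvector with eigenvalue~$\lambda$. By part~(i), $ev$ is either zero or an eigenvector of $h$ with eigenvalue $\lambda-2$, whose real part is strictly smaller, so minimality forces $ev=0$. Setting $v_k:=f^kv$, part~(i) again shows that the nonzero $v_k$ are $h$-eigenvectors with pairwise distinct eigenvalues $\lambda+2k$, hence linearly independent; finite-dimensionality then provides a smallest index $N\ge 1$ with $f^Nv=0$. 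I would then check that the subspace $W:=\operatorname{span}(v_0,\dots,v_{N-1})$ is $\liesl$-stable: invariance under $f$ and $h$ is immediate, and invariance under $e$ follows by induction on $k$ from the recursion $ev_k = f(ev_{k-1})-hv_{k-1}$, which in turn comes from the identity $ef = fe - h$ (a rewriting of $[f,e]=h$). Irreducibility of $V$ then forces $W=V$, and hence $N=\dim V$.

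For part~(iii), the key observation is that $h=[f,e]$ is a commutator, so $\Tr h=0$ in any finite-dimensional representation. Evaluating the trace in the basis furnished by~(ii) and using the eigenvalues from~(i) produces
\[
0\ =\ \Tr h\ =\ \sum_{k=0}^{\dim V-1}(\lambda+2k)\ =\ \dim V\cdot(\lambda+\dim V-1),
\]
from which $\lambda = -(\dim V-1)$ follows at once.

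The only step requiring any care is the induction in~(ii) verifying that $e$ preserves $W$; everything else is a direct manipulation of the commutator relations. I do not expect any genuine obstacle, as this is classical $\liesl$ representation theory; the only mild point to keep track of is that the sign convention of the paper is flipped relative to the usual Chevalley presentation, so here $e$ \emph{lowers} and $f$ \emph{raises} the $h$-eigenvalue, which is why the vector $v$ annihilated by $e$ carries the minimal (rather than maximal) weight $\lambda=-(\dim V-1)$.
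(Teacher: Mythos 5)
Your proof is correct and follows the standard classical argument that the paper itself explicitly defers to (it cites Fulton--Harris Lecture 11 in lieu of a written proof). You correctly track the paper's nonstandard sign conventions ($[h,e]=-2e$, so $e$ lowers and $f$ raises the $h$-weight, and the highest-string vector annihilated by $e$ carries the minimal eigenvalue $\lambda=-(\dim V-1)$), and the induction via $ef=fe-h$ showing $e$ preserves $W$ is exactly the right step to invoke irreducibility. No gaps.
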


The above considerations can be summarized as follows:

\begin{prop}\label{sl2basic}
(i) The irreducible finite-dimensional representations of $\liesl$ are enumerated by non-negative integers and $\dim V_n = n+1$, where $V_n$ denotes the $n$-th representation. For each $n$, the representation $V_n$ has a basis $v_0, v_1, \dots, v_n$ such that
\begin{equation*}
  \begin{split}
     hv_k & = (2k-n)v_k,\\
     ev_k & = kv_{k-1},\\
     fv_k & = (n-k)v_{k+1}.
  \end{split}
\end{equation*}

\noindent (ii) The representation $V_1$ corresponds to the standard matrix representation of $\sl$ and $V_n$ is isomorphic (as $\liesl$ module) to the space $\Sym^n V_1$, which is in its turn isomorphic to the space of homogeneous polynomials of two variables of degree $n$. Under this isomorphism $v_k = x^{n-k}y^k$, where $\{x,y\}$ is the basis of $V_1$.
\end{prop}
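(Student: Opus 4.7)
The plan is to prove part~(i) by refining the vector $v$ from Lemma~\ref{liesl2eigenvectors}(ii) into a correctly normalized basis, and then to simultaneously establish existence and part~(ii) via an explicit symmetric-power model.

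For part~(i), fix an irreducible finite-dimensional $V$ and set $n := \dim V - 1$. Let $v$ be the vector provided by Lemma~\ref{liesl2eigenvectors}(ii); by part~(iii) of that lemma, $hv = -nv$ and $ev = 0$. I set $v_0 := v$ and inductively define $v_{k+1} := (n-k)^{-1} f v_k$ for $0 \le k < n$. This is well-posed (since $n-k > 0$), makes $fv_k = (n-k)v_{k+1}$ tautological for $k < n$, and at $k = n$ forces $f v_n = 0$ (consistent with the stated formula, because the basis $v_0, fv_0, \dots, f^n v_0$ from Lemma~\ref{liesl2eigenvectors}(ii) is already exhausted, so $f^{n+1}v_0$ must vanish on weight grounds). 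The identity $hv_k = (2k-n) v_k$ then follows immediately by induction from Lemma~\ref{liesl2eigenvectors}(i). The remaining formula $ev_k = k v_{k-1}$ I would prove by induction on $k$: the base case $ev_0 = 0$ is built in, and the inductive step is a one-line computation using $[e,f] = -h$ (a rewriting of $[f,e] = h$ from Lemma~\ref{liesl2basis}) together with the two preceding formulas. Uniqueness of $V_n$ up to isomorphism follows because any two irreducibles of dimension $n+1$ admit such bases, and the basis-to-basis bijection is manifestly an $\liesl$-intertwiner.

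For part~(ii) and for the existence of $V_n$, I would exhibit the candidate $V_n$ as the space of degree-$n$ homogeneous polynomials in $\mathbb{C}[x,y]$, equipped with the differential operators
\[
 e \cdot P = x\tfrac{\partial P}{\partial y},\qquad f \cdot P = y\tfrac{\partial P}{\partial x},\qquad h \cdot P = \Big({-x}\tfrac{\partial}{\partial x} + y\tfrac{\partial}{\partial y}\Big)P.
\]
A direct check shows that these satisfy the bracket relations of Lemma~\ref{liesl2basis}, and on the monomial basis $v_k := x^{n-k} y^k$ one recovers exactly the formulas of part~(i); irreducibility is then immediate because the weight spaces are one-dimensional and $f$ cycles through the entire basis starting from $v_0$. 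Specialising to $n=1$ gives precisely the standard matrix action of $\liesl$ on $\mathbb{C}^2$, identifying this model with $V_1$; for general $n$ the natural isomorphism between degree-$n$ homogeneous polynomials and $\Sym^n V_1$ intertwines the above action with the induced symmetric-power action, completing~(ii).

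The proof is essentially bookkeeping; the only step that requires any care is the inductive computation of $ev_{k+1}$, where one must keep track of the factors $(n-k)$ and $k$ so that the telescoping produces exactly $(k+1)v_k$. The main pitfall would be a sign slip in translating $[f,e]=h$ into the $[e,f]=-h$ needed for the induction, and in correspondingly choosing the sign on $h$ in the polynomial model so that the weight formula $(2k-n)$ matches.
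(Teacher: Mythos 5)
Your proof is correct and is the standard argument that the paper itself implicitly follows via Lemmas~\ref{liesl2basis}--\ref{liesl2eigenvectors} and its reference to Fulton--Harris (the paper explicitly omits the proof). You correctly track the paper's nonstandard sign convention $[f,e]=h$, which makes $e$ the lowering operator and $f$ the raising one, so that $v_0$ is a lowest-weight vector with $hv_0=-nv_0$ and the polynomial model must carry $h = -x\tfrac{\partial}{\partial x} + y\tfrac{\partial}{\partial y}$ rather than the more conventional $x\tfrac{\partial}{\partial x} - y\tfrac{\partial}{\partial y}$.
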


Recall that the action of $\liesl$ on the tensor product of representations is defined as $a\mapsto a\otimes \Id + \Id\otimes a$.

\begin{lemma}\label{multiplicationofgen}
(i) For each $n,m\ge 0$, the following isomorphism of~$\liesl$ modules
holds:
\begin{equation}
  V_n\otimes V_m \simeq V_{n+m} \oplus V_{n+m-2}\oplus V_{n+m-4}\oplus\dots\oplus V_{|m-n|}.
  \label{eq:productOfTwoReprExpandsAs}
\end{equation}
The projection onto the first component is given by the symmetrization
\[
V_n\otimes V_m \simeq \Sym^n V_1\otimes \Sym^m V_1 \to \Sym^{n+m} V_1 \simeq V_{n+m}.
\]
\noindent (ii)
Furthermore, one has
\begin{equation}
\label{eq:V_1^n-decomposition}
V_1^{\otimes n} \simeq \bigoplus\nolimits_{0\le m\le n}V_{m}^{\oplus l(n,m)},\quad \text{where}\quad l(n,m)=\frac{m+1}{n+1}\binom{n+1}{\frac{1}{2}(n\!-\!m)}
\end{equation}
if~$n-m$ is even and~$l(n,m)=0$ if~$n-m$ is odd. The projection onto the first component is given by the symmetrization~$V_1^{\otimes n}\to \Sym^n V_1\simeq V_n$, note that~$l(n,n)=1$.
\end{lemma}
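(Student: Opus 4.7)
My plan is to deduce both parts of the lemma from the highest-weight classification given in Proposition~\ref{sl2basic} together with complete reducibility of finite-dimensional $\liesl$-modules (which is a standard consequence of that classification).

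For part (i), I would first match weight multiplicities. By Proposition~\ref{sl2basic}, the eigenvalues of $h$ on $V_n$ are $-n,-n+2,\dots,n$, each with multiplicity one. Consequently, the weight $n+m-2k$ on $V_n\otimes V_m$ occurs with multiplicity equal to
\[
\#\{(i,j):0\le i\le n,\ 0\le j\le m,\ i+j=n+m-k\}=\min(k,\,n+m-k,\,n,\,m)+1,
\]
which is precisely the number of summands $V_l$ containing the weight $n+m-2k$ on the right-hand side of~\eqref{eq:productOfTwoReprExpandsAs}. By complete reducibility, this identifies the decomposition. For the projection statement, the polynomial multiplication map
\[
\mu\colon \Sym^n V_1\otimes \Sym^m V_1\longrightarrow \Sym^{n+m}V_1
\]
is surjective and $\liesl$-equivariant, because $\liesl$ acts on $\Sym^\bullet V_1$ by derivations of the algebra $\mathbb{C}[x,y]$. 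Since $\Sym^{n+m}V_1\simeq V_{n+m}$ is irreducible and occurs with multiplicity one in $V_n\otimes V_m$, $\mu$ must agree with the projection onto this summand up to an automorphism of $V_{n+m}$ (equivalently, a nonzero scalar).

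For part (ii), I would iterate (i) with $m=1$. The identities $V_k\otimes V_1=V_{k+1}\oplus V_{k-1}$ for $k\ge 1$ and $V_0\otimes V_1=V_1$ yield the recurrence
\[
l(n+1,m)=l(n,m-1)+l(n,m+1),\qquad l(n,-1):=0,\quad l(0,0)=1,
\]
which determines the numbers $l(n,m)$ uniquely. A short binomial manipulation then verifies that the stated formula $l(n,m)=\tfrac{m+1}{n+1}\binom{n+1}{(n-m)/2}$ solves this recurrence; this is a mild reformulation of the classical ballot identity. The projection onto the unique copy of $V_n$ in $V_1^{\otimes n}$ coincides with the iterated symmetrization $V_1^{\otimes n}\to\Sym^n V_1$, because the latter is a nonzero $\liesl$-equivariant map onto an irreducible module appearing with multiplicity one.

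The principal obstacle, to the extent there is one, is the binomial identity needed to confirm that the closed formula satisfies the Pascal-type recurrence; this is routine but requires a little care. The remainder of the argument is a direct expansion of the classical Clebsch--Gordan analysis, adapted to the conventions fixed earlier in the section.
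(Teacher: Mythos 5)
Your argument is correct. The paper does not actually prove this lemma; it refers to classical references (e.g.\ \cite{FultonHarris}) for all of Section~\ref{sl2reprTheory}, and what you write out is precisely the standard Clebsch--Gordan argument that those references give: the weight multiplicity of $n+m-2k$ in $V_n\otimes V_m$ is $\min(k,n+m-k,n,m)+1$, which matches the right-hand side of \eqref{eq:productOfTwoReprExpandsAs}, and complete reducibility (Weyl's theorem, which you should cite as such rather than as a ``consequence'' of the highest-weight classification --- it is a separate input, usually proved via the Casimir element or via unitarity of $\mathfrak{su}_2$-representations) upgrades the weight-count to the stated isomorphism. The identification of the projection onto $V_{n+m}$ (resp.\ $V_n$) with the symmetrization map via the multiplicity-one argument is also standard and correct. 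For part (ii), your recurrence $l(n+1,m)=l(n,m-1)+l(n,m+1)$ is right, and the closed form $l(n,m)=\frac{m+1}{n+1}\binom{n+1}{(n-m)/2}$ does satisfy it (it is the ballot number $\binom{n}{(n-m)/2}-\binom{n}{(n-m)/2-1}$), so the ``short binomial manipulation'' you gesture at genuinely closes. No gap, just the minor mislabelling of Weyl's theorem.
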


Given a representation  $\sl\to \End(V)$, we denote by $V^{\sl}$ the subspace of invariant (under the action of~$\sl$) vectors in~$V$; \bl{this subspace is nothing but the sum of all copies of $V_0$ arising in the decomposition of $V$ into irreducible representations.}

\begin{cor}\label{laminationCondition}
  Let $n,m,k$ be non-negative integers. The subspace $(V_n\otimes V_m\otimes V_k)^{\sl}$ is one-dimensional if
  \begin{equation} \label{eq:lamCon}
  \bl{|n-m| \leq k \leq n+m \quad \text{and}\quad n+m+k\ \text{is even}.}
  \end{equation}
 Otherwise, this invariant subspace is trivial.
\end{cor}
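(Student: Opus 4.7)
The plan is to reduce the claim to Lemma~\ref{multiplicationofgen}(i) combined with Schur's lemma. The first step is to observe that every irreducible finite-dimensional $\sl$-representation is self-dual: Proposition~\ref{sl2basic} shows that $V_k$ and $V_k^*$ have the same weights under $h$, and by the classification in that proposition there is a unique irreducible module with that weight multiset, so $V_k\simeq V_k^*$. Consequently one has a canonical identification
\[
(V_n\otimes V_m\otimes V_k)^{\sl}\ \cong\ \Hom_{\sl}(V_k^*,V_n\otimes V_m)\ \cong\ \Hom_{\sl}(V_k,V_n\otimes V_m).
\]

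The second step is to plug in the Clebsch--Gordan decomposition from Lemma~\ref{multiplicationofgen}(i),
\[
V_n\otimes V_m\ \simeq\ V_{n+m}\oplus V_{n+m-2}\oplus\cdots\oplus V_{|n-m|},
\]
and then invoke Schur's lemma to see that $\dim\Hom_{\sl}(V_k,V_n\otimes V_m)$ equals the multiplicity of $V_k$ as a summand on the right-hand side. This multiplicity is either $0$ or $1$: it is $1$ precisely when $k$ belongs to the arithmetic progression $\{|n-m|,|n-m|+2,\dots,n+m\}$, which is equivalent to requiring $|n-m|\leq k\leq n+m$ together with $k\equiv n+m\pmod 2$; the latter parity condition is in turn equivalent to $n+m+k$ being even, since $n+m+k$ and $n+m-k$ differ by $2k$.

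No genuine obstacle is anticipated, as the statement is just a packaging of Clebsch--Gordan for $\sl$ together with self-duality of its irreducibles. The only tiny point that needs to be checked is the equivalence of the two parity conditions $n+m-k$ even and $n+m+k$ even, which is immediate. Note also that the line of argument simultaneously delivers the sharper assertion that the invariant subspace is \emph{at most} one-dimensional in all cases, a fact that will be used implicitly whenever the lamination condition~\eqref{eq:lamCon} is referenced later in the paper.
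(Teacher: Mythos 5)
Your proof is correct and is essentially the same as the paper's: both reduce to the Clebsch--Gordan decomposition of Lemma~\ref{multiplicationofgen}(i). The paper applies~\eqref{eq:productOfTwoReprExpandsAs} twice (first to $V_n\otimes V_m$ and then to $V_{n+m-2l}\otimes V_k$, extracting the $V_0$-component), whereas you apply it once and replace the second application by self-duality of $V_k$ together with Schur's lemma; these are two phrasings of the same computation, so there is no substantive difference.
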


\begin{proof}
  Using \eqref{eq:productOfTwoReprExpandsAs} one gets
  \begin{equation*}
    (V_n\otimes V_m\otimes V_k)^{\sl} \simeq \bigoplus\nolimits_{l\,:\,0\leq l\leq \min(n,m)} \left( V_{n+m-2l}\otimes V_k \right)^{\sl}.
  \end{equation*}
  Applying~\eqref{eq:productOfTwoReprExpandsAs} again, one sees that $\left( V_{n+m-2l}\otimes V_k \right)^{\sl}$ is non-zero if and only if $n+m = k+2l$. Moreover, in this case $\dim_{\mathbb C} \left( V_{n+m-2l}\otimes V_k \right)^{\sl} = 1$ since~$V_0$ may appear only once in the right-hand side of~\eqref{eq:productOfTwoReprExpandsAs}. The claim follows.
\end{proof}

\subsection{Parametrization of laminations by multi-indices on~$\bm{\Ed}$}\label{ReparametrizingLaminations}
Given a collection of punctures~$\lambda_1,\dots,\lambda_n\in\Omega$, we fix a triangulation
$\G=(\{\lambda_1,\dots, \lambda_n,\partial\Omega\}, \Ed, \Fa)$ of~$\Omega$ and an (arbitrary chosen) orientation of its edges~$\Ed$. Sometimes we also use the notation~$\G^\circ$ for the dual graph, which have~$\Fa$ as the set of vertices. Let
\begin{equation*}
\begin{array}{r}
  \L := \big\{\mathbf{n}=(n_e)_{e\in\Ed}\in \mathbb Z_{\geq 0}^{\Ed}\,\mid\, \text{$({n}_{e_1}, {n}_{e_2}, {n}_{e_3})$ satisfy the lamination condition \eqref{eq:lamCon}}\phantom{\big\}} \\ \text{for each $\sigma\in \Fa$ with $\partial \sigma = \{e_1,e_2,e_3\}$}\big\}.
\end{array}
\end{equation*}

Following~\cite{FG}, one can construct a bijection between $\L$ and the set of laminations as follows. Given $\mathbf n\in \L$ and an edge $e\in\Ed$, let $\Gamma_{e}$ be an arbitrary collection of $n_e$ distinct points lying on the interior of $e$. Consider a face $\sigma\in \Fa$ with $\partial \sigma = \{e_1,e_2,e_3\}$. The condition \eqref{eq:lamCon} for ($n_{e_1}, n_{e_2}, n_{e_3}$) ensures that we can draw $\frac{1}{2}(n_{e_1} + n_{e_2} + n_{e_3})$ non-intersecting chords inside $\sigma$, each of them starting at some point from $\Gamma_{e_i}$ and ending at some point from $\Gamma_{e_j}$ with $i\neq j$. Moreover, such a drawing is unique up to homotopies. Once all such chords in all faces $\sigma$ are drawn and concatenated in a natural way, one obtains a collection of simple disjoint curves that represents a lamination $\Gamma=\Gamma(\mathbf{n})$, see Figure~\ref{fig:integers_to_laminations}.

Conversely, given a lamination $\Gamma$ there is a unique way to represent each of the curves $\gamma\in \Gamma$ as a closed non-backtracking path on the dual graph $\G^{\circ}$. Set $n_e$ to be the total number of crossing of an edge~$e$ by these paths on~$\G^{\circ}$. It is easy to see that $\mathbf n:= (n_e)_{e\in \Ed}\in \L$ and that the procedure explained above leads to~$\Gamma(\mathbf n)=\Gamma$. Thus, we have constructed a bijection between the set of all laminations and the set~$\L$.
We also obtain a partial order: given two laminations $\Gamma_1, \Gamma_2$ and the corresponding multi-indices $\mathbf n_1,\mathbf n_2\in \L$ we say that
\[
\text{$\Gamma_1\leq \Gamma_2$ if and only if $\mathbf n_1\leq \mathbf n_2$ coordinate-wise.}
\]
We define the {\it complexity} $|\Gamma|$ of a lamination $\Gamma$ as the sum of all coordinates of the corresponding multi-index $\mathbf n$, it is worth noting that this definition depends on the (arbitrary) choice of the triangulation~$\G$ of $\Omega$. Finally, given a lamination $\Gamma$ and a face $\sigma$ we say that $c$ is a {\it chord} of $\Gamma$ in $\sigma$ if $c$ is one of the chords drawn inside $\sigma$ along the procedure explained above, see Figure~\ref{fig:integers_to_laminations}.

\subsection{Peter--Weyl basis}\label{ReparametrizingFunctions} Following~\cite{FG}, in this section we discuss the application of the (algebraic) Peter--Weyl theorem to the space $\F(\X)^{\sl}$. For this purpose we move from $\sl$-representations of~$\pi_1(\Omega\setminus\{\lambda_1,\dots,\lambda_n\})$ to holonomies of flat $\sl$ connections on~$\Omega\setminus\{\lambda_1,\dots,\lambda_n\}$. These connections can be parameterized by collections $( A_e )_{e\in \Ed}$ of matrices assigned to edges~$e\in\Ed$: one can think that there is a copy $\mathbb C^2_\sigma$ of~$\mathbb C^2$ assigned to each of the faces~$\sigma$ of the triangulation and that $A_e:\mathbb C^2_{\el}\to\mathbb C^2_{\er}$ encodes the change of the bases when moving across~$e$ from left to right (recall that we have fixed an orientation of all the edges~$e\in\Ed$ once forever).

This parametrization allows one to study the ring of functions~$\F(\X)$ as a subring of the ring $\F(\sl^{\Ed})$. The action of~$\sl$ on~$X$ corresponds to an action of the bigger group $\sl^{\Fa}$ on $\sl^{\Ed}$, the latter is given by changes of bases in the spaces $\mathbb C^2_\sigma$ assigned to faces of~$\G$. Namely, given~$\mathbf A =(A_e)_{e\in \Ed}\in \sl^{\Ed}$ and~$\mathbf C =(C_{\sigma})_{\sigma\in \Fa}\in \sl^{\Fa}$, one defines
\begin{equation}
  (\mathbf C[\mathbf A])_e\ :=\ C_{\el}^{-1} A_e C_{\er}^{\vphantom{-1}},\qquad \mathbf C[\mathbf A]:=(\mathbf C[\mathbf A])_{e\in\Ed}\in \sl^{\Ed},
  \label{eq:actionRule}
\end{equation}
where $\el$ and $\er$ stand for two faces adjacent to an edge $e$. Quotients under these actions of~$\sl$ and $\sl^{\Ed}$ coincide (see Lemma~\ref{newCoordinates} below) \bl{and so do the rings of invariants:}
\begin{equation}
\label{eq:F(X/SL)=F(SLE/SLF)}
\F(\X)^{\sl} \simeq \F(\sl^{\Ed})^{\sl^\Fa}.
\end{equation}
Then one can use the representation theory of the group~$\sl^{\Ed}$ to analyze the ring of~$\sl^\Fa$-invariant functions on it. More details are given below.

\bl{Let $V^\vee$ be the dual space to $V$.} Classically, given an algebraic group $G$ and a representation $\rho: G\to \End(V)$ one can construct a map $V\otimes V^{\vee}\to \F(G)$ by setting
\begin{equation}
\label{eq:def-peter-weyl-alg}
v\otimes w\ \mapsto\ \lan \rho(\cdot)v, w \ran.
\end{equation}
The following algebraic version of the Peter--Weyl theorem asserts that each algebraic function on~$\sl^{\Ed}$ can be obtained in this way.

\begin{theorem}[{\bf algebraic Peter--Weyl theorem}]\label{algebraicPeterWeyl}
Let $G$ be a reductive linear algebraic group, $\F(G)$ denote the space of algebraic functions on $G$ and $\widehat{G}$ be the set of all irreducible finite-dimensional representations of $G$. Then the mapping~\eqref{eq:def-peter-weyl-alg} defines an isomorphism of $G\times G$-modules:
\begin{equation*}
  \xymatrix{\displaystyle\bigoplus_{V\in \widehat{G}}V\otimes V^{\vee}\ \ar[r]^-\simeq &\ \F(G)}
\end{equation*}
\end{theorem}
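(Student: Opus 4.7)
The plan is to establish equivariance and injectivity formally via Schur's lemma, then derive surjectivity from the fact that every algebraic function on $G$ lies in a finite-dimensional translation-invariant subspace.

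First I would make the $G\times G$-actions explicit: on $\F(G)$ take $((g_1,g_2)f)(x):=f(g_1^{-1}xg_2)$, and on each summand $V\otimes V^{\vee}$ take the standard dual representation on the first factor and $\rho$ on the second. A direct computation shows that the map $\Phi:v\otimes w\mapsto \lan\rho(\cdot)v,w\ran$ intertwines these actions. Since the ground field is $\mathbb C$, each $V\otimes V^{\vee}$ is an irreducible $G\times G$-module, and for non-isomorphic irreducibles $V\not\cong V'$ of $G$ the $G\times G$-modules $V\otimes V^{\vee}$ and $V'\otimes(V')^{\vee}$ are non-isomorphic (distinguished by restricting to either factor of $G\times G$). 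Hence by Schur's lemma the restriction of $\Phi$ to each summand is either zero or injective, and non-triviality is witnessed by evaluating at $e\in G$; this yields injectivity of $\Phi$ on the whole direct sum.

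For surjectivity, the pivotal observation is that any $f\in\F(G)$ generates a finite-dimensional $G\times G$-subrepresentation $W_f\subset \F(G)$ under left and right translations. To see this, apply the comultiplication $\Delta:\F(G)\to\F(G)\otimes\F(G)$ dual to group multiplication (well-defined because $G$ is affine): writing $\Delta(f)=\sum_{i=1}^{N}g_i\otimes h_i$ means $f(xy)=\sum g_i(x)h_i(y)$, so all right translates of $f$ lie in $\mathrm{span}(g_1,\ldots,g_N)$, and symmetrically for left translates. Reductivity of $G$, and hence of $G\times G$, then guarantees that the finite-dimensional module $W_f$ decomposes into irreducibles of the form $U\boxtimes V$ for irreducible $G$-modules $U,V$.

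The concluding step is to show that every such irreducible $G\times G$-summand of $\F(G)$ already lies in the image of $\Phi$. Equivalently, one identifies the $V$-isotypic component $\F(G)_V$ for the right translation action with $V\otimes V^{\vee}$ as a $G\times G$-module. The key tool is algebraic Frobenius reciprocity: a right-$G$-equivariant map $\psi:V\to\F(G)$ is determined by its value at $e$ via $\psi(v)(g)=\psi(gv)(e)$, yielding a canonical isomorphism $\Hom_G(V,\F(G))\cong V^{\vee}$, and a direct computation shows that the left translation action is transported to the standard dual representation on $V^{\vee}$. The main obstacle is to keep the bookkeeping of the two commuting actions consistent, so that the resulting isomorphism $V\otimes V^{\vee}\to \F(G)_V$ agrees with the restriction of $\Phi$ appearing in the theorem; in particular one must check that the same inverse/transpose conventions appear on both sides. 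Once that identification is verified, the theorem follows by assembling the right-isotypic decomposition of $\F(G)$ and applying the matching on each piece.
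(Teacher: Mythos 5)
The paper does not actually prove Theorem~\ref{algebraicPeterWeyl}: it cites it as a standard fact, referring to Tauvel--Yu, Theorem~27.3.9. Your sketch is a correct and essentially complete outline of the argument one finds in such references, so the comparison is really between your exposition and the textbook one, not between your proof and a proof in the paper. The three-part structure you use --- (a) equivariance plus pairwise non-isomorphic irreducibility of the $V\otimes V^{\vee}$ as $G\times G$-modules, giving injectivity via Schur; (b) local finiteness of $\F(G)$ under translations via the coalgebra structure of the coordinate ring, reducing to finite-dimensional modules and then using reductivity; (c) identification of the right-isotypic component $\F(G)_V$ with $V\otimes V^{\vee}$ via algebraic Frobenius reciprocity, and the check that this identification is the map $\Phi$ --- is exactly the standard route, and each step is sound. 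One small imprecision worth tightening: when you describe the $G\times G$-action on $V\otimes V^{\vee}$ as ``the standard dual representation on the first factor and $\rho$ on the second,'' the bookkeeping that makes $\Phi$ equivariant for $((g_1,g_2)f)(x)=f(g_1^{-1}xg_2)$ is that the \emph{first} factor of $G\times G$ acts by the contragredient on the $V^{\vee}$ tensor slot and the \emph{second} factor acts by $\rho$ on the $V$ slot; since you say the equivariance is verified by a direct computation, this is a matter of phrasing rather than substance, but it is precisely the kind of index-flip you flag as the main hazard in step (c), so it is worth stating it unambiguously. Aside from that, the injectivity argument should include the remark (implicit in your sketch) that the images of distinct summands lie in distinct isotypic components, so injectivity on each summand upgrades to injectivity of the whole direct sum.
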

\begin{proof}
E.g., see~\cite[Theorem 27.3.9]{TauvelYu}.
\end{proof}

\bl{
  Let $\gamma$ be a path on the dual graph~$\G^{\circ}$ that crosses edges $e_0,e_1,\dots,e_{m-1}\in \Ed$ consequently, and let $\mathbf A\in \sl^{\Ed}$. We denote by $A_{e_i}$ the matrix from the collection $\mathbf A$ assigned to the edge $e_i$. Let $\sigma_0,\sigma_1,\dots,\sigma_m$ be the faces of $\G$ visited by $\gamma$, enumerated so that $e_i$ is adjacent to $\sigma_i$ and $\sigma_{i+1}$.
}
The \emph{holonomy} of~$\mathbf A$ along~$\gamma$ is defined as
\begin{equation}
\label{eq:hol-def}
\h(\mathbf A, \gamma):= A_{e_0}^{\s(\sigma_0, e_0)}A_{e_1}^{\s(\sigma_1, e_1)}\dots A_{e_{m-1}}^{\s(\sigma_{m-1}, e_{m-1})},
\end{equation}
where $\s(\sigma,e)=+1$ if~$\sigma$ lies to the left of~$e$ and~$\s(\sigma,e)=-1$ otherwise (recall that the orientation of all the edges~$e\in\Ed$ is fixed once forever).

By definition,~$\h(\mathbf A, \gamma_1\cdot \gamma_2) = \h(\mathbf A, \gamma_1)\h(\mathbf A, \gamma_2)$ provided~$\gamma_1$ ends at the face where $\gamma_2$ and  begins $\gamma_1\cdot\gamma_2$ stands for the concatenation of~$\gamma_1$ and~$\gamma_2$. As the fundamental groups of the punctured domain~$\Omega\setminus\{\lambda_1,\dots,\lambda_n\}$ and \bl{of the 1-skeleton} of~$\G^\circ$ coincide, we see that the mapping $\h(\mathbf A,\cdot)$ induces a $\sl$-representation of $\pi_1(\Omega\setminus\{\lambda_1,...,\lambda_n\})$. In particular we obtain an algebraic mapping
\begin{equation}
  \phi: \sl^{\Ed}\to \X,\qquad \mathbf A\mapsto \h(\mathbf A,\cdot).
  \label{eq:triangularMap}
\end{equation}
Recall that the action of the group~$\sl^{\Fa}$ on~$\sl^{\Ed}$ is given by~\eqref{eq:actionRule}.

\begin{lemma}\label{newCoordinates}
The mapping~\eqref{eq:triangularMap} intertwines the action of $\sl^{\Fa}$ on $\sl^{\Ed}$ and the action of $\sl$ on $X$. The induced mapping
\begin{equation*}
  \phi^*: \F(X)^{\sl}\to \F(\sl^{\Ed})^{\sl^{\Fa}}
\end{equation*}
is an isomorphism.
\end{lemma}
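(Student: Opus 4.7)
My plan is to prove both claims of the lemma by unpacking the holonomy map $\phi$, using a spanning-tree gauge on the dual graph $\G^\circ$ as the main bookkeeping device. Throughout, I fix once and for all a base face $\sigma_0$, which identifies $\pi_1(\Omega\setminus\{\lambda_1,\dots,\lambda_n\})$ with homotopy classes of loops in $\G^\circ$ based at $\sigma_0$.

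For the intertwining claim, I take any loop $\gamma=(\sigma_0\sigma_1\dots\sigma_m{=}\sigma_0)$ and substitute the definition \eqref{eq:actionRule} directly into \eqref{eq:hol-def}. A short case analysis on the sign $\s(\sigma_k,e_k)=\pm 1$ shows that the $k$-th factor in the product always takes the form $C_{\sigma_k}^{-1}A_{e_k}^{\s(\sigma_k,e_k)}C_{\sigma_{k+1}}$, independently of whether $\sigma_k$ lies to the left or to the right of $e_k$. Hence the intermediate factors $C_{\sigma_k}$ telescope and one is left with
\[
\h(\mathbf{C}[\mathbf{A}],\gamma)\ =\ C_{\sigma_0}^{-1}\,\h(\mathbf{A},\gamma)\,C_{\sigma_0}.
\]
This is precisely the intertwining of the $\sl^{\Fa}$-action on $\sl^{\Ed}$ with the $\sl$-action on $\X$ via the projection $\sl^{\Fa}\twoheadrightarrow\sl$ onto the $\sigma_0$-th component, and in particular guarantees that $\phi^*$ carries $\F(\X)^{\sl}$ into $\F(\sl^{\Ed})^{\sl^{\Fa}}$.

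For the isomorphism claim, my strategy is to combine two set-theoretic properties of $\phi$: (a) \emph{surjectivity of $\phi$}, which yields injectivity of $\phi^*$; and (b) \emph{matching of orbits}, which yields surjectivity of $\phi^*$. To prove (a), I fix a spanning tree $T\subset\Ed$ corresponding to a spanning tree of $\G^\circ$ rooted at $\sigma_0$; given $\rho\in \X$, I set $A_e=\Id$ for every $e\in T$ and choose $A_e$ for every $e\in\Ed\setminus T$ so that the holonomy of $\mathbf{A}$ along the unique cycle of $T\cup\{e\}$ based at $\sigma_0$ reproduces $\rho$ on the corresponding generator of $\pi_1$. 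Since these cycles form a free generating system of $\pi_1(\Omega\setminus\{\lambda_1,\dots,\lambda_n\})$, the result satisfies $\phi(\mathbf{A})=\rho$. For (b), I assume $\phi(\mathbf{A}')=A^{-1}\phi(\mathbf{A})A$ and build $\mathbf{C}\in\sl^{\Fa}$ with $\mathbf{C}[\mathbf{A}]=\mathbf{A}'$ by setting $C_{\sigma_0}:=A$ and propagating along $T$: at each tree edge $e$ from an already-visited face $\sigma$ to a new face $\sigma'$, the required identity $C_{\el}A'_e=A_eC_{\er}$ uniquely determines $C_{\sigma'}$ from $C_\sigma$, $A_e$, and $A'_e$.

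The step I expect to be the main obstacle is the consistency check in~(b) at non-tree edges: for every $e'\in\Ed\setminus T$ the identity $C_{\el}A'_{e'}=A_{e'}C_{\er}$ must also hold after $\mathbf{C}$ has been propagated along $T$. Unwinding the definition of $C_\sigma$ in terms of tree holonomies, this reduces to the identity $\h(\mathbf{A},\eta)\cdot C_{\sigma_0}=C_{\sigma_0}\cdot\h(\mathbf{A}',\eta)$ along the $T\cup\{e'\}$-cycle $\eta$ at $\sigma_0$, which is exactly what the hypothesis $\phi(\mathbf{A}')=A^{-1}\phi(\mathbf{A})A$ provides once evaluated on the element of $\pi_1$ represented by $\eta$. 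The rest of the argument, in particular the sign/orientation bookkeeping in the telescoping computation of the intertwining part, is purely mechanical.
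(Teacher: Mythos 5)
Your proof is essentially the same as the paper's: both gauge-fix along a spanning tree (yours is a dual spanning tree; the paper's $T$ is a primal spanning tree and it sets $A_e=\Id$ for $e\notin T$, which singles out the same set of edges) to produce an algebraic section $\psi:X\to Y\subset\sl^{\Ed}$ of $\phi$, and both then use that every $\sl^{\Fa}$-orbit meets $Y$. The paper packages the conclusion as the explicit inverse $\psi^*\circ(\text{restriction to }Y)$ of $\phi^*$, leaving the intertwining and orbit-covering steps as ``straightforward'' and ``easy to see''; you instead argue injectivity and surjectivity of $\phi^*$ separately and carry out exactly those computations (note only that to deduce surjectivity of $\phi^*$ on \emph{algebraic} functions from your set-theoretic orbit-matching claim, you should explicitly set $f:=g\circ\psi$ using the algebraic section built in your part (a)).
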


\begin{proof}
  The fact that $\phi$ intertwines two actions is straightforward. Let $T$ be a spanning tree for the triangulation $\G$ and a subvariety $Y\subset \sl^{\Ed}$ be defined as
  \begin{equation*}
	Y := \left\{ \mathbf A=(A_e)_{e\in \Ed} \,\mid\, A_e = \Id \text{ if $e\notin T$}\right\}.
  \end{equation*}
  It is easy to see that $\phi$ restricted to $Y$ is an isomorphism, so let $\psi: \X\to Y$ be the inverse mapping. Further, for each $\mathbf A\in \sl^{\Ed}$ there exists $\mathbf C\in \sl^{\Fa}$ such that $\mathbf C[\mathbf A]\in Y$. Thus the restriction mapping $\F(\sl^{\Ed})\to \F(Y)$ sends the subring of~$\sl^{\Fa}$-invariant functions on~$\sl^{\Ed}$ isomorphically onto its image, which we denote by~$\F(\sl^{\Ed})^{\sl^{\Fa}}|_Y$. Hence, the composition
  \begin{equation*}
	\xymatrix{\F(\sl^{\Ed})^{\sl^{\Fa}}\ \ar[r]^-{\simeq} &\ \F(\sl^{\Ed})^{\sl^{\Fa}}|_Y\ \ar[r]^-{\,\psi^*} &\ \F(X)^{\sl}}
  \end{equation*}
  provides an inverse mapping to $\phi^*$.
\end{proof}

Recall that we denote by~$V_1$ the standard two-dimensional representation of $\sl$ and that all finite-dimensional irreducible representations of~$\sl$ are provided by $V_n\simeq\Sym V_1^{\otimes n}$. Let~$x,y$ be the standard basis of~$V_1$ so that $x\otimes x,\frac{1}{2}(x\otimes y+y\otimes x),y\otimes y$ is a basis of~$V_2$ and, more generally, $V_d$ is spanned by the vectors
\begin{equation*}
  (x^ky^{n-k})^{\sym} := (n!)^{-1} \sum\limits_{\pi \in \Sym(d)}\pi_* \biggl(\underbrace{x\otimes \dots\otimes x}_{\text{$k$ times}}\otimes\underbrace{y\otimes \dots\otimes y}_{\text{$n-k$ times}}\biggr),\quad 0\le k\le n.
\end{equation*}

We now define two pairings on $V_n$. To get the first we identify $V_1\wedge V_1$ with $\mathbb C$ by setting $x\wedge y = 1$. This gives rise to a skew-symmetric bilinear form $(v, w)\mapsto v\wedge w$ on~$V_1$. Once can extend it on $V_1^{\otimes n}$ and then restrict to $V_n$. The result is a non-degenerate (skew-symmetric if~$n$ is odd and symmetric if~$n$ is even) bilinear form on~$V_n$, which we still denote by $v \wedge w$.

The second pairing on~$V_n$ is obtained in a similar manner from the Hermitian scalar product on~$V_1$ (which is defined by $\lan x,y \ran = 0$ and $\lan x,x \ran = \lan y,y \ran = 1$ on $V_1$), extended to $V^{\otimes n}$ and then restricted to~$V_n$. One can easily see that
\begin{equation*}
  (-1)^{n-k}(x^ky^{n-k})^{\sym}\wedge (x^{n-k}y^k)^{\sym} = \lan (x^ky^{n-k})^{\sym}, (x^ky^{n-k})^{\sym} \ran = \binom{n}{k}^{-1}.
\end{equation*}

We need an additional notation. For a face $\sigma\in \Fa$ and an edge $e\in \partial \sigma$ adjacent to it, let $V_{\sigma, e, n}$ be a copy of $V_n$. Given a multi-index $\mathbf n\in \mathbb Z_{\geq 0}^{\Ed}$ we introduce a space
\begin{equation*}
  \mathbf{V}_{\mathbf n}\ :=\ \bigotimes_{e\in \Ed} V_{\el(e), e, n_e}\otimes V_{\er(e), e, n_e},
\end{equation*}
recall that $\el(e)$ is the face lying to the left of $e$ and $\er(e)$ is the one to the right, with respect to the once forever fixed orientation of~$e$. There is a linear mapping $\pair: \mathbf{V}_{\mathbf n}\to \mathbb C$ given by
\begin{equation*}
  \textstyle \pair\bigl ({ \bigotimes_{e\in \Ed}} (v_{\el(e), e, n_e}\otimes v_{\er(e), e, n_e})\bigr)\ :=\ \prod_{e\in \Ed} (v_{\el(e), e, n_e}\wedge v_{\er(e), e, n_e}).
\end{equation*}
Define an action of the group $\sl^{\Ed}$ on $\mathbf{V}_{\mathbf n}$ as follows:
\begin{equation}
\label{eq:actionAdef}
  \textstyle \mathbf A\bigl[\,{\bigotimes_{e\in \Ed}} (v_{\el(e), e, n_e}\otimes v_{\er(e), e, n_e})\bigr]\ :=\ {\textstyle \bigotimes_{e\in \Ed}} (v_{\el(e), e, n_e}\otimes A_ev_{\er(e), e, n_e}),
\end{equation}
where~$\mathbf A=(A_e)_{e\in\Ed}$. One can now construct a linear mapping
\begin{equation}
\label{eq:morePeterWeylRepresentation}
\begin{array}{rcl}
\displaystyle \Upsilon:\ \mathbf{V} := \bigoplus_{\mathbf n\in \mathbb Z_{\geq 0}^{\Ed}}\mathbf{V}_{\mathbf n} & \to & \F(\sl^{\Ed}),\\[-12pt]
\bl{\mathbf v} &\bl{\mapsto}& \Bigl( \Upsilon(\mathbf v):  \mathbf A \mapsto \pair(\mathbf A[\mathbf v])\Bigr).
\end{array}
\end{equation}
Applying Theorem~\ref{algebraicPeterWeyl} on each of the edges of~$\Ed$, one sees that~\eqref{eq:morePeterWeylRepresentation} is an isomorphism.

In order to study the subring of~$\sl^\Fa$-invariant functions on~$\sl^\Ed$, we rearrange the factors in the definition of~$\mathbf{V}_{\mathbf n}$ and define an action of $\sl^{\Fa}$ on $\mathbf{V}_{\mathbf n}$ as
\begin{equation*}
\mathbf{V}_{\mathbf n} = \bigotimes_{\sigma\in \Fa}\bigotimes_{e\in \partial \sigma}V_{\sigma, e, n_e},\qquad \textstyle
   \mathbf C\bigl[\,\bigotimes_{\sigma\in \Fa}\bigotimes_{e\in \partial \sigma}v_{\sigma, e, n_e} \bigr]\ :=\ \bigotimes_{\sigma\in \Fa}\bigotimes_{e\in \partial \sigma}C_{\sigma}v_{\sigma, e, n_e},
\end{equation*}
where $\mathbf C=(C_\sigma)_{\sigma\in\Fa}$.

\begin{lemma}\label{wedgeTrick}
The mapping~\eqref{eq:morePeterWeylRepresentation} commutes with the action of~$\sl^{\Fa}$.
\end{lemma}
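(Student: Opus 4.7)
The plan is to verify the equivariance by a direct unfolding of definitions. The only substantive ingredient is the $\sl$-invariance of the wedge pairing on each representation $V_n$: since by construction $x\wedge y=1$ on $V_1$, the group $\sl$ preserves the symplectic form on $V_1$, and after symmetrization to $V_n\subset V_1^{\otimes n}$ this yields the identity
$$
(Cu)\wedge(Cw)\ =\ u\wedge w,\qquad \text{equivalently}\qquad (Cu)\wedge w\ =\ u\wedge (C^{-1}w),
$$
for all $C\in\sl$ and $u,w\in V_n$. Everything else is bookkeeping between the two parenthesizations of $\mathbf V_{\mathbf n}$ — one ``over edges'', which is natural for the $\sl^\Ed$-action~\eqref{eq:actionAdef}, and one ``over faces'', which is natural for the $\sl^\Fa$-action.

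By linearity it suffices to check the identity $\Upsilon(\mathbf C[\mathbf v])(\mathbf A)=\Upsilon(\mathbf v)(\mathbf C[\mathbf A])$ on a pure tensor $\mathbf v=\bigotimes_{e\in\Ed}(v_{\el,e}\otimes v_{\er,e})$, where I write $v_{\el,e}\in V_{\el(e),e,n_e}$ and $v_{\er,e}\in V_{\er(e),e,n_e}$ for brevity. The $\sl^\Fa$-action rearranges factors face-by-face, so
$$
\mathbf C[\mathbf v]\ =\ \bigotimes_{e\in\Ed}\bigl(C_{\el(e)}v_{\el,e}\otimes C_{\er(e)}v_{\er,e}\bigr),
$$
and applying~\eqref{eq:actionAdef} afterwards one obtains
$$
\mathbf A[\mathbf C[\mathbf v]]\ =\ \bigotimes_{e\in\Ed}\bigl(C_{\el(e)}v_{\el,e}\otimes A_eC_{\er(e)}v_{\er,e}\bigr).
$$
Since $\pair$ is a product of wedge pairings over $e\in\Ed$, applying the $\sl$-invariance edge by edge gives
$$
(C_{\el(e)}v_{\el,e})\wedge\bigl(A_eC_{\er(e)}v_{\er,e}\bigr)\ =\ v_{\el,e}\wedge\bigl(C_{\el(e)}^{-1}A_eC_{\er(e)}v_{\er,e}\bigr)
$$
for each $e\in\Ed$. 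Recognizing $(\mathbf C[\mathbf A])_e=C_{\el(e)}^{-1}A_eC_{\er(e)}$ from~\eqref{eq:actionRule} and taking the product over $e$, one finds that the right-hand side equals $\pair((\mathbf C[\mathbf A])[\mathbf v])=\Upsilon(\mathbf v)(\mathbf C[\mathbf A])$, as required.

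I do not expect any real obstacle; the result is a design feature of the setup. The only place where one has to be careful is tracking conventions: $\sl^\Ed$ was chosen in~\eqref{eq:actionAdef} to act on the ``right'' slot of each edge only, so in the computation above the factor $C_{\er(e)}$ sits naturally between $A_e$ and $v_{\er,e}$, whereas $C_{\el(e)}$ has to be transported through the wedge pairing from the ``left'' slot, which is precisely where the identity $(Cu)\wedge w=u\wedge(C^{-1}w)$ enters and reproduces the $C_{\el(e)}^{-1}$ appearing in~\eqref{eq:actionRule}. This match between the two conventions is the conceptual content of the lemma.
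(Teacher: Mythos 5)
Your proof is correct and is essentially the paper's argument: both rest on the observation that $Cu\wedge w = u\wedge C^{-1}w$ for $C\in\sl$ and $u,w\in V_n$ (which you justify via $\sl$-invariance of the symplectic form on $V_1$ and symmetrization, while the paper states it directly), and both then unwind $\Upsilon(\mathbf C[\mathbf v])(\mathbf A)=\Upsilon(\mathbf v)(\mathbf C[\mathbf A])$ edge-by-edge; you simply read the paper's chain of equalities in the opposite direction.
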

\begin{proof}
  Let $\mathbf v=\bigotimes_{e\in \Ed} v_{\el(e), e, n_e}\otimes v_{\er(e), e, n_e}\in\mathbf V_{\mathbf n}$ and $\Upsilon(\mathbf v)$ be the corresponding algebraic function on $\sl^{\Ed}$. Note that for each $v,w\in V_d$ and $C\in \sl$ one has $Cv\wedge w = v\wedge C^{-1}w$. Using this observation and~\eqref{eq:actionRule} one gets
  \begin{equation*}
	\begin{split}
	  (\Upsilon(\mathbf v))(\mathbf C[\mathbf A])
    & \textstyle = \pair\bigl( \bigotimes_{e\in \Ed} (v_{\el(e), e, n_e}\otimes (\mathbf C[\mathbf A])_ev_{\er(e), e, n_e})\bigr)\\
    &= \textstyle \prod_{e\in \Ed}\bigl( v_{\el(e), e, n_e}\wedge C_{\el(e)}^{-1}A_eC_{\er(e)}^{\vphantom{-1}}v_{\er(e), e, n_e} \bigr)\\
	&= \textstyle \prod_{e\in \Ed}\bigl( C_{\el(e)}v_{\el(e), e, n_e} \wedge A_eC_{\er(e)}v_{\er(e), e, n_e}\bigr)\\
  &= \pair\bigl(\mathbf A[\mathbf C[\mathbf v]]\bigr) = \Upsilon (\mathbf C[\mathbf v])\bl{(\mathbf A)},
    \end{split}
  \end{equation*}
  thus the actions of~$\sl^{\Fa}$ on~$\sl^{\Ed}$ and~$\mathbf V_{\mathbf n}$ commute with~$\Upsilon$.
\end{proof}

One can now write
\begin{equation*}
\mathbf{V}_{\mathbf n}^{\sl^{\Fa}} = \bigotimes\limits_{\sigma\in \Fa}\biggl(\,\bigotimes\limits_{e\in \partial \sigma}V_{\sigma, e,n_e}\biggr)^{\sl}
\end{equation*}
and apply Corollary~\ref{laminationCondition} to each of the~$\sl$-invariant subspaces corresponding to faces~$\sigma\in \Fa$. It follows that ~$\mathbf{V}_{\mathbf n}^{\sl^{\Fa}}$ is one-dimensional if $\mathbf n\in \L$, i.e. if the lamination condition~\eqref{eq:lamCon} holds true for all~$\sigma$, and is trivial otherwise. Due to Lemma~\ref{newCoordinates} and Lemma~\ref{wedgeTrick}, this leads to the following decomposition of $\F(\X)^{\sl}$ into a direct sum of one-dimensional spaces:
\begin{equation}
  \F(\X)^{\sl} \ \simeq\ \F(\sl^{\Ed})^{\sl^{\Fa}}
  \ \simeq\ \bigoplus_{\mathbf n\in \L}\mathbf{V}_{\mathbf n}^{\sl^{\Fa}}.
  \label{eq:lamDecomp}
\end{equation}

\begin{defin}
\label{def:g_gamma}
Let $g_{\Gamma} \in \F(\X)^{\sl}$ be the projection of $f_{\Gamma}$ onto the subspace corresponding to the space $\mathbf{V}_{\mathbf n(\Gamma)}^{\sl^{\Fa}}$ in the decomposition~\eqref{eq:lamDecomp}, where $\mathbf n(\Gamma)\in \L$ is the multi-index corresponding to~$\Gamma$ as discussed in Section~\ref{ReparametrizingLaminations}. We call functions~\bl{$\{g_\Gamma\}$} the \emph{Peter--Weyl basis} of~$\F(\X)^{\sl}$.
\end{defin}

\begin{rem} Since all the spaces in the decomposition~\eqref{eq:lamDecomp} are one-dimensional, in order to prove that the functions~$g_\Gamma$ indeed form a basis of~$\F(\X)^{\sl}$, it is enough to check that these functions do not vanish. This fact follows from Lemma~\ref{gViaLoops}.
\end{rem}


\subsection{Lamination basis~$\bm{f_\Gamma}$ and Peter--Weyl basis~$\bm{g_\Gamma}$ via each other}\label{fViagAndgViaf} A famous theorem due to Fock and Goncharov~\cite[Theorem 12.3]{FG} claims that the functions~$f_\Gamma$ also form a basis in the space~$\F(\X)$ and that the corresponding change of bases between~$f_\Gamma$ and~$g_\Gamma$ is given by lower-triangular (with respect to the partial order on~$\L$) matrices:
\[
g_\Gamma=\sum_{\Delta\le\Gamma}c_{\Gamma\Delta}f_\Delta,\qquad f_\Gamma=\sum_{\Delta\le\Gamma}\tilde{c}_{\Gamma\Delta}g_\Delta.
\]
The main goal of this section is to set up a framework for the analysis of the coefficients~$c_{\Gamma\Delta}$. While doing this we also repeat the proof of~\cite[Theorem 12.3]{FG}.

Recall that a \emph{multi-curve} is a smooth immersion \mbox{$c:\sqcup_{i = 1}^m S^1\to \Omega\setminus \{\lambda_1,\dots,\lambda_n\}$} of the union of a number of disjoint circles ($m$ is not fixed), considered up to homotopies in the space of \emph{immersions}: in Section~\ref{skein_algebra_section} the total winding (rotation of the tangent vector) of multi-curves will be of importance.

\begin{wrapfigure}{r}{0.42\textwidth}
  \centering
  \tempskip{\includegraphics[width = 0.28\textwidth]{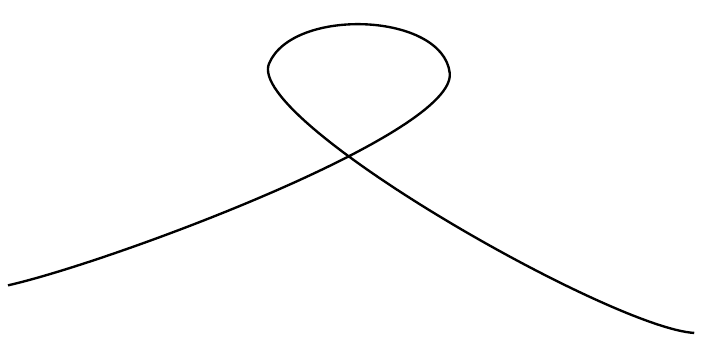}}
  \caption{Nugatory \mbox{self-crossing}.}
  \label{fig:localtwist}
\end{wrapfigure}

In particular, if some component of a multi-curve has a nugatory self-crossing (i.e., a local twist that can be removed by the first Reidemeister move, see~Fig.~\ref{fig:localtwist}), then it \bl{may be not} homotopic to a multi-curve having no such twists. We call a multi-curve \emph{minimal} if it has neither nugatory self-crossings nor homotopically trivial components. \bl{The homotopy class of a  lamination contains a unique multi-curve. Conversely,} not all minimal multi-curves correspond to laminations: only those in which all intersections can be removed do.

Given a minimal multi-curve $\C$ one can view its components as non-backtracking loops on~$\G^{\circ}$ and define $n_e$ to be the total number of intersections between an edge~$e$ and thus obtained loops on~$\G^\circ$. It is easy to see that $\mathbf n=(n_e)_{e\in\Ed}\in \L$ and hence there exists a lamination~$\Gamma(\mathbf n)$; \bl{cf.~Fig.~\ref{fig:n=2}}. We denote~$\Gamma(\C):=\Gamma(\mathbf n)$ and~$|\C| := |\Gamma(\C)|$.

Let~$\gamma_1,...,\gamma_m$ be components of~$\C$. Define
\begin{equation*}
  \textstyle f_{\C}(\rho) := \prod_{k=1}^m\Tr(\rho(\gamma_k)), \qquad f_{\C}\in\F(X)^{\sl}.
\end{equation*}
Our next goal is to study the image of~$f_{\C}$ under isomorphisms~\eqref{eq:lamDecomp}, in particular its image~$\phi^*(f_{\C})\in \F(\sl^{\Ed})$, see Lemma~\ref{newCoordinates}. Fix a multi-index $\mathbf n\in \L$ and let $\Gamma$ be the lamination corresponding to~$\mathbf n$. Minimal multi-curves $\C$ with $\Gamma(\C) = \Gamma$ can be encoded by collections of permutations $\pi\in\prod_{\sigma\in \Fa\,\,e\in \partial \sigma} \Sym(n_e)$ in the following way. Let $\pi_{\sigma,e}\in\Sym(n_e)$ be components of~$\pi$ and $\Gamma_e = \{p_{e,1},\dots, p_{e, n_e}\}$ denote the set of intersection points of $\Gamma$ and a given edge $e\in\Ed$, enumerated from the beginning of~$e$ to its endpoint according to the fixed orientation of~$e$. For each face $\sigma\in \Fa$ and for each pair of points $p_{e_1,k_1}, p_{e_2, k_2}$ connected by a chord of $\Gamma$ draw a simple smooth arc inside $\sigma$ from $p_{e_1, \pi_{\sigma, e_1}(k_1)}$ to $p_{e_2, \pi_{\sigma, e_2}(k_2)}$. Concatenating these arcs at the points~$p_{e,k}$ in a smooth way one obtains a minimal multi-curve which we denote by~$\C(\Gamma, \pi)$.

We need even more notation. Given a multi-index~$\mathbf n\in\L$, a face~$\sigma\in\Fa$, an edge~$e\in\partial\sigma$ adjacent to~$\sigma$, and
a number~$k$ such that $1\leq k\leq n_e$, let $W_{\sigma,e,k}$ be a copy of $V_1$. Introduce the space
\begin{equation}
  \mathbf W_{\mathbf n}\ :=\  \bigotimes_{\sigma\in \Fa}\bigotimes_{e\in\partial\sigma} \bigotimes_{1\leq k\leq n_e} W_{\sigma, e, k}\ =\ \bigotimes_{e\in \Ed}\bigotimes_{1\leq k\leq n_e}\bigl(W_{\el(e), e, k}\otimes W_{\er(e), e, k}\bigr).
  \label{definW}
\end{equation}
Note that the space $\mathbf{V}_{\mathbf n}$ can be realized as a subspace of $\mathbf W_{\mathbf n}$ if one realizes $V_{\sigma, e, n_e}$ as a result of the symmetrization in the space $\bigotimes_{1\leq k\leq n_e}W_{\sigma, e, k}\simeq V_1^{\otimes n_e}$ for each $\sigma\in\Fa$ and $e\in \partial\sigma$. One can extend the pairing~$\pair$ and the action~\eqref{eq:actionAdef} of $\sl^{\Ed}$ from~$\mathbf V_{\mathbf n}$ to~$\mathbf W_{\mathbf n}$ by
\begin{equation*}
  \textstyle \pair\bigl( \bigotimes_{e\in \Ed}\bigotimes_{1\leq k\leq n_e}w_{\el(e), e, k}\otimes w_{\er(e), e, k} \bigr):=\prod_{e\in \Ed}\prod_{1\leq k\leq n_e}(w_{\el(e), e, k}\wedge w_{\er(e), e, k})
\end{equation*}
and
\begin{equation*}
  \textstyle \mathbf A \bigl[ \bigotimes_{e\in \Ed}\bigotimes_{1\leq k\leq n_e}w_{\el(e), e, k}\otimes w_{\er(e), e, k} \bigr]:=\bigoplus_{e\in \Ed}\bigoplus_{1\leq k\leq n_e}(w_{\el(e), e, k}\otimes A_ew_{\er(e), e, k}),
\end{equation*}
respectively. Moreover, one can also lift the mapping~$\Upsilon$ given by~\eqref{eq:morePeterWeylRepresentation} from~$\mathbf V_{\mathbf n}$ to~$\mathbf W_{\mathbf n}$ using the same definition~$\Upsilon(\mathbf w):\mathbf A\mapsto \pair(\mathbf A[\mathbf w])$ for~$\mathbf{A}\in \sl^\Ed$.

We now choose an arbitrary orientation of the components of~$\C(\Gamma, \pi)$ and set
\begin{equation*}
  \s(\C(\Gamma, \pi)) := (-1)^{\#\text{edges of~$\G$ crossed by~$\C(\Gamma,\pi)$ from right to left, counted with multiplicity}}.
\end{equation*}
For a chord $c$ of $\C(\Gamma,\pi)$ running from the $k_1$-th point on $e_1$ to the $k_2$-th point on $e_2$ inside~$\sigma$, denote
\begin{equation*}
  u_{c,\Gamma,\pi} := v^{(1)}_{\sigma, e_1, k_1}\otimes v^{(0)}_{\sigma, e_2,k_2} - v^{(0)}_{\sigma, e_1,k_1}\otimes v^{(1)}_{\sigma, e_2,k_2}
  \in W_{\sigma, e_1, k_1}\otimes W_{\sigma, e_2,k_2},
\end{equation*}
where~$v^{(0)}_{\sigma,e,k}$ and~$v^{(1)}_{\sigma,e,k}$ correspond to the vectors~$x$ and~$y$, respectively, under the identification of~$W_{\sigma,e,k}$ and~$V_1$. Finally, denote
\begin{equation}
\label{eq:w_CGammaPi}
  \textstyle \mathbf w_{\C(\Gamma, \pi)} := \s(\C(\Gamma, \pi))\cdot \bigotimes_{\text{$c$ -- chord of $\C(\Gamma,\pi)$}} u_{c,\Gamma,\pi}  \in \mathbf W_{\mathbf n}.
\end{equation}
It is easy to see that $\mathbf w_{\C(\Gamma, \pi)}$ does not depend on the choice of the orientation of $\C(\Gamma, \pi)$: changing the orientation of some of its components one changes the signs of all vectors~$u_{c,\Gamma,\pi}$ corresponding to these components which is compensated by the same number of changes in the signs of crossings of~$\C(\Gamma,\pi)$ and oriented edges~$e\in\Ed$.
\begin{lemma}
\label{FCgamma=}
Let $\mathbf A\in \sl^{\Ed}$ and $\rho=\phi(\mathbf A)\in\X$ be the corresponding $\sl$-representation of $\pi_1(\Omega\setminus\{\lambda_1,\dots,\lambda_n\})$, see~\eqref{eq:triangularMap}. Then,
\begin{equation*}
  f_{\C(\Gamma, \pi)}(\rho) = \pair( \mathbf A[\mathbf w_{\C(\Gamma, \pi)}]).
\end{equation*}
\end{lemma}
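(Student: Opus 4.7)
The plan is to verify the identity by a direct state-sum computation, after reducing to the case of a single loop. I would first observe that both sides of the claimed identity are multiplicative over the connected components of $\C(\Gamma,\pi)$: the left-hand side by the definition of $f_\C$, and the right-hand side because each chord belongs to a unique component, so $\mathbf w_{\C(\Gamma,\pi)}$ factors as $\s(\C(\Gamma,\pi)) \bigotimes_\gamma (\mathbf w_\gamma/\s(\gamma))$ with $\s(\C(\Gamma,\pi)) = \prod_\gamma \s(\gamma)$, and the pairing $\pair$ together with the action $\mathbf A[\cdot]$ both respect the partition of edge-points $(e,k)$ induced by the components. It therefore suffices to prove $\pair(\mathbf A[\mathbf w_\gamma]) = \Tr(\rho(\gamma))$ for a single loop $\gamma$.

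Fixing such a $\gamma$ with $m$ chords, I would enumerate them cyclically as $c_0, \ldots, c_{m-1}$ in faces $\sigma_0, \ldots, \sigma_{m-1}$, so that $c_j$ connects the $k_j$-th marked point of $e_j$ to the $k_{j+1}$-th marked point of $e_{j+1}$ within $\sigma_j$ (with indices taken mod $m$). Writing $s_j := \s(\sigma_{j-1}, e_j)$ for the sign of the $e_j$-crossing between chords $c_{j-1}$ and $c_j$, the holonomy~\eqref{eq:hol-def} reads (up to cyclic reordering) $\rho(\gamma) = A_{e_0}^{s_0} A_{e_1}^{s_1} \cdots A_{e_{m-1}}^{s_{m-1}}$, while $\s(\gamma) = (-1)^r$ with $r := \#\{j : s_j = -1\}$ equal to the number of right-to-left crossings.

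I would then expand $\mathbf w_\gamma/\s(\gamma) = \bigotimes_j u_{c_j}$ into $2^m$ elementary tensors indexed by $\epsilon \in \{0,1\}^m$, each carrying a sign $(-1)^{|\epsilon|}$, and apply $\mathbf A[\cdot]$ and $\pair$. Each term becomes a product over $j$ of a wedge at the edge-point $(e_j, k_j)$ pairing the ``end-vector'' $v^{(\epsilon_{j-1})}_{\sigma_{j-1}, e_j, k_j}$ of chord $c_{j-1}$ against the ``start-vector'' $v^{(1-\epsilon_j)}_{\sigma_j, e_j, k_j}$ of chord $c_j$, with $A_{e_j}$ always applied to the $\er(e_j)$-side factor. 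Conceptually each $u_c$ is, up to sign, the canonical $\sl$-invariant cup in $V_1 \otimes V_1$ --- the ``identity wire'' in the graphical calculus for $\sl$-invariants --- so concatenating $m$ identity wires around $\gamma$ with insertions of the transition matrices $A_{e_j}^{\pm 1}$ at the crossings and closing up via the final wedge yields exactly $\Tr(A_{e_0}^{s_0} A_{e_1}^{s_1} \cdots A_{e_{m-1}}^{s_{m-1}})$ up to a global sign $(-1)^r$. This extra sign comes from the antisymmetry of $\wedge$, which forces one flip per right-to-left crossing when writing each wedge in the canonical $\el \wedge \er$ order, and is precisely cancelled by the prefactor $\s(\gamma) = (-1)^r$ built into $\mathbf w_\gamma$ in~\eqref{eq:w_CGammaPi}.

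The main obstacle is the sign bookkeeping, with three contributing sources: (a) the $(-1)^{\epsilon_j}$ in the expansion of each $u_{c_j}$, (b) the antisymmetry of $\wedge$ combined with the convention that $\pair$ always wedges the $\el$-side vector against the $\er$-side one, and (c) the fact that $\mathbf A[\cdot]$ acts on the $\er(e_j)$-side vector regardless of the direction of the crossing, whereas the holonomy of a right-to-left crossing involves $A_{e_j}^{-1}$. I would handle all three by first tabulating the four elementary wedges $v^{(a)} \wedge A v^{(b)}$ for $a, b \in \{0, 1\}$, then checking the prototypical cases $m = 2$ (both with and without a right-to-left crossing) by hand --- this already exhibits the full cancellation pattern --- and finally matching the general sum over $\epsilon \in \{0,1\}^m$ term by term with the standard expansion of $\Tr(A_{e_0}^{s_0} \cdots A_{e_{m-1}}^{s_{m-1}})$ as a sum of cyclic products of matrix entries.
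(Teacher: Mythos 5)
Your proposal is correct and follows essentially the same route as the paper: reduce to a single loop by multiplicativity, expand $\mathbf w_\gamma$ into $2^m$ elementary tensors, and match the resulting state-sum against the expansion of $\Tr(A_{e_0}^{s_0}\cdots A_{e_{m-1}}^{s_{m-1}})$, with the per-crossing sign from the wedge antisymmetry cancelling against the $(-1)^{1-j_i}$ factors (or, at right-to-left crossings, combining to $-1$ and being absorbed by the global prefactor $\s(\C(\Gamma,\pi))$). The paper carries out this exact computation.
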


\begin{proof}
  We begin with the special case when $\C(\Gamma, \pi)$ consists of a single curve $\gamma$. \bl{This curve can be thought of as a non-backtracking loop on the dual graph~$\G^\circ$, say, crossing edges $e_0,\dots,e_{m-1},e_m=e_0$ of $\G$ consequently. Recall that in this case one has
\[
f_{\C(\Gamma, \pi)}(\rho)=\Tr(\h(\mathbf A,\gamma))= \Tr(A_{e_0}^{s_0}A_{e_1}^{s_1}\dots A_{e_{m-1}}^{s_{m-1}}),
\]
where the sign~$s_i$ equals to~$+1$ if~$\gamma$ crosses the edge~$e_i$ from left to right and to~$-1$ otherwise.
Let $\sigma_0,\dots, \sigma_m$, $\sigma_m=\sigma_0$, be the sequence of faces of $\G$ corresponding to $\gamma$, so that $e_i$ is adjacent to $\sigma_i$ and $\sigma_{i+1}$. Expanding each of the vectors~$u_{c,\Gamma,\pi}$ one obtains}
\[
\begin{split}
\mathbf w_{\C(\Gamma, \pi)} =\ & \textstyle  \s(\C(\Gamma, \pi))\cdot \sum_{\mathbf j\in\{0,1\}^m} \bigotimes_{i=1}^{m}\bigl((-1)^{1-j_i}v^{(j_i)}_{\sigma_i,e_{i-1},k_{i-1}}\otimes v^{(1-j_i)}_{\sigma_i,e_i,k_i}\bigr)\\
=\ & \textstyle \s(\C(\Gamma, \pi))\cdot\sum_{\mathbf j\in\{0,1\}^m} \biggl(\prod_{i=0}^{m-1}(-1)^{1-j_i}\cdot \bigotimes_{i=0}^{m-1}\bigl(v^{(1-j_i)}_{\sigma_i,e_i,k_i}\otimes v^{(j_{i+1})}_{\sigma_{i+1},e_i,k_i}\bigr)\biggr).
\end{split}
\]
It is easy to see that the contribution of~$v^{(1-j_i)}_{\sigma_i,e_i,k_i}\otimes v^{(j_{i+1})}_{\sigma_{i+1},e_i,k_i}$ to~$\pair(\mathbf A[\mathbf w_{\C(\Gamma,\pi)}])$ is
\[
v^{(1-j_i)}_{\sigma_i,e_i,k_i}\wedge A_{e_i}v^{(j_{i+1})}_{\sigma_{i+1},e_i,k_i} = (-1)^{1-j_i}(A_{e_i})_{j_i,j_{i+1}}
\]
provided that~$\gamma$ crosses~$e_i$ from left to right. Otherwise, this contribution is given by
\[
v^{(j_{i+1})}_{\sigma_{i+1},e_i,k_i}\wedge
A_{e_i}v^{(1-j_i)}_{\sigma_i,e_i,k_i} = (-1)^{j_{i+1}}(A_{e_i})_{1-j_{i+1},1-j_i} = (-1)^{j_i}(A_{e_i}^{-1})_{j_i,j_{i+1}}.
\]
Therefore, all the signs cancel out and one gets
\[
\textstyle \pair(\mathbf A[\mathbf w_{\C(\Gamma,\pi)}])= \sum_{\mathbf j\in\{0,1\}^m} (A_{e_i}^{s_i})_{j_i,j_{i+1}} = \Tr(A_{e_0}^{s_0}A_{e_1}^{s_1}\dots A_{e_{m-1}}^{s_{m-1}}).
\]

In the general case one simply repeats the same computation for each of the components~$\gamma_1,\dots,\gamma_m$ of a multi-curve~$\C(\Gamma,\pi)$.
\end{proof}

We now apply \bl{Decomposition}~\eqref{eq:V_1^n-decomposition} for each of the edges~$e\in\Ed$ and obtain
\[
\mathbf W_{\mathbf n}\ \simeq\ \bigotimes_{e\in\Ed}\biggl(\biggl(\,\bigoplus_{0\le m_e\le n_e}V_{\el(e), e, m_e}^{\oplus l(n_e,m_e)} \biggr){\textstyle \bigotimes} \biggl(\,\bigoplus_{0\le m_e\le n_e}V_{\er(e), e, m_e}^{\oplus l(n_e,m_e)} \biggr) \biggr),
\]
recall that~$l(n,n)=1$ and~$l(m,n)=0$ if~$n-m$ is odd. Let $\mathbf l(\mathbf n,\mathbf m):=\prod_{e\in\Ed}(l(n_e,m_e))^2$, where~$\mathbf m=(m_e)_{e\in\Ed}$. Rearranging factors we get
\[
\begin{split}
\textstyle \mathbf W_{\mathbf n}\ \simeq\ \bigl(\bigoplus_{\mathbf 0\le \mathbf m\le \mathbf n}\mathbf V_{\mathbf m}^{\oplus \mathbf l(\mathbf n,\mathbf m)}\bigr)\,\oplus\, & \mathbf W_{\mathbf n}^0,\\
\text{where}\ & \textstyle \mathbf W_{\mathbf n}^0\simeq\bigoplus_{\mathbf 0\le \mathbf m,\mathbf m'\le \mathbf n\,:\, \mathbf m\ne \mathbf m'} \bigl(V_{\el(e), e,m_e}^{\oplus l(n_e,m_e)} \otimes V_{\er(e), e, m'_e}^{\oplus l(n_e,m'_e)}\bigr).
\end{split}
\]
\begin{lemma} The following diagram commutes:
\begin{equation}
  \begin{gathered}\xymatrix{ \mathbf W_{\mathbf n} \ar[d]^-\Upsilon && \bigl(\bigoplus_{\mathbf 0\le \mathbf m\le \mathbf n}\mathbf V_{\mathbf m}^{\oplus \mathbf l(\mathbf n,\mathbf m)}\bigr)\oplus\mathbf W_{\mathbf n}^0 \ar[ll]_-\simeq \ar[d]   \\  \F(\sl^{\Ed}) && \mathbf \bigoplus_{\mathbf m\in  \mathbb Z_{\geq 0}^{\Ed}}\mathbf V_{\mathbf m}\ =\ \mathbf V, \ar[ll]_-\simeq^-{\text{\eqref{eq:morePeterWeylRepresentation}}} }\end{gathered}
  \label{eq:finalDecomp}
\end{equation}
where~$\Upsilon$ is defined by~$\Upsilon(\mathbf w):\mathbf A\mapsto \pair(\mathbf A [\mathbf w])$ while the map on right vanishes on~$\mathbf W_{\mathbf n}^0$ and is defined on each of the components \mbox{$\mathbf V_{\mathbf m}^{\oplus \mathbf l(\mathbf n,\mathbf m)}$} as~$\bigoplus_{i=1}^{\mathbf l(\mathbf n, \mathbf m)}\mathbf v_i \mapsto \sum_{i=1}^{\mathbf l(\mathbf n,\mathbf m)}\mathbf v_i$.
\end{lemma}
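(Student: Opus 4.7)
The plan is to verify the commutativity of the diagram edge-by-edge using Schur's lemma. All three ingredients of the diagram -- the pairing $\pair$, the action of $\sl^\Ed$, and the top-right decomposition -- factor as tensor products indexed by $e \in \Ed$: per edge, the pairing reduces to the tensor-extended wedge pairing on $V_1^{\otimes n_e} \otimes V_1^{\otimes n_e}$, the $\sl^\Ed$-action reduces to $A_e$ acting diagonally on the right $V_1^{\otimes n_e}$-factor, and the decomposition is induced by the $\sl$-module isomorphism $V_1^{\otimes n_e} \simeq \bigoplus_{0 \le m \le n_e} V_m^{\oplus l(n_e, m)}$ applied on both sides of the tensor product. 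By multilinearity, the commutativity reduces to a single-edge statement.

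For the vanishing of $\Upsilon$ on $\mathbf W_{\mathbf n}^0$: the tensor-extended wedge pairing is invariant under the diagonal $\sl$-action on $V_1^{\otimes n_e} \otimes V_1^{\otimes n_e}$ (each tensored wedge factor picks up $\det(C) = 1$), and hence by Schur's lemma vanishes on tensor products of non-isomorphic irreducible components. Since $A_e \in \sl$ preserves the isotypic decomposition of $V_1^{\otimes n_e}$, for $v \in V_m$ on the left and $w \in V_{m'}$ on the right with $m \ne m'$ one has $A_e w \in V_{m'}$, and hence the wedge pairing of $v$ with $A_e w$ is zero. Combined over edges, this yields $\Upsilon \equiv 0$ on $\mathbf W_{\mathbf n}^0$.

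For the matched blocks $V_m^{\oplus l(n_e, m)} \otimes V_m^{\oplus l(n_e, m)} \subset V_1^{\otimes n_e} \otimes V_1^{\otimes n_e}$: Schur's lemma identifies the $\sl$-invariant bilinear forms on this block with matrices on the multiplicity space $\mathbb C^{l(n_e, m)}$ tensored with the canonical pairing $\wedge_m$. The top-right isomorphism is to be fixed by selecting a basis of $\Hom_{\sl}(V_m, V_1^{\otimes n_e})$ (both for the left and for the right factor) that is orthonormal with respect to the non-degenerate bilinear form induced by the wedge pairing on this multiplicity space; with such a normalization the matched-part pairing on $\mathbf V_{\mathbf m}^{\oplus \mathbf l(\mathbf n, \mathbf m)}$ matches the canonical pairing on $\mathbf V_{\mathbf m}$ via the sum-copies map. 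The main technical point in the plan is this Gram--Schmidt-type normalization, which fixes (up to signs) the decomposition used in the top-right of the diagram and underlies the subsequent construction of the Peter--Weyl basis $g_\Gamma$.
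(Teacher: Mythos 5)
Your proof tracks the same representation-theoretic skeleton as the paper's, but diverges in how the vanishing of $\Upsilon$ on $\mathbf W_{\mathbf n}^0$ is established: the paper proves $v\wedge v'=0$ for $v\in V_m$, $v'\in V_{m'}$, $m\ne m'$, by an explicit Lie-algebraic computation (using $e^{k+1}v_k=0$, $f^{k+1}v'_{k'}\ne 0$ and the identity $u\wedge eu'=-eu\wedge u'$), whereas you invoke Schur's lemma for the $\sl$-invariant wedge form. Both are correct; yours is the more conceptual route and the paper's the more self-contained one.

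The third step of your plan --- the Gram--Schmidt normalization of $\Hom_{\sl}(V_m, V_1^{\otimes n_e})$ --- does not deliver what you assert, and this is a genuine gap. If $\{\phi^L_i\}$ and $\{\phi^R_j\}$ are dual bases of the left and right multiplicity spaces with respect to the wedge-induced pairing, then the $(i,j)$-th copy of $V_{m_e}\otimes V_{m_e}$ pairs to $\delta_{ij}$ times the canonical form, so $\Upsilon^{\mathbf W}$ annihilates the off-diagonal copies $i\ne j$, while the sum-of-copies map followed by $\Upsilon^{\mathbf V}$ does not. In fact, no choice of bases can make all $l(n_e,m_e)^2$ Gram entries equal to $1$, since the Gram matrix of the wedge form on the multiplicity space is nondegenerate. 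Consequently the diagram, with the right-hand vertical arrow literally defined as the sum over all $\mathbf l(\mathbf n,\mathbf m)=\prod_{e}l(n_e,m_e)^2$ copies, cannot commute once some $l(n_e,m_e)>1$; the correct right-hand map sends the diagonal copies identically onto $\mathbf V_{\mathbf m}$ and annihilates the remaining copies together with $\mathbf W_{\mathbf n}^0$. You should be aware that this imprecision is already present in the paper's statement (its own proof only verifies the $\mathbf W_{\mathbf n}^0$-vanishing and declares the matched block trivial), and that it is harmless for the downstream applications: Lemma~\ref{gViaLoops} uses only the multiplicity-one block $\mathbf m=\mathbf n$, where $l(n_e,n_e)=1$ and the embedding is the symmetrization whose pairing agrees with the canonical one by construction, and Theorem~\ref{thm:FG} uses only the triangular support of the decomposition, not the exact coefficients. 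But your normalization claim as written should be flagged, not asserted.
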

\begin{proof}
The only non-trivial ingredient is to check that one has~$\Upsilon(\mathbf w)=0$ for~$\mathbf w\in\mathbf W_{\mathbf n}^0$. This immediately follows from the fact that~$v\wedge v' =0$ if $v\in V_m$, $v'\in V_{m'}$ and~$m\ne m'$, where both~$V_m,V_{m'}$ are identified with subspaces of~$V_1^{\otimes n}$ via decomposition~\eqref{eq:V_1^n-decomposition}. Clearly, it is enough to prove this fact for elements~$v_k\in V_m$,~$v_{k'}\in V_{m'}$ of the bases introduced in Proposition~\ref{sl2basic}(i). Let~$m<m'$ and assume that~$k+k'\le m$. Then~$e^{k+1}v_k=0$ while~$f^{k+1}v'_{k'}\ne 0$ and one can write
\[
{\textstyle\binom{m'-k'}{k+1}\binom{k'+k+1}{k+1}}\cdot v\wedge v' = v\wedge e^{k+1}f^{k+1}v' = (-1)^{k+1}e^{k+1}v\wedge f^{k+1}v' =0
\]
since~$u\wedge eu' = -eu\wedge u'$ for each~$u,u'\in V_1^{\otimes n}$. The case~$k+k'\ge m+1$ can be handled in the same way starting with~$f^{m-k+1}v_k=0$ and~$e^{m-k+1}v'_{k'}\ne 0$.
\end{proof}

\begin{lemma}\label{gViaLoops} Let~$\mathbf n\in \L$, $\Gamma=\Gamma(\mathbf n)$ be the corresponding lamination, and~$g_\Gamma$ be given by Definition~\ref{def:g_gamma}. The following identity holds:
\[
g_{\Gamma} = \prod\nolimits_{\sigma\in \Fa,\,e\in \partial \sigma}(n_e!)^{-1}\cdot \sum\nolimits_{\,\pi\in \prod_{\sigma\in \Fa,\,e\in\partial\sigma}\Sym(n_e)} f_{\C(\Gamma, \pi)}.
\]
In particular,~$g_{\Gamma}(\Id)>0$ and hence~$g_\Gamma$ does not vanish.
\end{lemma}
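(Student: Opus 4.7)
My plan is to translate the identity into tensor-space language via Lemma~\ref{FCgamma=}, and then identify both sides with the projection of a single vector onto the top summand $\mathbf V_{\mathbf n}$ in the decomposition~\eqref{eq:finalDecomp}. For the trivial tuple of permutations $\pi_0=\mathrm{id}$, the minimal multi-curve $\C(\Gamma,\pi_0)$ coincides with $\Gamma$ itself, and by Lemma~\ref{FCgamma=} the function $f_\Gamma=f_{\C(\Gamma,\pi_0)}$ corresponds, under the Peter--Weyl isomorphisms of Lemma~\ref{newCoordinates} and~\eqref{eq:morePeterWeylRepresentation}, to the vector $\mathbf w_{\C(\Gamma,\pi_0)}\in\mathbf W_{\mathbf n}$. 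By Definition~\ref{def:g_gamma}, in tensor-space language $g_\Gamma$ is the component of $\mathbf w_{\C(\Gamma,\pi_0)}$ in the one-dimensional space $\mathbf V_{\mathbf n}^{\sl^\Fa}$; and since $f_\Gamma$ is $\sl^\Fa$-invariant, its full $\mathbf V_{\mathbf n}$-component already lies in $\mathbf V_{\mathbf n}^{\sl^\Fa}$, so it is enough to compute the projection of $\mathbf w_{\C(\Gamma,\pi_0)}$ onto $\mathbf V_{\mathbf n}\subset\mathbf W_{\mathbf n}$.

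The key observation is that this projection $P:\mathbf W_{\mathbf n}\to\mathbf V_{\mathbf n}$ is nothing but the full symmetrization operator
\[
P\ =\ \prod_{\sigma\in\Fa,\,e\in\partial\sigma}(n_e!)^{-1}\sum_{\tau_{\sigma,e}\in\Sym(n_e)}\tau_{\sigma,e}
\]
acting by permuting the tensor factors $W_{\sigma,e,1},\dots,W_{\sigma,e,n_e}$ at each face corner: indeed, the top irreducible summand $V_{n_e}=\Sym^{n_e}V_1$ in the decomposition of $V_1^{\otimes n_e}$ is cut out precisely by this averaging, and $\mathbf l(\mathbf n,\mathbf n)=1$. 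Thus the lemma reduces to verifying the identity
\[
P(\mathbf w_{\C(\Gamma,\pi_0)})\ =\ \prod_{\sigma\in\Fa,\,e\in\partial\sigma}(n_e!)^{-1}\cdot\sum_\pi \mathbf w_{\C(\Gamma,\pi)}.
\]
I would establish this by checking that applying a tuple $\tau=(\tau_{\sigma,e})$ of factor permutations to $\mathbf w_{\C(\Gamma,\pi_0)}$ relabels the intersection points $p_{e,k}\mapsto p_{e,\tau_{\sigma,e}(k)}$ and therefore sends $\mathbf w_{\C(\Gamma,\pi_0)}$ to $\mathbf w_{\C(\Gamma,\tau)}$; summing over all $\tau$ then exhausts the parameterization by $\pi$. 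The main technical obstacle I anticipate is the bookkeeping of the sign $\s(\C(\Gamma,\pi))$ together with the antisymmetric signs hidden in the vectors $u_{c,\Gamma,\pi}$: these have to fit together so that the factor-permutation produces the correctly-signed $\mathbf w_{\C(\Gamma,\tau)}$, in the same spirit as the sign cancellation already exploited inside the proof of Lemma~\ref{FCgamma=}.

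For the non-vanishing claim, evaluating at the trivial representation gives $f_{\C(\Gamma,\pi)}(\Id)=\prod_{\gamma}\Tr(\Id)=2^{\#\mathrm{loops}(\C(\Gamma,\pi))}$, a positive integer for every tuple $\pi$. Hence $g_\Gamma(\Id)$ is a strictly positive rational combination of positive integers, which proves $g_\Gamma(\Id)>0$ and in particular shows that $g_\Gamma\not\equiv 0$.
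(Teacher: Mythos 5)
Your proof is correct and follows essentially the same route as the paper: translate $f_\Gamma$ into the vector $\mathbf w_{\C(\Gamma,\Id)}\in\mathbf W_{\mathbf n}$ via Lemma~\ref{FCgamma=}, observe that $g_\Gamma$ is the $\mathbf V_{\mathbf n}$-component of this vector, realize the projection $\mathbf W_{\mathbf n}\to\mathbf V_{\mathbf n}$ as the full symmetrization over $\prod_{\sigma,e}\Sym(n_e)$, and identify $\pi_*\mathbf w_{\C(\Gamma,\Id)}$ with $\mathbf w_{\C(\Gamma,\pi)}$ to obtain the displayed identity; the positivity statement is then read off by evaluating at~$\Id$. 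The only presentational difference is that the paper phrases the projection step through the commutativity of the diagram~\eqref{eq:finalDecomp} rather than by writing the symmetrizer $P$ explicitly. The sign bookkeeping you flag (reconciling $\s(\C(\Gamma,\pi))$ with the antisymmetry of $u_{c,\Gamma,\pi}$ after permuting the factors) is indeed the one non-trivial verification; the paper likewise states $\Upsilon(\pi_*\mathbf w_{\C(\Gamma,\Id)})=f_{\C(\Gamma,\pi)}$ without detailed justification, so you are at the same level of rigor. One could also avoid the vector-level equality $\pi_*\mathbf w_{\C(\Gamma,\Id)}=\mathbf w_{\C(\Gamma,\pi)}$ altogether by running the computation inside the proof of Lemma~\ref{FCgamma=} directly with the twisted pairing, but that is a matter of taste, not substance.
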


\begin{proof} Recall that the~$\sl^{\Fa}$-invariant function~$g_\Gamma$ is defined as the component of~$f_\Gamma$ lying in the subspace~$\mathbf{V}_{\mathbf{n}}$ under the isomorphism~\eqref{eq:morePeterWeylRepresentation}.
Lemma~\ref{FCgamma=} gives
\[
f_\Gamma=f_{\C(\Gamma,\Id)}=\Upsilon(\mathbf w_{\C(\Gamma,\Id)}).
\]
Note that the map on the right of~\eqref{eq:finalDecomp} acts identically on~$\mathbf V_{\mathbf n}$ and, under the isomorphism on top of~\eqref{eq:finalDecomp}, each of the factors~$V_{\sigma,e,n_e}\simeq V_{n_e}$ of~$\mathbf{V}_{\mathbf{n}}$ is obtained as a symmetrization in the corresponding factor~$\bigotimes_{k=1}^{n_e} W_{\sigma,e,k}\simeq V_1^{\otimes n_e}$ of~$\mathbf {W}_{\mathbf n}$. The fact that this diagram commutes yields
\begin{equation}
\label{eq:g=Upsilon()}
g_\Gamma= \prod\nolimits_{\sigma\in \Fa,\,e\in\partial\sigma}(n_e!)^{-1}\cdot \sum\nolimits_{\,\pi\in\prod_{\sigma\in \Fa,\,e\in \partial \sigma} \Sym(n_e)} \Upsilon(\pi_*\mathbf w_{\C(\Gamma,\Id)}).
\end{equation}
The claim follows since~$\Upsilon(\pi_*\mathbf w_{\C(\Gamma,\Id)})=\Upsilon(\mathbf w_{\C(\Gamma,\pi)})=f_{\C(\Gamma,\pi)}$ for each~$\pi$.
\end{proof}

\begin{rem} Since all the maps involved into~\eqref{eq:finalDecomp} commute with the action of $\sl^{\Fa}$, we obtain a similar commutative diagram for the invariant subspaces:
\begin{equation}
  \begin{gathered}\xymatrix{ \mathbf W_{\mathbf n}^{\sl^{\Fa}} \ar[d] & \biggl(\bigoplus_{\substack{\mathbf m\in \L\\ \mathbf 0\le \mathbf m\le \mathbf n}} \bigl(\mathbf V_{\mathbf m}^{\sl^{\Fa}}\bigr)^{\oplus \mathbf l(\mathbf n,\mathbf m)}\biggr) \oplus \bigl(\mathbf W_{\mathbf n}^0\bigr)^{\sl^{\Fa}} \ar[l]_-\simeq \ar[d]   \\  \F(\sl^{\Ed})^{\sl^{\Fa}} & \bigoplus_{\mathbf m\in \L}\mathbf{V}_{\mathbf m}^{\sl^{\Fa}}= \mathbf{V}^{\sl^{\Fa}}.\ar[l]_-\simeq^-{\text{\eqref{eq:lamDecomp}}}} \end{gathered}
  \label{eq:finalInvDecomp}
\end{equation}
\end{rem}

Since each of the spaces~$\mathbf V_{\mathbf m}^{\sl^{\Fa}}$,~$\mathbf m\in\L$, is one-dimensional, the functions~$g_\Gamma$ form a basis in the space \mbox{$\F(X)^{\sl}\!\simeq\F(\sl^{\Ed})^{\sl^{\Fa}}\!\simeq \mathbf V_{\mathbf n}$}, which we call the Peter--Weyl basis. We are now in the position to show that the functions~$f_\Gamma$ also form a basis in this space.

\begin{theorem}[{\cite[Theorem 12.3]{FG}}]
\label{thm:FG}
The functions $f_{\Gamma}$ form a basis in $\F(X)^{\sl}$. Moreover, the change of the bases is given by a low-triangular (with respect to the partial order on the set of laminations) matrix:
\begin{equation}
\label{eq:c_Gamma_delta}
g_\Gamma=\sum\nolimits_{\Delta\le\Gamma}c_{\Gamma\Delta}f_\Delta,\quad \text{where}\quad c_{\Gamma\Gamma}=1.
\end{equation}
\end{theorem}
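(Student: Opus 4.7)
The plan is to leverage the Peter--Weyl-type decomposition encoded in the commutative diagram~\eqref{eq:finalDecomp} applied to the specific vector $\mathbf w_{\C(\Gamma,\Id)}\in\mathbf W_{\mathbf n(\Gamma)}$ whose image under $\Upsilon$ equals $f_\Gamma$, and then to invert the resulting triangular transition between the two systems $(f_\Delta)$ and $(g_\Delta)$. Since we have already established that $(g_\Delta)_{\Delta}$ is a basis of $\F(X)^{\sl}$ via the decomposition~\eqref{eq:lamDecomp} (together with Lemma~\ref{gViaLoops}), it is enough to prove the relation~\eqref{eq:c_Gamma_delta}: any lower-triangular matrix with $1$'s on the diagonal is invertible, so $(f_\Gamma)$ will then also be a basis.

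First I would invoke Lemma~\ref{FCgamma=} to write, under the identification $\phi^*\colon\F(X)^{\sl}\to\F(\sl^{\Ed})^{\sl^{\Fa}}$, the equality $f_\Gamma = \Upsilon(\mathbf w_{\C(\Gamma,\Id)})$ with $\mathbf w_{\C(\Gamma,\Id)}\in\mathbf W_{\mathbf n(\Gamma)}$. The top arrow of~\eqref{eq:finalDecomp} decomposes $\mathbf W_{\mathbf n(\Gamma)}$ as $\bigl(\bigoplus_{\mathbf m\le\mathbf n(\Gamma)}\mathbf V_{\mathbf m}^{\oplus\mathbf l(\mathbf n(\Gamma),\mathbf m)}\bigr)\oplus\mathbf W_{\mathbf n(\Gamma)}^0$, with $\mathbf W_{\mathbf n(\Gamma)}^0$ mapping to zero in $\mathbf V$. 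Consequently, the element of $\mathbf V=\bigoplus_{\mathbf m}\mathbf V_{\mathbf m}$ corresponding to $f_\Gamma$ has zero components in $\mathbf V_{\mathbf m}$ whenever the coordinate-wise inequality $\mathbf m\le\mathbf n(\Gamma)$ fails.

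Next, the $\sl^{\Fa}$-invariance of $f_\Gamma$ forces each of its $\mathbf V_{\mathbf m}$-components to lie in $\mathbf V_{\mathbf m}^{\sl^{\Fa}}$, which by Corollary~\ref{laminationCondition} is non-zero (and then one-dimensional) only when $\mathbf m\in\L$; in that case it is spanned by $g_{\Gamma(\mathbf m)}$. This yields an expansion
\[
f_\Gamma\ =\ \sum_{\Delta\le\Gamma}\tilde c_{\Gamma\Delta}\,g_\Delta,
\]
in which the coefficient $\tilde c_{\Gamma\Gamma}$ equals $1$ by Definition~\ref{def:g_gamma}, which identifies $g_\Gamma$ precisely with the projection of $f_\Gamma$ onto $\mathbf V_{\mathbf n(\Gamma)}^{\sl^{\Fa}}$. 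Inverting this lower-triangular transition with unit diagonal on the partially ordered set~$\L$ gives the desired expression~\eqref{eq:c_Gamma_delta}, and simultaneously establishes that $(f_\Gamma)$ is a basis.

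The main point requiring careful checking is the diagonal normalization $\tilde c_{\Gamma\Gamma}=1$: one must verify that the projection of $\mathbf w_{\C(\Gamma,\Id)}$ onto the direct summand $\mathbf V_{\mathbf n(\Gamma)}\subset\mathbf W_{\mathbf n(\Gamma)}$ (realized as the edge-wise symmetrization of the factors $W_{\sigma,e,1}\otimes\dots\otimes W_{\sigma,e,n_e}\simeq V_1^{\otimes n_e}$) maps, under the isomorphism in~\eqref{eq:finalDecomp}, to $g_\Gamma$ itself rather than to some nontrivial scalar multiple of it. This compatibility is essentially tautological from the way $g_\Gamma$ is defined via the decomposition~\eqref{eq:lamDecomp}, but it is the step that pins down $c_{\Gamma\Gamma}=1$ and hence the precise statement~\eqref{eq:c_Gamma_delta}.
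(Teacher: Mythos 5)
Your proof is correct and follows essentially the same route as the paper: both expand $f_\Gamma=\Upsilon(\mathbf w_{\C(\Gamma,\Id)})$ via the commutative diagram to obtain a lower-triangular transition $f_\Gamma=\sum_{\Delta\le\Gamma}\tilde c_{\Gamma\Delta}g_\Delta$ with unit diagonal (the paper works directly in the invariant diagram~\eqref{eq:finalInvDecomp}, while you pass through~\eqref{eq:finalDecomp} and then restrict using $\sl^{\Fa}$-invariance, but this is only a cosmetic difference), and then invert. The normalization $\tilde c_{\Gamma\Gamma}=1$ is indeed tautological from Definition~\ref{def:g_gamma}, exactly as you suspected.
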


\begin{proof} Since the functions~$g_\Gamma$ form a basis in~$\F(\X)^{\sl}$, one can write a decomposition~$f_\Gamma=\sum_{\Delta}\tilde{c}_{\Gamma\Delta}g_\Delta$, note that~$\tilde{c}_{\Gamma\Gamma}=1$ due to Definition~\ref{def:g_gamma}. The diagram~\eqref{eq:finalInvDecomp} is commutative, hence this decomposition must be the image of a similar decomposition of the~$\sl^{\Fa}$-invariant component of the vector~$\mathbf w_{\C(\Gamma,\Id)}$ under the isomorphism on the top of~\eqref{eq:finalInvDecomp}. Therefore, the coefficient~$\tilde{c}_{\Gamma\Delta}$ vanishes unless~$\Delta\le \Gamma$. Thus the matrix~$(\tilde{c}_{\Gamma\Delta})$ is low-triangular and hence the inverse one is low-triangular as well.
\end{proof}

\begin{rem} It is worth noting that~$c_{\Gamma\Delta}=\tilde{c}_{\Gamma\Delta}=0$ if~$n_e(\Gamma)-n_e(\Delta)$ is odd for some edge~$e\in\Ed$. This easily follows from the fact that~$m_e$ and~$n_e$ should have the same parity in order that~$V_{m_e}$ appears in the decomposition~\eqref{eq:V_1^n-decomposition} of~$V_1^{\otimes n_e}$.
\end{rem}


\subsection{Extension of~$\bm{f_\Gamma,g_\Gamma}$ to~$\bm{(\mathbb C^{2\times 2})^{\Ed}}$ and orthogonality on poly-balls}\label{further}
In this section we discuss natural extensions of functions~$f_\Gamma$ and~$g_\Gamma$ from~$\sl^\Ed$ to~$(\mathbb C^{2\times 2})^\Ed$. We apply the analytic Peter--Weyl theorem for the group~$\su^\Ed$ in order to show that thus obtained extensions~$G_{\Gamma,\mathbf m}$ of~$g_\Gamma$ are orthogonal on poly-balls
\begin{equation}
\label{eq:B_Rdef}
\mathbb B_R:=\mathrm B_R^\Ed \subset (\mathbb C^{2\times 2})^\Ed,\qquad \mathrm B_R:=\{A\in \mathbb C^{2\times 2}\,\mid\, \Tr AA^*< R^2\}
\end{equation}
with respect the Euclidean measure on $(\mathbb C^{2\times 2})^\Ed\simeq (\mathbb C^4)^\Ed$. Then, using the interpretation of~$V_n$ as the spaces of homogeneous polynomials of degree~$n$ we derive an exponential lower bound for the $L^2$-norms of~$G_{\Gamma,\mathbf d}$ on these poly-balls which is required for the further analysis performed in Section~\ref{complex_analisys_section}.

Since~$\sl^\Ed$ is an algebraic subvariety of $(\mathbb C^{2\times 2})^\Ed$ one has a trivial surjection
\[
\F((\mathbb C^{2\times 2})^\Ed)\ \to\ \F(\sl^\Ed).
\]
The isomorphism~\eqref{eq:morePeterWeylRepresentation} provides a way to construct a left inverse
\begin{equation}
  \F(\sl^{\Ed})\ \hookrightarrow\ \F((\mathbb C^{2\times 2})^{\Ed})
  \label{eq:functionsToPolynomials}
\end{equation}
to this surjection. Namely, there is a natural extension of the action~\eqref{eq:actionAdef} of~$\sl^\Ed$ on~$\mathbf V_{\mathbf n}$ to an action of~$(\mathbb C^{2\times 2})^\Ed$: to~define the latter on a factor~$V_{n_e}$ of~$\mathbf{V}_{\mathbf n}$ one simply extends the standard action of~$\sl$ on~$V_1$ to an action of~$\mathbb C^{2\times 2}$ and view~$V_{n_e}$ as the symmetrization of~$V_1^{\otimes n_e}$. Clearly, for each~$\mathbf v\in \mathbf V_{\mathbf n}$, this provides a natural extension of the function~$\Upsilon(\mathbf v):\mathbf A\mapsto \pair(\mathbf A[\mathbf v])$ from~$\sl^\Ed$ to~$(\mathbb C^{2\times 2})^\Ed$.

Note that the action~\eqref{eq:actionRule} of $\sl^{\Fa}$ on $\sl^{\Ed}$ naturally extends to an action on $(\mathbb C^{2\times 2})^{\Ed}$ and that~\eqref{eq:functionsToPolynomials} commutes with this action. Therefore, one has an injection
\begin{equation}
  \iota: \F(\X)^{\sl}\ \simeq\ \F(\sl^{\Ed})^{\sl^{\Fa}}\ \hookrightarrow\ \F((\mathbb C^{2\times 2})^{\Ed})^{\sl^{\Fa}}.
  \label{eq:invariantsToPolynomials}
\end{equation}
Given a lamination~$\Gamma$ and a multi-index~$\mathbf m=(m_e)_{e\in \Ed}$ we define
\begin{equation}
\label{eq:G_Gamma_m-def}
G_\Gamma:=\iota(g_\Gamma),\qquad G_{\Gamma, \mathbf m}(\mathbf A) := G_{\Gamma}(\mathbf A)\cdot \prod\limits_{e\in \Ed}(\det A_e)^{m_e}.
\end{equation}
\begin{rem}
\label{rem:G-homdegree}
By construction, $G_{\Gamma,\mathbf m}$ is a homogeneous polynomial of degree~$|\Gamma|+2|\mathbf m|$ on~$(\mathbb C^{2\times 2})^{\Ed}\simeq \mathbb C^{4|\Ed|}$, invariant under the action of~$\sl^{\Fa}$ on~$(\mathbb C^{2\times 2})^{\Ed}$. More precisely, $G_{\Gamma,\mathbf m}$ is a homogeneous polynomial of degree~\mbox{$n_e+2m_e$} in coordinates of the space~$\mathbb C^{2\times 2}$ assigned to an edge~$e$, for each~$e\in\Ed$. Below we call such polynomials homogeneous of multi-degree~$\mathbf d$.
\end{rem}

\begin{lemma}\label{GisBasis}
For each~$d\in\mathbb Z_{\ge 0}$, polynomials $\{G_{\Gamma, \mathbf m}\,\mid\, |\Gamma| + 2|\mathbf m| = d\}$ form a basis in the space of $\sl^{\Fa}$-invariant homogeneous polynomials on $(\mathbb{C}^{2\times 2})^\Ed$ of degree $d$.
\end{lemma}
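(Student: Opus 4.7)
The plan is to exploit the multi-grading of $\F((\mathbb C^{2\times 2})^\Ed)$ by $\mathbf d\in\mathbb Z_{\ge 0}^\Ed$ and to reduce the question to a dimension-matching argument within each multi-graded piece. By Remark~\ref{rem:G-homdegree}, each $G_{\Gamma,\mathbf m}$ is homogeneous of multi-degree $\mathbf n(\Gamma)+2\mathbf m$, so the statement will follow once I show, for every fixed $\mathbf d$ with~$|\mathbf d|=d$, that $\{G_{\Gamma,\mathbf m}\,\mid\,\mathbf n(\Gamma)+2\mathbf m=\mathbf d\}$ is a basis of the space $\F_\mathbf d((\mathbb C^{2\times 2})^\Ed)^{\sl^\Fa}$ of $\sl^\Fa$-invariants of multi-degree $\mathbf d$.

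To decompose this space, I view $\mathbb C^{2\times 2}$ at each edge $e$ as a representation of the two copies of~$\sl$ indexed by the adjacent faces $\el(e),\er(e)$ (acting via $(C_l,C_r)\cdot A:=C_l^{-1}AC_r$). Applying a Cauchy-type decomposition to $\Sym^{d_e}(V_1\otimes V_1)$, one obtains
\begin{equation*}
\Sym^{d_e}((\mathbb C^{2\times 2})^*)\ \simeq\ \bigoplus_{k=0}^{\lfloor d_e/2\rfloor} (\det\nolimits_e)^k \cdot H_{d_e-2k,e},\qquad H_{m,e}\simeq V_{\el(e),e,m}\otimes V_{\er(e),e,m},
\end{equation*}
in which the polynomial~$\det_e$ is~$\sl^\Fa$-invariant. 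Tensoring over all edges, regrouping the ``harmonic'' factors~$H_{*,e}$ by adjacent face, and taking $\sl^\Fa$-invariants face by face via Corollary~\ref{laminationCondition} yields
\begin{equation*}
\F_\mathbf d((\mathbb C^{2\times 2})^\Ed)^{\sl^\Fa}\ \simeq\ \bigoplus_{\substack{\mathbf k\in\mathbb Z_{\ge 0}^\Ed\\ 2\mathbf k\le\mathbf d,\ \mathbf d-2\mathbf k\in\L}} (\det)^{\mathbf k}\cdot \mathbf V_{\mathbf d-2\mathbf k}^{\sl^\Fa},
\end{equation*}
each summand being one-dimensional. The assignment $\mathbf k\mapsto(\Gamma(\mathbf d-2\mathbf k),\mathbf k)$ gives a bijection between the index set on the right and the set of pairs $(\Gamma,\mathbf m)$ with $\mathbf n(\Gamma)+2\mathbf m=\mathbf d$, matching the cardinality of the candidate basis with the dimension of the target space.

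It remains to locate each $G_{\Gamma,\mathbf m}$ inside this decomposition. Since $G_{\Gamma,\mathbf m}=(\det)^\mathbf m\cdot G_\Gamma$ and the $\det_e$ are invariant, it suffices to show that $G_\Gamma=\iota(g_\Gamma)$ lies in the harmonic ($\mathbf k=\mathbf 0$) summand of multi-degree~$\mathbf n(\Gamma)$. By construction, $\iota$ extends the map $\Upsilon$ from~\eqref{eq:morePeterWeylRepresentation} via the symmetrization $V_{n_e}\simeq\Sym^{n_e}V_1\hookrightarrow V_1^{\otimes n_e}$ together with the natural $\mathbb C^{2\times 2}$-action on $V_1^{\otimes n_e}$; on each edge this produces a nonzero equivariant map $V_{n_e}\otimes V_{n_e}\to\Sym^{n_e}((\mathbb C^{2\times 2})^*)$. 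Since the source is irreducible as a representation of the two face-copies of~$\sl$, and the target contains exactly one copy of $V_{n_e}\otimes V_{n_e}$ (namely $H_{n_e,e}$), Schur's lemma forces this map to land in $H_{n_e,e}$, which is exactly the harmonicity claim.

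This last step --- the Schur's lemma identification of $G_\Gamma$ as harmonic on each edge --- is the only substantive ingredient; everything else is bookkeeping combining the Cauchy-type splitting above, Corollary~\ref{laminationCondition}, and the $\sl^\Fa$-invariance of~$\det_e$. Once harmonicity is in hand, distinct $(\Gamma,\mathbf m)$ label distinct one-dimensional summands, and each $G_{\Gamma,\mathbf m}$ is nonzero because $g_\Gamma\neq 0$ by Lemma~\ref{gViaLoops}, $\iota$ is injective, and multiplication by $(\det)^\mathbf m$ preserves non-vanishing. Linear independence follows, and the dimension count promotes it to a basis.
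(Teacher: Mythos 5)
Your proof is correct, but it takes a genuinely different route from the paper's. The paper defines the map $\bigoplus_{\mathbf m:2\mathbf m\le\mathbf d}\mathbf V_{\mathbf d-2\mathbf m}\to\F^{(\mathbf d)}((\mathbb C^{2\times 2})^\Ed)$, $\mathbf v\mapsto\pair(\mathbf A[\mathbf v])\cdot\prod_e(\det A_e)^{m_e}$, proves injectivity by composing with restriction to $\sl^\Ed$ and invoking the algebraic Peter--Weyl theorem edge-wise, and then upgrades to an isomorphism by explicitly computing that both sides have dimension $\prod_e\binom{d_e+3}{3}$; invariants are then taken at the very end. You instead decompose the target $\F^{(\mathbf d)}((\mathbb C^{2\times 2})^\Ed)^{\sl^\Fa}$ from the outset, using the Cauchy-type splitting $\Sym^{d_e}((\mathbb C^{2\times 2})^*)\simeq\bigoplus_k(\det_e)^k H_{d_e-2k,e}$ for the two-sided $\sl\times\sl$-action on each edge, regrouping the harmonic factors by face, and applying Corollary~\ref{laminationCondition} to cut the invariants down to a direct sum of lines; linear independence then comes from locating each $G_{\Gamma,\mathbf m}$ in its own line via a Schur's-lemma argument on each edge together with the non-vanishing of $g_\Gamma$ from Lemma~\ref{gViaLoops}. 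What you gain is that the explicit dimension count is absorbed into the bijection of index sets and one never has to evaluate the binomial-coefficient product; what you pay is the extra bookkeeping of the Cauchy decomposition and the Schur identification of $G_\Gamma$ as harmonic. In effect your Cauchy splitting plus Schur does the same work as the paper's algebraic Peter--Weyl injectivity plus dimension count, just starting from the target instead of the source.

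One remark: the Schur step is where you should be most careful to state equivariance precisely. The claim that $v_{\el}\otimes v_{\er}\mapsto\bigl(A_e\mapsto v_{\el}\wedge A_ev_{\er}\bigr)$ is $\sl\times\sl$-equivariant (two face copies acting on the source as usual and on the target through $A_e\mapsto C_{\el}^{-1}A_eC_{\er}$) is exactly the computation in Lemma~\ref{wedgeTrick}, carried over from $\sl^\Ed$ to $(\mathbb C^{2\times 2})^\Ed$; it's worth citing that lemma rather than leaving equivariance implicit, since without it the Schur argument has no traction.
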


\begin{proof}
  Note that the action $\sl^{\Fa}$ respects the degrees of homogeneity in coordinates assigned to an edge~$e$. Therefore, it is enough to show that, given $\mathbf d\in \mathbb Z_{\geq 0}^{\Ed}$, the set $\left\{ G_{\Gamma, \mathbf m}\,\mid\, \mathbf n(\Gamma) + 2\mathbf m = \mathbf d \right\}$ is a basis in the space of homogeneous $\sl^{\Fa}$-invariant polynomials of multi-degree~$\mathbf d$, see Remark~\ref{rem:G-homdegree}.

  Let~$\F^{(\mathbf d)}((\mathbb C^{2\times 2})^{\Ed})$ denote the space of all homogeneous polynomials on~$(\mathbb{C}^{2\times 2})^\Ed$ of multi-degree $\mathbf d$. Let
  \begin{equation}
    \bigoplus_{\mathbf m\,:\, 2\mathbf m \leq \mathbf d} \mathbf V_{\mathbf d - 2\mathbf m}\ \to\ \F^{(\mathbf d)}((\mathbb C^{2\times 2})^{\Ed})
    \label{eq:proofThatGisBasis}
  \end{equation}
  which sends a vector~$\mathbf v\in \mathbf V_{\mathbf d - 2\mathbf m}$ to a polynomial~$\mathbf A \mapsto \pair(\mathbf A[\mathbf v])\cdot\prod_{e\in\Ed}(\det A_e)^{m_e}$.
  Due to the algebraic Peter--Weyl theorem, the composition of this mapping with the restriction of functions from $(\mathbb C^{2\times 2})^{\Ed}$ to $\sl^{\Ed}$ is an injection, hence the mapping~\eqref{eq:proofThatGisBasis} itself is an injection. On the other hand we have
  \begin{equation*}
    \begin{split}
      \textstyle \dim_{\mathbb C} \bigl(\bigoplus_{\mathbf m\,:\, 2\mathbf m \leq \mathbf d} \mathbf V_{\mathbf d - 2\mathbf m}\bigr)\ & \textstyle =\  \sum_{\mathbf m\,:\,2\mathbf m \leq \mathbf d} \prod_{e\in \Ed} (d_e - 2m_e + 1)^2 \\
      & \textstyle=\ \prod_{e\in \Ed}\sum_{m_e\,:\,2m_e \leq \mathbf d_e} (d_e - 2m_e + 1)^2\\
      & \textstyle=\ \prod_{e\in \Ed} ((d_e+1)(d_e + 2)(d_e + 3)/6) \\
      & \textstyle =\ \dim_{\mathbb C} \bigl( \F^{(\mathbf d)}((\mathbb C^{2\times 2})^{\Ed})\bigr)
    \end{split}
  \end{equation*}
  as the dimension of the space of homogeneous polynomials of degree~$d_e$ on~$(\mathbb C)^{2\times 2}\simeq \mathbb C^4$ is equal to~$\binom{d_e+3}{3}$. Since the dimensions coincide, the mapping~\eqref{eq:proofThatGisBasis} is an isomorphism. Using the fact that this mapping commutes with the action of~$\sl^{\Fa}$ we get
  \begin{equation*}
    \bigoplus_{\mathbf m\,:\,2\mathbf m \leq \mathbf d} \mathbf V_{\mathbf d - 2\mathbf m}^{\sl^{\Fa}}\ \simeq \bigoplus_{\substack{\mathbf m\,:\,2\mathbf m \leq \mathbf d \\ \mathbf d - 2\mathbf m\in \L}} \mathbf V_{\mathbf d - 2\mathbf m}^{\sl^{\Fa}}\  \simeq\ \F^{(\mathbf d)}((\mathbb C^{2\times 2})^{\Ed})^{\sl^{\Fa}}.
  \end{equation*}
  Recall that each of the spaces $V_{\mathbf d- 2\mathbf m}^{\sl^{\Fa}}$, where $\mathbf d-2\mathbf m \in \L$, is one-dimensional and the image of its generating vector is the polynomial $G_{\Gamma(\mathbf d- 2\mathbf m), \mathbf m}$. Therefore, these polynomials indeed form a basis in the space~$\F^{(\mathbf d)}((\mathbb C^{2\times 2})^{\Ed})^{\sl^{\Fa}}$.
\end{proof}

We now extend each of the functions~$f_\Gamma$ from~$\sl^\Ed$ to~$(\mathbb C^{2\times 2})^\Ed$. For this purpose we need to give a meaning to the holonomy
\[
\h(\mathbf A,\gamma)=A_{e_0}^{\s(\sigma_0, e_0)}A_{e_1}^{\s(\sigma_1, e_1)}\dots A_{e_{m-1}}^{\s(\sigma_{m-1}, e_{m-1})}
\]
of~$\mathbf A$ along a \bl{loop~$\gamma$ (see~\eqref{eq:hol-def}) for general matrices~$A_e\in\mathbb C^{2\times 2}$, where \mbox{$e_0,e_1,\dots,e_{m-1}\in \Ed$} are the edges crossed by $\gamma$. Recall that each loop $\gamma$ from a lamination $\Gamma$ can be represented by a non-backtracking loop on the $\G^\circ$ in a unique way.

Let~$\h^\vee(\cdot,\gamma)$ be the extension of $\h(\cdot,\gamma)$ on~$(\mathbb C^{2\times 2})^\Ed$ defined as follows: replace each inverse matrix~$A_e^{-1}$ appearing in the above definition by the adjugate matrix~$A_e^\vee$, which is defined by the identity~$A_e^\vee v\wedge w = v\wedge A_ew$; note that~$A_e+A_e^\vee=\Tr A_e\cdot \Id$.}

It is easy to see that~$\Tr\h^\vee(\mathbf A,\gamma)$ does not depend on the orientation of~$\gamma$. \bl{Moreover, if $\gamma$ is a loop, then $\Tr\h^\vee(\mathbf A,\gamma)$ also does not depend on} a choice of the starting point \bl{of $\gamma$. This} allows one to extend the functions~$f_\Gamma$ from~$\sl^\Ed$ to~$(\mathbb C^{2\times 2})^\Ed$ as
\begin{equation}
\label{eq:F_Gamma_m-def}
F_\Gamma(\mathbf A):=\prod_{\gamma\in\Gamma}\Tr(\h^\vee(\mathbf A,\gamma)), \qquad F_{\Gamma, \mathbf m}(\mathbf A) := F_{\Gamma}(\mathbf A)\cdot \prod\limits_{e\in \Ed}(\det A_e)^{m_e},
\end{equation}
note that all functions~$F_{\Gamma,\mathbf m}$, $\mathbf m=(m_e)_{e\in\Ed}\in\mathbb Z_{\ge 0}^\Ed$, coincide with~$f_\Gamma$ on~$\sl^\Ed$.

\begin{lemma}
\label{lem:GGmViaFDn}
For each lamination~$\Gamma$ and~$\mathbf m\in\mathbb Z_{\ge 0}^\Ed$, the following identity holds:
\[
\textstyle G_{\Gamma,\mathbf m}\ =\ \sum_{\Delta\le\Gamma}c_{\Gamma\Delta}F_{\Delta,\mathbf m+\frac{1}{2}(\mathbf n(\Gamma)-\mathbf n(\Delta))},
\]
where the coefficients~$c_{\Gamma\Delta}$ are the same as in Theorem~\ref{thm:FG}.
\end{lemma}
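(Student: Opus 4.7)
The plan is to reduce the claim to the Fock--Goncharov identity $g_\Gamma=\sum_{\Delta\le\Gamma}c_{\Gamma\Delta}f_\Delta$ on $\sl^\Ed$ proved in Theorem~\ref{thm:FG} by exploiting multi-homogeneity of both sides in the matrix entries.

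First, I would check that the right-hand side is well-defined: by the remark immediately following Theorem~\ref{thm:FG}, the coefficient $c_{\Gamma\Delta}$ vanishes unless $n_e(\Gamma)-n_e(\Delta)$ is a non-negative even integer for every edge $e\in\Ed$, so $\mathbf m+\tfrac12(\mathbf n(\Gamma)-\mathbf n(\Delta))\in\mathbb Z_{\ge 0}^\Ed$ whenever $c_{\Gamma\Delta}\ne 0$, and the polynomials $F_{\Delta,\mathbf m+\frac12(\mathbf n(\Gamma)-\mathbf n(\Delta))}$ appearing on the right-hand side are well-defined homogeneous polynomials on $(\mathbb C^{2\times 2})^\Ed$.

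Next, I would verify that both sides are multi-homogeneous polynomials of the \emph{same} multi-degree $\mathbf n(\Gamma)+2\mathbf m$. For the left-hand side this is precisely Remark~\ref{rem:G-homdegree}. For the right-hand side, each factor $\Tr(\h^\vee(\mathbf A,\gamma))$ entering $F_\Delta$ is multi-linear in the matrices $A_e$, with each edge $e$ contributing $n_e(\gamma)$ linear factors (both $A_e$ and its ``conjugate'' $A_e^\vee$ are linear in the entries of $A_e$); hence $F_\Delta$ is multi-homogeneous of multi-degree $\mathbf n(\Delta)$, and multiplying by $\prod_{e}(\det A_e)^{m_e+\frac12(n_e(\Gamma)-n_e(\Delta))}$ bumps the multi-degree by $\mathbf n(\Gamma)-\mathbf n(\Delta)+2\mathbf m$, producing exactly $\mathbf n(\Gamma)+2\mathbf m$, as required.

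The one remaining observation, which I expect to be the only subtle point, is that a multi-homogeneous polynomial on $(\mathbb C^{2\times 2})^\Ed$ of a fixed multi-degree is uniquely determined by its restriction to $\sl^\Ed$. Indeed, if $P$ is multi-homogeneous of multi-degree $\mathbf d$ and vanishes on $\sl^\Ed$, then for any $\mathbf A$ with $\det A_e\ne 0$ for all $e$ one may choose $c_e\in\mathbb C^*$ with $c_e^2=\det A_e$ and set $\tilde A_e:=c_e^{-1}A_e\in\sl$, whence
\[
P(\mathbf A)\ =\ \prod\nolimits_{e\in\Ed} c_e^{d_e}\cdot P(\tilde{\mathbf A})\ =\ 0;
\]
thus $P$ vanishes on the Zariski-open dense set $\{\mathbf A:\det A_e\ne 0\,\forall e\}$ and hence identically. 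Since $\det A_e=1$ on $\sl^\Ed$, the restrictions of the two sides of the claimed identity to $\sl^\Ed$ are, respectively, $g_\Gamma$ and $\sum_{\Delta\le\Gamma}c_{\Gamma\Delta}f_\Delta$, which coincide by Theorem~\ref{thm:FG}; the multi-homogeneity argument above then promotes this equality to the whole space $(\mathbb C^{2\times 2})^\Ed$.
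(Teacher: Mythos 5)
Your proof is correct and uses the same key mechanism as the paper: both sides of the asserted identity are multi-homogeneous polynomials of the same multi-degree on $(\mathbb C^{2\times 2})^\Ed$ that restrict to the Fock--Goncharov identity of Theorem~\ref{thm:FG} on $\sl^\Ed$, and multi-homogeneity forces equality on the whole space. The only (cosmetic) difference is that the paper first establishes the dual relation $F_{\Gamma,\mathbf m}=\sum_{\Delta\le\Gamma}\tilde{c}_{\Gamma\Delta}G_{\Delta,\mathbf m+\frac{1}{2}(\mathbf n(\Gamma)-\mathbf n(\Delta))}$ by this same argument and then inverts the lower-triangular matrices $(\tilde c_{\Gamma\Delta})$, $(c_{\Gamma\Delta})$, whereas you apply the multi-homogeneity argument directly to the claimed identity, which is a mild streamlining.
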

\begin{proof}
Let~$f_\Gamma=\sum_{\Delta\le\Gamma}\tilde{c}_{\Gamma\Delta}g_\Delta$, recall that~$\tilde{c}_{\Gamma\Delta}=0$ if~$n_e(\Gamma)-n_e(\Delta)$ is odd for some~$e\in\Ed$. It is easy to see that
\[
\textstyle F_{\Gamma,\mathbf m}\ =\ \sum_{\Delta\le\Gamma}\tilde{c}_{\Gamma\Delta}G_{\Delta,\mathbf m+\frac{1}{2}(\mathbf n(\Gamma)-\mathbf n(\Delta))}.
\]
Indeed, since this identity holds on~$\sl^\Ed$ and both sides are homogeneous polynomials of multi-degree~$\mathbf n+2\mathbf m$, this also holds on the open subset~$\gl^\Ed$ of~$(\mathbb C^{2\times 2})^\Ed$. The claim follows since the matrices~$(c_{\Gamma\Delta})_{\Delta\le\Gamma}$ and~$(\tilde{c}_{\Gamma\Delta})_{\Gamma\le\Delta}$ are inverse to each other.
\end{proof}

We now move to the analysis of functions~$G_{\Gamma,\mathbf m}$ as elements of the Hilbert space $L^2(\mathbb B_R)$ 
on poly-balls~$\mathbb B_R=\mathrm B_R^\Ed$, see~\eqref{eq:B_Rdef}. Recall that the Euclidean measure on~$\mathbb B_R$ is given by the scalar product defined on each of the components as
\begin{equation}
\label{eq:scalarProduct}
\lan A_1, A_2\ran_{\mathbb C^{2\times 2}}=\Tr(A_1A_2^*).
\end{equation}

Below we use the analytic Peter--Weyl theorem, applied to the group $G=\su^\Ed$.

\begin{theorem}[{\bf analytic Peter--Weyl theorem}]\label{analyticPeterWeyl}
Let $G$ be a compact Lie group and $\widehat{G}$ denote the set of all its irreducible finite-dimensional unitary representations. For each representation $(V,\rho)\in \widehat G$ let $v^V_1,\dots,v^V_{\dim V}$ be the orthonormal basis in $V$ and
\begin{equation*}
  \bl{u^V_{i,j}(g)} := \sqrt{\dim V}\cdot \lan \rho(g)v_i^V, v_j^V \ran.
\end{equation*}
Then, the set~$\{u_{i,j}^V\,\mid\, V\in \widehat G,\ 1\leq i,j\leq \dim V\}$ is an orthonormal basis in~$L^2(G)$.
\end{theorem}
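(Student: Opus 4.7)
The plan is to prove the statement by the two-step scheme that is classical for compact groups: first establish pairwise orthonormality of the functions $u^V_{ij}$ via Schur's lemma, then establish density of their linear span in $L^2(G)$ via Stone--Weierstrass applied on $C(G)$.

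For the orthogonality half, I would use the standard averaging trick. Given any two irreducible unitary representations $(V,\rho)$ and $(W,\rho')$ in $\widehat G$ and any linear map $T:V\to W$, invariance of the normalized Haar measure on $G$ implies that
\[
\widetilde T\ :=\ \int_G \rho'(g)\,T\,\rho(g)^{-1}\,dg
\]
intertwines $\rho$ with $\rho'$. By Schur's lemma $\widetilde T=0$ if $V\not\simeq W$ and $\widetilde T=(\dim V)^{-1}\mathrm{tr}(T)\cdot \mathrm{Id}_V$ if $V=W$. Plugging in rank-one operators $T_{ab}v=\lan v,v^V_b\ran v^W_a$ and unitarity of $\rho(g)^{-1}=\rho(g)^*$ translates this into the Schur orthogonality relations
\[
\int_G \lan \rho(g)v^V_i,v^V_j\ran\,\overline{\lan \rho'(g)v^W_k,v^W_l\ran}\,dg\ =\ \frac{\delta_{V,W}}{\dim V}\,\delta_{ik}\delta_{jl},
\]
so that the normalization by $\sqrt{\dim V}$ in the definition of $u^V_{ij}$ produces orthonormal functions in $L^2(G)$.

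For completeness, I would consider the linear span $\mathcal A$ of all functions $u^V_{ij}$ as a subspace of $C(G)$ and check that it satisfies the hypotheses of the Stone--Weierstrass theorem. It contains the constants (the trivial representation), it is closed under pointwise multiplication because a product $u^V_{ij}\cdot u^W_{kl}$ is a matrix coefficient of $V\otimes W$ and can be re-expanded in terms of the $u^{V'}_{i'j'}$ via the isotypic decomposition of $V\otimes W$, and it is closed under complex conjugation because $\overline{u^V_{ij}(g)}$ is a matrix coefficient of the contragredient (equivalently, complex conjugate) representation of $G$, which is again finite-dimensional unitary. Finally, $\mathcal A$ separates points: since $G$ is a compact Lie group, it admits a faithful finite-dimensional unitary representation (it can be embedded into some $U(N)$), and the matrix entries of such a representation already separate points of $G$. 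Stone--Weierstrass then gives density of $\mathcal A$ in $C(G)$ in the uniform norm, and $C(G)\hookrightarrow L^2(G)$ is dense because the Haar measure is a finite Radon measure, so $\mathcal A$ is dense in $L^2(G)$ as well.

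The main obstacle, to the extent there is one, is the separation-of-points step: it is the only place where the hypothesis that $G$ is a Lie group (and not merely a compact Hausdorff group) is used in a non-trivial way, since without a faithful finite-dimensional representation one would have to bootstrap using spectral theory of convolution operators by continuous functions on $G$ (which is the route taken in the usual proof of the Peter--Weyl theorem for arbitrary compact groups, and which ultimately supplies the faithful representation a posteriori). For a compact Lie group this embedding is standard, so the plan above goes through directly.
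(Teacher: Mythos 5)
The paper does not give its own proof of this theorem --- the proof environment simply reads ``E.g., see \cite[Chapter~5]{Robert}'' --- so there is no in-text argument to compare against; the remarks below concern your proposal on its own terms.

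Your two-step scheme is correct: Schur orthogonality via the averaged intertwiner applied to rank-one operators gives the orthonormality, and Stone--Weierstrass gives the density, once one knows that $\mathcal A$ is a conjugation-stable unital subalgebra (constants from the trivial representation, closure under products via the isotypic decomposition of $V\otimes W$, closure under conjugation via the contragredient) that separates points. The one place where you should be careful is the separation-of-points step. You invoke the fact that a compact Lie group has a faithful finite-dimensional unitary representation, but the most common textbook route to that fact \emph{is} the Peter--Weyl theorem: one first proves that matrix coefficients separate points (for an arbitrary compact group, via compact self-adjoint convolution operators), then uses the no-small-subgroups property of Lie groups and a compactness argument on kernels to extract a single faithful representation. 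Using that conclusion as an input to prove Peter--Weyl is therefore potentially circular, and you should either explicitly rely on a non-circular proof of the embedding $G\hookrightarrow U(N)$ (via structure theory of compact Lie groups, which is non-trivial but independent), or simply run the convolution-operator argument directly, which works for all compact groups and makes the Lie hypothesis irrelevant. You do flag this concern yourself, but phrasing it as ``for a compact Lie group this embedding is standard, so the plan goes through directly'' slightly understates the issue: ``standard'' does not guarantee ``independent of the result being proved.''
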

\begin{proof}
E.g., see \cite[Chapter~5]{Robert}.
\end{proof}

Classically, irreducible unitary representations of~$\su$ are given by the restrictions of irreducible representations~$\sl\to\End(V_n)$ provided that the scalar product on~$V_n\simeq \Sym V_1^{\otimes n}$ is obtained from the standard scalar product on~$V_1$. Let~$\Gamma_1\ne\Gamma_2$ be two distinct laminations and
\[
\textstyle \mathbf u_1\in \bigotimes_{\sigma\in \Fa}\bigotimes_{e\in \partial\sigma}V_{n_e(\Gamma_1)},\qquad \mathbf u_2\in \bigotimes_{\sigma\in \Fa}\bigotimes_{e\in \partial\sigma}V_{n_e(\Gamma_2)}.
\]
Since~$n_e(\Gamma_1)\ne n_e(\Gamma_2)$ for some edge~$e\in\Ed$, it follows from Theorem~{\ref{analyticPeterWeyl}} that
\begin{equation}
\label{eq:ort_U[u]}
\int_{\su^\Ed} \pair( \mathbf U[\mathbf u_1])\,\overline{\pair(\mathbf U [\mathbf u_2])}\, \mu_{\mathrm{Haar}}(\mathbf d\mathbf U)\ =\ 0,
\end{equation}
where~$\mathbf U=(U_e)_{e\in\Ed}\in \su^\Ed$ and~$\mu_{\mathrm{Haar}}(\mathbf d\mathbf U)=\prod_{e\in E}\mu_{\mathrm{Haar}}(dU_e)$. The next step is to deduce the following lemma from the orthogonality condition~\eqref{eq:ort_U[u]}.

\begin{lemma}
\label{GareOrt}
For each~$R>0$, the polynomials $G_{\Gamma, \mathbf m}$ are orthogonal in~$L^2(\mathbb B_R)$:
\[
\lan G_{\Gamma_1,\mathbf m_1},G_{\Gamma_2,\mathbf m_2}\ran_{L^2(\mathbb B_R)} = 0\quad \text{if}\ \ (\Gamma_1,\mathbf m_1)\ne(\Gamma_2,\mathbf m_2).
\]
\end{lemma}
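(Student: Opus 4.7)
The plan is to reduce the orthogonality on $\mathbb B_R$ to the ready-made Peter--Weyl identity \eqref{eq:ort_U[u]} for $\su^\Ed$, by exploiting two natural symmetries of the Lebesgue measure on $(\mathbb C^{2\times 2})^\Ed$.

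By Remark~\ref{rem:G-homdegree}, each $G_{\Gamma_k,\mathbf m_k}$ is a holomorphic polynomial, homogeneous of multi-degree $\mathbf d_k:=\mathbf n(\Gamma_k)+2\mathbf m_k$. I would first dispose of the case $\mathbf d_1\ne \mathbf d_2$ using the invariance of $\mathbb B_R$ and of the Lebesgue measure under the $\mathrm{U}(1)^\Ed$-action $A_e\mapsto e^{i\theta_e}A_e$: under this action the integrand $G_{\Gamma_1,\mathbf m_1}\overline{G_{\Gamma_2,\mathbf m_2}}$ picks up the character $e^{i(\mathbf d_1-\mathbf d_2)\cdot\theta}$, so averaging $\theta\in(\mathbb R/2\pi\mathbb Z)^\Ed$ kills the integral.

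When $\mathbf d_1=\mathbf d_2$, the hypothesis $(\Gamma_1,\mathbf m_1)\ne(\Gamma_2,\mathbf m_2)$ forces $\mathbf n(\Gamma_1)\ne\mathbf n(\Gamma_2)$ (otherwise also $\mathbf m_1=\mathbf m_2$). I would then employ the $\su^\Ed$-action $A_e\mapsto A_eU_e$, $\mathbf U=(U_e)_{e\in\Ed}\in\su^\Ed$, which preserves $\mathbb B_R$ (Frobenius norm), the Lebesgue measure, and each $\det A_e$ (since $\det U_e=1$). Averaging the integrand in $\mathbf U$ with respect to the Haar measure and using Fubini, one obtains
\[
\lan G_{\Gamma_1,\mathbf m_1},G_{\Gamma_2,\mathbf m_2}\ran_{L^2(\mathbb B_R)}\ =\ \int_{\mathbb B_R}\Big(\prod_{e\in\Ed}(\det A_e)^{m_{1,e}}\overline{(\det A_e)^{m_{2,e}}}\Big)\cdot J(\mathbf A)\,d\mathrm{vol}(\mathbf A),
\]
where $J(\mathbf A):=\int_{\su^\Ed}G_{\Gamma_1}(\mathbf A\mathbf U)\overline{G_{\Gamma_2}(\mathbf A\mathbf U)}\,\mu_{\mathrm{Haar}}(\mathbf d\mathbf U)$. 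Using the defining identity $v\wedge Aw=(A^\vee v)\wedge w$ on each edge together with the extended action \eqref{eq:actionAdef}, one rewrites
\[
G_{\Gamma_k}(\mathbf A\mathbf U)\ =\ \pair(\mathbf A[\mathbf U[\mathbf v_{\Gamma_k}]])\ =\ \pair(\mathbf U[\tilde{\mathbf v}_k(\mathbf A)]),
\]
where $\tilde{\mathbf v}_k(\mathbf A)\in \mathbf V_{\mathbf n(\Gamma_k)}$ is obtained from the invariant generator of $\mathbf V_{\mathbf n(\Gamma_k)}^{\sl^{\Fa}}$ by applying $A_e^\vee$ on the left factor of every edge (and the identity on the right factor). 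Since $\mathbf n(\Gamma_1)\ne\mathbf n(\Gamma_2)$, the Peter--Weyl identity \eqref{eq:ort_U[u]} yields $J(\mathbf A)\equiv0$, and the claim follows.

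The only non-automatic step in this outline is the rewriting of $G_{\Gamma_k}(\mathbf A\mathbf U)$ in the form $\pair(\mathbf U[\tilde{\mathbf v}])$ required by \eqref{eq:ort_U[u]}: this is a short linear-algebra manipulation based on the very definition of $A^\vee$ and on the associativity $(\mathbf A\mathbf U)[\mathbf v]=\mathbf A[\mathbf U[\mathbf v]]$ of the extended action on $\mathbf V_{\mathbf n}$. Everything else is an immediate consequence of the invariance of the Lebesgue measure on $(\mathbb C^{2\times 2})^\Ed$ under edge-wise unitary rotations and phase shifts.
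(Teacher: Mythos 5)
Your proof is correct, and both it and the paper's argument culminate in the same application of the $\su^\Ed$ Peter--Weyl orthogonality~\eqref{eq:ort_U[u]}; the first case $\mathbf d_1\ne\mathbf d_2$ is handled by the same $\mathrm U(1)^\Ed$-rotation trick in both. Where you diverge is in the equal-multi-degree case: the paper decomposes each matrix as $A_e=e^{\pi i\theta_e}U_eH_e$ via the polar decomposition, invokes the induced factorization~\eqref{eq:productOfMeasures} of the Euclidean measure on~$\mathrm B_1$, and applies~\eqref{eq:ort_U[u]} with $\mathbf u_k=\mathbf H[\mathbf v_{\Gamma_k}]$; you instead average the integrand over the right $\su^\Ed$-action $A_e\mapsto A_eU_e$ (which preserves $\mathbb B_R$, the Lebesgue measure, and each $\det A_e$), and then move each $A_e$ across the wedge pairing via the adjugate identity $v\wedge A_ew=A_e^\vee v\wedge w$ together with the multiplicativity of the extended $\mathbb C^{2\times 2}$-action on~$V_{n_e}$, so as to put the inner Haar integral in the form required by~\eqref{eq:ort_U[u]}. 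Your variant is a touch more elementary since it never needs the explicit measure-factorization; the paper's polar-decomposition machinery~\eqref{eq:fubini}--\eqref{eq:productOfMeasures} is, however, reused in Proposition~\ref{thm:lower-bound}, so it costs the paper nothing to set it up here. A small further simplification available to you: averaging over the left action $A_e\mapsto U_eA_e$ gives $G_\Gamma(\mathbf{U}\mathbf A)=\pair(\mathbf U[\mathbf A[\mathbf v_\Gamma]])$ directly, making the adjugate step unnecessary.
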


\begin{proof} Recall that~$G_{\Gamma,\mathbf {m}}$ is a homogeneous polynomial of degree~$|\Gamma|+2|\mathbf m|$. Therefore, one can assume~$R=1$ without loss of generality. Let~$\mathbf{n}(\Gamma_1)+2\mathbf m_1\ne\mathbf{n}(\Gamma_2)+2\mathbf m_2$. Since the ball~$\mathbb B_1$ is invariant under rotations~$\mathbf A =(A_e)_{e\in \Ed}\mapsto (e^{2\pi it_e}A_e)_{e\in \Ed}=:e^{2\pi i\mathbf t}\mathbf A$, in this case one easily gets
\begin{equation*}
  \begin{split}
    \lan G_{\Gamma_1, \mathbf m_1}, G_{\Gamma_2, \mathbf m_2} \ran_{L^2(\mathbb B_1)}\, &=\, \int_{\mathbf t\in [0,1]^\Ed} \mathbf{dt}\int_{\mathbb B_1} G_{\Gamma_1, \mathbf m_1}(e^{2\pi i \mathbf t}\mathbf A)\,\overline{G_{\Gamma_2, \mathbf m_2}(e^{2\pi i \mathbf t}\mathbf A)}\,\lambda_{\mathbb B_1}(\mathbf{dA})\\
    &=\, \int_{\mathbf t\in[0,1]^\Ed} e^{2\pi i \mathbf t\cdot ((\mathbf n(\Gamma_1)+2\mathbf m_1)-(\mathbf n(\Gamma_2)+2\mathbf m_2))}\mathbf{dt}\cdot \lan G_{\Gamma_1, \mathbf m_1}, G_{\Gamma_2, \mathbf m_2} \ran_{L^2(\mathbb B_1)}\\
    &=\, 0.
  \end{split}
\end{equation*}

Assume now that~$\mathbf{n}(\Gamma_1)+2\mathbf m_1=\mathbf{n}(\Gamma_2)+2\mathbf m_2$, hence~$\Gamma_1\ne \Gamma_2$ as otherwise one would have~$(\Gamma_1,\mathbf m_1)=(\Gamma_2,\mathbf m_2)$. Let~$\mathrm B_1^+:=\{H\in \mathrm B_1\,\mid\,H\ge 0\}$ be the set of non-negative Hermitian matrices $H\in\mathbb C^{2\times 2}$ satisfying~$\Tr(HH^*)<1$.
Consider the mapping
\begin{equation}
  A:[0,1]\times \su\times \mathrm B_1^+\to  \mathrm B_1,\qquad A(\theta,U,H):=e^{\pi i\theta}UH.
  \label{eq:fubini}
\end{equation}
This mapping is a bijection modulo zero measure sets due to the existence and the uniqueness of the polar decomposition of generic matrices \mbox{$A\in \mathrm B_1$}. Note that the scalar product~\eqref{eq:scalarProduct}
satisfies \bl{the identity
\[
  \lan A(\theta,U,H),A(\theta,U,H)\ran_{\mathbb C^{2\times 2}}~=~ \Tr(H^2).
\]
Therefore, the Euclidean volume $\lambda_{\mathrm B_1}(dA(\theta,U,H))$ on $\mathrm B_1$ can be factorized as $\nu(d\theta,dU)\cdot \nu(dH)$, where $\nu(dH)$ is an absolutely continuous measure on $\mathrm B_1^+$ while a probability measure $\nu(d\theta,dU)$ is invariant under translations $\theta\mapsto\theta+\theta_0$ and $U\mapsto U_0U$. Factorizing $\nu(d\theta,dU)$ as the product of two Haar measures we arrive at the factorization
\begin{equation}
\label{eq:productOfMeasures}
\lambda_{\mathrm B_1}(dA(\theta,U,H))=d\theta\cdot\mu_{\mathrm{Haar}}(dU)\cdot\nu(dH),
\end{equation}
for a certain absolutely continuous measure $\nu$ on $\mathrm B_1^+$.}

Let~$\mathbf \mathbf v_1\in \bigotimes_{\sigma\in \Fa}\bigotimes_{e\in \partial\sigma}V_{n_e(\Gamma_1)}$ and~$v_2\in \bigotimes_{\sigma\in \Fa}\bigotimes_{e\in \partial\sigma}V_{n_e(\Gamma_2)}$ be such that
\[
G_{\Gamma_1}(\mathbf A)=\pair(\mathbf A[\mathbf v_1])\quad \text{and}\quad G_{\Gamma_2}(\mathbf A)=\pair(\mathbf A[\mathbf v_2])
\]
for~$\mathbf A\in(\mathbb C^{2\times 2})^\Ed$. Denote~$\mathbf{H}=(H_e)_{e\in\Ed}\in \mathbb B_1^+:=(\mathrm B_1^+)^\Ed$ and let~$\nu(\mathbf{dH}):=\prod_{e\in \Ed}\nu(dH_e)$. Since~$\Gamma_1\ne\Gamma_2$ one gets
\begin{equation*}
  \begin{split}
    \lan G_{\Gamma_1, \mathbf m_1},\right.&\left. G_{\Gamma_2, \mathbf m_2} \ran_{L^2(\mathbb B_1)}\\
    \,&=\, \int_{\mathbb B_1^+}\nu(\mathbf{dH}) \det \mathbf H^{\mathbf m_1+\mathbf m_2}
    \int_{\su^\Ed}\pair(\mathbf{UH}[\mathbf v_1])\,\bl{\overline{\pair(\mathbf{UH}[\mathbf v_2])}}\,\mu_{\mathrm{Haar}}(\mathbf{dU}),
  \end{split}
\end{equation*}
where~$\det\mathbf H^{\mathbf m}:=\prod_{e\in\Ed}\det H_e^{m_e}$.
Applying~\eqref{eq:ort_U[u]} to the vectors $\mathbf u_1:=\mathbf H[\mathbf v_1]$ and $\mathbf u_2:=\mathbf H[\mathbf v_2]$ one sees that the integral over~$\su^\Ed$ vanishes for each~$\mathbf H\in\mathbb B_1^+$.
\end{proof}

\subsection{Exponential lower bound for the norms of functions~$\bm{G_{\Gamma,\mathbf m}}$}\label{sec:lowerbound} The goal of this section is to derive the following uniform lower bound for the $L^2$-norms of the functions~$G_{\Gamma,\mathbf{m}}$ on poly-balls~$\mathbb B_R$:

\begin{prop}
\label{prop:lower-bound}
There exists a (small) absolute constant~$\eta_0>0$ such that for all laminations~$\Gamma$ and multi-indices~$\mathbf m\in\mathbb Z_{\ge 0}^\Ed$, the following estimate holds:
\begin{equation}
\label{eq:lower-bound}
\|G_{\Gamma,\mathbf m}\|_{L^2(\mathbb B_R)}\ge (\eta_0 R)^{|\Gamma|+2|\mathbf m|+4|\Ed|}.
\end{equation}
The constant~$\eta_0$ is independent of~$\G$ and \bl{the number $n$ of punctures}~$\lambda_1,\dots,\lambda_n\in\Omega$.
\end{prop}
\begin{rem} The proof is postponed until the end of this section. Note that~$G_{\Gamma,\mathbf m}$ is a homogeneous polynomial of total degree~$|\Gamma|+2|\mathbf m|$ and~$\mathbb B_R$ is a poly-ball of \bl{the real} dimension~$8|\Ed|$. Therefore, one can assume~$R=1$ without loss of generality.
\end{rem}

We need some preliminaries. Recall that the space~$V_n$ can be thought of as the space~$\mathbb C[z,w]^{(n)}$ of homogeneous polynomials of degree~$n$ in two variables~$z,w$; we fix an isomorphism~$V_n\simeq\mathbb C[z,w]^{(n)}$ by identifying the monomial~$z^kw^{n-k}$ with the basis vector~$(x^ky^{n-k})^\sym\in V_n$, see Section~\ref{ReparametrizingFunctions}. For each multi-index~$\mathbf n\in \mathbb Z_{\ge 0}^\Ed$, this isomorphism induces an isomorphism
\begin{equation}
\label{eq:Vn_to_polynomials}
\mathbf V_{\mathbf n}\ \simeq\ \bigotimes_{\sigma\in\Fa,\,e\in\partial\sigma}V_{\sigma,e,n_e}\simeq\ \mathbb C[\mathbf z,\mathbf w]^{(\mathbf n)}
\end{equation}
of the space~$\mathbf V_{\mathbf n}$ and the space of homogeneous polynomials of multi-degree~$\mathbf n$ in variables~$\mathbf z=(z_{\sigma,e})_{\sigma\in\Fa,e\in\partial\sigma}$ and~$\mathbf w=(w_{e,\sigma})_{\sigma\in\Fa,e\in \partial\sigma}$, note that the total degree of these polynomials is~$2|\mathbf n|$ since~$\mathbf V_{\mathbf n}$ contains two factors~$V_{\sigma,e,n_e}$ per edge~$e\in\Ed$.

\begin{lemma}
\label{lem:Pgamma=}
Given~$\mathbf n\in \L$ and the lamination~$\Gamma=\Gamma(\mathbf n)$, let~$\mathbf v_\Gamma\in\mathbf V_{\mathbf n}$ be the vector corresponding to the function $g_\Gamma$ 
under the isomorphism~\eqref{eq:morePeterWeylRepresentation} and~$P_\Gamma\in\mathbb C[\mathbf z,\mathbf w]^{(\mathbf n)}$ be the homogeneous polynomial corresponding to~$\mathbf v_\Gamma$ under the isomorphism~\eqref{eq:Vn_to_polynomials}. Then,
\[
\textstyle P_\Gamma(\mathbf z,\mathbf w)\ =\ \pm\prod_{\sigma\in\Fa}\prod_{i=1}^3 (w_{\sigma,e_i}z_{\sigma,e_{i+1}}-z_{\sigma,e_i}w_{\sigma,e_{i+1}}\bigr)^{\frac{1}{2}(n_{e_i}+n_{e_{i+1}}-n_{e_{i+2}})},
\]
where~$e_1,e_2,e_3\in\Ed$ denote the three edges adjacent to~$\sigma$ and~$e_{i+3}:=e_i$.
\end{lemma}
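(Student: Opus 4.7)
The plan is to exploit the product decomposition
\[
\mathbf V_{\mathbf n}^{\sl^\Fa}\ =\ \bigotimes_{\sigma\in\Fa}\Bigl(\bigotimes_{e\in\partial\sigma}V_{\sigma,e,n_e}\Bigr)^{\sl}
\]
provided by the tensor product structure of $\mathbf V_{\mathbf n}$ and, by Corollary~\ref{laminationCondition}, to observe that the right-hand side is a tensor product of one-dimensional spaces, one per face. Hence any non-zero $\sl^\Fa$-invariant element of multi-degree $\mathbf n$ is a scalar multiple of $P_\Gamma$, and the claim reduces to producing such a candidate and showing that it agrees with $P_\Gamma$ up to sign.

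The natural candidate is the right-hand side of the lemma. Under the identification $V_1\simeq\mathbb C[z,w]^{(1)}$ the diagonal action of $C_\sigma\in\sl$ transforms $(z_{\sigma,e},w_{\sigma,e})$ linearly by the same matrix for all $e\in\partial\sigma$; the decomposition $V_1\otimes V_1\simeq V_2\oplus V_0$ from~\eqref{eq:productOfTwoReprExpandsAs} shows that $w_{\sigma,e_i}z_{\sigma,e_{i+1}}-z_{\sigma,e_i}w_{\sigma,e_{i+1}}$ is, up to a scalar, the unique $\sl$-invariant in $V_{\sigma,e_i,1}\otimes V_{\sigma,e_{i+1},1}$, so the whole product is $\sl^\Fa$-invariant. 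The lamination condition~\eqref{eq:lamCon} at each face guarantees that all exponents are non-negative integers, hence the candidate does not vanish identically. Finally, its total degree in the variables $(z_{\sigma,e_i},w_{\sigma,e_i})$ at a face $\sigma$ equals
\[
\tfrac12(n_{e_i}+n_{e_{i+1}}-n_{e_{i+2}})+\tfrac12(n_{e_{i-1}}+n_{e_i}-n_{e_{i+1}})\ =\ n_{e_i},
\]
so the multi-degree is indeed $\mathbf n$.

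To identify the scalar I will use Lemma~\ref{gViaLoops} combined with the commutative diagram~\eqref{eq:finalDecomp}, which imply that $\mathbf v_\Gamma$ is the image of $\mathbf w_{\C(\Gamma,\Id)}\in\mathbf W_{\mathbf n}$ under the projector that symmetrizes each factor $\bigotimes_{k=1}^{n_e}W_{\sigma,e,k}\simeq V_1^{\otimes n_e}$ onto $V_{n_e}$. Since the isomorphism $V_d\simeq\mathbb C[z,w]^{(d)}$ sends $(x^ky^{d-k})^{\sym}$ to $z^kw^{d-k}$, a direct check shows that this symmetrizer corresponds in polynomial language to the specialization $z_{\sigma,e,k}\mapsto z_{\sigma,e}$, $w_{\sigma,e,k}\mapsto w_{\sigma,e}$ on each edge. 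Each chord $c$ at $\sigma$ contributes to~\eqref{eq:w_CGammaPi} the bivector $u_{c,\Gamma,\Id}$, whose polynomial representation is $w_{\sigma,e_1(c),k_1(c)}z_{\sigma,e_2(c),k_2(c)}-z_{\sigma,e_1(c),k_1(c)}w_{\sigma,e_2(c),k_2(c)}$; upon specialization this becomes exactly one factor of the claimed product. Multiplying over the $\tfrac12(n_{e_i}+n_{e_{i+1}}-n_{e_{i+2}})$ chords joining $e_i$ and $e_{i+1}$ at $\sigma$ and absorbing the overall sign $\s(\C(\Gamma,\Id))\in\{\pm1\}$ then yields $P_\Gamma$ up to the overall $\pm$ stated in the lemma.

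The main obstacle I anticipate is sign bookkeeping: since $u_{c,\Gamma,\Id}$ is defined up to the ordering of its two endpoints and $\s(\C(\Gamma,\Id))$ involves a count of crossings, one must orient each chord endpoint pair and check that the signs of the individual determinantal factors combine consistently with the global prefactor. Fortunately, none of these choices affects the conclusion beyond the overall $\pm$ already present in the statement.
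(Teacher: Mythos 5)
Your proof is correct and takes essentially the same route as the paper's: both start from Lemma~\ref{gViaLoops} to express $\mathbf v_\Gamma$ as the symmetrized image of $\mathbf w_{\C(\Gamma,\Id)}$, then observe that under the polynomial identification the symmetrizer acts as the specialization $z_{\sigma,e,k}\mapsto z_{\sigma,e}$, $w_{\sigma,e,k}\mapsto w_{\sigma,e}$, so the tensor product of the chord bivectors $u_{c,\Gamma,\Id}$ becomes precisely the claimed product of determinantal factors (with one factor of multiplicity $\frac12(n_{e_i}+n_{e_{i+1}}-n_{e_{i+2}})$ per pair of edges at $\sigma$), up to the overall sign $\s(\C(\Gamma,\Id))$. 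Your opening paragraph, which invokes the one-dimensionality of each $\bigl(\bigotimes_{e\in\partial\sigma}V_{\sigma,e,n_e}\bigr)^{\sl}$ to predict proportionality in advance, is a sound and helpful framing but is not needed once the explicit specialization computation is carried out; the paper simply performs that computation directly.
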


\begin{proof}
Due to Lemma~\ref{gViaLoops} the vector~$\mathbf v_\Gamma$ equals to the symmetrization of vectors~$\mathbf w_{\C(\Gamma,\pi)}$ defined in~\eqref{eq:w_CGammaPi} over~$\pi\in\prod_{\sigma\in \Fa,\,e\in \partial \sigma} \Sym(n_e)$. For each face~$\sigma\in\Fa$  the symmetrization of the vectors
\[
\textstyle \bigotimes_{\text{$c$ -- chord of $\C(\Gamma,\pi)$ in $\sigma$}} u_{c,\Gamma,\pi}
\]
over~$\prod_{i=1}^3\Sym(n_{e_i})$ corresponds to the polynomial~$\pm\prod_{i=1}^3(w_{\sigma,e_i}z_{\sigma,e_{i+1}}-z_{\sigma,e_i}w_{\sigma,e_{i+1}}\bigr)^{k_{i,i+1}}$\!,
where~$k_{i,i+1}$ stands for the number of chords of~$\Gamma$ connecting the edges~$e_i$ and~$e_{i+1}$ in~$\sigma$. The claim follows by taking the product over all~$\sigma\in\Fa$.
\end{proof}

Denote
\[
\textstyle \mathbb S:=\prod_{\sigma\in\Fa,e\in\partial\sigma}\mathrm S_{\sigma,e},\qquad
\mathrm S_{\sigma,e}:=\{(z_{\sigma, e},w_{\sigma, e})\in \mathbb C^2\,\mid\, |z_{\sigma, e}|^2 + |w_{\sigma, e}|^2 = 1\}.
\]

\begin{lemma} \label{combToInt}
Let~$\mathbf v_1,\mathbf v_2\in \mathbf V_{\mathbf n}$ and~$P_1,P_2\in \mathbb C[\mathbf z,\mathbf w]^{(\mathbf n)}$ be the corresponding homogeneous polynomials of multi-degree~$\mathbf n$ (and of total degree~$2|\mathbf n|$). Then,
\[
\lan \mathbf v_1,\mathbf v_2\ran_{\mathbf V_{\mathbf n}}\ =\ \prod\limits_{e\in \Ed}\frac{(n_e+1)^2}{4\pi^4}\,\cdot \int_{\mathbb S} P_1(\mathbf z,\mathbf w)\,\overline{P_2(\mathbf z,\mathbf w})\,d\lambda_{\mathbb S}\,,
\]
where~$d\lambda_{\mathbb S}:=\prod_{\sigma\in\Fa,\,e\in\partial\sigma}\lambda_{\mathrm S_{\sigma,e}}$ is the product of the surface measures on the spheres~$\mathrm S_{\sigma,e}$.
\end{lemma}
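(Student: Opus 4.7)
The plan is to reduce to the one-factor case via the tensor product structure and then evaluate the scalar product on a monomial basis.

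First, since $\mathbf V_{\mathbf n}=\bigotimes_{\sigma\in\Fa,\,e\in\partial\sigma}V_{\sigma,e,n_e}$, the measure $d\lambda_{\mathbb S}$ is a product of surface measures on the spheres $\mathrm S_{\sigma,e}$, and the isomorphism~\eqref{eq:Vn_to_polynomials} is itself a tensor product of the single-factor isomorphisms $V_n\simeq \mathbb C[z,w]^{(n)}$, it suffices to establish the identity
\begin{equation*}
\langle v_1,v_2\rangle_{V_n}\ =\ \frac{n+1}{2\pi^2}\int_{S^3} P_1(z,w)\overline{P_2(z,w)}\,d\lambda_{S^3}, \qquad S^3=\{|z|^2+|w|^2=1\},
\end{equation*}
for a single factor $V_n$. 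The constant in the Lemma then arises because each edge $e\in\Ed$ is adjacent to exactly two faces and thus contributes two copies of the single-factor constant $(n_e+1)/(2\pi^2)$, which squares to $(n_e+1)^2/(4\pi^4)$.

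Next, I would verify the single-factor identity on the symmetric-monomial basis $(x^ky^{n-k})^{\sym}$, $0\le k\le n$, which corresponds to the monomials $z^kw^{n-k}\in\mathbb C[z,w]^{(n)}$. On the algebraic side, the scalar product computed in Section~\ref{ReparametrizingFunctions} gives
\begin{equation*}
\bigl\langle (x^ky^{n-k})^{\sym},(x^{k'}y^{n-k'})^{\sym}\bigr\rangle_{V_n}\ =\ \binom{n}{k}^{-1}\delta_{k,k'}.
\end{equation*}
On the analytic side, orthogonality of $z^kw^{n-k}$ and $z^{k'}w^{n-k'}$ for $k\ne k'$ follows from the invariance of the measure $d\lambda_{S^3}$ under the diagonal $U(1)$-action $(z,w)\mapsto(e^{i\alpha}z,e^{i\alpha}w)$ combined with the independent rotation $(z,w)\mapsto (e^{i\beta}z,w)$. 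The remaining diagonal entries reduce to a single beta integral: parametrizing $S^3$ by $(z,w)=(\cos\theta\,e^{i\phi_1},\sin\theta\,e^{i\phi_2})$ with $\theta\in[0,\pi/2]$ and $\phi_1,\phi_2\in[0,2\pi]$, the surface measure is $\sin\theta\cos\theta\,d\theta\,d\phi_1\,d\phi_2$, so
\begin{equation*}
\int_{S^3}|z|^{2k}|w|^{2(n-k)}\,d\lambda_{S^3}\ =\ 4\pi^2\!\int_0^{\pi/2}\!\cos^{2k+1}\theta\sin^{2(n-k)+1}\theta\,d\theta\ =\ \frac{2\pi^2\,k!(n-k)!}{(n+1)!}\ =\ \frac{2\pi^2}{n+1}\binom{n}{k}^{-1}.
\end{equation*}
Comparing the two sides shows that both scalar products are proportional on the monomial basis with ratio $(n+1)/(2\pi^2)$, and since both are sesquilinear this extends to all of $V_n$.

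There is no serious obstacle here: the single-factor statement is the standard description of the $SU(2)$-invariant Hermitian form on $\mathrm{Sym}^n\mathbb C^2$ as an $L^2$-product on the Hopf sphere, and everything else is bookkeeping for the tensor factors indexed by pairs $(\sigma,e)$. The only mild care needed is to track the normalization conventions already fixed in Section~\ref{ReparametrizingFunctions} (in particular the factor $\binom{n}{k}^{-1}$) and to observe that each edge contributes two sphere factors, which yields the exponent $2$ and the $4\pi^4$ in the denominator.
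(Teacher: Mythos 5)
Your proof is correct, but the route is genuinely different from the paper's. Both arguments first reduce to the single-factor identity $\langle v_1,v_2\rangle_{V_n}=\frac{n+1}{2\pi^2}\int_{S^3}P_1\overline{P_2}\,d\lambda_{S^3}$ (and your bookkeeping of the $(n_e+1)^2/(4\pi^4)$ via two sphere factors per edge matches the paper). The divergence is in how this single-factor identity is proved. You verify it on the monomial basis $(x^k y^{n-k})^{\sym}\leftrightarrow z^kw^{n-k}$: orthogonality for $k\neq k'$ by $U(1)$-rotation invariance of $\lambda_{S^3}$, and the diagonal entries via an explicit beta integral, matching against $\binom{n}{k}^{-1}$ from Section~\ref{ReparametrizingFunctions}. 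The paper instead reduces to $\mathbf v_1=\mathbf v_2=\mathbf v$ decomposable, pulls the integral back through the diffeomorphism $\su\to S^3$, $g\mapsto v^{(0)}g$ (which carries Haar measure to $(2\pi^2)^{-1}\lambda_{S^3}$), rewrites $P(v^{(0)}g)=\langle g(x^{\otimes n}),\overline v\rangle_{V_n}$, and invokes the analytic Peter--Weyl theorem (Theorem~\ref{analyticPeterWeyl}) to evaluate $\int_{\su}|\langle g(x^{\otimes n}),\overline v\rangle|^2\,d\mu_{\mathrm{Haar}}=\|v\|^2/(n+1)$. Your version is more elementary and self-contained (no Peter--Weyl needed), at the cost of an explicit parametrization and beta-function computation; the paper's version is a few lines once Theorem~\ref{analyticPeterWeyl} is available, and makes the $(n+1)$ appear conceptually as $\dim V_n$ rather than as a factorial ratio.
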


\begin{proof} It is enough to consider the case~$\mathbf v_1 = \mathbf v_2 =: \mathbf v=\bigotimes_{\sigma\in\Fa,\,e\in\partial\sigma}v_{\sigma,e}$. For such vectors the claim follows from the component-wise identity
\[
\|v\|_{V_n}^2=\frac{n+1}{2\pi^2}\int_{\mathrm S}|P(z,w)|^2d\lambda_{\mathrm S},
\]
where~$v\in V_n$ and~$P\in \mathbb C[z,w]^{(n)}$ is the homogeneous polynomial corresponding to~$v$.

To prove this identity consider a mapping~$\su\to\mathrm S$ given by~$g\mapsto v^{(0)}g$, where~$v^{(0)}=(1,\,0)\in\mathbb C^2$ corresponds to the basis vector~$x\in V_1$ in the notation of Section~\ref{ReparametrizingFunctions}. This mapping is a diffeomorphism and the pushforward of the Haar measure on~$\su$ is ~$(2\pi^2)^{-1}\lambda_{\mathrm S}$. Since~$P$ corresponds to~$v$, one has~$P(v^{(0)}g)=\lan g(x^{\otimes n}),\overline{v}\ran_{V_n}$. Therefore,
  \begin{equation*}
    \frac{1}{2\pi^2}\int_{\mathrm S} |P(z,w)|^2\,d\lambda_{\mathrm S} = \int_{\su} |\lan g(x^{\otimes n}),\overline{v}\ran_{V_n}|^2\,d\mu_{\mathrm{Haar}} = \frac{\|v\|_{V_n}^2}{n+1},
  \end{equation*}
  where the last equality follows from Theorem~\ref{analyticPeterWeyl}.
\end{proof}

\begin{proof}[{\bf Proof of Proposition~\ref{prop:lower-bound}}] As already mentioned above, one can assume~$R=1$ due to homogeneity reasons. Let~$\mathbf v_\Gamma\in\mathbf V_{\mathbf n}$ and~$P_\Gamma\in \mathbb C[\mathbf z,\mathbf w]^{(\mathbf n)}$ correspond to the function~$g_\Gamma$ as discussed in Lemma~\ref{lem:Pgamma=}. Using factorization~\eqref{eq:fubini} of the Euclidean measure on~$\mathrm B_1$ in the same way as in the proof of Lemma~\ref{GareOrt} one gets the identity
\begin{equation*}
    \|G_{\Gamma,\mathbf {m}}\|_{L^2(\mathbb B_1)}^2\, =\, \int_{\mathbb B_1^+}\nu(\mathbf{dH})\,|\det \mathbf H^{2\mathbf m}|
    \int_{\su^\Ed}|\pair(\mathbf{UH}[\mathbf v_\Gamma])|^2\,\mu_{\mathrm{Haar}}(\mathbf{dU}),
\end{equation*}
Theorem~\ref{analyticPeterWeyl} gives
\[
\int_{\su^\Ed}|\pair(\mathbf{UH}[\mathbf v_\Gamma])|^2\,\mu_{\mathrm{Haar}}(\mathbf{dU})\ =\ \prod\nolimits_{e\in\Ed}\frac{1}{(n_e+1)^{2}}\cdot \|\,\mathbf H[\mathbf v_\Gamma]\,\|_{\mathbf V_\mathbf n}^2.
\]
Applying Lemma~\ref{combToInt} we arrive at the identity
\begin{equation}
\label{eq:NormG=}
\|G_{\Gamma,\mathbf {m}}\|_{L^2(\mathbb B_1)}^2\, =\, (4\pi^4)^{-|\Ed|}\cdot \int_{\mathbb B_1^+}\nu(\mathbf{dH})\,|\det \mathbf H^{2\mathbf m}| \int_{\mathbb S}|\,\mathbf H[P_\Gamma]\,|^2 d\lambda_{\mathbb S},
\end{equation}
where~$\mathbf H[P_\Gamma]$ is obtained from~$P_\Gamma$ by the change of variables~$(z_{\sigma,e}\,,w_{\sigma,e})\mapsto (z_{\sigma,e}\,,w_{\sigma,e})H_e$.
\bl{Finally, there exist sufficiently small absolute constants $\nu_0>0$ and $\varepsilon_0=\varepsilon_0(\nu_0)>0$ such that we have
\[
\nu(\{\,H\in\mathrm B_1^+\mid\, \det H\ge \nu_0\,\})\ \ge\ \nu_0
\]
and, using the explicit form of the polynomial~$P_\Gamma$ (see Lemma~\ref{lem:Pgamma=})),
\[
\lambda_{\mathbb S}(\{\,(\mathbf z,\mathbf w)\in \mathbb S\,\mid\,|\,\mathbf{H}[P_\Gamma]\,|\ge \varepsilon_0^{|\mathbf n(\Gamma)|}\,\})\ \ge\ \varepsilon_0^{|\Fa|}
\]
provided that~$\det H_e\ge \nu_0$ for all edges $e\in\Ed$. The desired uniform lower bound~\eqref{eq:lower-bound} for the norm~$\|G_{\Gamma,\mathbf {m}}\|_{L^2(\mathbb B_1)}$, with an appropriate $\eta_0=\eta_0(\nu_0,\varepsilon_0)>0$, follows easily.
}
\end{proof}





\section{Estimate of coefficients in the Fock--Goncharov theorem}\label{skein_algebra_section}

The goal of this section is to derive the estimate $|c_{\Delta\Gamma}|\le 4^{|\Gamma|}$ for the coefficients of the Fock--Goncharov change of the bases $g_\Gamma=\sum_{\Delta\le\Gamma}c_{\Gamma\Delta}f_\Delta$ discussed in Section~\ref{fViagAndgViaf}. To the best of our understanding, a similar exponential bound was implicitly used in~\cite[p.~483]{Kenyon} and~\cite[p.~957]{Kassel-Kenyon} but we were unable to find a proof of such an estimate in the existing literature. \bl{It is worth noting that an exponential bound for the coefficients of the \emph{inverse} change of the bases~$f_\Gamma=\sum_{\Delta\le\Gamma}\tilde{c}_{\Gamma\Delta}g_\Delta$ trivially follows from the orthogonality of~$g_\Delta$ since the norms $\|f_\Gamma\|$ are trivially exponentially bonded from above and the norms $\|g_\Gamma\|$ are exponentially bounded from below due to Proposition~\ref{prop:lower-bound}. However,} this fact does \emph{not} imply an exponential estimate of~$c_{\Gamma\Delta}$.

In order to study the coefficients~$c_{\Gamma\Delta}$ we use the well-known connection between the Kauffman skein algebra with the parameter $q=-1$ and the ring of invariants $\F(X)^{\sl}$ (e.g., see~\cite{Marche}): functions $g_{\Gamma}$ and $f_{\Gamma}$ correspond to some natural elements of the skein algebra and the matrix $c_{\Gamma\Delta}$ admits some combinatorial description through this correspondence. One of the key ingredients of the proof given below is a recent result due to D.~Thurston~\cite{Thurston} on the positivity of the so-called \emph{bracelet basis} in the skein algebra with~$q=1$, see Section~\ref{braceletAndPositivity} for more details. Another input is a representation of the skein algebra in the space of Laurent polynomials via Thurston's shear coordinates on the moduli space of hyperbolic structures on~$\Omega\setminus\{\lambda_1,\dots,\lambda_n\}$, see Section~\ref{sub:shear_coordinates}. These ideas were kindly communicated to the authors by Vladimir Fock and we believe that a core part of the proof of Theorem~\ref{thm:diagEst} should in fact be credited to him.

Several parts of the material presented in this section are classical or well-known. As in Section~\ref{representation_theory_section} we collect all them together in order to introduce a consistent notation and for the sake of completeness.

\subsection{Definition of the Kauffman skein algebra and Przytycki--Sikora theorem}\label{skeinDef} Let $q\in\mathbb C^*$, $\Sigma:=\Omega\setminus\{\lambda_1,\dots,\lambda_n\}$, and $M:= [0,1]\times \Sigma$. A framed link $L$ in $M$ is an embedding of a disjoint union of circles $\iota : \sqcup S^1\to M$ together with a continuous choice of a positive oriented basis in the fiber of $TM$ at each point of~${\iota(\sqcup S^1)}$ (i.e., a choice of everywhere linearly independent sections $v_1, v_2, v_3: \iota(\sqcup S^1) \to TM_{\iota(\sqcup S^1)}$).
We say that an (oriented) knot is trivially framed if $v_1$ is the tangent vector to the knot and $v_3$ is everywhere vertical, each framed knot is equivalent to a trivially framed one up to a framed isotopy. Denote by $L_{\mathrm{unknot}}$ the trivially framed unknot associated with a simple contractible loop in $\Sigma$.

Let $W$ be the $\mathbb C$-vector space generated by framed isotopy classes of framed links. \bl{We define a multiplication of two links $L$ and $L'$ by placing $L$ under $L'$} (with respect to the direction of the projection $[0,1]\times \Omega\to [0,1]$) and taking the union. This makes $W$ into an algebra with the unit represented by the empty link. We say that three framed links $L_1$, $L_0$ and $L_{\infty}$ form a Kauffman triple if they can be drawn identically except in a ball where they appear as shown in Figure~\ref{fig:Kauffman_triple}. Let $\CMcal{I}$ be the two-sided ideal in $W$ generated by the relations $L_1 - qL_0 - q^{-1}L_{\infty}=0$ for all Kauffman triples and $L_{\mathrm{unknot}} + q^2+q^{-2}=0$. The Kauffman skein algebra with the parameter $q$ is the quotient
\begin{equation*}
  \sk(M, q)\ :=\ W/\CMcal{I}.
\end{equation*}

\begin{figure}
  \centering
  \tempskip{\begin{minipage}{0.2\textwidth}\centering
  \includegraphics[clip, trim=2.35cm 1.8cm 10.75cm 0cm, width=\textwidth]{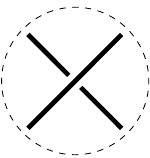}

  \bigskip
  {\large $L_1$}
  \end{minipage}\hskip 0.1\textwidth
  \begin{minipage}{0.2\textwidth}\centering
  \includegraphics[clip, trim=6.53cm 1.8cm 6.57cm 0cm, width=\textwidth]{kauffman_triple.pdf}

  \bigskip
  {\large $L_0$}
    \end{minipage}\hskip 0.1\textwidth
  \begin{minipage}{0.2\textwidth}\centering
  \includegraphics[clip, trim=10.57cm 1.8cm 2.53cm 0cm, width=\textwidth]{kauffman_triple.pdf}

  \bigskip
  {\large \ $L_\infty$}
  \end{minipage}}
  \caption{A Kauffman triple~$(L_1,L_0,L_\infty)$. The skein algebra relations are~$L_1=qL_0+q^{-1}L_\infty$ (complemented by $L_{\mathrm{unknot}}=-q^2-q^{-2}$).}
  \label{fig:Kauffman_triple}
\end{figure}

In our paper we are interested in the particular cases~$q=\pm 1$ only. Note that the relations in~$\CMcal{I}$ immediately imply that the algebras~$\sk(M,\pm 1)$ are commutative.The following theorem is due to Przytyscki and Sikora~\cite{Przytycki_Sikora}, see also~\cite{Bullock} for a close result.

\begin{theorem}\label{SkToFunctions}
 Given a knot~$K$, let~$\gamma_K$ denote its projection onto~$\Sigma$. The mapping
\begin{equation*}
  \psi: \sk(M, -1)\ \to\ \F(X)^{\sl}
\end{equation*}
defined on trivially framed knots by $\psi(K)(\rho) := -\Tr(\rho(\gamma_K))$ and extended by additivity and multiplicativity is a correctly defined isomorphism of algebras.
\end{theorem}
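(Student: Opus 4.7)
The plan is to verify well-definedness, surjectivity, and injectivity of $\psi$ in turn, using the Fock--Goncharov theorem (Theorem~\ref{thm:FG}) as the central tool. First, for well-definedness, I will check that $\psi$ sends each relation generating $\CMcal{I}$ to zero. The classical trace identity for $\sl$,
\begin{equation*}
\Tr(A)\Tr(B)\ =\ \Tr(AB) + \Tr(AB^{-1}),\qquad A,B\in\sl,
\end{equation*}
follows from Cayley--Hamilton ($A + A^{-1} = \Tr(A)\cdot \Id$ for $A\in\sl$) by multiplying on the right by $B$ and taking trace. In a Kauffman triple $(L_1,L_0,L_\infty)$, the projections to $\Sigma$ differ only inside a small disk where two local arcs are either concatenated into one loop (as in $L_1$ and $L_\infty$, corresponding to monodromies $AB$ and $AB^{-1}$) or kept as two separate loops (as in $L_0$, with monodromies $A$ and $B$). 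With the sign convention $\psi(K) = -\Tr(\rho(\gamma_K))$, the $q=-1$ relation $L_1 + L_0 + L_\infty = 0$ becomes exactly the trace identity above, and $\psi(L_{\mathrm{unknot}})=-\Tr(\Id)=-2$ matches $-q^2-q^{-2}=-2$. Multiplicativity of $\psi$ follows because vertical stacking of disjoint links in $M$ corresponds under $\psi$ to factorization of traces over components.

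Surjectivity is immediate from Theorem~\ref{thm:FG}: the functions $f_\Gamma$ span $\F(X)^{\sl}$ as $\Gamma$ ranges over all (not necessarily macroscopic) laminations on $\Sigma$. For each such $\Gamma$, let $L_\Gamma\subset M$ denote the trivially framed link obtained by lifting the disjoint simple loops of $\Gamma$ to a constant height. Then by construction $\psi(L_\Gamma)=(-1)^{\#\mathrm{loops}(\Gamma)}f_\Gamma$, placing every $f_\Gamma$, and hence all of $\F(X)^{\sl}$, in the image of $\psi$.

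For injectivity, I will show that the same links $\{L_\Gamma\}$ span $\sk(M,-1)$ over $\mathbb C$. At $q=-1$ the framing is essentially irrelevant: one checks directly from the skein and unknot relations that adding a positive or negative kink to a trivially framed knot multiplies its class by a scalar, so every framed link equals a scalar times a trivially framed one. Given such a link, project to $\Sigma$, put the resulting multi-curve in generic position, and resolve crossings one by one using $L_1=-L_0-L_\infty$; each step replaces a link by a $\mathbb C$-linear combination of two links that are, in a suitable sense, less tangled, and contractible components are eliminated via $L_{\mathrm{unknot}}=-2$. Iterating expresses the original class as a finite $\mathbb C$-linear combination of the $L_\Gamma$'s. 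Since $\psi$ sends this spanning set of $\sk(M,-1)$ to the linearly independent basis $\{(-1)^{\#\mathrm{loops}(\Gamma)}f_\Gamma\}$ of $\F(X)^{\sl}$ provided by Theorem~\ref{thm:FG}, $\psi$ must be injective (and the $L_\Gamma$'s form a basis of $\sk(M,-1)$).

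The main obstacle is the termination argument in the spanning step for injectivity: a single crossing resolution may temporarily create nugatory self-crossings or homotopically trivial components, so naive induction on the number of crossings fails. The standard remedy is to set up a two-tiered complexity, for instance first the geometric intersection number of the underlying multi-curve with the edges of a fixed triangulation $\G$ (i.e., the quantity $|\Gamma|$ of Section~\ref{ReparametrizingLaminations}), then the number of actual crossings realizing this geometric intersection, and to verify that each resolution step together with any subsequent removal of nugatory twists and null-homotopic loops strictly decreases this pair lexicographically. Once this combinatorial point is settled the remainder of the argument is formal.
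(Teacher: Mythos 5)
Your proposal follows the same overall strategy as the paper's proof: well-definedness via the $\sl$ trace identity, and the isomorphism via the Fock--Goncharov theorem~\ref{thm:FG}, using that the trivially framed links $L_\Gamma$ coming from laminations span $\sk(M,-1)$ and that $\psi$ maps them to the basis $\{(-1)^{\#\mathrm{loops}(\Gamma)}f_\Gamma\}$ of $\F(X)^{\sl}$. The paper is terser on both points, citing Bullock~\cite{Bullock} for the well-definedness verification and asserting that the spanning statement ``is easy to see''; you spell out the details.

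Two minor remarks. First, the termination concern you raise for the spanning argument is not really an obstacle: the standard induction runs simply on the number of crossings of the actual diagram in hand. Each skein resolution produces two diagrams with exactly one fewer crossing, and the subsequent removal of nugatory twists (trivial at $q=-1$, as you note) and null-homotopic components only decreases this count further. The two-tiered lexicographic complexity you propose is both unnecessary and somewhat delicate to control, since a single resolution need not behave well with respect to the geometric intersection number $|\Gamma|$ of the resulting homotopy classes. Second, your geometric description of the Kauffman triple (``$L_1$ and $L_\infty$ concatenated into one loop, $L_0$ two separate loops'') only covers the case where both strands at the crossing belong to a single component; when they lie on two different components the roles are permuted. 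The trace identity $\Tr(A)\Tr(B)=\Tr(AB)+\Tr(AB^{-1})$ handles both configurations symmetrically, as does Bullock's argument, so this is harmless but worth stating correctly.
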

\begin{proof}
Recall that the algebra $\sk(M, -1)$ is commutative. The Kauffman triple relation for~$q=-1$ reflects the identity
\begin{equation*}
  \Tr(AB) + \Tr(A^{-1}B) = \Tr (A)\Tr(B),\qquad A,B\in\sl.
\end{equation*}
These observations can be completed to a proof of the fact that the mapping $\psi$ is a well-defined homomorphism of algebras as it is done in~\cite[proof of Theorem~3]{Bullock}.

One can now use the Fock--Goncharov theorem to prove that~$\psi$ is an isomorphism. Namely, each lamination~$\Gamma$ naturally gives rise to a framed link in~$M$: represent~$\Gamma$ as a disjoint union of simple curves and attach a trivial frame to each of these curves. It is easy to see that the image of thus obtained framed link under~$\psi$ \bl{equals (up to a sign)}~$f_\Gamma$ and that such links span~$\sk(M,-1)$. Using the fact that~$f_\Gamma$ is a basis of $\F(X)^{\sl}$ one concludes that $\psi$ is an isomophism.
\end{proof}

\subsection{Skein algebra with $\bm{q=1}$ and twisted representations}\label{twistedRepresentationsSubsection}
\bl{It is well known that each spin structure on $\Sigma$ provides an isomorphism between the skein algebras~$\sk(M,-1)$ and~$\sk(M,1)$. Below  we briefly discuss this correspondence.}

Recall that the unit tangent bundle $U\Sigma$ of~$\Sigma$ is defined as
\begin{equation*}
  U\Sigma := (T\Sigma\smallsetminus 0)\,/\,\mathbb R_{>0}\,,
\end{equation*}
where $0$ stands for the zero section of~$T\Sigma$. Since $\pi_2(\Sigma) = 0$, one has a short exact sequence $0\to\pi_1(S^1)\to \pi_1(U\Sigma)\to \pi_1(\Sigma)\to 0$ and concludes that $\pi_1(U\Sigma)$ is a central extension of $\pi_1(\Sigma)$ by $\mathbb Z$. Let $z\in\pi_1(U\Sigma)$ denote the image of the generator of $\mathbb Z$. A \emph{spin structure} on $\Sigma$ is an element $\xi\in H^1(U\Sigma, \mathbb Z_2)$ such that $\xi(z) = 1$.
For a trivially framed knot~$K$, let~$\widetilde{\gamma}_K$ be a loop in~$U\Sigma$ given by points of~$\gamma_K\subset \Sigma$ and projections of the first (tangential) vector of the framing. It is easy to see that isotopic framed knots define homotopic curves in~$U\Sigma$.

It is also well known that each spin structure~$\xi$ on a compact orientable surface~$\Sigma$ can be given by a vector field~$V_\xi$ on~$\Sigma$ with isolated zeroes of even index: if~$K$ is a trivially framed knot, then
\[
s_\xi(\widetilde{\gamma}_K)\ :=\ (-1)^{\xi(\widetilde{\gamma}_K)}\ =\ e^{\frac{i}{2}\mathrm{wind}(\gamma_K;V_\xi)},
\]
where~$\mathrm{wind}(\gamma_K;V_\xi)$ denotes the total rotation angle of the tangent vector to~$\gamma_K$ measured with respect to~$V_\xi$. In fact, below one can always assume that~$V_\xi$ is constant (which corresponds to the unique spin structure on the original simply connected domain~$\Omega\supset\Sigma$), so that~$\mathrm{wind}(\gamma_K;V_\xi)=\mathrm{wind}(\gamma_K)$ is the usual total rotation angle of the tangent vector along the loop~$\gamma_K$ in~$\Omega$.
\begin{lemma}\label{spinHomo}
For each spin structure $\xi$ on~$\Sigma$ the mapping
\[
\phi_{\xi}: \sk(M, q)\to \sk(M, -q),\qquad \textstyle \phi_{\xi}(L) := \prod_{K\subset L} (-s_\xi(\widetilde{\gamma}_K)) \cdot L,
\]
where the product is taken over all components~$K$ of a framed link~$L$, is a correctly defined isomorphism of algebras.
\end{lemma}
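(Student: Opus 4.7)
The plan is to verify that the map $\phi_\xi$ defined on generators by $L \mapsto \sigma_L \cdot L$, with $\sigma_L := \prod_{K \subset L}(-s_\xi(\widetilde{\gamma}_K))$, respects both the framed isotopy relation and the defining ideal of the skein algebra, and then to observe that $\phi_\xi$ is its own inverse after swapping $q$ and $-q$.

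Well-definedness on framed isotopy classes is immediate: each factor $s_\xi(\widetilde{\gamma}_K) = (-1)^{\xi(\widetilde{\gamma}_K)}$ depends only on the homotopy class of $\widetilde{\gamma}_K$ in $U\Sigma$, which is a framed isotopy invariant of~$K$. Multiplicativity on disjoint unions of links follows because $\sigma_L$ factorizes as a product over the components of~$L$, and in particular $\phi_\xi$ preserves the unit. For the unknot relation, the trivially framed contractible unknot $L_{\mathrm{unknot}}$ has $\mathrm{wind}(\gamma_{L_{\mathrm{unknot}}};V_\xi) = \pm 2\pi$ (using that $V_\xi$ may be taken constant on the simply connected domain $\Omega$, as the text notes), so $s_\xi = -1$ and $\sigma_{L_{\mathrm{unknot}}} = 1$. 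Thus $\phi_\xi(L_{\mathrm{unknot}} + q^2 + q^{-2}) = L_{\mathrm{unknot}} + (-q)^2 + (-q)^{-2}$, which is zero in $\sk(M,-q)$.

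The main step is the Kauffman triple relation. For a triple $(L_1, L_0, L_\infty)$ one must check that $\phi_\xi(L_1 - qL_0 - q^{-1}L_\infty) \in \CMcal{I}_{-q}$, or equivalently that $\sigma_{L_0} = \sigma_{L_\infty} = -\sigma_{L_1}$, since then the image becomes $\sigma_{L_1}(L_1 + qL_0 + q^{-1}L_\infty)$. Writing $\sigma_L = (-1)^{\#\mathrm{comp}(L)}\exp\bigl(\tfrac{i}{2}\sum_{K \subset L}\mathrm{wind}(\gamma_K;V_\xi)\bigr)$, there are two topological cases at the crossing: either the two local strands of $L_1$ lie in distinct components (so both smoothings merge them, reducing $\#\mathrm{comp}$ by one), or they lie in the same component (so one smoothing splits it and the other does not). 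In each case one must verify that the change in $\#\mathrm{comp}$ combines with the change in total winding, coming purely from rerouting the two arcs inside the small ball, to yield the required factor $-1$. This is where the geometric content of the lemma lies and is the main obstacle: it reduces to a local computation at a model crossing where, after isotoping so that the four strand ends on the boundary of the ball carry standardized tangent directions, the net rotation accumulated by each smoothed arc can be read off directly, and the parity of the number of components combines with the resulting $e^{i\pi/2}$-type factors into a single sign flip.

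Finally, the analogously defined map $\sk(M,-q) \to \sk(M,q)$ is a two-sided inverse to $\phi_\xi$ since $(-s_\xi(\widetilde{\gamma}_K))^2 = 1$ for every component $K$. Hence $\phi_\xi$ is an isomorphism of algebras.
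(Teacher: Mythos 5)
The paper's own proof of this lemma is a one-line citation to Barrett~\cite[Theorem~1]{Barrett}, together with the notational remark that Barrett's quantity $\mathrm{Spin}(\widetilde\gamma,\xi)$ equals $1-\xi(\widetilde\gamma)$, so that $(-1)^{\mathrm{Spin}(\widetilde\gamma,\xi)}=-s_\xi(\widetilde\gamma)$. Your proposal instead attempts a self-contained verification, which is a genuinely different route. The overall architecture is correct: you rightly observe that $\phi_\xi$ descends to the quotient provided it sends the generating relations of $\CMcal{I}_q$ into $\CMcal{I}_{-q}$; that well-definedness on framed isotopy classes follows from homotopy-invariance of $s_\xi$ on $U\Sigma$; that multiplicativity is immediate from the factorization of $\sigma_L$ over components; that the unknot relation holds because $\sigma_{L_{\mathrm{unknot}}}=1$; and that the Kauffman relation is equivalent to the sign identity $\sigma_{L_0}=\sigma_{L_\infty}=-\sigma_{L_1}$. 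You also correctly distinguish the two topological cases (the strands at the crossing lie in distinct components or in the same one) and you correctly note that $\phi_\xi$ is involutive since $\sigma_L^2=1$.

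The gap is that you explicitly decline to carry out the one step that actually contains the content of the lemma: you write that the sign identity $\sigma_{L_0}=\sigma_{L_\infty}=-\sigma_{L_1}$ ``reduces to a local computation at a model crossing'' whose details ``can be read off directly,'' but you do not perform it. This computation is not routine bookkeeping. One must fix the tangent data at the four boundary points of the ball, track the net rotation contributed by each of the three ways of reconnecting these four points, and carefully combine the resulting half-integer-multiple-of-$\pi$ rotations with the change in $\#\mathrm{comp}$ to show that the combined effect is a single factor of $-1$ for each smoothing --- and one must check that the same sign emerges in both the merging case (two components collapse to one) and the splitting case (one component becomes two, or stays one), where the bookkeeping of $(-1)^{\#\mathrm{comp}}$ differs. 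Doing this cleanly is exactly what Barrett's theorem accomplishes. As written, your proposal reduces the lemma to its core claim and then leaves that claim unproved; it is a correct outline rather than a complete proof. To close the gap, either carry out the model-crossing computation in full (with explicit boundary-tangent normalization and a case analysis for merge versus split), or do as the paper does and cite Barrett with the notational translation.
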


\begin{proof}
See~\cite[Theorem~1]{Barrett}. Note that the value~${\mathrm{Spin}(\widetilde{\gamma},\xi)}$ used in~\cite{Barrett} is equal to $1-\xi(\widetilde{\gamma})$ in our notation. Thus, one has~$(-1)^{\mathrm{Spin}(\widetilde{\gamma},\xi)}=\bl{(-1)^{1-\xi(\widetilde{\gamma})}=}-s_\xi(\widetilde{\gamma})$.
\end{proof}

We now introduce an additional terminology that will be used in Section~\ref{sub:shear_coordinates}.
\begin{defin}
A representation $\widetilde\rho: \pi_1(U\Sigma)\to \sl$ such that $\widetilde \rho(z) = -\Id$ is called a {twisted} $\sl$ representation on $\Sigma$. Let
\[
\Xs:=\{\,\widetilde{\rho}:\pi_1(U\Sigma)\to \sl\,\mid\,\widetilde{\rho}~\text{is a twisted $\sl$ representation on $\Sigma$}\,\}.
\]
\end{defin}

Note that conjugating a twisted representation by an element of~$\sl$ one gets another twisted representation. Furthermore, $s_\xi(z)\widetilde{\rho}(z)=\Id$ for each twisted representation~$\widetilde{\rho}$, hence one can correctly define the mapping
\begin{equation}
\label{eq:defSxi}
S_\xi: \Xs\to \X,\qquad \widetilde{\rho}\ \mapsto\ s_\xi\cdot\widetilde{\rho},
\end{equation}
as~$s_\xi(\widetilde{\gamma})\cdot\widetilde{\rho}(\widetilde{\gamma})$ depends only on the projection of a loop~$\widetilde{\gamma}\in\pi_1(U\Sigma)$ onto~$\pi_1(\Sigma)$. Let
\[
S_\xi^*\ :\ \F(\X)^{\sl}\ \to\ \F(\Xs)^{\sl}
\]
be the pullback of~$S_\xi$.

\begin{prop}
\label{prop:Xtwist} Let~$\xi$ be a spin structure on~$\Sigma$ and~$\psi:\sk(M,-1)\to\F(\X)^{\sl}$ be the isomorphism from Theorem~\ref{SkToFunctions}. Then, the algebra homomorphism
\[
\Psi:=S_\xi^*\circ \psi\circ \phi_\xi\ :\ \sk(M,1)\ \to\ \F(\Xs)^{\sl}
\]
acts on trivially framed knots as $\Psi(K)(\widetilde \rho) := \Tr(\widetilde \rho(\widetilde\gamma_K))$, $\widetilde \rho\in \Xs$. In particular, $\Psi$ does not depend on the choice of the spin structure~$\xi$ on~$\Sigma$.
\end{prop}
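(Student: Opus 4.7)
The proof is essentially a direct computation: I would unpack the three maps on a trivially framed knot and watch the signs cancel. Let $K$ be a trivially framed knot, so that $\phi_\xi(K) = -s_\xi(\widetilde\gamma_K)\cdot K$ by definition of $\phi_\xi$ in Lemma~\ref{spinHomo}. Applying $\psi$ from Theorem~\ref{SkToFunctions} (which is linear) gives
\[
(\psi\circ\phi_\xi)(K)(\rho) \ =\ -s_\xi(\widetilde\gamma_K)\cdot\psi(K)(\rho) \ =\ s_\xi(\widetilde\gamma_K)\cdot\Tr(\rho(\gamma_K)),\qquad \rho\in X.
\]
Now pulling back via $S_\xi$ and recalling that $(S_\xi(\widetilde\rho))(\gamma_K) = s_\xi(\widetilde\gamma_K)\cdot\widetilde\rho(\widetilde\gamma_K)$ (which is the value of $s_\xi\cdot\widetilde\rho$ at the canonical lift of $\gamma_K$ in $U\Sigma$ induced by the trivial framing), I would obtain
\[
\Psi(K)(\widetilde\rho) \ =\ s_\xi(\widetilde\gamma_K)\cdot \Tr\bigl(s_\xi(\widetilde\gamma_K)\cdot\widetilde\rho(\widetilde\gamma_K)\bigr) \ =\ s_\xi(\widetilde\gamma_K)^2\cdot\Tr(\widetilde\rho(\widetilde\gamma_K)) \ =\ \Tr(\widetilde\rho(\widetilde\gamma_K)),
\]
since $s_\xi$ takes values in $\{\pm 1\}$. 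This is the desired formula, and the right-hand side manifestly does not involve $\xi$, which gives the last claim.

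Two small points deserve a line of justification. First, one should check that $(S_\xi(\widetilde\rho))(\gamma_K)$ is correctly interpreted as $s_\xi(\widetilde\gamma_K)\cdot\widetilde\rho(\widetilde\gamma_K)$; this follows from the observation (used already to make $S_\xi$ well-defined in~\eqref{eq:defSxi}) that $s_\xi(\widetilde\gamma)\cdot\widetilde\rho(\widetilde\gamma)$ depends only on the image of $\widetilde\gamma$ in $\pi_1(\Sigma)$, so one may evaluate it on the specific lift $\widetilde\gamma_K$ provided by the framing of $K$. Second, since $\Psi$ is defined as a composition of algebra homomorphisms, the formula on trivially framed knots (and the empty link, which maps to $1$) extends by multiplicativity to all framed links, so the independence of~$\xi$ propagates to the whole of $\sk(M,1)$. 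I do not expect any genuine obstacle here; the content of the statement is precisely the bookkeeping that the sign twist $\phi_\xi$ designed to flip $q=1$ to $q=-1$ and the twist inherent in $S_\xi$ exactly compensate the sign convention $\psi(K)=-\Tr\circ\rho$ of Theorem~\ref{SkToFunctions}.
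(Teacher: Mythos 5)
Your proof is correct and follows exactly the route the paper indicates; the paper's own proof simply states that the result is "a trivial corollary of Theorem~\ref{SkToFunctions}, Lemma~\ref{spinHomo} and definition~\eqref{eq:defSxi}," and your computation spells out the very sign-cancellation bookkeeping that remark alludes to.
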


\begin{proof} This is a trivial corollary of Theorem~\ref{SkToFunctions}, Lemma~\ref{spinHomo} and definition~\eqref{eq:defSxi}.
\end{proof}

\begin{rem}\label{rem:Xtwist} It follows from Proposition~\ref{prop:Xtwist} that each twisted representation~$\widetilde{\rho}\in\Xs$ gives rise to an algebra homomorphism (evaluation)~$\Psi_{\widetilde{\rho}}:\sk(M,1)\to\mathbb C$ defined on trivially framed knots as $K\mapsto \Tr(\widetilde\rho(\widetilde\gamma_K))$ and extended by additivity and multiplicativity.
\end{rem}


\subsection{Twisted representations associated with hyperbolic structures on~$\bm{\Sigma}$ and the representation of~$\bm{\sk(M,1)}$ in the algebra of Laurent polynomials.}\label{sub:shear_coordinates}
In this section we discuss the class of twisted representations~$\widetilde{\rho}_{\mathbf x}$ coming from hyperbolic structures on $\Sigma$ parameterized by Thurston's shear coordinates~$\mathbf x=(x_e)_{e\in\Ed}\in \mathbb R_+^\Ed$. In particular, we show that all functions~$\mathbf x\mapsto \Tr(\widetilde\rho_{\mathbf x}(\widetilde\gamma_K))$ associated with trivially framed knots $K\in\sk(M,1)$ are Laurent polynomials in the variables~$\sqrt{x_e}$ with \emph{sign coherent} (simultaneously positive or simultaneously negative) coefficients, the  last observation was pointed out to us by Vladimir Fock. We discuss only the case when \mbox{$\Sigma=\Omega\setminus\{\lambda_1,\dots,\lambda_n\}$} is a punctured simply connected domain, see~\cite{BonahonWong} for the general construction in the case of surfaces with free non-abelian fundamental group.

Let~$\pi:\overline{\Sigma}\to\Sigma$ be the universal cover of~$\Sigma$. Classically, given a \emph{hyperbolic metric} on~$\Sigma$ one can consider a {uniformization}~$f:\mathbb H\to \overline{\Sigma}$ where~$\mathbb H:=\{z\in\mathbb C:\mathrm{Im} z>0\}$ stands for the upper complex plane, note that~$f$ is defined uniquely up to M\"obius automorphisms of~$\mathbb H$. Provided that the edges of the triangulation~$\G$ of~$\Sigma$ are drawn as hyperbolic geodesics and~$\overline\G:=\pi^{-1}(\G)$ denotes the corresponding triangulation of~$\overline{\Sigma}$, the following properties hold:

(i)$\phantom{i}$ the preimage under~$f$ of each triangle of~$\overline\G$ is an ideal triangle in~$\mathbb H$;

(ii) deck transformations of~$\overline{\Sigma}$ form a discrete subgroup of the group~$\mathrm{Aut}(\mathbb H)$.

\noindent Recall that~$\mathrm{Aut}(\mathbb H)\cong\psl$ (automorphisms of~$\mathbb H$ are given by M\"obius transforms~$z\mapsto (az+b)/(cz+d)$ with~$a,b,c,d\in\mathbb R$ defined up to a common multiple) and that 
for each~$(z_0,v_0),(z_1,v_1)\in U\mathbb H$ there exists a unique automorphism~$g\in \mathrm{Aut}(\mathbb H)$ such that~$g(z_0)=z_1$ and the vector~$g'(z_0)v_0$ has the same direction as~$v_1$.

Given a uniformization~$f:\mathbb H\to \overline{\Sigma}$ and a base point~$(z_0,v_0)\in U\mathbb H$, one can construct a \emph{twisted representation} $\widetilde{\rho}_f$ as follows. For a smooth loop~$\widetilde{\gamma}:[0,1]\to U\Sigma$ such that~$\widetilde\gamma(0)=\widetilde\gamma(1)=(\pi\circ f)(z_0,v_0)$, let~$\overline{\gamma}:[0,1]\to U\overline{\Sigma}$ denote its lift on~$U\overline{\Sigma}$ 
and let~$g_{\overline{\gamma}}(t)\in\mathrm{Aut}(\mathbb H)$ sends~$(f^{-1}\circ \overline{\gamma})(0)=(z_0,v_0)$ to~$(f^{-1}\circ\overline{\gamma})(t)$. One can now lift the path~$g_{\overline{\gamma}}:[0,1]\to\mathrm{Aut}(\mathbb H)\cong\psl$ to~$\widetilde{g}_{\overline{\gamma}}:[0,1]\to \slr$ so that~$\widetilde{g}_{\overline{\gamma}}(0)=\Id$ and set
\begin{equation}
\label{eq:tilde_rho_def}
\widetilde{\rho}_f(\widetilde{\gamma}):= (\widetilde{g}_{\overline{\gamma}}(1))^{-1}\,,
\end{equation}
note that this mapping is nothing but a lift to~$\slr$ of the automorphism of~$\mathbb H$ corresponding to the deck transformation of~$\overline{\Sigma}$ given by the projection of~$\widetilde{\gamma}$ onto~$\Sigma$. In particular, $\widetilde{\rho}_f$ does not depend on the choice of the lift~$\overline{\gamma}$ of~$\widetilde{\gamma}$. It is straightforward to check that~$\widetilde{\rho}_f$ is a twisted representation on~$\Sigma$. Moreover, the change of the base point~$(z_0,v_0)$ and/or of the uniformization~$f$ amounts to a $\sl$-conjugation of~$\widetilde{\rho}_f$.

We now introduce Thurston's \emph{shear coordinates} on the space of hyperbolic structures on~$\Sigma$. Given a collection~$\mathbf x=(x_e)_{e\in\Ed}$ of positive parameters associated to the edges of the triangulation~$\G$, one can construct a hyperbolic structure on~$\Sigma$ such that the following property holds for (any of) the corresponding uniformizations~$f_{\mathbf x}:\mathbb H\to \overline{\Sigma}$:

(iii) for each edge~$e$ of~$\G$, the quadrilateral~$\sigma_{\mathrm{left}}(e)\cup e \cup \sigma_{\mathrm{right}}(e)$ formed by two faces adjacent to~$e$ is conformally equivalent to the ideal quadrilateral \mbox{$R(x_e)\subset \mathbb H$} with vertices~$-1,0,x_e,\infty$ so that the edge~$e$ corresponds to the line \mbox{$\sqrt{-1}\cdot \mathbb R_+\subset R(x_e)$}.

In fact, one can use property (iii) to draw a preimage of the triangulation~$\overline\G$ in~$\mathbb{H}$. To do this, start with two faces adjacent to an arbitrarily chosen edge and draw them so that they form the ideal rectangle~$R(x_e)$ and then extend this drawing step by step so that each face of~$\overline{\G}$ is drawn as an ideal triangle and the property (iii) is satisfied. As a result of this procedure one obtains a diffeomorphism~$f_{\mathbf x}:\mathbb H\to \overline{\Sigma}$ which can be used to define the desired hyperbolic structure on~$\overline{\Sigma}$ and to project it onto~$\Sigma$. It can be easily shown that each hyperbolic structure on~$\Sigma$ can be obtained in this manner but we do not need this fact in our paper.

Below we denote by~$\widetilde\rho_{\mathbf x}$ the twisted $\sl$ representation corresponding to the hyperbolic metric on~$\Sigma$ and the uniformization $f_{\mathbf x}:\mathbb H\to \overline\Sigma$ constructed above. Recall that~$\widetilde\rho_{\mathbf x}$ is defined up to a $\sl$-conjugation only as we do not fix neither a base point~\mbox{$(x_0,v_0)\in\mathbb H$} nor the starting edge of the construction. Nevertheless, following Proposition~\ref{prop:Xtwist} and Remark~\ref{rem:Xtwist}, for each~$\mathbf x\in \mathbb R_+^\Ed$ one can correctly define the evaluation
\begin{equation}
\label{eq:Psi_x}
\Psi_{\mathbf x}\ :\ \sk(M,1)\ \to\ \mathbb C,\qquad L\mapsto (\Psi(L))(\widetilde \rho_{\mathbf x}),
\end{equation}
which acts on trivially framed knots as~$\Psi_{\mathbf x}(K)=\Tr(\widetilde{\rho}_{\mathbf x}(\widetilde\gamma_K))$ and is extended to the skein algebra~$\sk(M,1)$ by linearity and multiplicativity.

The following proposition is the key result of this section. Recall that, given a minimal smooth multi-curve~$\C$ on~$\Sigma=\Omega\setminus\{\lambda_1,\dots,\lambda_n\}$ (see section~\ref{fViagAndgViaf} for the definition) one can obtain an element~$L(\C)\in\sk(M,1)$ by framing all the components of~$\C$ trivially and then resolving the intersections in an arbitrary way: the resulting link depends on the choice of these resolutions but its class in~$\sk(M,1)$ does not.

\begin{prop}\label{prop:linkToPolynomials} (i) For each~$L\in\sk(M,1)$ the function~$\Psi_{\mathbf x}(L)$ is given by evaluating a Laurent polynomial \mbox{$P_L\in\mathbb C[(t_e,t_e^{-1})_{e\in\Ed}]$} at~$t_e:=\sqrt{x_e}$, where~$\sqrt{x}$ denotes the positive square root of~$x>0$. The mapping~$L\mapsto P_L$ is an algebra homomorphism.

\noindent (ii) If~$L=L(\C)$ is obtained from a minimal multi-curve as described above, then~$P_L$ has positive integer coefficients.
\end{prop}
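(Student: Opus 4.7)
My plan for part (i) is to unwind the definition of $\widetilde\rho_{\mathbf x}$ via Fock's shear-coordinate presentation of the holonomy. Concretely, fix a base point in some triangle $\sigma_0$ and, for any non-backtracking loop $\sigma_0\sigma_1\dots\sigma_m=\sigma_0$ on $\G^\circ$ lifting a knot $K$, express the monodromy as an ordered product of two types of matrices: an \emph{edge matrix} $E(t_e)$, depending on $t_e=\sqrt{x_e}$, attached to each crossing of an edge $e$; and a \emph{turn matrix} $T_\pm\in\slr$, independent of $\mathbf x$, attached to each left/right transition inside a triangle. In the standard normalization one obtains $E(t)=\bigl(\begin{smallmatrix}0&t\\-t^{-1}&0\end{smallmatrix}\bigr)$ and $T=\bigl(\begin{smallmatrix}1&1\\-1&0\end{smallmatrix}\bigr)$ (up to sign lifts, which is where the twisted structure enters). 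All entries are Laurent monomials in $(t_e)$, so $\Tr(\widetilde\rho_{\mathbf x}(\widetilde\gamma_K))$ is a Laurent polynomial $P_K\in\mathbb Z[(t_e^{\pm 1})]$. The homomorphism claim then follows: $\Psi_{\mathbf x}$ is already an algebra homomorphism by Proposition~\ref{prop:Xtwist}, and evaluation at the twisted representation is a ring map into $\mathbb C$, so composing with the bijective substitution $t_e=\sqrt{x_e}$ gives $L\mapsto P_L$ as a ring homomorphism from $\sk(M,1)$ into $\mathbb C[(t_e^{\pm 1})]$.

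For part (ii), the goal is to exhibit a presentation of the monodromy in which all edge and turn matrices have entries in $\mathbb Z_{\ge 0}[(t_e^{\pm 1})]$. In such a gauge, the trace of any non-backtracking loop expands, by the usual transfer-matrix argument, as a sum over state sequences along the loop of products of matrix entries, each of which is a non-negative monomial; hence the resulting Laurent polynomial has non-negative integer coefficients. For a general minimal multi-curve $\C$ with components $\gamma_1,\dots,\gamma_m$, the element $L(\C)\in\sk(M,1)$ equals the product $L(\gamma_1)\cdots L(\gamma_m)$ (the choice of resolutions is irrelevant modulo skein relations), and multiplicativity of $L\mapsto P_L$ from part (i) reduces positivity for $L(\C)$ to positivity for each simple component. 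Thurston's classical left-right formula for the trace of a simple closed geodesic in shear coordinates provides exactly such a positive expansion, so invoking it finishes the argument.

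The main obstacle is producing the positive gauge, since the textbook matrices $E(t)$ and $T$ contain $-t^{-1}$ and $-1$ entries that cannot be eliminated by a naive diagonal rescaling (the monodromy around a puncture must still be nontrivial). Two avenues look promising: either use the $\pm 1$ ambiguity in lifting from $\mathrm{PSL}_2(\mathbb R)$ to $\slr$ together with a face-by-face change of basis adapted to a horocyclic parametrization, producing non-negative $E(t)$ and $T_\pm$; or bypass the explicit matrix manipulation and invoke the positivity of the Bonahon--Wong quantum trace~\cite{BonahonWong} at $q=1$, which packages Thurston's formula in exactly the form needed here. Some care is required to handle contractible loops arising after resolution (each contributes a factor of $\Tr(-\Id)=-2$), but since $\C$ is minimal one can always choose resolutions producing only essential components, and the $q=1$ skein relation $L_1=L_0+L_\infty$ has no sign cancellations, so positivity propagates through the resolution process without any loss.
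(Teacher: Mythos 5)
Your strategy is essentially the paper's, but the proof is incomplete at precisely the step you flag as ``the main obstacle.'' The paper's argument reduces to a single minimal loop $\gamma$ encoded as a non-backtracking path $\sigma_0\sigma_1\dots\sigma_m$ on $\G^\circ$ and then \emph{explicitly computes} the transition matrices in the shear-coordinate uniformization: after fixing the chart $\varphi_i:\mathbb H\to\overline\Sigma$ adapted to the quadrilateral around $e_i$, the composition $\varphi_i^{-1}\circ\varphi_{i+1}$ lifts to $A_i=\bigl(\begin{smallmatrix}\sqrt{x_{e_i}}&\sqrt{x_{e_i}}\\0&\sqrt{x_{e_i}}^{-1}\end{smallmatrix}\bigr)$ for a counterclockwise turn and $A_i=\bigl(\begin{smallmatrix}\sqrt{x_{e_i}}&0\\\sqrt{x_{e_i}}^{-1}&\sqrt{x_{e_i}}^{-1}\end{smallmatrix}\bigr)$ for a clockwise turn, so $\Tr(\widetilde\rho_f(\widetilde\gamma))=\Tr(A_0\cdots A_{m-1})$ is immediately a Laurent polynomial in $\sqrt{x_e}$ with non-negative integer coefficients. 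This is exactly the ``positive gauge'' you say is needed, but you only assert that ``two avenues look promising'' without actually producing it. Since the entire content of part (ii) is the existence of this gauge, your proposal as written contains a genuine gap: it restates the conclusion (there exists a presentation with non-negative matrices) as a goal and then appeals to plausibility.

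Two smaller points. First, you reduce positivity to ``each simple component,'' but the components of a minimal multi-curve can have essential self-crossings (minimality only excludes nugatory ones). The paper's computation handles arbitrary non-backtracking loops, so once you have the positive matrices this is harmless, but citing ``Thurston's classical left-right formula for the trace of a simple closed geodesic'' is not enough as stated. Second, your alternative route via the Bonahon--Wong quantum trace at $q=1$ would also need justification: their result supplies a homomorphism into a (quantum) torus, but the positivity of coefficients for links coming from (possibly self-intersecting) multi-curves is not an off-the-shelf corollary and is in effect what needs to be proved here.
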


\begin{proof} Since for each framed link~$L$ there exists a trivially framed link~$L'$ obtained from a minimal multi-curve such that~$L=\pm 2^m L'$ in~$\sk(M,1)$ for some~$m\ge 0$, one can assume that~$L$ is obtained from a minimal multi-curve. Moreover, since~$\Psi_{\mathbf x}$ is an algebra homomorphism one can further assume that~$L=L(\gamma)$ is a trivially framed knot obtained from a minimal (i.e. having no nugatory self-crossings, see Fig.~\ref{fig:localtwist}) loop~$\gamma$.

  \bl{Each minimal loop~$\gamma$ can be combinatorially encoded by a non-backtracking path on the graph $\G^\circ$. Let us assume that this path consequently crosses edges $e_0,e_1,\dots, e_{m-1}$ of $\G$, and $\sigma_0,\sigma_1,\dots,\sigma_m$, $\sigma_m=\sigma_0$, is the sequence of faces of $\G$ corresponding to the vertices of this path, so that $e_i$ lies between $\sigma_i$ and $\sigma_{i+1}$. For each edge $e_i$, }
let the point~\mbox{$a_i\in e_i$} be defined by the following condition: $a_i=\varphi_i(\sqrt{-1})$, where $\varphi_i$ denotes the uniformization~$\varphi_i:\mathbb H\to\overline{\Sigma}$ that maps the ideal quadrilateral with vertices $-1,0,x_{e_i},\infty$ onto~$\sigma_i\cup e_i\cup\sigma_{i+1}$ so that~$\varphi^{-1}_i(\sigma_i)$ has vertices~$-1,0,\infty$ whilst~$\varphi^{-1}_i(\sigma_{i+1})$ has vertices~$0,x_e,\infty$, cf. condition (iii) above.


\begin{figure}
  \centering
  \begin{minipage}{.482\textwidth}
    \centering
    \tempskip{\includegraphics[width =\textwidth]{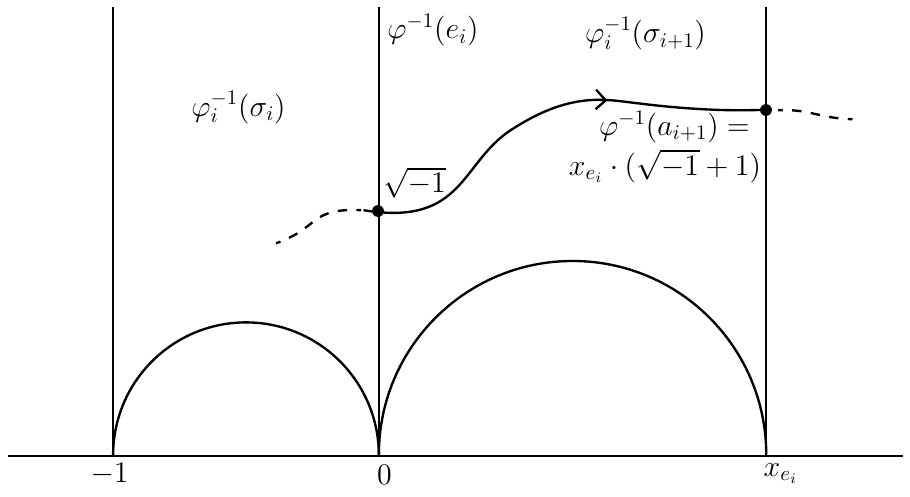}}
    \caption*{counterclockwise turn}
    \end{minipage}
    \begin{minipage}{.508\textwidth}
    \centering
    \tempskip{\includegraphics[width =\textwidth]{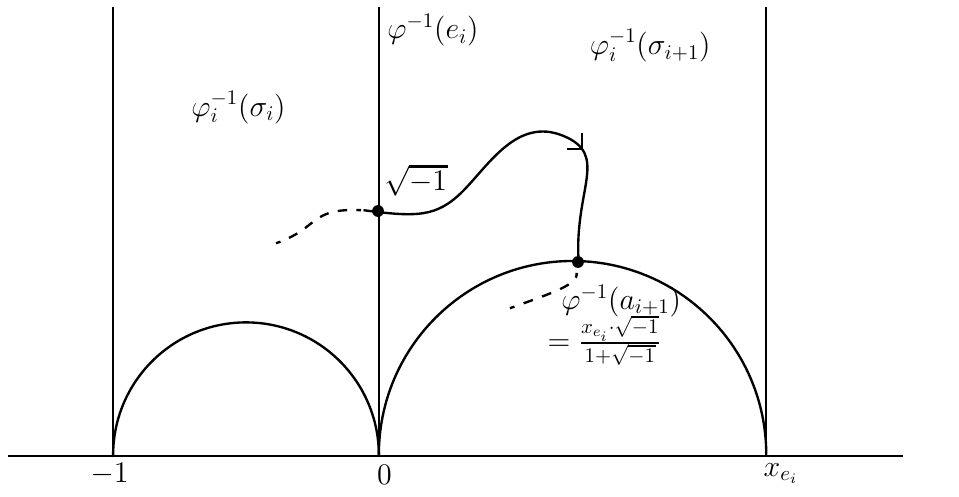}}
    \caption*{clockwise turn}
  \end{minipage}
  \caption{The piece~$\gamma(e_i,e_{i+1})$ of a loop~$\gamma$ in the chart~$\varphi_i^{-1}:\overline{\Sigma}\to\mathbb C$.}
  \label{fig:through_ideal_triangles}
\end{figure}


One can now replace~$\gamma$ by the concatenation of smooth segments~$\gamma(e_i,e_{i+1})$ crossing the edges of~$\Ed$ orthogonally and running from~$a_i\in e_i$ to~$a_{i+1}\in e_{i+1}$ inside the face~$\sigma_{i+1}$. For each~$i=0,\dots,m-1$,

(a) either the preimage~$\varphi_i^{-1}(e_{i+1})$ of the next edge crossed by~$\gamma$ connects~$x_{e_i}$ and~$\infty$, in this case we say that~$\gamma$ makes a counterclockwise turn inside~$\sigma_{i+1}$, see Fig.~\ref{fig:through_ideal_triangles},

(b) or this preimage of~$e_{i+1}$ connects~$0$ and~$x_{e_i}$: a clockwise turn inside~$\sigma_{i+1}$.

\noindent One easily sees that the composition mapping~$\varphi_i^{-1}\circ \varphi_{i+1}^{\vphantom{1}}$ (which maps the line~$\sqrt{-1}\cdot\mathbb R_+$ onto the preimage~$\varphi_i^{-1}(e_{i+1})$) is given by
\[
z\mapsto x_{e_i}\cdot (z+1)\ \ \text{in the case (a)},\qquad z\mapsto \frac{x_{e_i} z}{z+1}\ \ \text{in the case (b)},
\]
moreover the lift of the corresponding path in~$\mathrm{Aut}(\mathbb H)\cong \psl$ to~$\slr$ leads to the following matrix~$A_i$ representing the M\"obius automorphism~$\varphi_i^{-1}\circ \varphi_{i+1}^{\vphantom{1}}$:
\[
A_i=\left(\begin{array}{cc} \sqrt{x_{e_i}} & \sqrt{x_{e_i}} \\ 0 & \sqrt{x_{e_i}}^{-1} \end{array}\right) \ \text{in the case (a)},\quad A_i=\left(\begin{array}{cc} \sqrt{x_{e_i}} & 0 \\ \sqrt{x_{e_i}}^{-1} & \sqrt{x_{e_i}}^{-1} \end{array}\right)\ \text{in the case (b).}
\]

Let~$\widetilde{\gamma}$ be the loop in the unit tangent bundle~$U\Sigma$ corresponding to~$\gamma$ and~$\widetilde{\gamma}(e_i,e_{i+1})$ be its segment corresponding to~$\gamma(e_i,e_{i+1})$. From now onwards we fix the uniformization~$f:=\varphi_0:\mathbb H\to \overline{\Sigma}$. According to the definition~\eqref{eq:tilde_rho_def} we have
\[
\widetilde\rho_f(\widetilde{\gamma})\ =\ ((B_{m-1}^{-1}A_{m-1}B_{m-1})\cdot \ldots \cdot (B_1^{-1}A_1B_1) \cdot A_0)^{-1}\ =\ B_{m-1},
\]
where the matrices~$B_i:=(A_0A_1\dots A_{i-1})^{-1}\in\slr$ represent the M\"obius automorphisms~$\varphi_i^{-1}\circ \varphi_0^{\vphantom{1}}$. Since~$\Tr B_{m-1}=\Tr B_{m-1}^{-1}$ we conclude that
\begin{equation}
\label{eq:PL-def}
\Tr(\widetilde\rho_f(\widetilde{\gamma})) = \Tr(A_0A_1\dots A_{m-1})
\end{equation}
is a Laurent polynomial in the variables~$(\sqrt{x_e})_{e\in\Ed}$ with positive integer coefficients. In particular we have proved (ii).

The argument given above defines a required Laurent polynomial~$P_L$ for each element~$L=L(\gamma)\in\sk(M,1)$ obtained from a minimal loop~$\gamma$. This definition can be then extended to the whole skein algebra~$\sk(M,1)$ by linearity and multiplicativity. Finally, since the mapping~$L\mapsto P_L((t_e)_{e\in\Ed})$ is an algebra homomorphism for each specification \mbox{$t_e:=\sqrt{x_e}$}, the same is also true for formal variables~$t_e$.
\end{proof}


\subsection{Positivity of the bracelet basis of the skein algebra $\bm{\sk(M,1)}$ and the estimate of Fock-Goncharov coefficients}\label{braceletAndPositivity}

The last ingredient we need is the following positivity result. Let~$\Gamma$ be a lamination consisting of~$m_1$ copies of a simple loop~$l_1$, $m_2$ copies of a simple loop~$l_2$, \dots, $m_k$ copies of a simple loop~$l_k$; we assume that \bl{the loops~$l_i$ are in distinct homotopy classes}. The {\it bracelet} of $\Gamma$ is the multi-curve
\begin{equation*}
  \mathfrak{b}(\Gamma) := l_1^{m_1}\cup \dots\cup l_k^{m_k},\qquad \Gamma= (m_1\cdot l_1)\cup \dots\cup (m_k\cdot l_k),
\end{equation*}
where~$l_i^{m_i}$ denotes the minimal loop that travels~$m_i$ times along the simple loop~$l_i$, while~$m_i\cdot l_i$ stands for the~$m_i$ copies of~$l_i$.

Recall that we denote by~$L(\C)$ the element of~$\sk(M,1)$ obtained from a multi-curve~$\C$. It is easy to see that, for each simple loop~$l$, one has
\[
L(l^{m+2})=L(l)L(l^{m+1})-2L(l^m),\quad m\ge 0.
\]
By induction in~$m$ this gives $L(l^m)=2T_m(\tfrac{1}{2}L(l))$, $m\ge 0$, and hence
\begin{equation}
\label{eq:L_bG=}
\textstyle L(\mathfrak{b}(\Gamma))=\prod_{i=1}^k (2T_{m_i}(\tfrac{1}{2}L(l_i))),\qquad \Gamma= (m_1\cdot l_1)\cup \dots\cup (m_k\cdot l_k).
\end{equation}
where~$T_m(x)=\cos (m\arccos x)$ is the $m$-th Chebyshev polynomial of the first kind. The elements~$L(\Gamma)$ form a basis of~$\sk(M,1)$ (e.g., this follows from Theorem~\ref{thm:FG}, Theorem~\ref{SkToFunctions} and Lemma~\ref{spinHomo}). Therefore, the elements~$L(\mathfrak{b}(\Gamma))$ also form a basis of~$\sk(M,1)$. In particular, for each multi-curve~$\C$ there exists a unique decomposition
\begin{equation}
\label{eq:SkeinViaBG}
\textstyle L(\C)=\sum_{\Gamma\,-\,\text{lamination}} \alpha_{\C\Gamma}L(\mathfrak{b}(\Gamma)).
\end{equation}
It was shown by D.~Thurston that~$L(\mathfrak{b}(\Gamma))$ is in fact a \emph{positive} basis of~$\sk(M,1)$:
\begin{theorem}\label{th:Thurston}
For each multi-curve $\C$ the coefficients $\alpha_{\C\Gamma}$ are sign coherent: either $\alpha_{\C\Gamma}\ge 0$ for all~$\Gamma$ or~$\alpha_{\C\Gamma}\le 0$ for all~$\Gamma$. If $\C$ is a minimal multi-curve, then all~$\alpha_{\C\Gamma}\ge 0$.
\end{theorem}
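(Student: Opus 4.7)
The plan is to combine a skein-theoretic resolution procedure with a Chebyshev change of basis. First, for a minimal multi-curve $\C$, I would resolve all of its self- and mutual intersections using the relation $L_\times = L_0 + L_\infty$ valid in $\sk(M,1)$, choosing the order of resolutions carefully. The key claim to establish is that for minimal $\C$ one can arrange the resolutions so that no contractible simple loop or nugatory self-crossing is ever produced at an intermediate step; equivalently, every intermediate framed link still represents a minimal multi-curve. Granting this, the resulting expansion
$L(\C) = \sum_{\Gamma'} n_{\C\Gamma'}\,L(\Gamma')$
in the simple-lamination basis has all $n_{\C\Gamma'}\in \mathbb Z_{\ge 0}$, since each resolution contributes $+1$ to both outcomes and the relation $L_{\mathrm{unknot}} = -2$ is never triggered.

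The second step converts from the lamination basis $\{L(\Gamma')\}$ to the bracelet basis $\{L(\mathfrak{b}(\Gamma))\}$. For a simple lamination $\Gamma' = (a_1\!\cdot\! l_1) \cup \dots \cup (a_k\!\cdot\! l_k)$ with distinct simple loops $l_i$ one has $L(\Gamma') = \prod_i L(l_i)^{a_i}$. Combining the identity $L(l^m) = 2T_m(\tfrac12 L(l))$ with the classical expansion $x^a = \sum_{m\equiv a\,(2)} 2^{1-a}\binom{a}{(a-m)/2}T_m(x)$ yields $L(l)^a = \sum_{m\equiv a\,(2)} \binom{a}{(a-m)/2}\,L(l^m)$, a non-negative integer combination. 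Multiplying across $i$ expresses $L(\Gamma')$ as a non-negative integer combination of bracelets, so composing with step 1 gives $\alpha_{\C\Gamma}\in \mathbb Z_{\ge 0}$ when $\C$ is minimal.

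For a general (not necessarily minimal) multi-curve $\C$, I would reduce to the minimal representative $\C_{\min}$ (the unique minimal multi-curve homotopic to $\C$ through immersions) by (a) removing all nugatory self-crossings via Reidemeister I moves, each contributing a factor $-1$ in $\sk(M,1)$ (the Kauffman-bracket twist factor at $q=1$ is $-q^{\pm 3}=-1$), and (b) removing homotopically trivial components, each contributing a factor $-2$ via $L_{\mathrm{unknot}} = -2$. Thus $L(\C) = \pm 2^k L(\C_{\min})$ for some $k\ge 0$, and sign coherence of $\alpha_{\C\Gamma}$ follows from non-negativity of $\alpha_{\C_{\min}\Gamma}$.

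The principal obstacle is step 1. A naive choice of resolution order generically creates contractible loops, each contributing a $-2$ factor that destroys positivity. Showing that a good ordering exists -- equivalently, that the necessary cancellations always work out in one's favor -- is the substantive topological content of the theorem. The cleanest known routes use train tracks together with careful geometric intersection estimates \emph{\`a la} Thurston; an alternative route, more in the spirit of Proposition~\ref{prop:linkToPolynomials}, would bypass step 1 entirely by working inside the Laurent-polynomial algebra in the variables $(\sqrt{x_e})_{e\in\Ed}$: one would show that the polynomials $P_{L(\mathfrak b(\Gamma))}$ have pairwise distinct \emph{tropical leading monomials} (computable from the matrices $A_i$ of~\eqref{eq:PL-def} via the combinatorial pattern of clockwise versus counterclockwise turns), and then extract the $\alpha_{\C\Gamma}$ one by one from the leading terms of $P_{L(\C)}$, whose positivity is already granted by Proposition~\ref{prop:linkToPolynomials}(ii). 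Either approach replaces an obviously non-trivial positivity input, and the true difficulty cannot be sidestepped without it.
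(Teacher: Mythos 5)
The paper does not prove this statement; it cites Thurston's result directly. So the relevant question is whether your sketch could plausibly be completed. It cannot, because Step 1 is false as stated, and not merely ``hard to arrange.''

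Take $\C = l^2$, the loop that travels twice along a simple non-contractible loop~$l$. This is a minimal multi-curve (its single self-crossing is not nugatory, and there are no trivial components), yet the skein relation forces
\begin{equation*}
L(l^2) \;=\; L(l)^2 - 2 \;=\; L(\{l,l\}) - 2\,L(\varnothing)
\end{equation*}
in $\sk(M,1)$, so the expansion of $L(\C)$ in the simple-lamination basis has the coefficient $-2$ on $L(\varnothing)$. There is only one crossing and hence no ``ordering'' freedom whatsoever: resolving it necessarily produces a contractible loop in one of the two terms, and the $-2$ is unavoidable. More generally, the coefficients $n_{\C\Gamma'}$ you define are genuinely negative for essentially every minimal multi-curve with a self-crossing. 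The positivity of Theorem~\ref{th:Thurston} holds only in the bracelet basis, and the bracelets are precisely designed so that these $-2$'s are absorbed: in the example, $L(l^2)$ is itself a bracelet and the expansion is trivially positive. But your factorization $L(\C) = \sum n_{\C\Gamma'} L(\Gamma')$ followed by $L(\Gamma') = \sum m_{\Gamma'\Gamma} L(\mathfrak b(\Gamma))$ cannot deliver bracelet-positivity from two positivity statements when the first one is false. Thurston's proof works directly in the bracelet basis via a state-sum and combinatorial counting argument; it is not a tidied-up crossing-resolution procedure.

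Your Step 2 (the Chebyshev re-expansion of $L(l)^a$ in the $L(l^m)$'s) and your reduction from general to minimal multi-curves via Reidemeister I and removal of trivial components are both correct, and match the way the paper uses the result in Lemma~\ref{bracelet_coef_lemma} and Theorem~\ref{diagEstProp}. The concluding alternative via the Laurent-polynomial image from Proposition~\ref{prop:linkToPolynomials} is an interesting direction, but ``extracting $\alpha_{\C\Gamma}$ one by one from tropical leading monomials'' would at best identify the coefficients; positivity of $P_{L(\C)}$ and positivity of each $P_{L(\mathfrak b(\Gamma))}$ do not by themselves imply positivity of the $\alpha_{\C\Gamma}$, so that route also still needs a new idea. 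In short: the genuine content of the theorem is not a topological ordering claim but a bracelet-basis cancellation phenomenon, and the proposal does not supply it.
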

\begin{proof}
See~\cite[Theorem~2]{Thurston}.
\end{proof}

Recall that we denote by~$|\C|$ the minimal number of intersections of a multi-curve homotopic to~$\C$ with the edges of the triangulation~$\G$ and by~$\Gamma(\C)$ the corresponding lamination, see Section~\ref{fViagAndgViaf} for more details. The next lemma is a simple combination of the positivity results provided by Proposition~\ref{prop:linkToPolynomials}(ii) and Theorem~\ref{th:Thurston}.

\begin{lemma} \label{bracelet_coef_lemma}
  For each multi-curve $\C$ one has $\sum_{\Gamma - \mathrm{lamination}}|\alpha_{\C\Gamma}|\leq 2^{|\C|}$.
\end{lemma}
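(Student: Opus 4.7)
My plan is to transport the bracelet expansion $L(\C) = \sum_\Gamma \alpha_{\C\Gamma} L(\mathfrak b(\Gamma))$ into the algebra of Laurent polynomials via the homomorphism $L \mapsto P_L$ from Proposition~\ref{prop:linkToPolynomials} and then evaluate at the specific point $\mathbf t = \mathbf 1$, that is at $x_e = 1$ for every edge $e \in \Ed$. I will assume $\C$ is a minimal multi-curve (which is the case of interest); Proposition~\ref{prop:linkToPolynomials}(ii) then guarantees that both $P_{L(\C)}$ and each $P_{L(\mathfrak b(\Gamma))}$ have non-negative integer coefficients, so their values at $\mathbf 1$ are non-negative. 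Sign coherence from Theorem~\ref{th:Thurston} allows me to assume that $\alpha_{\C\Gamma} \ge 0$ throughout: otherwise the positivity of the bracelet evaluations established below together with $P_{L(\C)}(\mathbf 1) \ge 0$ would force all $\alpha_{\C\Gamma}$ to vanish.

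With these reductions in place, the lemma reduces to the chain of estimates
\begin{equation*}
\sum_\Gamma |\alpha_{\C\Gamma}|\ \le\ \sum_\Gamma \alpha_{\C\Gamma}\cdot P_{L(\mathfrak b(\Gamma))}(\mathbf 1)\ =\ P_{L(\C)}(\mathbf 1)\ \le\ 2^{|\C|},
\end{equation*}
where the middle equality is the evaluation of the bracelet expansion at $\mathbf 1$. The first inequality needs the lower bound $P_{L(\mathfrak b(\Gamma))}(\mathbf 1) \ge 1$, which I obtain from~\eqref{eq:L_bG=}: multiplicativity of $P$ gives $P_{L(\mathfrak b(\Gamma))}(\mathbf 1) = \prod_i 2T_{m_i}\bigl(\tfrac12 P_{L(l_i)}(\mathbf 1)\bigr)$; the specialization at $\mathbf 1$ of the matrices from the proof of Proposition~\ref{prop:linkToPolynomials} reads either $U = \bigl(\begin{smallmatrix}1 & 1 \\ 0 & 1\end{smallmatrix}\bigr)$ or $L = \bigl(\begin{smallmatrix}1 & 0 \\ 1 & 1\end{smallmatrix}\bigr)$, and writing each such matrix as $I + E_j$ with $E_j^2 = 0$ and expanding the product yields $P_{L(l_i)}(\mathbf 1) = 2 + (\text{a non-negative integer})\ge 2$; the monotonicity $T_m(y)\ge 1$ on $y\ge 1$ concludes this part.

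The main obstacle is the right-hand inequality $P_{L(\C)}(\mathbf 1) \le 2^{|\C|}$, because the most direct operator-norm or $\ell^1$-norm estimates for the product of the matrices $A_0,\ldots, A_{m-1}$ give only $2^{m+1}$, missing the desired constant by a factor of two. My plan is to overcome this by a short induction that uses two different functionals on $2\times 2$ matrices in tandem: the trace $\Tr(M)$ and the total entry sum $G(M) := \mathbf 1^T M \mathbf 1$. Direct computation using $U\mathbf 1 = (2,1)^T$ and $L\mathbf 1 = (1,2)^T$ shows $G(MA)\le 2 G(M)$ for each $A \in \{U,L\}$, while the explicit identities $\Tr(MU) = \Tr M + M_{10}$, $\Tr(ML) = \Tr M + M_{01}$ combined with the trivial bound $M_{ij}\le G(M) - \Tr M$ yield $\Tr(MA)\le G(M)$. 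Starting from $G(I) = 2$ and iterating,
\begin{equation*}
P_{L(\gamma)}(\mathbf 1)\ =\ \Tr(A_0\cdots A_{m-1})\ \le\ G(A_0\cdots A_{m-2})\ \le\ 2^{m-1}\cdot G(I)\ =\ 2^{|\gamma|}
\end{equation*}
for every minimal loop $\gamma$ with $m = |\gamma|$ edge crossings, and the bound for a minimal multi-curve $\C$ follows from the multiplicativity of $P$ over its disjoint components.
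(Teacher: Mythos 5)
Your proof is correct and takes the same route as the paper: reduce to a minimal multi-curve, evaluate the bracelet expansion at $\mathbf x = \mathbf 1$ via $\Psi_{\mathbf 1}$, and combine Thurston's positivity with the two bounds $P_{L(\mathfrak b(\Gamma))}(\mathbf 1)\ge 1$ and $P_{L(\C)}(\mathbf 1)\le 2^{|\C|}$. The paper simply asserts the latter bound from the construction~\eqref{eq:PL-def}; your two-functional induction is a valid way to justify it, though a shorter route exists (right-multiplying a nonnegative $2\times 2$ matrix by $U$ or $L$ at most doubles its largest entry, so $\max(A_0\cdots A_{m-1})\le 2^{m-1}$ and hence $\Tr(A_0\cdots A_{m-1})\le 2^m$, so the ``naive'' estimate you dismiss as giving only $2^{m+1}$ in fact already suffices if one tracks the maximal entry rather than an operator norm).
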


\begin{proof}
As each multi-curve is homotopic to a minimal one up to nugatory self-crossings and homotopically trivial components one can assume that~$\C$ is minimal. Let~$P_{L(\C)}$ be the Laurent polynomial given by Proposition~\ref{prop:linkToPolynomials} and~$\mathbf x=\mathbf 1\in \mathbb R_+^\Ed$ be the vector consisting of units: $x_e=1$ for all~$e\in\Ed$. It follows from the construction of~$P_{L(\C)}$ (see~\eqref{eq:PL-def}) that~$P_{L(\C)}(\mathbf 1) \leq 2^{|\C|}$. Applying the evaluation~$\Psi_{\mathbf 1}$ to the identity~\eqref{eq:SkeinViaBG} and noting that~$P_{L(\mathfrak{b}(\Gamma))}(\mathbf 1)\ge 1$ one obtains the desired estimate.
\end{proof}

\bl{We are now in the position to prove the main result of this section.} Recall that we are interested in an exponential upper bound for the coefficients of the Fock--Goncharov change of the bases
\[
\textstyle g_\Gamma\ =\ \sum_{\Delta\le\Gamma}c_{\Gamma\Delta}f_\Delta.
\]

\begin{theorem}\label{thm:diagEst} For each pair of laminations~$\Gamma,\Delta$ one has~$|c_{\Gamma\Delta}|\le 4^{|\Gamma|}$.
\end{theorem}

\begin{proof}
Recall that the function~$f_\Delta\in\F(\X)^{\sl}$ corresponds to~$\pm L(\Delta)\in\sk(M,1)$ under the isomorphisms provided by Theorem~\ref{SkToFunctions} and Lemma~\ref{spinHomo}. Due to Lemma~\ref{gViaLoops}, the function~$g_{\Gamma}$ correspond to the averaging of the signed knots~$\pm L(\C(\Gamma,\pi))$ over permutations~$\pi\in \prod_{\sigma\in\Fa,\,e\in\partial\sigma}\Sym(n_e)$. Let
\[
L(\C)\ =\ \sum\nolimits_{\Delta - \text{lamination},\,\Delta\le\Gamma} \beta_{\C\Delta}L(\Delta),\qquad \C=\C(\Gamma,\pi).
\]
Then, $c_{\Gamma\Delta}$ is equal to the average of the signed coefficients~$\beta_{\C\Delta}$ over the permutations~$\pi$ and it is enough to prove that~$|\beta_{\C\Delta}|\le 4^{|\C|}=4^{|\Gamma|}$ for each multi-curve~$\C=\C(\Gamma,\pi)$.

Using the bracelet basis of~$\sk(M,1)$ discussed above one writes
\[
L(\C)\ =\ \sum\nolimits_{\Gamma' - \text{lamination},\,\Gamma'\le\Gamma} \alpha_{\C\Gamma'}L(\mathfrak{b}(\Gamma')).
\]
Now one can expand each of the bracelet links~$L(\mathfrak{b}(\Gamma'))\in\sk(M,1)$ via the functions~$L(\Delta)$ with~$\Delta\subset\Gamma'$ using~\eqref{eq:L_bG=}:
\[
L(\mathfrak{b}(\Gamma'))=\sum_{\Delta\subset\Gamma'}\alpha'_{\Gamma'\Delta}L(\Delta),
\]
where each coefficient~$\alpha'_{\Gamma'\Delta}$ is the product over homotopically distinct loops of~$\Gamma'$ of some coefficients of the Chebyshev polynomials (corresponding to these loops)
\[
2T_m(\tfrac{1}{2}x)\ =\ \sum_{k=0}^{\lfloor \frac{m}{2}\rfloor} (-1)^k\cdot \frac{m}{m-k}\binom{m-k}{k}x^{m-2k}.
\]
Using the trivial upper bound~$2^{m}$ for these coefficients one obtains the crude estimate \mbox{$|\alpha'_{\Gamma'\Delta}|\le 2^{|\Gamma'|}\le 2^{|\Gamma|}$}. Therefore,
\[
|\beta_{\C\Delta}|\ \le\ 2^{|\Gamma|}\cdot\!\! \sum_{\Gamma': \Delta\subset\Gamma'\le \Gamma}|\alpha_{\C\Gamma'}|\ \le\ 4^{|\Gamma|}
\]
due to Lemma~\ref{bracelet_coef_lemma}.
\end{proof}

\begin{rem} It is worth noting that the constant~$4$ in the exponential upper bound provided by Theorem~\ref{thm:diagEst} is far from being optimal. For instance one can easily improve the constant~$2$ in Lemma~\ref{bracelet_coef_lemma} to~$\frac{1}{2}(\sqrt{5}+1)$, which is equal to the~$L^2$-norm of the matrices~$A_i$ with~$x_e=1$, not speaking about a huge overkill in the estimate of coefficients~$\alpha'_{\Gamma'\Delta}$ used above.
\end{rem}




\section{Proof of the main results}\label{complex_analisys_section} This section is organized as follows. In Section~\ref{sub:Expansions_in_BR} we discuss the expansions of~$\su^\Fa$-invariant functions on poly-balls~$\mathbb B_R\subset \sl^\Ed$ via the functions~$F_{\Gamma,\mathbf m}$ constructed in Section~\ref{representation_theory_section} (see \eqref{eq:G_Gamma_m-def}). Note that the exponential estimates provided by Proposition~\ref{prop:lower-bound} and Theorem~\ref{thm:diagEst} play a crucial role in the proof of Lemma~\ref{lem:estimate-of-coeff}. We then pass to the analysis of holomorphic functions on the variety
\begin{equation}
\label{eq:Zunip-def}
Z_\mathrm{unip}:=\{\,\mathbf A\in \sl^{\Ed}\,\mid\,\Tr(\h(\mathbf A,\lloop{\lambda_i}))=2~\text{for all}~i=1,\dots,n\,\}
\end{equation}
(here and below, the holomorphicity condition is always required on the regular part of the variety only). Section~\ref{sub:HolExt} is devoted to extensions of holomorphic functions from~$Z_{\mathrm{unip}}\cap\mathbb B_R$ to~$\mathbb B_R$ and is based upon Manivel's Ohsawa--Takegoshi type theorem, it is needed for the existence part of Theorem~\ref{main_theorem_actually}. Section~\ref{sub:Nullstellensatz} contains a version of Hilbert's Nullstellensatz for holomorphic functions vanishing on~$Z_{\mathrm{unip}}\cap\mathbb B_R$, we need this for the uniqueness part of Theorem~\ref{main_theorem_actually}. Finally, Section~\ref{sub:Expansions_macro} is devoted to the proof of our main result, Theorem~\ref{main_theorem_actually}, on the existence and uniqueness of expansions of holomorphic functions on~$X_\mathrm{unip}$ in the basis~$f_\Gamma$ indexed by macroscopic laminations~$\Gamma$ (i.e., those not containing any of the loops $\lloop{\lambda_i}$ surrounding only one puncture).

\subsection{Expansions of~$\bm{\su^\Fa}$-invariant holomorphic functions on~$\bm{\mathbb{B}_R}$ via~$\bm{F_{\Gamma,\mathbf{m}}}$.}\label{sub:Expansions_in_BR}

Recall that the action of the group~$\sl^{\Fa}$ on the space~$(\mathbb C^{2\times 2})^\Ed$ is given by \eqref{eq:actionRule}.
We start with a simple preliminary lemma.

\begin{lemma}\label{averaging}
  Let a holomorphic function~$F:(\mathbb C^{2\times 2})^\Ed\to\mathbb C$ be invariant under the action of~$\su^\Fa\subset \sl^\Fa$. Then,~$F$ is also invariant under the action of~$\sl^{\Fa}$.
\end{lemma}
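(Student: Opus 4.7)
The plan is to fix an arbitrary $\mathbf A\in(\mathbb C^{2\times 2})^{\Ed}$ and study the map $h:\sl^{\Fa}\to\mathbb C$, $\mathbf C\mapsto F(\mathbf C[\mathbf A])$. Since the action~\eqref{eq:actionRule} depends algebraically on $\mathbf C$ and $F$ is holomorphic, $h$ is a holomorphic function on the connected complex manifold $\sl^{\Fa}$, and by hypothesis $h\equiv F(\mathbf A)$ on the real subgroup $\su^{\Fa}$. The target identity $F(\mathbf C[\mathbf A])=F(\mathbf A)$ for all $\mathbf C\in\sl^{\Fa}$ therefore reduces to the following general fact: a holomorphic function on $\sl^{\Fa}$ which is constant on $\su^{\Fa}$ is automatically constant everywhere.

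The proof of this fact is local and relies on the decomposition $\liesl=\mathfrak{su}_2\oplus i\,\mathfrak{su}_2$. Fix an arbitrary $\mathbf U\in\su^{\Fa}$ and choose a real basis $e_1,e_2,e_3$ of $\mathfrak{su}_2$ (for example $\tfrac{i}{2}$ times the Pauli matrices); by the aforementioned decomposition, this is \emph{simultaneously} a complex basis of $\liesl$. Parametrize a neighbourhood of $\mathbf U$ in $\sl^{\Fa}$ by
\[
\mathbf C(\mathbf z)\ :=\ \mathbf U\cdot \exp\Bigl(\sum\nolimits_{\sigma\in\Fa}\sum\nolimits_{k=1}^{3}z_{\sigma,k}\,e_k^{(\sigma)}\Bigr),\qquad \mathbf z=(z_{\sigma,k})\in\mathbb C^{3|\Fa|}\ \text{small},
\]
where $e_k^{(\sigma)}$ denotes $e_k$ inserted in the factor of $\sl^{\Fa}$ indexed by $\sigma$. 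Then $g(\mathbf z):=h(\mathbf C(\mathbf z))$ is a convergent power series in $\mathbf z$ near the origin, and for \emph{real} $\mathbf z=\mathbf x$ one has $\mathbf C(\mathbf x)\in\su^{\Fa}$, hence $g(\mathbf x)\equiv F(\mathbf A)$. A convergent multivariate power series whose restriction to the totally real slice $\mathbb R^{3|\Fa|}\subset\mathbb C^{3|\Fa|}$ is constant must itself be constant, since all its mixed partial derivatives at the origin are already determined by that restriction; thus $h\equiv F(\mathbf A)$ in a neighbourhood of $\mathbf U$ in $\sl^{\Fa}$.

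Taking the union of these neighbourhoods over $\mathbf U\in\su^{\Fa}$ yields an open subset of $\sl^{\Fa}$ on which $h$ is constant; the identity principle for holomorphic functions on the connected complex manifold $\sl^{\Fa}$ then propagates this to all of $\sl^{\Fa}$, completing the argument. No substantial obstacle is expected: this is the standard ``complexification of real invariance'' principle, and the only input beyond elementary complex analysis is the fact that $\su$ is a totally real submanifold of $\sl$ of maximal real dimension, which is an immediate consequence of $\liesl=\mathfrak{su}_2\oplus i\,\mathfrak{su}_2$.
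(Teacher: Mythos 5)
Your proof is correct and uses essentially the same argument as the paper: both hinge on the decomposition $\liesl=\mathfrak{su}_2\oplus i\,\mathfrak{su}_2$, which makes $\su^{\Fa}$ a totally real submanifold of $\sl^{\Fa}$ of maximal real dimension, so that a holomorphic function constant on it must have all Taylor coefficients (beyond the constant term) vanish and hence be constant by the identity principle. The paper packages this by noting that the $d$-th derivative of $\mathbf C\mapsto F(\mathbf C[\mathbf A])$ at $\Id$ is a $\mathbb C$-linear functional on $\Sym^d(\liesl)^{\Fa}$ vanishing on the $\mathbb R$-subspace $\Sym^d(\mathfrak{su}_2)^{\Fa}$, whereas you phrase it via a power series in exponential coordinates adapted to $\mathfrak{su}_2$ at each $\mathbf U\in\su^{\Fa}$ (one such chart, say at $\Id$, would already suffice); the content is identical.
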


\begin{proof}
Given~$\mathbf A\in (\mathbb C^{2\times 2})^{\Ed}$ define a holomorphic function~$\varphi_\mathbf A$ on $\sl^\Fa$ by
\[
\varphi_{\mathbf A}(\mathbf C):=F(\mathbf C[\mathbf A]),
\]
By our assumption, this function is constant on~$\su^\Fa$. Therefore, the $d$-th derivative of~$\varphi_{\mathbf A}$ at the identity is a~$\mathbb C$-linear functional on the space~$\Sym^d\,(\mathfrak{sl}_2(\mathbb C))^{\Fa}$ which vanishes on the ($\mathbb R$-linear) subspace~$\Sym^d(\mathfrak{su}_2(\mathbb C))^{\Fa}$. Since $\mathfrak{sl}_2(\mathbb C)\simeq \mathbb C\otimes_{\mathbb R} \mathfrak{su}_2(\mathbb{C})$ we conclude that all these derivatives vanish. Therefore,~$\varphi_{\mathbf A}$ is constant on~$\sl^{\Fa}$.
\end{proof}

\begin{lemma}
\label{lem:Taylor-in-F}
(i) Let coefficients~$p_{\Gamma,\mathbf m}$ (indexed by laminations~$\Gamma$ and tuples~$\mathbf m\in \mathbb Z_{\ge 0}^\Ed$) satisfy the estimate~$|p_{\Gamma,\mathbf m}|=O(r^{-(|\Gamma|+2|\mathbf m|+4|\Ed|)})$ for some~$r>0$. Then,
the series
\begin{equation}
\label{eq:pFgn-conv}
  F(\mathbf{A}) := \sum\limits_{\Gamma,\mathbf m}p_{\Gamma,\mathbf m}F_{\Gamma, \mathbf m}(\mathbf{A})
\end{equation}
converges absolutely and uniformly on compact subsets of~$\mathbb B_r$.

\noindent (ii) Moreover, if both~$p_{\Gamma,\mathbf m},\tilde{p}_{\Gamma,\mathbf m}$ satisfy the exponential upper bound given above and the corresponding functions~$F,\widetilde{F}$ coincide on~$\mathbb B_{r}$, then~$p_{\Gamma,\mathbf m}=\tilde{p}_{\Gamma,\mathbf m}$ for all~$\Gamma$ and~$\mathbf m$.
\end{lemma}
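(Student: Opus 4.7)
The plan is to prove a clean pointwise bound $|F_{\Gamma,\mathbf m}(\mathbf A)|\le R^{|\Gamma|+2|\mathbf m|}$ on the poly-ball $\mathbb B_R$ (with no exponential prefactor), and then derive both (i) and (ii) from this estimate together with standard linear algebra. The key technical point is to get the pointwise bound right; the naive estimate $|\Tr(B)|\le 2\|B\|_{\mathrm{op}}$ would yield an extra factor $2^{\#\mathrm{loops}(\Gamma)}$ and give convergence only on a strictly smaller ball.

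For the pointwise bound I would use that each loop $\gamma$ in a lamination corresponds to a non-backtracking loop on the dual graph $\G^\circ$, so $|\gamma|\ge 2$ and $\h^\vee(\mathbf A,\gamma)$ is a product of at least two matrices. Writing this product as $X\cdot Y$ with $X=A_{e_0}^{s_0}$ and $Y=A_{e_1}^{s_1}\cdots A_{e_{|\gamma|-1}}^{s_{|\gamma|-1}}$, Cauchy--Schwarz for the Frobenius inner product gives $|\Tr(XY)|\le\|X\|_F\|Y\|_F$, while the submultiplicativity $\|BC\|_F\le\|B\|_{\mathrm{op}}\|C\|_F\le\|B\|_F\|C\|_F$ together with $\|A_e^\vee\|_F=\|A_e\|_F$ (same singular values) yields $\|Y\|_F\le\prod_{i\ge 1}\|A_{e_i}\|_F$. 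Combining,
\[
|\Tr(\h^\vee(\mathbf A,\gamma))|\le\prod_{i}\|A_{e_i}\|_F< R^{|\gamma|}.
\]
Multiplying over $\gamma\in\Gamma$ and using $|\det A_e|\le\|A_e\|_F^2\le R^2$ yields the claimed bound.

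For (i), combining with the hypothesis gives the termwise estimate $|p_{\Gamma,\mathbf m}F_{\Gamma,\mathbf m}(\mathbf A)|\le Cr^{-4|\Ed|}(R/r)^{|\Gamma|+2|\mathbf m|}$ for $\mathbf A\in\mathbb B_R$. Since the number of pairs $(\Gamma,\mathbf m)$ with $|\Gamma|+2|\mathbf m|=d$ grows polynomially in $d$, the Weierstrass M-test applied to the geometric-type series $\sum_d\mathrm{poly}(d)(R/r)^d$ yields absolute and uniform convergence on $\mathbb B_R$ for any $R<r$; any compact subset of $\mathbb B_r$ lies in $\overline{\mathbb B_R}$ for some $R<r$, completing the argument.

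For (ii), the difference $\sum(p_{\Gamma,\mathbf m}-\tilde p_{\Gamma,\mathbf m})F_{\Gamma,\mathbf m}$ still satisfies the decay assumption and converges uniformly on compacts to the zero function. The uniform convergence near the origin justifies collecting terms by total homogeneous degree, so each degree-$d$ component $\sum_{|\Gamma|+2|\mathbf m|=d}(p_{\Gamma,\mathbf m}-\tilde p_{\Gamma,\mathbf m})F_{\Gamma,\mathbf m}$ vanishes identically. By Lemma~\ref{GisBasis} the polynomials $\{G_{\Gamma,\mathbf m}:|\Gamma|+2|\mathbf m|=d\}$ form a basis of the degree-$d$ $\sl^{\Fa}$-invariants; since the change of basis from $F$'s to $G$'s (provided by Lemma~\ref{lem:GGmViaFDn}) is lower-triangular with unit diagonal at each degree, $\{F_{\Gamma,\mathbf m}:|\Gamma|+2|\mathbf m|=d\}$ is also a basis, and vanishing of the homogeneous component forces $p_{\Gamma,\mathbf m}=\tilde p_{\Gamma,\mathbf m}$.
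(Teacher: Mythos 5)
Your proof is correct and follows essentially the same strategy as the paper's: the pointwise bound $|F_{\Gamma,\mathbf m}(\mathbf A)|\le R^{|\Gamma|+2|\mathbf m|}$ on $\mathbb B_R$ via Cauchy--Schwarz for the trace pairing plus submultiplicativity of the Frobenius norm (where the paper phrases this as $|\Tr(A_1\cdots A_n)|\le\prod_i(\Tr A_iA_i^*)^{1/2}$), followed by a geometric-series comparison for (i) and, for (ii), collecting homogeneous degrees and invoking the linear independence of the $F_{\Gamma,\mathbf m}$ via Lemmas~\ref{GisBasis} and~\ref{lem:GGmViaFDn}. You are somewhat more explicit than the paper on the points that make the sharp constant work, namely $|\gamma|\ge 2$ and $\|A^\vee\|_F=\|A\|_F$, but these are the same ingredients.
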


\begin{proof}
(i) Using simple inequalities~$|\Tr(A_1A_2)|^2\le (\Tr(A_1A_1^*))^{1/2}(\Tr(A_2A_2^*))^{1/2}$, where $A_1,A_2\in\mathbb C^{2\times 2}$, and $|\Tr(B_1B_2)|\le \Tr B_1\Tr B_2$, where $B_1=B_1^*\ge 0$ and $B_2=B_2^*\ge 0$, one easily sees that
\[
\textstyle |\Tr(A_1\dots A_n)|\le \prod_{i=1}^n(\Tr(A_iA_i^*))^{1/2}\quad \text{for all}~n\ge 2~\text{and}~A_1,\dots A_n\in\mathbb C^{2\times 2}.
\]
Together with the trivial bound~$|\det A|\le \Tr(AA^*)$ this yields the following estimate:
\[
|F_{\Gamma,\mathbf m}(\mathbf A)|\le r^{|\Gamma|+2\mathbf m}\quad \text{for~all}~\mathbf A\in \overline{\mathbb B}_r.
\]
Therefore, the series~\eqref{eq:pFgn-conv} converges absolutely and uniformly on each closed poly-ball~$\overline{\mathbb B}_{r'}$ with~$r'<r$ since $\sum_{\Gamma,\mathbf m} (r'/r)^{|\Gamma|+2|\mathbf m|} \le (1\!-\!r'/r)^{-|\Ed|}\cdot(1\!-\!(r'/r)^2)^{-|\Ed|}<\infty$.

\noindent (ii) Grouping the terms in~\eqref{eq:pFgn-conv} according to their degree of homogeneity~$d=|\Gamma|+2|\mathbf m|$, one sees that each term in the series
\begin{equation*}
  \textstyle \tilde{F}(\varepsilon \mathbf A)-F(\varepsilon \mathbf A)\ =\ \sum\nolimits_{d\ge 0}\varepsilon^dP^{(d)}(\mathbf A)
\end{equation*}
must vanish:
\begin{equation*}
\textstyle P^{(d)}(\mathbf A):=\sum\nolimits_{\Gamma, \mathbf m\,:\, |\Gamma| + 2|\mathbf m| = d}\,(\tilde{p}_{\Gamma,\mathbf m}-p_{\Gamma,\mathbf m})F_{\Gamma, \mathbf m}(\mathbf A)=0
\end{equation*}
for all $\mathbf A\in (\mathbb C^{2\times 2})^\Ed$. Due to Lemma~\ref{GisBasis} and Lemma~\ref{lem:GGmViaFDn}, the polynomials~$F_{\Gamma,\mathbf {m}}$ are linearly independent. Hence all the coefficients~$\tilde{p}_{\Gamma,\mathbf m}-p_{\Gamma,\mathbf m}$ must vanish.
\end{proof}

The next lemma shows that each $\su^\Fa$-invariant holomorphic function defined on a poly-ball~$\mathbb B_R$ necessarily admits an expansion~\eqref{eq:pFgn-conv} in a \emph{smaller} poly-ball~$\mathbb B_{\frac{1}{5}\eta_0R}$. Recall that the absolute constant~$\eta_0$ is introduced in Proposition~\ref{prop:lower-bound}.

\begin{lemma}\label{lem:estimate-of-coeff}
Given~$r>0$, denote~$R:=5\eta_0^{-1}r$. Let~$F\in L^2(\mathbb B_R)$ be an $\su^{\Fa}$-invariant holomorphic function on~$\mathbb B_R$. Then, there exist coefficients~$p_{\Gamma,\mathbf m}$ indexed by laminations~$\Gamma$ and tuples~$\mathbf m\in \mathbb{Z}_{\ge 0}^\Ed$ such that the following holds:
\begin{equation}
\label{eq:pGn-bound}
  |p_{\Gamma, \mathbf m}|\ \leq\ r^{-(|\Gamma| + 2|\mathbf m| + 4|\Ed|)}\cdot \|F\|_{L^2(\mathbb B_R)}
\end{equation}
for all~$\Gamma,\mathbf m$, and
\begin{equation}
\label{eq:FinB15}
  F(\mathbf{A}) = \sum\limits_{\Gamma,\mathbf m}p_{\Gamma,\mathbf m}F_{\Gamma, \mathbf m}(\mathbf{A})\quad \text{for~all}\ \ \mathbf{A}\in \mathbb{B}_{r}.
\end{equation}
\end{lemma}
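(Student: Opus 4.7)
The plan is to expand $F$ first in the Peter--Weyl basis $\{G_{\Gamma,\mathbf m}\}$ on~$\mathbb B_R$, using its orthogonality and the lower bound on its norms, and then convert this expansion to one in the lamination basis~$\{F_{\Gamma,\mathbf m}\}$ via the Fock--Goncharov change of bases, where the exponential estimate $|c_{\Gamma\Delta}|\le 4^{|\Gamma|}$ will play the crucial role.

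First, Lemma~\ref{averaging} upgrades the $\su^{\Fa}$-invariance of~$F$ to $\sl^{\Fa}$-invariance. Writing the Taylor expansion $F=\sum_{d\ge 0}F^{(d)}$ at the origin, each homogeneous component inherits this invariance, so Lemma~\ref{GisBasis} gives a finite decomposition $F^{(d)}=\sum_{|\Gamma|+2|\mathbf m|=d}q_{\Gamma,\mathbf m}G_{\Gamma,\mathbf m}$. Polynomials of different total homogeneity degrees are orthogonal in $L^2(\mathbb B_R)$ (by invariance of~$\mathbb B_R$ under the rotation $\mathbf A\mapsto e^{2\pi it}\mathbf A$, as in the proof of Lemma~\ref{GareOrt}), and within each degree the basis $\{G_{\Gamma,\mathbf m}\}$ is orthogonal by Lemma~\ref{GareOrt} itself. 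Combining this with the lower bound of Proposition~\ref{thm:lower-bound} yields
\[
|q_{\Gamma,\mathbf m}|\ \le\ \|F\|_{L^2(\mathbb B_R)}\cdot\|G_{\Gamma,\mathbf m}\|_{L^2(\mathbb B_R)}^{-1}\ \le\ (\eta_0R)^{-(|\Gamma|+2|\mathbf m|+4|\Ed|)}\|F\|_{L^2(\mathbb B_R)}.
\]

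I would then substitute the identity of Lemma~\ref{lem:GGmViaFDn} into each $F^{(d)}$ and regroup (a finite manipulation within each degree), reading off the coefficients
\[
p_{\Delta,\mathbf m'}\ =\sum_{\Gamma\ge\Delta,\ \mathbf n(\Gamma)\le\mathbf n(\Delta)+2\mathbf m'} c_{\Gamma\Delta}\,q_{\Gamma,\mathbf m'-\tfrac12(\mathbf n(\Gamma)-\mathbf n(\Delta))}.
\]
Since the substitution preserves the total degree, the sum has at most $\prod_{e\in\Ed}(2m'_e+1)$ terms, each with $|\Gamma|\le|\Delta|+2|\mathbf m'|$. Applying Theorem~\ref{diagEstProp} to bound $|c_{\Gamma\Delta}|\le 4^{|\Delta|+2|\mathbf m'|}$ together with the estimate on $q_{\Gamma,\mathbf m}$ gives
\[
|p_{\Delta,\mathbf m'}|\ \le\ \|F\|_{L^2(\mathbb B_R)}\cdot\prod_{e\in\Ed}(2m'_e+1)\cdot\left(\tfrac{4}{\eta_0R}\right)^{|\Delta|+2|\mathbf m'|}(\eta_0R)^{-4|\Ed|}.
\]
Substituting $\eta_0R=5r$, the required bound~\eqref{eq:pGn-bound} reduces to the elementary per-edge inequality $(2m+1)\le 5^4(5/4)^{2m}$ for $m\in\mathbb Z_{\ge 0}$, which holds throughout. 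Convergence of the resulting series on $\mathbb B_r$ and the identity~\eqref{eq:FinB15} then follow from Lemma~\ref{lem:Taylor-in-F}(i), since degree-by-degree the rearrangement matches the Taylor expansion of $F$ on $\mathbb B_R$.

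The main obstacle, and the reason for the specific constant $5$ in the hypothesis $R=5\eta_0^{-1}r$, lies in this final calibration: the exponential factor $4^{|\Gamma|}$ inherited from Theorem~\ref{diagEstProp} together with the polynomial count $\prod(2m'_e+1)$ of summands in the rearrangement must both be absorbed by the geometric slack $(\eta_0R/r)^{|\Delta|+2|\mathbf m'|+4|\Ed|}=5^{|\Delta|+2|\mathbf m'|+4|\Ed|}$. Any further improvement in the constant $4$ of Theorem~\ref{diagEstProp} would allow the corresponding constant in the relation $R=5\eta_0^{-1}r$ to be reduced accordingly.
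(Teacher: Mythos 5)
Your proof is correct and follows essentially the same route as the paper: Taylor-expand $F$, use Lemma~\ref{averaging} to upgrade to $\sl^{\Fa}$-invariance, expand each homogeneous part in the orthogonal basis $G_{\Gamma,\mathbf m}$ (Lemmas~\ref{GisBasis},~\ref{GareOrt}), bound the Peter--Weyl coefficients via Proposition~\ref{thm:lower-bound}, pass to the lamination basis via Lemma~\ref{lem:GGmViaFDn} and Theorem~\ref{diagEstProp}, and conclude with Lemma~\ref{lem:Taylor-in-F}(i). The only difference from the paper is in the final combinatorial estimate: you count the at most $\prod_e(2m'_e+1)$ summands explicitly and absorb this via the per-edge bound $(2m+1)\le 5^4(5/4)^{2m}$, while the paper instead uses the single crude bound $\sum_{\Delta:|\Delta|\le d}4^{|\Delta|}\le 5^{d+|\Ed|}$; both bookkeepings yield the same constant~$5$ in $R=5\eta_0^{-1}r$.
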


\begin{rem} Recall that Lemma~\ref{lem:Taylor-in-F}(ii) guarantees the uniqueness of the expansion~\eqref{eq:FinB15} provided that~$p_{\Gamma,\mathbf m}=O_{|\Gamma|+2|\mathbf m|\to\infty}(r^{-(|\Gamma| + 2|\mathbf m|)})$.
\end{rem}

\begin{proof} Let~$F(\mathbf A) = \sum_{d\geq 0} F^{(d)}(\mathbf A)$ be the Taylor expansion of the function~$F$ at the origin. Since this expansion is unique, the homogeneous polynomials~$F^{(d)}$ of degree~$d$ are~$\su^\Fa$-invariant. Lemma~\ref{averaging} implies that these polynomials are also~$\sl^\Fa$-invariant so one can expand them in the basis~$G_{\Delta,\mathbf k}$ with~$|\Delta|+2|\mathbf k|=d$ due to Lemma~\ref{GisBasis}. More precisely, Lemma~\ref{GareOrt} yields
\begin{equation*}
  F^{(d)}\ = \sum\limits_{\Delta, \mathbf k\,:\,|\Delta| + 2|\mathbf k| = d} q_{\Delta, \mathbf k} G_{\Delta, \mathbf k},\quad \text{where}\quad q_{\Delta,\mathbf k}= \frac{\lan F, G_{\Delta, \mathbf k} \ran_{L^2(\mathbb B_R)}}{\|G_{\Delta, \mathbf k}\|^2_{L^2(\mathbb B_R)}}.
\end{equation*}
Due to Proposition~\ref{prop:lower-bound} we have the following estimate:
\[
|q_{\Delta,\mathbf k}|\le \|F\|_{L^2(\mathbb B_R)}^{\vphantom{-1}}\cdot \|G\|^{-1}_{L^2(\mathbb B_R)} \le (\eta_0R)^{-(d+4|\Ed|)}\|F\|_{L^2(\mathbb B_R)}.
\]
We now use the expansion~$G_{\Delta,\mathbf k} = \sum_{\Gamma\le\Delta} c_{\Delta\Gamma} F_{\Gamma, \mathbf k+\frac{1}{2}(\mathbf n(\Delta)-\mathbf n(\Gamma))}$ provided by Lemma~\ref{lem:GGmViaFDn}, which leads to the following identity:
\begin{equation*}
  F^{(d)}\ = \sum\limits_{\Gamma, \mathbf m\,:\,|\Gamma| + 2|\mathbf m| = d} p_{\Gamma, \mathbf m} F_{\Gamma, \mathbf m},\quad \text{where}\quad
  p_{\Gamma, \mathbf m}\ := \sum_{\Delta\,:\,\Gamma \leq \Delta\leq \Gamma + 2\mathbf m}c_{\Delta\Gamma}q_{\Delta, \mathbf m - \frac{1}{2}(\mathbf n(\Delta) - \mathbf n(\Gamma))}.
\end{equation*}
One can now easily deduce~\eqref{eq:pGn-bound} from the estimate~$|c_{\Delta\Gamma}|\le 4^{|\Delta|}$ provided by Theorem~\ref{thm:diagEst}, the estimate of coefficients~$q_{\Delta,\mathbf k}$ given above and the following crude bound:
\[
\textstyle \sum_{\Delta:|\Delta|\le d}4^{|\Delta|}\ \le\ \sum_{n=0}^d \binom{n+|\Ed|-1}{n}4^n\ \le\ 5^{d+|\Ed|}.
\]
The exponential upper bound~\eqref{eq:pGn-bound} and Lemma~\ref{lem:Taylor-in-F}(i) immediately imply~\eqref{eq:FinB15}.
\end{proof}

\subsection{Holomorphic extensions from~$\bm{Z_\mathrm{unip}\cap\mathbb{B}_R}$ to~$\bm{\mathbb{B}_R}$} \label{sub:HolExt}
\bl{Recall that our ultimate goal is to show that each holomorphic $\sl$-invariant function on $X_{\mathrm{unip}}$ can be expanded (i.e., written as the sum of an infinite series) in functions $f_\Gamma$ enumerated by macroscopic laminations. The variety $X_{\mathrm{unip}}$ is contained in the variety $X$, which is related to the affine variety $\sl^{\Ed}\subset (\mathbb C^{2\times 2})^{\Ed}$ via the map $\phi:\mathbf A\mapsto \h(\mathbf A,\cdot)$ defined by~\eqref{eq:triangularMap}.

We already proved in Lemma~\ref{lem:estimate-of-coeff} that each $\su^{\Fa}$-invariant holomorphic function $F$ defined in a poly-ball $\mathbb B_R\subset (\mathbb C^{2\times 2})^{\Ed}$ can be written as the sum of a series of functions $F_{\Gamma,\mathbf m}$ (whose restrictions to $\sl^\Ed$ are equal to $\phi^*f_\Gamma$). However, functions that we want to expand are defined only on $X_{\mathrm{unip}}\subsetneq X$. If we pull such a function back along the map $\phi$, we get a function $f$ defined on the \emph{subvariety} $Z_{\mathrm{unip}}\subset (\mathbb C^{2\times 2})^{\Ed}$ only; see~\eqref{eq:Zunip-def}. Thus, a reasonable way to use Lemma~\ref{lem:estimate-of-coeff} is to solve an interpolation problem that consists in finding a function $F:\mathbb B_R\to \mathbb C$ such that $F\vert_{Z_{\mathrm{unip}}\cap \mathbb B_R} = f$. Once such a function~$F$ is found, we can use Lemma~\ref{lem:estimate-of-coeff} to expand $F$ as a series of $F_{\Gamma,\mathbf m}$ and then restrict this expansion to $Z_{\mathrm{unip}}$ to obtain an expansion of the initial function~$f$.

Certainly, we also need to control the norm of the solution $F$ as Theorem~\ref{thm:main-thm} requires an explicit estimate on the coefficients of the series. Let us point out that, even without requiring such a quantitative control (and even locally), this interpolation problem could, a priori, be not always solvable since the variety $Z_\mathrm{unip}$ is \emph{not smooth}. Luckily, there exists a series of results -- so-called Ohsawa--Takegoshi-type theorems (we refer the interested reader to a very helpful introduction to this subject due to Demailly~\cite{demailly}) -- which provide an affirmative answer to the question of finding an extension of a holomorphic function from a (non-smooth) subvariety. Roughly speaking, the interpolation problem is always solvable provided that the ambient variety $\mathbb B_R$ is a Stein manifold and that the function $f$ belongs to a certain weighted $L^2$-space on $Z_\mathrm{unip}\cap\mathbb B_R$. Moreover, in this case the extended function $F$ can be also taken from a certain weighted $L^2$-space that naturally arises from the potential theory arguments.

It is worth noting that Ohsawa--Takegoshi-type theorems are usually formulated in a more general context of Hermitian bundles over Stein manifolds. In this context, $f$ should be viewed as a section of a Hermitian line bundle $L$ on $\mathbb B_R$ while $Z_{\mathrm{unip}}$ is the zero set of a section $s$ of another Hermitian vector bundle $E$; the required estimates involve curvature forms of $L$ and $E$. In what follows, we rely upon a very particular case of Manivel's Ohsawa--Takegoshi type theorem~\cite{Manivel} and do not need such generality: both $L$ and $E$ will be trivial vector bundles.

Given a Hermitian manifold $B$ and a holomorphic map $s: B\to \mathbb C^r$ we denote by $TB$ the holomorphic tangent bundle of~$B$, by
$\Lambda^mds:\Lambda^m(TB)\to \Lambda^m\mathbb C^r$ the corresponding morphism of $m$-th exterior powers, and by~$|\Lambda^mds|$ its operator norm.}
\begin{theorem}\label{Demailly_BR}
Let~$k=4|\Ed|$,~$\mathbb B_R \subset (\mathbb C^{2\times 2})^{\Ed}\cong \mathbb C^k$ be a poly-ball defined by~\eqref{eq:B_Rdef}, and~$s:\mathbb B_R\to \mathbb C^r$ be a holomorphic map such that~$|s(\mathbf A)|\le e^{-1}$ for all~$\mathbf A\in \mathbb B_R$. Let~$Z:=s^{-1}(0)$ and assume that~$ds(\mathbf A)$ is of maximal rank for a generic point~$\mathbf A\in Z$. Then, each holomorphic function~$f:Z^{\mathrm{reg}}:=Z\setminus Z^{\mathrm{sing}}\to\mathbb C$ such that
\[
\int_Z |f|^2|\Lambda^rds|^{-2}\,\omega^{k-r} < \infty,
\]
admits a holomorphic extension~$F:\mathbb B_R\to \mathbb C$ such that~$F|_{Z^{\mathrm{reg}}}=f$ and
\[
\int_{\mathbb B_R}|F|^2 (-|s|^r\log|s|)^{-2}\,\omega^k\ \leq\ C_{r,k}\int_{Z}|f|^2|\Lambda^rds|^{-2}\,\omega^{k-r},
\]
where~$\omega=\frac{i}{2}\sum_{e\in \Ed}\Tr(dA_e\wedge dA^*_e)$ and the constant~$C_{r,k}>0$ depends only on~$r$ and~$k$.
\end{theorem}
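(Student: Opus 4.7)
The plan is to invoke Manivel's Ohsawa--Takegoshi type extension theorem essentially as a black box, with only a routine verification that its hypotheses are met in the present concrete setup. Manivel's theorem requires an ambient bounded (or weakly) pseudoconvex K\"ahler manifold equipped with a holomorphic section $s$ of a vector bundle cutting out a subvariety $Z$ of codimension $r$, together with a plurisubharmonic weight controlling the twisted $\bar\partial$-operator. Here $\mathbb B_R = \mathrm B_R^{\Ed}$ is a product of balls, hence a bounded pseudoconvex (in fact strictly pseudoconvex) domain in $\mathbb C^k$ with $k = 4|\Ed|$; the section $s$ is the given holomorphic map to $\mathbb C^r$ viewed as a section of the trivial rank-$r$ bundle; and $\omega$ is the standard Euclidean K\"ahler form. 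The bound $|s|\le e^{-1}$ guarantees $-\log|s|\ge 1$ throughout $\mathbb B_R$, which ensures that the weight $(-|s|^r\log|s|)^{-2}$ is bounded and non-degenerate and that $\log(-\log|s|)$ is a well-defined plurisubharmonic function on $\mathbb B_R$---the precise input required to run Manivel's weighted estimate without pathologies arising away from $Z$.

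The proof of Manivel's general theorem, which I would follow without modification, has three steps. First, using the maximal-rank hypothesis on $ds$ at a generic point of $Z$, one constructs a smooth local extension $\tilde f$ of $f$ across $Z$ via the implicit function theorem applied along fibers of a submersive slice of $s$. Second, one multiplies by a cutoff $\chi$ supported in a tubular neighborhood of $Z$, so that $\chi\tilde f$ is a globally defined smooth function on $\mathbb B_R$, not yet holomorphic. Third, one corrects $\chi\tilde f$ to a holomorphic extension by solving $\bar\partial u = \bar\partial(\chi\tilde f)$ with respect to a plurisubharmonic weight that combines $r\log|s|$, $\log(-\log|s|)^2$, and a strictly plurisubharmonic term on $\mathbb B_R$. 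The factor $r\log|s|$ forces the solution $u$ to vanish along $Z$, hence $F := \chi\tilde f - u$ is holomorphic on $\mathbb B_R$ and restricts to $f$ on $Z^{\mathrm{reg}}$. H\"ormander's $L^2$ estimate for $\bar\partial$, applied with the above weight, yields the desired bound on $\int_{\mathbb B_R}|F|^2(-|s|^r\log|s|)^{-2}\,\omega^k$ in terms of the weighted norm of $f$ on $Z^{\mathrm{reg}}$.

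The main technical point to verify is the universality of the constant $C_{r,k}$. In Manivel's estimate this constant absorbs curvature contributions from the ambient metric, from the line bundle carrying the weight, and from the cutoff. For the Euclidean polyball with the standard metric and the trivial line bundle, these contributions depend only on $k$ and $r$: rescaling $\mathbf A\mapsto R^{-1}\mathbf A$ reduces to the unit polyball, and the only information about $s$ that enters the proof is the bound $|s|\le e^{-1}$. The real obstacle, therefore, is not analytic but organisational: one must match the dimensional constants from Manivel's general formulation to our $r$ and $k$, and check that neither the rescaling in $R$ nor the specific product structure of $\mathbb B_R$ leaks a further geometric dependence into $C_{r,k}$.
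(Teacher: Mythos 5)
Your overall strategy coincides with the paper's: both treat Manivel's Ohsawa--Takegoshi type theorem (in Demailly's formulation) as a black box and reduce the task to verifying its hypotheses in this concrete, flat-metric setting. However, you misidentify --- and consequently never verify --- the one hypothesis that is not automatic. Demailly's Theorem~4.1 requires the semipositivity condition
\[
i\Theta(L) + r\, i\,\partial\bar\partial\log|s|^2 \ \ge\ \alpha^{-1}\frac{\{i\Theta(E)s,s\}}{|s|^2},
\]
paired with $|s|\le e^{-\alpha}$. Since $L$ and $E$ are here trivial bundles with constant Hermitian metrics, both curvature terms vanish and the condition collapses to $\partial\bar\partial\log|s|^2\ge 0$, i.e.\ plurisubharmonicity of $\log|s|^2$, with $\alpha\equiv 1$. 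The paper verifies exactly this by expanding
\[
\partial\bar\partial\log|s|^2 = \frac{|\partial s|^2}{|s|^2} - \frac{|\langle\partial s,s\rangle|^2}{|s|^4}\ \ge\ 0,
\]
which is Cauchy--Schwarz along one-dimensional slices. You instead assert that the bound $|s|\le e^{-1}$ ``ensures that $\log(-\log|s|)$ is a well-defined plurisubharmonic function,'' but that is neither the hypothesis of the cited theorem nor an obviously true claim in its own right (plurisubharmonicity of $\log(-\log|s|)$ does not follow from $-\log|s|\ge 1$; it would itself require $\log|s|$ to be plurisubharmonic, which is what needs to be checked). The rest of your write-up --- the three-step sketch of how Manivel's theorem itself is proved, and the rescaling discussion of $C_{r,k}$ --- is not incorrect, but neither item is required once the theorem is cited as a black box: the constant in Demailly's statement is already of the universal form $C_{r,k}$. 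The genuine missing step is therefore the short computation of $\partial\bar\partial\log|s|^2\ge 0$, which is the whole content of the proof once the bundles and metrics are observed to be trivial.
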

\begin{proof} See~\cite[p.~54,~Theorem~4.1]{demailly}, note that the poly-ball~$\mathbb B_R$ is a weakly pseudo-convex domain. Since in our case the holomorphic vector bundles $L,E$ are trivial with a constant Hermitian metric, the curvature tensors~$\Theta(L),\Theta(E)$ vanish and the condition (a) of~\cite[p.~54,~Theorem~4.1]{demailly} boils down to the trivial inequality
\[
\partial \overline{\partial} \log |s|^2 = |\partial s|^2\cdot |s|^{-2}- |\langle\partial s,s\rangle|^2\cdot |s|^{-4} \ge 0,
\]
which holds true for all one-dimensional restrictions~$z\mapsto s(\mathbf A+z\mathbf B)$ of~$s$.
\end{proof}

We now apply Theorem~\ref{Demailly_BR} to the variety~$Z_\mathrm{unip}\subset (\mathbb C^{2\times 2})^\Ed$ defined by~\eqref{eq:Zunip-def}. For our purposes it is enough to assume that the function~$f$ is bounded, we also do not need the sharp weight~$(-|s|^r\log|s|)^{-2}\ge 1$ in the $L^2$-norm of its extension $F$.

\begin{prop}\label{appliedDemailly} Let~$R>\sqrt{2}$ and~$f:Z_{\mathrm{unip}}^{\mathrm{reg}}\cap \mathbb B_R\to \mathbb C$ be a bounded holomorphic function. Then, there exists a holomorphic function $F: \mathbb B_R\to \mathbb C$ such that
\mbox{$F|_{Z_{\mathrm{unip}}^{\mathrm{reg}}\cap \mathbb B_R} = f$} and
\begin{equation*}
  \|F\|_{L^2(\mathbb B_R)}\ \leq\ \mathrm{const}(R,\G)\cdot \|f\|_{L^{\infty}(Z_{\mathrm{unip}}^{\mathrm{reg}}\cap \mathbb B_R)},
\end{equation*}
for some constant depending on $R$ and the triangulation $\G$ but independent of~$f$.
\end{prop}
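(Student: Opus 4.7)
The plan is to apply Manivel's Ohsawa--Takegoshi type theorem (Theorem~\ref{Demailly_BR}) to the map
\[
s(\mathbf A)\ :=\ c\cdot\bigl(\,(\det A_e - 1)_{e\in\Ed}\,,\ (\Tr(\h^\vee(\mathbf A,(\lambda_i))) - 2)_{i=1}^n\,\bigr)\ \in\ \mathbb C^r,\qquad r:=|\Ed|+n,
\]
where the rescaling constant $c=c(R,\G)>0$ is chosen small enough so that $|s(\mathbf A)|\le e^{-1}$ for all $\mathbf A\in\mathbb B_R$. Since each component of $s$ is a polynomial in the entries of $\mathbf A$ and $\mathbb B_R$ is bounded, such a $c$ exists. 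Recalling that $\Tr(\h^\vee(\mathbf A, (\lambda_i)))=\Tr(\h(\mathbf A,(\lambda_i)))$ on $\sl^\Ed$ (see Section~\ref{further}), one has $s^{-1}(0)=Z_{\mathrm{unip}}$.

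Next, I would verify the transversality assumption of the theorem. The $|\Ed|$ gradients of $\det A_e-1$ span the conormal directions to $\sl^\Ed\subset(\mathbb C^{2\times 2})^\Ed$ everywhere on $\sl^\Ed$, so it suffices to show that the differentials of $\Tr(\h(\mathbf A,(\lambda_i)))-2$ restricted to $\sl^\Ed$ are linearly independent at a generic point of $Z_{\mathrm{unip}}$. This is straightforward: choosing a semisimple representation $\rho\in Z_{\mathrm{unip}}$ whose holonomies around different punctures $\lambda_i$ depend effectively on disjoint subsets of edges in some spanning tree of $\G$ provides independent gradients. The hypothesis $R>\sqrt 2$ guarantees that $\mathbf{Id}\in\mathbb B_R$, so that $Z_{\mathrm{unip}}\cap\mathbb B_R$ is non-empty and contains a Zariski-open subset of $Z_{\mathrm{unip}}$ (where such generic points live). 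Consequently $Z_{\mathrm{unip}}$ is of codimension $r$ and $Z_{\mathrm{unip}}^{\mathrm{sing}}$ is an algebraic subvariety of $Z_{\mathrm{unip}}$ of positive codimension.

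The main obstacle is to establish the a priori bound
\[
I(f)\ :=\ \int_{Z_{\mathrm{unip}}^{\mathrm{reg}}\cap\mathbb B_R}|f|^2\,|\Lambda^r ds|^{-2}\,\omega^{k-r}\ \le\ C(R,\G)\cdot\|f\|_{L^\infty(Z_{\mathrm{unip}}^{\mathrm{reg}}\cap\mathbb B_R)}^2
\]
with $C(R,\G)$ independent of $f$. Since $f$ is bounded, this reduces to integrability of $|\Lambda^r ds|^{-2}$ with respect to $\omega^{k-r}$ on $Z_{\mathrm{unip}}^{\mathrm{reg}}\cap\mathbb B_R$. The integrand blows up near $Z_{\mathrm{unip}}^{\mathrm{sing}}$ (for instance at $\mathbf A=\mathbf{Id}$, which is a critical point of every functional $\Tr(\h(\cdot,(\lambda_i)))$). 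Nevertheless, since $s$ is polynomial and $Z_{\mathrm{unip}}^{\mathrm{sing}}$ is a closed algebraic subvariety of positive codimension in $Z_{\mathrm{unip}}$, the blow-up rate of $|\Lambda^r ds|^{-2}$ is polynomial, and the integrability on the compact set $Z_{\mathrm{unip}}^{\mathrm{reg}}\cap\overline{\mathbb B_R}$ follows from a standard \L{}ojasiewicz-type estimate (either via a resolution of singularities of $Z_{\mathrm{unip}}$ or by a direct local analysis at generic points of $Z_{\mathrm{unip}}^{\mathrm{sing}}$).

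Once integrability is established, Theorem~\ref{Demailly_BR} yields a holomorphic extension $F:\mathbb B_R\to\mathbb C$ of $f$ satisfying
\[
\int_{\mathbb B_R}|F|^2\,(-|s|^r\log|s|)^{-2}\,\omega^k\ \le\ C_{r,k}\cdot I(f).
\]
To conclude, note that the function $t\mapsto -t^r\log t$ is increasing on $(0,e^{-1}]$ (its derivative is $-t^{r-1}(r\log t+1)>0$ for $t<e^{-1/r}$, and $e^{-1}\le e^{-1/r}$ as $r\ge 1$); hence the constraint $|s(\mathbf A)|\le e^{-1}$ implies $-|s|^r\log|s|\le e^{-r}$ everywhere on $\mathbb B_R\setminus Z_{\mathrm{unip}}$, so that $(-|s|^r\log|s|)^{-2}\ge e^{2r}>0$. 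Combining this pointwise lower bound with the two displays above gives
\[
\|F\|_{L^2(\mathbb B_R)}^2\ \le\ e^{-2r}C_{r,k}\cdot I(f)\ \le\ \mathrm{const}(R,\G)\cdot\|f\|_{L^\infty(Z_{\mathrm{unip}}^{\mathrm{reg}}\cap\mathbb B_R)}^2,
\]
which is the desired estimate.
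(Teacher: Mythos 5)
Your overall strategy is the same as the paper's: apply Theorem~\ref{Demailly_BR} to the map $s=((\det A_e-1)_{e},(\Tr\h^\vee(\cdot,(\lambda_i))-2)_i)$, after a rescaling so that $|s|\le e^{-1}$, and then discard the weight $(-|s|^r\log|s|)^{-2}\ge e^{2r}$ to pass from the weighted estimate of the theorem to the plain $L^2(\mathbb B_R)$ bound. The bookkeeping with the scaling constant and the monotonicity of $t\mapsto -t^r\log t$ on $(0,e^{-1}]$ is correct. The generic-rank verification is also fine in spirit, though note that ``semisimple'' is an odd word to use here: on $Z_{\mathrm{unip}}$ every $\h(\mathbf A,(\lambda_i))$ is unipotent, and what you actually need is a point where all holonomies are nontrivial unipotent elements (this is indeed generic).

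However, there is a genuine gap at the crux of the argument, namely the claim that the local integrability of $|\Lambda^r ds|^{-2}$ on $Z_{\mathrm{unip}}^{\mathrm{reg}}$ near $Z_{\mathrm{unip}}^{\mathrm{sing}}$ ``follows from a standard \L{}ojasiewicz-type estimate'' because the blow-up is polynomial and $Z_{\mathrm{unip}}^{\mathrm{sing}}$ has positive codimension. This heuristic is false as stated: a polynomial blow-up along a positive-codimension set need not be integrable (consider $(|z|^2+|w|^2)^{-2}$ on $\mathbb C^2$, which blows up polynomially at the codimension-two origin yet is not locally integrable). The integrability depends on the precise rate of vanishing of $\Lambda^r ds$ versus the dimension of the blow-up locus within $Z_{\mathrm{unip}}$, and this has to be checked for the specific singularity at hand. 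The paper does exactly this: it passes to adapted coordinates via a spanning tree $T$ of $\G$, which reduces the question to a single copy of the unipotent cone $\{\Tr B=2,\det B=1\}\subset\sl$, and there it writes $B$ in the form $t\,\Id + i(x\sigma_1+y\sigma_2+z\sigma_3)$-type coordinates and verifies directly that
\[
\int_{t=1,\ x^2+y^2+z^2=0,\ |x|,|y|,|z|\le 1}\ \frac{\omega\wedge\omega}{|x|^2+|y|^2+|z|^2}\ <\ \infty,
\]
which holds because the cone has real dimension four while the integrand has only a second-order pole at the origin. Without that (or an equivalent) computation, your argument does not establish that the hypothesis of Manivel's theorem is satisfied, so the extension theorem cannot actually be invoked.
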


\begin{proof} Denote~$s:=((F_{\varnothing, \mathbf n^e}-1)_{e\in \Ed},(F_{\lloop{\lambda_i}}-2)_{i=1}^n):(\mathbb C^{2\times 2})^\Ed \to \mathbb C^{|\Ed|+n}$ and fix a small positive constant $c(R,\G)$ so that~$c(R,\G)\cdot |s|\le e^{-1}$ on~$\mathbb B_R$. Due to Theorem~\ref{Demailly_BR}, the function~$f$ admits a holomorphic extension~$F$ from~$Z_{\mathrm{unip}}^{\mathrm{reg}}\cap\mathbb B_R$ to~$\mathbb B_R$ such that
\[
\|F\|_{L^2(\mathbb B_R)}^2\ \leq\ C_{|\Ed|+n,4|\Ed|}\cdot c(R,\G)^{-2(|\Ed|+n)}\cdot \int_{Z_{\mathrm{unip}}\cap \mathbb B_R}|\Lambda^r ds|^{-2}\omega^{k-r}\cdot \|f\|_{L^{\infty}(Z_{\mathrm{unip}}^{\mathrm{reg}}\cap \mathbb B_R)}^2.
\]
Therefore, it is enough to check that~$|\Lambda^rds|^{-2}\in L^1_{\mathrm{loc}}(Z_{\mathrm{unip}})$.

Let~$T\subset\Ed$, $|T|=n$, be a spanning tree (on $n+1$ vertices~$\partial\Omega,\lambda_1,\dots,\lambda_n$) of the triangulation~$\G$ such that the graph $T\setminus\{\lambda_k,\ldots,\lambda_n\}$ is connected for all~$k=n,\dots,1$. Note that the mapping
\begin{equation}
\label{eq:new_coordinates}
\mathbf A=(A_e)_{e\in\Ed}\ \mapsto\ ((A_e)_{e\in\Ed\setminus T},(B_i)_{i=1}^n),\qquad B_i:=\h^\vee(\mathbf A, \lloop{\lambda_i}),
\end{equation}
is a smooth bijection in a vicinity of~$\sl^\Ed\cong \sl^{(\Ed\setminus T)\cup\{1,...,n\}}$: one can iteratively reconstruct all the missing matrices~$(A_e)_{e\in T}$ from the holonomies~$B_i$, $i=n,\dots,1$. 
To compute~$\Lambda^r ds$, we can view the mapping~$s$ as acting coordinate-wise in the new coordinates $((A_e)_{e\in\Ed\setminus T},(B_i)_{i=1}^n)$:
\[
A_e\mapsto \det A_e-1,\qquad B_i\mapsto s_0(B_i):=(\det B_i-1,\Tr B_i-2).
\]
(More accurately, one should multiply~$\det B_i$ by several factors~$(\det A_{e})^{-1}$ corresponding to the edges incident to~$\lambda_i$ and possibly by factors~$(\det B_j)^{-1}$, $j>i$, coming from earlier steps of the reconstruction of~$(A_e)_{e\in T}$, but all these additional factors do not affect~$\Lambda^r ds$ on~$\sl^\Ed$). As the gradient of~$A\mapsto \det A$ does not vanish on~$\sl$ there is nothing to check for the first coordinates~$(A_e)_{e\in\Ed\setminus T}$. For the coordinates~$(B_i)_{i=1}^n$, the only degeneracy of~$ds_0\wedge ds_0$ on~$s_0^{-1}(0)$ is at~$B=\Id$. Writing
\[
B = \begin{pmatrix} t - ix & y+iz \\ -y+iz &  t+ix \end{pmatrix},\qquad t,x,y,z\in\mathbb C,
\]
$\omega = \frac{i}{4}(dt\wedge d\overline{t}+dx\wedge d\overline{x} + dy\wedge d\overline{y} + dz\wedge d\overline{z})$, it remains to check that
\[
    \int\nolimits_{{t=1,\ x^2+y^2+z^2=0,\ |x|,|y|,|z|\le 1}}\ \frac{\omega\wedge \omega}{|x|^2 + |y|^2 + |z|^2}\ <\ \infty,
\]
which is straightforward.
\end{proof}

\begin{rem}
\label{rem:appliedDemailly}
It is easy to see that Proposition~\ref{appliedDemailly} remains true (with exactly the same proof) if one replaces the variety~$Z_\mathrm{unip}$ by
\begin{equation}
\label{eq:Zk-def}
Z_k:=\{\mathbf A\in \sl^{\Ed}\,\mid\,\Tr(\h(\mathbf A,\lloop{\lambda_i}))=2~\text{for all}~i=1,\dots,k\}
\end{equation}
(note that~$Z_{\mathrm{unip}}=Z_n\subset Z_{n-1}\subset \dots \subset Z_1\subset Z_0=\sl^\Ed$).
\end{rem}

\subsection{Nullstellensatz for holomorphic functions vanishing on~$\bm{Z_\mathrm{unip}\cap\mathbb{B}_R}$} \label{sub:Nullstellensatz} The goal of this section is to prove the following analogue of Hilbert's Nullstellensatz for holomorphic functions on~$\mathbb B_R$.
\begin{prop}\label{explicitNullstellensatz}
Let $R>R'>\sqrt{2}$ and a holomorphic function $F: \mathbb B_R\to \mathbb C$ vanishes on the set $Z_\mathrm{unip}\cap \mathbb B_R$. Then there exist bounded holomorphic functions $H_e:\mathbb B_{R'}\to \mathbb C$, $e\in \Ed$, and $H_i: \mathbb B_{R'}\to \mathbb C$, $i = 1,\dots, n$, such that
\begin{equation*}
  F(\mathbf A) = \sum\limits_{e\in \Ed}H_e(\mathbf A)(F_{\varnothing, \mathbf n^e}(\mathbf A) - 1) + \sum\limits_{i = 1}^nH_i(\mathbf A)(F_{\lloop{\lambda_i}}(\mathbf A) - 2)\quad \text{for all}\ \ \mathbf A\in\mathbb B_{R'}.
\end{equation*}
\end{prop}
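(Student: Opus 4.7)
My plan is to prove this analytic Nullstellensatz by stripping off the defining equations of $Z_{\mathrm{unip}}$ one at a time, applying Manivel's Ohsawa--Takegoshi-type extension theorem (Remark~\ref{rem:appliedDemailly}) to move from the regular locus of an intermediate variety back to the ambient poly-ball at each step. Using the filtration $Z_\mathrm{unip}=Z_n\subset Z_{n-1}\subset\cdots\subset Z_0=\sl^\Ed$ and a decreasing sequence $R=R_0>R_1>\cdots>R_{n+|\Ed|}=R'$ of radii, I set $F^{(n)}:=F$ and aim to construct inductively bounded holomorphic functions $H_k$ and $F^{(k-1)}$ on $\mathbb B_{R_{n+1-k}}$ so that
\[
F^{(k)}\;=\;H_k\cdot(F_{(\lambda_k)}-2)+F^{(k-1)},\qquad F^{(k-1)}\big|_{Z_{k-1}\cap\mathbb B_{R_{n+1-k}}}=0.
\]
After $n$ such steps $F^{(0)}$ vanishes on $\sl^\Ed$, and $|\Ed|$ further analogous iterations (stripping off $F_{\varnothing,\mathbf n^e}-1=\det A_e-1$ from a function vanishing on the smooth complete intersection $\sl^\Ed$) produce the coefficients $H_e$.

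Each inductive step consists of a division followed by an extension. The algebraic input for the division is that the ideal $(\Tr B-2)$ is \emph{prime} in the coordinate ring of $\sl$: eliminating $d=2-a$ from $\mathbb C[a,b,c,d]/(ad-bc-1)$ one obtains $\mathbb C[a',b,c]/((a')^2+bc)$ with $a':=a-1$, which is a domain (equivalently, Cayley--Hamilton gives $(B-\Id)^2=-(\Tr B-2)B$ on $\sl$, so the scheme-theoretic zero set of $\Tr B-2$ on $\sl$ coincides with the reduced unipotent cone). Working in the biholomorphic coordinates $((A_e)_{e\in\Ed\setminus T},(B_i))$ of Proposition~\ref{appliedDemailly} localises the geometry: near $\sl^\Ed$ each intermediate variety $Z_{k-1}$ is a product of smooth $\sl$-factors and $k-1$ copies of the unipotent cone, so primality of $(\Tr B-2)$ transfers to the product and, via the local holomorphic Nullstellensatz, shows that $F^{(k)}/(F_{(\lambda_k)}-2)$ extends to a holomorphic function on $Z_{k-1}^{\mathrm{reg}}\cap\mathbb B_{R_{n-k}}$. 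A slight shrinking of the radius renders this quotient bounded on $Z_{k-1}^{\mathrm{reg}}\cap\mathbb B_{R_{n+1-k}}$ by a maximum-principle argument, and Remark~\ref{rem:appliedDemailly} then furnishes the extension $H_k$ to the full poly-ball (the $L^2$-bound of Manivel's theorem is upgraded to an $L^\infty$-bound on a still slightly smaller ball via the mean value property for holomorphic functions). The second phase, stripping off the determinant equations, is similar but easier because each hypersurface $\{\det A_e=1\}$ is smooth and transverse to the others (involving a different block of variables).

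The main obstacle is justifying the boundedness of $F^{(k)}/(F_{(\lambda_k)}-2)$ along the singular stratum of $Z_{k-1}$, i.e., at points where some $B_j=\Id$ for $j<k$. This is precisely where the product coordinates of Proposition~\ref{appliedDemailly} become essential: the local factorisation reduces the issue to a single-factor analysis on a unipotent cone whose only singular point is the vertex, and the primality of $(\Tr B-2)$ together with the identity $(B-\Id)^2=-(\Tr B-2)B$ provides enough local control to bound the quotient on the regular stratum. The need to shrink the radius by a small but definite amount at each of the $n+|\Ed|$ iterations is exactly what forces the conclusion to hold on $\mathbb B_{R'}$ for any $R'<R$ rather than on $\mathbb B_R$ itself.
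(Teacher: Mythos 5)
Your proof follows the same overall strategy as the paper's: the filtration $Z_{\mathrm{unip}}=Z_n\subset\cdots\subset Z_0=\sl^\Ed$, the inductive stripping of one defining equation at a time with a decreasing sequence of radii, the use of the product coordinates~\eqref{eq:new_coordinates} to decouple the $B_i$ variables, and Manivel's extension (Remark~\ref{rem:appliedDemailly}) to pass from $Z_{k-1}^{\mathrm{reg}}$ back to the ambient poly-ball at each step; the second phase, for the smooth $\det$-equations, is also handled identically. The one genuine difference is how you justify the division step. The paper isolates this as Lemma~\ref{lem:divison_is_OK} and proves it elementarily: away from $\Id$ the hypersurface $\{\Tr B=2\}\subset\sl$ is smooth with $dt\ne 0$, so the quotient is holomorphic by the implicit function theorem, while near $\Id$ (a single point in a three-dimensional complex manifold, so of codimension $\ge 2$) the Hartogs extension theorem fills in the puncture. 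You instead argue algebraically, showing $(\Tr B-2)$ is prime in $\mathbb C[\sl]$ by elimination to $\mathbb C[a',b,c]/((a')^2+bc)$, transferring reducedness along the product coordinates, and invoking the local holomorphic (R\"uckert) Nullstellensatz. Both routes are valid and roughly comparable in effort; the paper's Hartogs trick avoids any explicit appeal to local analytic algebra, whereas your primality argument names the algebraic fact that makes the division possible. One small slip: on $\sl$, Cayley--Hamilton gives $(B-\Id)^2=(\Tr B-2)B$, not $-(\Tr B-2)B$; this is used only as motivation, and your elimination computation is what actually establishes reducedness, so the error is harmless. Finally, your treatment of the boundedness of the quotient along the singular stratum of $Z_{k-1}$ (and the joint holomorphy across $\{t_k=0\}$) is left at about the same level of informality as in the paper, where Lemma~\ref{lem:divison_is_OK} is a single-variable statement applied fibre-wise; neither account spells out the passage from fibre-wise to joint holomorphy, though it is standard.
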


\begin{rem} (i) It is worth noting that, if~$F$ were a polynomial, then the result would follow from the classical Hilbert's Nullstellensatz as one can easily check that the radical of the ideal~$\CMcal{I}=\langle(\det A_e -1)_{e\in\Ed},(\Tr \h^\vee (\mathbf A,\lloop{\lambda_i})-2)_{i=1}^n\rangle$ coincides with~$\CMcal{I}$.

\noindent (ii) Although one can prove Proposition~\ref{explicitNullstellensatz} by means of algebraic geometry, below we prefer to take advantage of the analytic tools already introduced in the previous section
(namely, Proposition~\ref{appliedDemailly} and Remark~\ref{rem:appliedDemailly}) and the following simple lemma.
\end{rem}

\begin{lemma}
\label{lem:divison_is_OK}
Let~$t(B):=\Tr B -2$,~$Y:=\{\,B\in\sl\,\mid\,t(B)=0\,\}$, and $V\subset\sl$ be an open subset. If a holomorphic function~$f:V\to\mathbb C$ vanishes on $Y\cap V$, then the ratio~$f/t$ is a locally bounded holomorphic function on~$V$.
\end{lemma}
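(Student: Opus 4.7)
The statement is local and holds trivially off $Y$, so the plan is to verify holomorphicity of $f/t$ on $Y\cap V$, splitting into the smooth and singular loci of $Y$.

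First I would check that $Y$ is smooth everywhere on $\sl$ except at $B=\Id$. Working in the local chart of $\sl$ given by the entries $a,b,c$ with $d=(1+bc)/a$, one has $t=a^{-1}\bigl((a-1)^{2}+bc\bigr)$, whose differential $dt$ vanishes exactly at $B=\Id$ (and one can also see this abstractly: the conjugation orbits in $\sl$ are the level sets of $t$, the unipotent orbit through $\Id$ consists of $\{\Id\}$ together with the regular unipotent orbit, and the latter is $2$-dimensional while generic level sets have dimension $2$, the orbit $\{\Id\}$ being a singular fiber). At every other point of $Y$, the holomorphic implicit function theorem lets me straighten $t$ to a coordinate function, so the Weierstrass division $f = t\cdot h + r(y)$ with $r$ holomorphic in the transverse coordinates forces $r\equiv 0$ (since $f$ vanishes on $\{t=0\}$) and gives $f/t=h$ holomorphic in a neighborhood.

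It remains to handle the singular point $B=\Id$, which is where essentially all the content is. Here $t$ vanishes to second order and $Y$ has an isolated quadric cone singularity, so I cannot just straighten $t$. Setting $u=a-1$, the factor $a^{-1}=(1+u)^{-1}$ is a holomorphic unit near the origin and $g(u,b,c):=u^{2}+bc$ is a Weierstrass polynomial in $u$ of degree $2$ with coefficients in $\mathbb{C}\{b,c\}$. By the Weierstrass division theorem, any holomorphic germ $f$ at $\Id$ admits a unique decomposition
\[
f(u,b,c)\ =\ g(u,b,c)\,q(u,b,c)\,+\,r_{0}(b,c)\,+\,r_{1}(b,c)\,u
\]
with $q,r_{0},r_{1}$ holomorphic. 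The key computation is that the hypothesis $f|_{Y}=0$ forces $r_{0}=r_{1}=0$: evaluating the identity along the two branches $u=\pm\sqrt{-bc}$ of $\{g=0\}$ and adding/subtracting yields $r_{0}(b,c)=0$ for all $(b,c)$ and $r_{1}(b,c)\sqrt{-bc}=0$ generically, hence $r_{1}\equiv 0$ by continuity. Therefore $f=gq$ and
\[
\frac{f}{t}\ =\ (1+u)\,q(u,b,c)
\]
is holomorphic (in particular locally bounded) at $\Id$.

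Combining the two cases, $f/t$ is a holomorphic function on $V$. The main obstacle, as outlined, is the singular point $\Id$, but it is defused precisely because $g=u^{2}+bc$ is a reduced (irreducible, squarefree) Weierstrass polynomial, so the analytic Nullstellensatz for this hypersurface is explicit and elementary.
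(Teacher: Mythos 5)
Your proof is correct, but it handles the singular point $\Id$ by a genuinely different route than the paper. You both treat the smooth locus the same way: since $dt\neq 0$ there, $t$ straightens to a coordinate and $f$ factors as $t\cdot h$. At $\Id$, however, the paper simply observes that $\{\Id\}$ is an isolated point (hence an analytic set of codimension $3\ge 2$) in the $3$-dimensional manifold $\sl$, so the holomorphic function $f/t$ on $V\setminus\{\Id\}$ extends across $\Id$ by the Hartogs extension theorem, with no computation at all. You instead do the Weierstrass division $f = (u^2+bc)q + r_0(b,c) + r_1(b,c)u$ explicitly and kill the remainder by evaluating on the two branches $u=\pm\sqrt{-bc}$ of the cone, which is a hands-on verification of the analytic Nullstellensatz for this particular reduced quadric. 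Your argument is more elementary in the sense that it avoids invoking Hartogs/Riemann removable-singularity theory, and it yields the explicit quotient $f/t=(1+u)q$; the paper's argument is shorter and would work verbatim for any hypersurface in $\sl$ with isolated singularities, without needing to know the local normal form. Both are complete proofs.

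Two minor remarks: the phrase ``Weierstrass division $f=t\cdot h + r(y)$'' in your smooth-locus step is a slight abuse (after straightening, one just Taylor-expands in the $t$-coordinate), and in checking that $\Id$ is the only singular point you should also mention the chart where $a=0$ (or use the coordinate-free observation that $dt\big|_{T_B\sl}=0$ forces $B=\pm\Id$), though this does not affect the argument.
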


\begin{proof}
If $\Id\notin V$, this fact easily follows from the local analysis since $Y\cap V$ is smooth in $V$ and $dt$ does not vanish there. In the opposite case, let~$U\Subset V$ be a small neighborhood of~$\Id\in V$. The same local analysis implies that the ratio~$f/t$ can be viewed as a holomorphic function on~$V\setminus U$ and the Hartogs extension theorem guarantees the existence of its holomorphic extension in the whole~$V$.
\end{proof}

\begin{proof}[Proof of Proposition~\ref{explicitNullstellensatz}] Let $Z_{\mathrm{unip}}=Z_n\subset Z_{n-1}\subset \ldots \subset Z_1\subset Z_0=\sl^\Ed$ be given by~\eqref{eq:Zk-def} and~$B_i:=\h(\mathbf A,\lloop{\lambda_i})$ as in the proof of Proposition~\ref{appliedDemailly}; recall that each~$Z_k$ is cut off by the equations~$t_1=\ldots=t_k=0$, where~$t_i:=\Tr B_i-2$.

Consider the meromorphic function~$F/t_n$ on~$Z_{n-1}\cap \mathbb B_R$ and recall the coordinates~\eqref{eq:new_coordinates} from the proof of Proposition~\ref{appliedDemailly}. If one fixes all~$(A_e)_{e\in\Ed\setminus T}$ and~$(B_i)_{i=1}^{n-1}$, then Lemma~\ref{lem:divison_is_OK} guarantees that the ratio~$F/t_n$ is a holomorphic function of the last coordinate~$B_n$. Therefore, there exists a holomorphic function~$h_n:Z_{n-1}\cap\mathbb B_R\to \mathbb C$ such that~$F=t_n\cdot h_n$ on~$Z_{n-1}\cap\mathbb B_R$. Due to Remark~\ref{rem:appliedDemailly}, for each~$R_{n-1}<R$ one can find a holomorphic extension~$H_n:\mathbb B_{R_{n-1}}\to\mathbb C$ of the (bounded on~$Z_{n-1}\cap B_{R_{n-1}}$) function~$h_n$. As~$t(B_n)=\Tr\h(\mathbf A,\lloop{\lambda_n})-2$ on~$\sl^\Ed$, one has
\[
F_{n-1}(\mathbf A):=F(\mathbf A)-H_n(\mathbf A)(F_{\lloop{\lambda_n}}(\mathbf A) -2)=0\quad \text{for all}\ \ \mathbf A\in Z_{n-1}\cap\mathbb B_{R_{n-1}}
\]
and one can iterate this procedure considering the ratio~$F_{n-1}/t_{n-1}$ on~$Z_{n-2}\cap \mathbb B_{R_{n-2}}$ with~$R_{n-2}<R_{n-1}<R_n$, etc. After~$n$ steps, one obtains the existence of holomorphic functions~$H_i:\mathbb B_{R_0}\to\mathbb C$ such that
\[
F_0(\mathbf A):=F(\mathbf A)-\sum_{i=1}^nH_i(\mathbf A)(F_{\lloop{\lambda_i}}(\mathbf A) -2)=0\quad \text{for all}\ \ \mathbf A\in \sl^\Ed\cap\mathbb B_{R_0},
\]
where~$R_0$ can be chosen so that~$R'<R_0<R$.

In order to complete the proof and to construct the required functions~$(H_e)_{e\in\Ed}$ in addition to~$(H_i)_{i=1}^n$ one simply repeats the same arguments for the collection of functions $(\det A_e-1)_{e\in\Ed}$ instead of~$(t_i)_{i=1}^n$. Since the manifold~$\sl^\Ed\subset (\mathbb C^{2\times 2})^\Ed$ is smooth, there is even no need in an analogue of Lemma~\ref{lem:divison_is_OK} along this procedure.
\end{proof}

\subsection{Expansions of holomorphic functions on~$\bm{X_{\mathrm{unip}}}$ via~$\bm{f_\Gamma}$, $\bm{\Gamma}$ -- macroscopic} \label{sub:Expansions_macro} \bl{Recall that the mapping~$\phi:\mathbf A\mapsto \h(\mathbf A,\cdot)$} is given by~\eqref{eq:triangularMap} and let
\[
\mathbb D_R:=\phi(Z_{\mathrm{unip}}\cap {\mathbb B}_R)\subset X_{\mathrm{unip}}\,.
\]

\begin{theorem}\label{main_theorem_actually}
Given~$r>\sqrt{2}$, denote~$R:=5\eta_0^{-1}r$. Let~$f:\overline{\mathbb D}_R\to\mathbb C$ be a bounded holomorphic function on~$\mathbb D_R$. Then, there exist coefficients~$p_\Gamma$ indexed by macroscopic laminations~$\Gamma$ such that the following holds:
\begin{equation}
\label{eq:PG-bound}
|p_\Gamma|\ \le\ r^{-|\Gamma|}\cdot \mathrm{const}(r,\G)\cdot \|f\|_{L^{\infty}(\mathbb D_R)}
\end{equation}
for all~$\Gamma$ and
\begin{equation}
\label{eq:pFg-conv}
  f(\rho) \ =\!\! \sum_{\Gamma~-~\mathrm{macroscopic}} p_{\Gamma}f_{\Gamma}(\rho)\quad \text{for all}\ \ \rho\in \mathbb D_r.
\end{equation}
Moreover, this expansion is unique provided that~$\frac{1}{5}\eta_0r>\sqrt{2}$ and~$p_\Gamma=O_{|\Gamma|\to\infty}(r^{-|\Gamma|})$.
\end{theorem}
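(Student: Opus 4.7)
The plan is to handle existence and uniqueness separately, both by passing between $\mathbb D_R\subset X_{\mathrm{unip}}$ and the Euclidean poly-ball $\mathbb B_R\subset(\mathbb C^{2\times 2})^\Ed$ via the map $\phi$, with $\mathbb D_R=\phi(Z_{\mathrm{unip}}\cap\mathbb B_R)$.

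\emph{Existence.} First I would pull $f$ back to $\widetilde f:=f\circ\phi$ on $Z_{\mathrm{unip}}\cap\mathbb B_R$ and apply Proposition~\ref{appliedDemailly} to obtain a holomorphic extension $F\colon\mathbb B_R\to\mathbb C$ with $\|F\|_{L^2(\mathbb B_R)}\le\mathrm{const}(R,\G)\|f\|_{L^\infty(\mathbb D_R)}$. Replacing $F$ by its $\su^\Fa$-average still extends $\widetilde f$, since $\widetilde f$ is already $\su^\Fa$-invariant (as $f$ is $\sl$-invariant and $\phi$ intertwines the actions) and $\mathbb B_R$, $Z_{\mathrm{unip}}$ are $\su^\Fa$-stable. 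Since $R=5\eta_0^{-1}r$, Lemma~\ref{lem:estimate-of-coeff} yields
\begin{equation*}
F(\mathbf A)=\sum_{\Gamma,\mathbf m}p_{\Gamma,\mathbf m}F_{\Gamma,\mathbf m}(\mathbf A)\quad\text{on}\ \ \mathbb B_r,\qquad|p_{\Gamma,\mathbf m}|\le r^{-(|\Gamma|+2|\mathbf m|+4|\Ed|)}\|F\|_{L^2(\mathbb B_R)}.
\end{equation*}
Restricting to $Z_{\mathrm{unip}}\cap\mathbb B_r$ kills the $\det A_e$ factors (which equal $1$ on $\sl$) and, using $\Tr\h(\mathbf A,(\lambda_i))=2$ on $Z_{\mathrm{unip}}$, collapses $F_{\Gamma'\sqcup(\lambda_{\mathbf k})}|_{Z_{\mathrm{unip}}}=2^{|\mathbf k|}F_{\Gamma'}|_{Z_{\mathrm{unip}}}$. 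Grouping by the macroscopic part then yields $f=\sum_{\Gamma'-\mathrm{macro}}p_{\Gamma'}f_{\Gamma'}$ on $\mathbb D_r$ with
\begin{equation*}
p_{\Gamma'}:=\sum_{\mathbf k\in\mathbb Z_{\ge 0}^n,\,\mathbf m\in\mathbb Z_{\ge 0}^\Ed}2^{|\mathbf k|}\,p_{\Gamma'\sqcup(\lambda_{\mathbf k}),\mathbf m}.
\end{equation*}
Estimate~\eqref{eq:PG-bound} follows from the geometric bounds $\sum_{\mathbf m}r^{-2|\mathbf m|}=(1-r^{-2})^{-|\Ed|}$ and $\sum_{\mathbf k}2^{|\mathbf k|}r^{-\sum_ik_id_i}=\prod_i(1-2r^{-d_i})^{-1}$, where $d_i:=|(\lambda_i)|\ge 2$ is the degree of the puncture $\lambda_i$ in $\G$; the hypothesis $r>\sqrt 2$ ensures $2r^{-d_i}<1$.

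\emph{Uniqueness.} Suppose $\sum_\Gamma d_\Gamma f_\Gamma=0$ on $\mathbb D_r$ with $d_\Gamma=O(r^{-|\Gamma|})$ and $\tfrac{1}{5}\eta_0r>\sqrt 2$. By Lemma~\ref{lem:Taylor-in-F}(i), the series $\widetilde F:=\sum d_\Gamma F_\Gamma$ converges to a holomorphic function on $\mathbb B_r$ which, via the identification $F_\Gamma|_{\sl^\Ed}=\phi^*(f_\Gamma)$, vanishes on $Z_{\mathrm{unip}}\cap\mathbb B_r$. Pick $r_1\in(5\sqrt 2/\eta_0,\,r)$; Proposition~\ref{explicitNullstellensatz} furnishes bounded holomorphic $H_e,H_i\colon\mathbb B_{r_1}\to\mathbb C$ with
\begin{equation*}
\widetilde F=\sum_{e\in\Ed}H_e\cdot(F_{\varnothing,\mathbf n^e}-1)+\sum_{i=1}^nH_i\cdot(F_{(\lambda_i)}-2).
\end{equation*}
Averaging $H_e,H_i$ over $\su^\Fa$ (all other ingredients being already invariant) preserves the identity, after which Lemma~\ref{lem:estimate-of-coeff} applied with radii $(r_1,r_2)$, $r_2:=\eta_0r_1/5>\sqrt 2$, yields super-exponentially decaying expansions $H_e=\sum q_e^{\Delta,\mathbf m}F_{\Delta,\mathbf m}$, $H_i=\sum q_i^{\Delta,\mathbf m}F_{\Delta,\mathbf m}$ on $\mathbb B_{r_2}$.

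The algebraic identities $F_{\Delta,\mathbf m}(F_{\varnothing,\mathbf n^e}-1)=F_{\Delta,\mathbf m+\mathbf n^e}-F_{\Delta,\mathbf m}$ and $F_{\Delta,\mathbf m}(F_{(\lambda_i)}-2)=F_{\Delta\sqcup(\lambda_i),\mathbf m}-2F_{\Delta,\mathbf m}$ show that every basis element appearing on the right-hand side shares the macroscopic part of~$\Delta$. Lemma~\ref{lem:Taylor-in-F}(ii) then licenses separation of the identity by macroscopic parts: for each macroscopic $\Gamma$, collecting all basis elements whose macroscopic part equals~$\Gamma$ yields
\begin{equation*}
d_\Gamma F_\Gamma\,=\,\sum_{e,\mathbf k,\mathbf m}q_e^{\Gamma\sqcup(\lambda_{\mathbf k}),\mathbf m}F_{\Gamma\sqcup(\lambda_{\mathbf k}),\mathbf m}(F_{\varnothing,\mathbf n^e}-1)\,+\,\sum_{i,\mathbf k,\mathbf m}q_i^{\Gamma\sqcup(\lambda_{\mathbf k}),\mathbf m}F_{\Gamma\sqcup(\lambda_{\mathbf k}),\mathbf m}(F_{(\lambda_i)}-2)
\end{equation*}
as a convergent identity of holomorphic functions on $\mathbb B_{r_2}$. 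Evaluation at $\mathbf A=\mathbf{Id}$ annihilates every summand on the right (each factor $F_{\varnothing,\mathbf n^e}-1$ and $F_{(\lambda_i)}-2$ vanishes at the identity), while the left equals $d_\Gamma\cdot 2^{\#\mathrm{loops}(\Gamma)}\ne 0$ unless $d_\Gamma=0$. Hence $d_\Gamma=0$ for every macroscopic $\Gamma$.

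The main obstacle I anticipate lies in the uniqueness step: justifying the separation of the Nullstellensatz-expanded identity by macroscopic part as a bona fide convergent identity on~$\mathbb B_{r_2}$, rather than a merely formal rearrangement. This hinges on the super-exponential decay of $q_e^{\cdot,\cdot},q_i^{\cdot,\cdot}$ (making all rearrangements absolutely convergent) together with the key observation that multiplication by $F_{(\lambda_i)}-2$ or $F_{\varnothing,\mathbf n^e}-1$ preserves the macroscopic part of each basis function. The existence part is more mechanical, its only delicate point being that $d_i=|(\lambda_i)|\ge 2$ makes the series defining $p_{\Gamma'}$ converge precisely under the standing hypothesis $r>\sqrt 2$.
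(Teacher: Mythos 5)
Your proof is correct and follows the paper's own argument in all essentials: Proposition~\ref{appliedDemailly} for extension, $\su^\Fa$-averaging, Lemma~\ref{lem:estimate-of-coeff} for the $F_{\Gamma,\mathbf m}$-expansion with exponential coefficient bounds, collapse on $Z_{\mathrm{unip}}$ via $\det A_e=1$ and $F_{(\lambda_i)}=2$ for existence; Proposition~\ref{explicitNullstellensatz}, re-expansion of the $H_\alpha$'s via Lemma~\ref{lem:estimate-of-coeff}, separation by macroscopic part licensed by Lemma~\ref{lem:Taylor-in-F}(ii), and evaluation at $\mathbf A=\mathbf{Id}$ for uniqueness. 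The one cosmetic deviation (phrasing the geometric-series estimate through $d_i:=|(\lambda_i)|\ge 2$) is equivalent to the paper's bound $|\Gamma\sqcup(\lambda_{\mathbf k})|\ge|\Gamma|+2|\mathbf k|$.
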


\begin{proof}
To prove the existence of coefficients~$p_\Gamma$ consider a holomorphic function~$F:\mathbb B_R\to\mathbb C$ obtained by applying Proposition~\ref{appliedDemailly} to the function $f\circ\phi:Z_{\mathrm{unip}}\cap \mathbb B_R\to\mathbb C$. Recall that one has
\[
\|F\|_{L^2(\mathbb B_R)}\ \le\ \mathrm{const}(R,\G)\cdot \|f\|_{L^{\infty}(\mathbb D_R)}.
\]
Note that we can in addition assume that the function~$F$ is invariant under the action of $\su^\Fa$ on~$\mathbb B_R\subset \sl^\Ed$. Indeed, one can always replace~$F$ by its average~$\lan F\ran_{\su^{\Fa}}$ over the orbits of this action: the norm of~$F$ in $L^2(\mathbb B_R)$ does not increase under this averaging due to~\eqref{eq:productOfMeasures} and one still has~$\lan F\ran_{\su^{\Fa}}=f\circ\phi$ on~$Z_{\mathrm{unip}}\cap\mathbb B_R$ since the mapping~$\phi$ is invariant under the action of~$\sl^\Fa$.

Thus we can use Lemma~\ref{lem:estimate-of-coeff} to expand~$F$ in~$\mathbb B_r$ as
\[
\textstyle F(\mathbf A) = \sum_{\Gamma,\mathbf m} p_{\Gamma,\mathbf m}F_{\Gamma,\mathbf m}(\mathbf A),\qquad \mathbf A\in\mathbb B_r,
\]
where~$|p_{\Gamma,\mathbf m}|\le r^{-(|\Gamma|+2|\mathbf m|+4|\Ed|)}\cdot \|F\|_{L^2(\mathbb B_R)}$ and the series converges absolutely and uniformly on each smaller poly-ball~$\overline{\mathbb B}_{r'}$, $r'<r$.

Note that each lamination can be represented as a disjoint union of a macroscopic lamination~$\Gamma$, $k_1$ copies of \bl{the loop~$\lloop{\lambda_1}$ surrounding only the puncture $\lambda_1$,} $k_2$ copies of \bl{the loop~$\lloop{\lambda_2}$,} and so on. For shortness, below we use the notation~$\Gamma\sqcup\lloop{\lambda_{\mathbf k}}$ to describe such a lamination, where~$\mathbf k=(k_i)_{i=1}^n\in\mathbb Z_{\ge 0}^n$. By definition,
 \[
 \textstyle F_{\Gamma\sqcup\lloop{\lambda_{\mathbf k}},\mathbf m}(\mathbf A)=F_{\Gamma}(\mathbf A)\cdot\prod_{i=1}^n(F_{\lloop{\lambda_i}}(\mathbf A))^{k_i}\cdot \prod_{e\in\Ed}(\det A_e)^{m_e}.
 \]
Since one has~$\det A_e=1$ and~$F_{\lloop{\lambda_i}}(\mathbf A)=2$ for~$\mathbf A\in Z_{\mathrm{unip}}$, we get the identity
\[
F(\mathbf A)\ =\!\!\sum_{\Gamma~-~\mathrm{macroscopic}} p_{\Gamma}F_{\Gamma}(\mathbf A),\qquad \mathbf A\in Z_{\mathrm{unip}}\cap \mathbb B_r,
\]
where $p_{\Gamma}\ :=\!\!\sum_{\mathbf k\in \mathbb Z_{\ge 0}^n,\mathbf m\in \mathbf{Z}_{\ge 0}^\Ed}p_{\Gamma\sqcup\lloop{\lambda_{\mathbf k}},\mathbf m}\cdot 2^{|\mathbf k|}$. As~$\mathbf n(\Gamma\sqcup\lloop{\lambda_{\mathbf k}})\ge \mathbf n(\Gamma)+2|\mathbf k|$, it is easy to see that
\[
\begin{split}
\textstyle |p_{\Gamma}|\ &\le\ \sum_{\mathbf k\in \mathbb Z_{\ge 0}^n,\mathbf m\in \mathbf{Z}_{\ge 0}^\Ed} 2^{|\mathbf k|}r^{-(|\Gamma\sqcup\lloop{\lambda_{\mathbf k}}|+2|\mathbf m|+4|\Ed|)}\cdot \|F\|_{L^2(\mathbb B_R)}\\
& \le\ (1\!-\!2r^{-2})^n(1\!-\!r^{-2})^{|\Ed|}\cdot r^{-(|\Gamma|+4|\Ed|)}\|F\|_{L^2(\mathbb B_R)}.
\end{split}
\]
This gives the desired exponential upper bound~\eqref{eq:PG-bound} for~$|p_\Gamma|$.

We now move on to the uniqueness of expansion~\eqref{eq:pFg-conv}. Assume that two sequences of coefficients~$p_\Gamma,\tilde{p}_\Gamma$ satisfy the upper bound~$|p_\Gamma|,|\tilde{p}_\Gamma|=O(r^{-|\Gamma|})$ as~$|\Gamma|\to\infty$ and that the corresponding series~\eqref{eq:pFg-conv} coincide on~$\mathbb D_r$. Denote
\begin{equation}
\label{eq:xF-def}
F(\mathbf A)\ :=\!\!\sum_{\Gamma~-~\mathrm{macroscopic}} (\tilde{p}_{\Gamma}-p_\Gamma)F_{\Gamma}(\mathbf A)\qquad\text{for}\ \ \mathbf A\in \mathbb B_r
\end{equation}
Recall that this series converges absolutely and uniformly on compact subsets of~$\mathbb B_r$ due to Lemma~\ref{lem:Taylor-in-F}(i). Therefore,~$F$ is a holomorphic function on~$\mathbb B_r$ which vanishes along~$Z_{\mathrm{unip}}$. Due to Proposition~\ref{explicitNullstellensatz} one can find bounded holomorphic functions~$(H_e)_{e\in \Ed}$ and~$(H_{\lambda_i})_{1\le i\le n}$, defined on~$\mathbb B_{r'}$, $r'<r$, such that
\[
 F(\mathbf A)\ =\ \sum_{e\in\Ed}H_e(\mathbf A)(F_{\varnothing,\mathbf n^e}(\mathbf A)-1)+\sum_{i=1}^nH_{\lambda_i}(\mathbf A)(F_{\lloop{\lambda_i}}(\mathbf A)-2),\qquad \mathbf A\in \mathbb B_{r'}.
\]
Moreover, since all the functions~$F$, $F_{\varnothing,\mathbf n^e}$ and~$F_{\lloop{\lambda_i}}$ are~$\su^\Fa$-invariant, one can assume that all the functions~$H_\alpha$, $\alpha\in\Ed\cup\{\lambda_1,\dots,\lambda_n\}$ are invariant as well by averaging each of them over orbits of the action of the compact group~$\su^\Fa$ on~$\mathbf A\in\mathbb B_r$.

We proceed with expanding these functions in~$F_{\Gamma,\mathbf m}$ as provided by Lemma~\ref{lem:estimate-of-coeff}. Namely, for each~$\varrho<\frac{1}{5}\eta_0r'$ and $\alpha\in\Ed\cup\{\lambda_1,\dots,\lambda_n\}$ one has
\[
H_\alpha(\mathbf A)=\sum_{\Gamma,\mathbf m}p_{\Gamma,\mathbf m}^{(\alpha)}F_{\Gamma,\mathbf m}^{\phantom{()}}(\mathbf A) \quad \text{for}\ \ \mathbf A\in \mathbb B_\varrho,\qquad \text{where}\quad |p_{\Gamma,\mathbf m}^{(\alpha)}|=O(\varrho^{-(|\Gamma|+2|\mathbf m|+4|\Ed|)}).
\]
We arrive at the following expansion on~$\mathbb B_{\varrho}$:
\[
\begin{split}
F(\mathbf A) &= \sum_{e\in \Ed}\sum_{\Gamma, \mathbf m}p_{\Gamma, \mathbf m}^{(e)}F_{\Gamma, \mathbf m}^{\phantom{()}}(\mathbf A)(F_{\varnothing, \mathbf n^e}(\mathbf A) - 1) + \sum_{i = 1}^n\sum\limits_{\Gamma, \mathbf m}p_{\Gamma, \mathbf m}^{(\lambda_i)}F_{\Gamma, \mathbf m}^{\phantom{()}}(\mathbf A)(F_{\lloop{\lambda_i}}(\mathbf A) - 2).
\end{split}
\]
As~$F_{\Gamma, \mathbf m}(\mathbf A)F_{\varnothing, \mathbf n^e}(\mathbf A)=F_{\Gamma, \mathbf m+\mathbf n^e}(\mathbf A)$ and~$F_{\Gamma, \mathbf m}(\mathbf A)F_{\lloop{\lambda_i}}(\mathbf A)=F_{\Gamma\sqcup\lloop{\lambda_i}, \mathbf m}(\mathbf A)$, the last expansion
must coincide with~\eqref{eq:xF-def} due to the uniqueness part of Lemma~\ref{lem:Taylor-in-F}. In particular, for each macroscopic lamination~$\Gamma_0$ we get
\[
\begin{split}
(\tilde{p}_{\Gamma_0}\!-\!p_{\Gamma_0})F_{\Gamma_0}(\mathbf A)\ =&\ \sum_{e\in \Ed}\sum_{\mathbf k,\mathbf m} p_{\Gamma_0\sqcup\lloop{\lambda_{\mathbf k}}, \mathbf m}^{(e)}F_{\Gamma_0\sqcup\lloop{\lambda_{\mathbf k}}, \mathbf m}^{\phantom{()}}(\mathbf A)(F_{\varnothing, \mathbf n^e}(\mathbf A) - 1)\\
+&\ \sum_{i=1}^n \sum_{\mathbf k,\mathbf m} p_{\Gamma_0\sqcup\lloop{\lambda_{\mathbf k}}, \mathbf m}^{(\lambda_i)}F_{\Gamma_0\sqcup\lloop{\lambda_{\mathbf k}}, \mathbf m}^{\phantom{()}}(\mathbf A)(F_{\lloop{\lambda_i}}(\mathbf A) - 2).
\end{split}
\]
Let~$\varrho$ be chosen so that $\varrho>\sqrt{2}$. Then one can substitute~$\mathbf A=\bm{\Id}=(\Id)_{e\in \Ed}$ into the last equality. The right-hand side vanishes and hence~$\tilde{p}_{\Gamma_0}=p_{\Gamma_0}$.
\end{proof}

\begin{proof}[Proof of Theorem~\ref{thm:main-thm}] Due to the existence part of Theorem~\ref{main_theorem_actually} an entire function \mbox{$f\in\Fhol(X_{\mathrm{unip}})^{\sl}$} admits an expansion~\eqref{eq:expansion-of-f} on each bounded subset of~$X_{\mathrm{unip}}$. It  follows from the uniqueness part of Theorem~\ref{main_theorem_actually} that the coefficients of all these expansions coincide provided that the corresponding subsets of~$X_{\mathrm{unip}}$ are big enough. Finally, the estimate~\eqref{eq:PG-bound} implies~\eqref{eq:estimate-of-pG}.
\end{proof}

\subsection*{Acknowledgements} The authors are grateful to Vladimir Fock for several discussions of the material presented in Section~\ref{skein_algebra_section}. We would like to thank Julien Dub\'edat for a feedback on~\cite{Dubedat}, Adrien Kassel and Richard Kenyon for a feedback on~\cite{Kassel-Kenyon,Kenyon}, \bl{and the anonymous referee for useful comments on the first version of this paper.} M.B.~is also grateful to Jason Starr and David E.~Speyer for a MathOverflow discussion of a purely algebro-geometric approach to the material of Section~\ref{sub:Nullstellensatz}. This project was started in 2015/16 during the SwissMAP master class in planar statistical physics, the authors are grateful to the University of Geneva for the hospitality during the program.


\begin{thebibliography}{10}

\bibitem{Barrett}
John~W. Barrett.
\newblock Skein spaces and spin structures.
\newblock {\em Math. Proc. Cambridge Philos. Soc.}, 126(2):267--275, 1999.

\bibitem{berestycki-laslier-ray}
\bl{Nathana\"el Berestycki, Benoit Laslier, and Gourab Ray.
\newblock {Dimers and imaginary geometry}.
\newblock {\em Ann. Probab.}, 48(1):1--52, 2020.}

\bibitem{BonahonWong}
Francis Bonahon and Helen Wong.
\newblock Quantum traces for representations of surface groups in {${\rm
  SL}_2(\mathbb C)$}.
\newblock {\em Geom. Topol.}, 15(3):1569--1615, 2011.

\bibitem{bufetov-gorin}
\bl{Alexey~{Bufetov} and Vadim~{Gorin}.
\newblock {Fourier transform on high-dimensional unitary groups with
  applications to random tilings}.
  \newblock {\em Duke Math. J.}, 168(13):2559–-2649, 2019.}

\bibitem{Bullock}
Doug Bullock.
\newblock Rings of {${\rm SL}_2({\mathbb C})$}-characters and the {K}auffman
  bracket skein module.
\newblock {\em Comment. Math. Helv.}, 72(4):521--542, 1997.

\bibitem{demailly}
Jean-Pierre Demailly.
\newblock On the {O}hsawa-{T}akegoshi-{M}anivel {$L^2$} extension theorem.
\newblock In {\em Complex analysis and geometry ({P}aris, 1997)}, volume 188 of
  {\em Progr. Math.}, pages 47--82. Birkh\"auser, Basel, 2000.

\bibitem{Dubedat} \bl{Julien~{Dub{\'e}dat}.
\newblock {Double dimers, conformal loop ensembles and isomonodromic
  deformations}.
\newblock {\em J. Eur. Math. Soc. (JEMS)}, 21(1):1--54, 2019.}

\bibitem{FG}
Vladimir Fock and Alexander Goncharov.
\newblock Moduli spaces of local systems and higher {T}eichm\"uller theory.
\newblock {\em Publ. Math. Inst. Hautes \'Etudes Sci.}, 103:1--211, 2006.

\bibitem{FultonHarris}
William Fulton and Joe Harris.
\newblock {\em Representation theory}, volume 129 of {\em Graduate Texts in
  Mathematics}.
\newblock Springer-Verlag, New York, 1991.

\bibitem{jimbo-miwa-ueno}
Michio Jimbo, Tetsuji Miwa, and Kimio Ueno.
\newblock Monodromy preserving deformation of linear ordinary differential
  equations with rational coefficients. {I}. {G}eneral theory and {$\tau
  $}-function.
\newblock {\em Phys. D}, 2(2):306--352, 1981.

\bibitem{Kassel-Kenyon}
Adrien Kassel and Richard Kenyon.
\newblock Random curves on surfaces induced from the {L}aplacian determinant.
\newblock {\em Ann. Probab.}, 45(2):932--964, 2017.

\bibitem{kenyon-gff-a}
Richard Kenyon.
\newblock Conformal invariance of domino tiling.
\newblock {\em Ann. Probab.}, 28(2):759--795, 2000.

\bibitem{kenyon-gff-b}
Richard Kenyon.
\newblock Dominos and the {G}aussian free field.
\newblock {\em Ann. Probab.}, 29(3):1128--1137, 2001.

\bibitem{Kenyon}
Richard Kenyon.
\newblock Conformal invariance of loops in the double-dimer model.
\newblock {\em Comm. Math. Phys.}, 326(2):477--497, 2014.

\bibitem{Manivel}
Laurent Manivel.
\newblock Un th\'eor\`eme de prolongement {$L^2$} de sections holomorphes d'un
  fibr\'e hermitien.
\newblock {\em Math. Z.}, 212(1):107--122, 1993.

\bibitem{Marche}
Julien~{March{\'e}}.
\newblock {Character varieties in ${\rm SL}_2$ and Kauffman skein algebras}.
\newblock {\em ArXiv e-prints}, arXiv:1510.09107, 2015.

\bibitem{miller-sheffield-werner}
Jason Miller, Scott Sheffield, and Wendelin Werner.
\newblock C{LE} percolations.
\newblock {\em Forum Math. Pi}, 5:e4, 102, 2017.

\bibitem{Palmer}
John Palmer.
\newblock Determinants of {C}auchy-{R}iemann operators as {$\tau$}-functions.
\newblock {\em Acta Appl. Math.}, 18(3):199--223, 1990.

\bibitem{Przytycki_Sikora}
J\'ozef~H. Przytycki and Adam~S. Sikora.
\newblock On skein algebras and {${\rm Sl}_2({\mathbb C})$}-character varieties.
\newblock {\em Topology}, 39(1):115--148, 2000.

\bibitem{qian-werner}
\bl{Wei~{Qian} and Wendelin~{Werner}.
\newblock {Decomposition of Brownian loop-soup clusters}.
\newblock {\em J. Eur. Math. Soc. (JEMS)}, 21(10):3225--3253, 2019.}

\bibitem{Robert}
Alain Robert.
\newblock {\em Introduction to the representation theory of compact and locally
  compact groups}, volume~80 of {\em London Mathematical Society Lecture Note
  Series}.
\newblock Cambridge University Press, Cambridge-New York, 1983.

\bibitem{russkikh-hedgehog}
Marianna~{Russkikh}.
\newblock {Dominos in hedgehog domains}.
\newblock {\em ArXiv e-prints}, arXiv:1803.10012, 2018.

\bibitem{schramm-icm-06}
Oded Schramm.
\newblock Conformally invariant scaling limits: an overview and a collection of
  problems.
\newblock In {\em International {C}ongress of {M}athematicians. {V}ol. {I}},
  pages 513--543. Eur. Math. Soc., Z\"urich, 2007.

\bibitem{schramm-sheffield}
Oded Schramm and Scott Sheffield.
\newblock A contour line of the continuum {G}aussian free field.
\newblock {\em Probab. Theory Related Fields}, 157(1-2):47--80, 2013.

\bibitem{sheffield-werner}
Scott Sheffield and Wendelin Werner.
\newblock Conformal loop ensembles: the {M}arkovian characterization and the
  loop-soup construction.
\newblock {\em Ann. of Math. (2)}, 176(3):1827--1917, 2012.

\bibitem{smirnov-icm-06}
Stanislav Smirnov.
\newblock Towards conformal invariance of 2{D} lattice models.
\newblock In {\em International {C}ongress of {M}athematicians. {V}ol. {II}},
  pages 1421--1451. Eur. Math. Soc., Z\"urich, 2006.

\bibitem{TauvelYu}
Patrice Tauvel and Rupert W.~T. Yu.
\newblock {\em Lie algebras and algebraic groups}.
\newblock Springer Monographs in Mathematics. Springer-Verlag, Berlin, 2005.

\bibitem{Thurston}
Dylan~Paul Thurston.
\newblock Positive basis for surface skein algebras.
\newblock {\em Proc. Natl. Acad. Sci. USA}, 111(27):9725--9732, 2014.

\bibitem{wang-wu}
Menglu Wang and Hao Wu.
\newblock Level lines of {G}aussian free field {I}: {Z}ero-boundary {GFF}.
\newblock {\em Stochastic Process. Appl.}, 127(4):1045--1124, 2017.

\bibitem{wu-ising-exponents}
\bl{Hao~{Wu}.
\newblock {Alternating Arm Exponents for the Critical Planar Ising Model}.
\newblock {\em Ann. Probab.}, 46(5): 2863–-2907, 2018.}
\end{thebibliography}

\end{document}